\newcolumntype{M}[1]{>{\centering\arraybackslash}m{#1}}
\definecolor{maroon}{rgb}{0.5, 0.0, 0.0}
\DeclareRobustCommand\shape{
 \lower5pt\hbox{
 \hskip-7pt
  \tikzset{circ/.style={circle, draw, fill=black, scale=.15}}
  \begin{tikzpicture}[semithick,scale=.3]
  \node (l1) at (0,.5) [circ]{};
%   \node (l2) at (1,.866) [circ]{};
  \node (l3) at (0.5,0.3) [circ]{};
    % \draw[-,color=white] (l1) to node [auto] {} (l2);
  \draw[-] (l1) to node [auto] {} (l3);
%   \draw[-] (l2) to node [auto] {} (l3);
    \end{tikzpicture}
  \hskip-8pt}
}
\newcommand{\lpn}{\textsf{LPN}}
\newcommand{\lpnspace}{\textsf{LPN }}
\newcommand{\lwe}{\textsf{LWE}}
\newcommand{\clwe}{\textsf{CLWE}}
\newcommand{\lwespace}{\textsf{LWE }}
\newcommand{\klpn}{$k$-\textsf{LPN}}
\newcommand{\klpnspace}{$k$-\textsf{LPN }}
\newcommand{\klwe}{$k$-\textsf{LWE}}
\newcommand{\kxor}{$k$-\textsf{XOR}}
\newcommand{\zqmult}{((\mathbb{Z}/q\mathbb{Z})^\times)}
\newcommand{\zqz}{\mathbb{Z}/q\mathbb{Z}}
\newcommand{\mylabel}[2]{#2\def\@currentlabel{#2}\label{#1}}
\newcommand{\minsf}{\mathsf{min}}
\newcommand{\bern}{\mathsf{Bern}}
\newcommand{\bernoulli}{\bern}
\newcommand{\unif}{\mathsf{Unif}}
\newcommand{\randomsupport}{\mathsf{randomsupport}}
\newcommand{\support}{\mathsf{support}}
\newcommand{\errordistribution}{\mathsf{errordistribution}}
\newcommand{\sampledistribution}{\mathsf{sampledistribution}}
\newcommand{\modulus}{\mathsf{modulus}}
\newcommand{\samples}{\mathsf{samples}}
\newcommand{\hzero}{\mathsf{H}_0}
\newcommand{\hone}{\mathsf{H}_1}
\newtheorem{theorem}{Theorem}[section]
\newtheorem*{theorem*}{Theorem}
\newtheorem{lemma}{Lemma}[section]
\newtheorem*{lemma*}{Lemma}
\newtheorem{corollary}[lemma]{Corollary}
\newtheorem{observation}[lemma]{Observation}
\newtheorem{parameters}{Parameter Relation}[]
\newtheorem*{parameters*}{Parameter Relation}
\newtheorem{claim}[lemma]{Claim}
\newtheorem{proposition}[theorem]{Proposition}
\newtheorem*{proposition*}{Proposition}
\theoremstyle{definition}
\newtheorem{remark}{Remark}[]
\newtheorem{definition}{Definition}
\newtheorem{problem}{Problem}[]
\newcommand{\var}{\mathsf{Var}}
\DeclareMathOperator*{\E}{{\rm I}\kern-0.18em{\rm E}}
\newcommand{\prob}{\Pr}
\newcommand{\expect}{\E}
\renewcommand{\Pr}{\,{\rm I}\kern-0.18em{\rm P}}
\newcommand{\gb}[1]{{{\color{magenta}\texttt{[Guy: #1]}}}}
\newcommand{\Dcoeff}{\mathsf{D}^{\mathrm{coeff}}}
\newcommand{\Derr}{\mathsf{D}^{\mathrm{error}}}
\newcommand{\cX}{\mathcal{X}}
\newcommand{\cC}{\mathcal{C}}
\newcommand{\cD}{\mathcal{D}}
\newcommand{\cT}{\mathcal{T}}
\newcommand{\norm}[1]{\|#1\|}
\newcommand{\polylog}{\mathsf{polylog}}
\newcommand{\abs}[1]{\lvert #1 \rvert}
\newcommand{\iidsim}{\stackrel{\mathrm{i.i.d.}}\sim}
\newcommand{\Preimage}{\mathsf{Preimage}}
\newcommand{\indicator}{\mathds{1}}
\definecolor{todored}{HTML}{AD0029}
\newcommand{\stefanc}[1]{{\textcolor{todored}{\texttt{[Stefan:  #1]} }}}
\newcommand{\LWE}{\mathsf{LWE}}
\newcommand{\LPN}{\mathsf{LPN}}
\newcommand{\NLE}{\mathsf{NLE}}
\newcommand{\poly}{\mathsf{poly}}
\newcommand{\Bern}{\mathsf{Bernoulli}}
\newcommand{\dec}{\mathsf{Dec}}
\newcommand{\fail}{\mathsf{FAIL}}
\newcommand\MYcurrentlabel{xxx}
\newcommand{\MYstore}[2]{%
  \global\expandafter \def \csname MYMEMORY #1 \endcsname{#2}%
  %{#2}
}
\newcommand{\MYload}[1]{%
  \csname MYMEMORY #1 \endcsname%
}
\newcommand{\MYnewlabel}[1]{%
  \renewcommand\MYcurrentlabel{#1}%
  \MYoldlabel{#1}%
}
\newcommand{\MYdummylabel}[1]{}
\newcommand{\torestate}[1]{%
  % overwrite label command
  \let\MYoldlabel\label%
  \let\label\MYnewlabel%
  #1%
  \MYstore{\MYcurrentlabel}{#1}%
  % restore old label command
  \let\label\MYoldlabel%
}
\newcommand{\restatetheorem}[1]{%
  % overwrite label command with dummy
  \let\MYoldlabel\label
  \let\label\MYdummylabel
  \begin{theorem*}[Restatement of \cref{#1}]
    \MYload{#1}
  \end{theorem*}
  \let\label\MYoldlabel
}
\newcommand{\restatelemma}[1]{%
  % overwrite label command with dummy
  \let\MYoldlabel\label
  \let\label\MYdummylabel
  \begin{lemma*}[Restatement of \cref{#1}]
    \MYload{#1}
  \end{lemma*}
  \let\label\MYoldlabel
}
\newcommand{\restateprop}[1]{%
  % overwrite label command with dummy
  \let\MYoldlabel\label
  \let\label\MYdummylabel
  \begin{proposition*}[Restatement of \cref{#1}]
    \MYload{#1}
  \end{proposition*}
  \let\label\MYoldlabel
}
\newcommand{\restatefact}[1]{%
  % overwrite label command with dummy
  \let\MYoldlabel\label
  \let\label\MYdummylabel
  \begin{fact*}[Restatement of \cref{#1}]
    \MYload{#1}
  \end{fact*}
  \let\label\MYoldlabel
}
\newcommand{\restateparameters}[1]{%
  % overwrite label command with dummy
  \let\MYoldlabel\label
  \let\label\MYdummylabel
  \begin{parameters*}[Restatement of \cref{#1}]
    \MYload{#1}
  \end{parameters*}
  \let\label\MYoldlabel
}
\newcommand{\restate}[1]{%
  % overwrite label command with dummy
  \let\MYoldlabel\label
  \let\label\MYdummylabel
  \MYload{#1}
  \let\label\MYoldlabel
}
\newcommand{\set}[1]{\{#1\}}
\newcommand{\Set}[1]{\Big\{#1\Big\}}
\newcommand{\suchthat}{\;\Big\vert\;}
\newcommand{\paren}[1]{(#1)}
\newcommand{\Paren}[1]{\left(#1\right)}
\newcommand{\Brac}[1]{\left[#1\right]}
\newcommand{\iprod}[1]{\langle#1\rangle}
\newcommand{\OPT}{\mathsf{OPT}}
\newcommand{\iid}{i.i.d.\ }
\newcommand{\Tsample}{\cT_{\mathsf{sample}}}
\title{
%Equivalence of Dense and Sparse:
%From Dense to Sparse: 
Near-Optimal Time-Sparsity Trade-Offs for 
\\ Solving Noisy Linear Equations
%Sparsity is for Free: 
%\\ On the Hardness of Solving Sparse Noisy Linear Equations
}
\author{Kiril Bangachev\thanks{Supported by a Siebel Scholarship.}\\ MIT \and Guy Bresler\thanks{Supported by NSF Career Award CCF-1940205.}\\MIT \and  Stefan Tiegel\thanks{Supported by the European Union’s Horizon research and innovation programme (grant agreement no. 815464).}\\ETH Z\"urich \and Vinod Vaikuntanathan\thanks{Supported by NSF CNS-2154149 and a Simons Investigator Award.}\\MIT}
\begin{document}
\maketitle
\pagenumbering{gobble}
\begin{abstract}
    We present a polynomial-time reduction from solving noisy linear equations over $\zqz$ in dimension $\Theta(k\log n/\poly(\log k,\log q,\log\log n))$ with a uniformly random coefficient matrix to noisy linear equations over $\zqz$ in dimension $n$ where each row of the coefficient matrix has uniformly random support of size $k$. This allows us to deduce the hardness of sparse problems from their dense counterparts. In particular, we derive hardness results in the following canonical settings:
\begin{enumerate}
    \item Assuming the $\ell$-dimensional (dense) learning with errors (\lwe) problem over a polynomial-size field takes time $2^{\Omega(\ell)}$, $k$-sparse \lwespace in dimension $n$ takes time $$n^{\Omega({k}/{(\log k \cdot  (\log k + \log \log n))})}~.$$
    \item Assuming the $\ell$-dimensional (dense) learning parity with noise (\lpn) problem over $\mathbb{F}_2$ takes time $2^{\Omega(\ell/\log \ell)}$, $k$-sparse \lpnspace in dimension $n$ takes time $$n^{\Omega(k/(\log k \cdot (\log k + \log \log n)^2))}~.$$
\end{enumerate}
These running time lower bounds are nearly tight as both sparse problems can be solved in time $n^{O(k)},$ given sufficiently many samples.

Our reduction allows us to derive several consequences in cryptography and the computational complexity of statistical problems. 
In addition, as a new application, we give a reduction from $k$-sparse LWE to noisy tensor completion. Concretely, composing the two reductions implies that order-$k$ rank-$2^{k-1}$ noisy tensor completion in $\mathbb{R}^{n^{\otimes k}}$ takes time $n^{\Omega(k/ \log k \cdot (\log k + \log \log n))}$, assuming the exponential hardness of standard worst-case lattice problems.

%The core primitive in our main reduction is a novel linear decoding protocol from dense to sparse vectors. We construct a gadget matrix $G \in \mathbb{Z}_q^{n\times L}$ and an algorithm $\mathsf{Dec}_G$ with the following properties. On input a vector $b \in \mathbb{Z}_q^L,$ with high probability $\mathsf{Dec}_G$ outputs a $k$-sparse vector $a$ such that $a^\top G = b^\top.$ Furthermore, assuming that $b$ is uniformly random over $\mathbb{Z}_q^L,$ the vector $a$ is uniformly random over all $k$-sparse vectors in $\mathbb{Z}_q^n.$

Our reduction is (nearly) lossless and extremely versatile: it preserves the number of samples up to a $1-o(1)$ multiplicative factor with high probability and it preserves the noise distribution.
In particular, it also applies to the learning with rounding problem.
%and preserves the number of samples up to a $1-o(1)$ multiplicative factor with high probability.
The same reduction works for a wide range of ring sizes from $q = 2$ to $q$ that is super-polynomial in the initial dimension $\ell$ and for varying support sizes between samples. Finally, our reduction is compatible with decision (testing), search, and strong refutation and, hence, reduces hardness in the sparse setting to that of the standard, dense, setting for all three tasks.
\end{abstract}

\newpage
\setcounter{tocdepth}{2}
\tableofcontents
\newpage
\pagenumbering{arabic}

%\section{Introduction Version 1}
%\label{sec:intro}
%\input{introduction}

% \section{Introduction}
% \label{sec:intro}
% \input{introductionV2}

\newpage

\section{Introduction}
\label{sec:intro_experiment}
%\kirilc{More emphasis on fact that reduction works for all of search, detection, refutation.In intro, only mention refutation for LPN (in terms of number of satisfiable clauses). Add reference to \cref{def:refutation} for LWE}
% Stefan: Structure intro + Technical overview
\begin{comment}
\stefanc{Structure Intro and Technical overview:
Intro:
\begin{enumerate}
    \item Introduce \lpn{} and \lwe{} and sparse variants
    \item shortcomings: difficult for reductions, because "un-paramatrized problem", hard for crypto bc potentially slow
    \item fix that people found: introduce sparse version, drawback: initially not clear how hard these problems 
    \item connection to refuting random $k$-XOR instances
    \item us: We give the connection to the "original problem" -- with short discussion of "tightness of the reduction" and understand better relation to refutation lower bounds
    \item briefly mention downstream applications, highlight new reduction from sparse \lwe{} to tensor completion, giving worst-case hardness of the problem
\end{enumerate}
Technical overview:
\begin{enumerate}
    \item introduce necessary preliminaries
    \item as described below
\end{enumerate}
}
\end{comment}

\def\matA{{A}}
\def\vecy{{y}}
\def\vecs{{s}}
\def\vece{{e}}

%\subsection{Background:  Noisy Linear Equations Over Finite Fields}
%\paragraph{LPN and  \lwe.}
% \stefanc{General high-level intro noisy linear equations}
% \kirilc{Streamlined a bit.}
% \stefanc{Interesting touch with the underline, I think for very few places (like introducing LPN and LWE) it could make sense, but else it maybe feels a little too much to me? The italic font is also giving emphasis and is a bit easier on the eye/your eye doesn't immediately jump to it when looking at the page -- but happy to discuss. Also, where did you redefine the ``emph`` command?}\kb{Hmm, I also maybe liked the old emph better.. Btw, I think this happened automatically when someone use dpackage ulem. Was this intentional?}
% \stefanc{Ah, hahah, no it was not! I had only imported that package to get the strikethrough functionality for the todo list, I didn't know that it would have side effects}
Random noisy linear equations over $\zqz$ play a fundamental role in average-case complexity, cryptography, and learning theory.
Two canonical instances are the 
\emph{learning parity with noise} (\lpn) and the \emph{learning with errors} (\lwe) problems, that we will describe shortly.

% On an abstract level, 
% The two most prominent instances are 
Noisy linear equation problems are defined by a distribution over pairs $(\matA,\vecy) \in (\zqz)^{m \times n} \times (\zqz)^m$, where
% \vinod{Before, it was $(\matA,\vecy) \sim (\zqz)^{m \times n} \times (\zqz)^m$ which confuses me: isn't $\sim$ used to describe a draw from a distribution? What is the distribution here?} 
$\matA$ is the \emph{public coefficient matrix} and $\vecy$ the \emph{labels}.
%Given $(A,y) \sim (\zqz)^{m \times n} \times (\zqz)^m$ 
The rows of {the public matrix} $\matA$ are often taken to be independent and identically distributed, in which case $m$ is viewed as the number of independent examples in supervised learning. 
The basic goal is to determine whether there is an approximate linear relation between $\matA$ and $\vecy$, %and if so, what the relation is. That is, one wants to know whether 
i.e., does there exist an $\vecs \in (\zqz)^n$ such that $\vecy \approx \matA\vecs \bmod{q}$. 
% On a more concrete level, the following three variants pose natural goals and have received wide-spread attention.
% Towards this goal, t
Three different tasks have received attention:
\begin{itemize}
    \item[] \textbf{(Decision)} Distinguish whether $\matA$ and $\vecy$ are independent or $\vecy \approx \matA\vecs$ for some $\vecs$.
    \item[] \textbf{(Search)} Under the promise that $\vecy \approx \matA\vecs$ for some $\vecs$, recover $\vecs$.
    \item[] \textbf{(Refutation)} Under the promise that $\matA$ and $\vecy$ are independent, certify (i.e., prove) that $\vecy$ is not close to $\matA\vecs$ for \emph{any} $\vecs$.
\end{itemize}
% \kirilc{I think we do not need that level of detail on refutation here. + Typically we do not require w.h.p on the average case but a constant fraction.}
For the search and decision tasks, $\vecy \approx \matA\vecs$ means that $\vecy = \matA\vecs + \vece$ for some ``small'' error vector $\vece$.
For the refutation task, one is most often interested in certifying an upper bound on the maximum number of equations that are approximately satisfiable simultaneously. Importantly, the upper bound should be correct \emph{for all} inputs and 
``significantly better than trivial'' on the average. (We refer to~\cref{sec:tasks} for rigorous definitions of the three tasks.)
We are interested in how many equations (i.e., how large $m$ needs to be) and how much computation time is necessary to solve each of these tasks.
% while we only require it to be ``good'' (in this case small) 
% on average or with high probability.
% We refer to this latter high-probability value as the \emph{strength} of the refutation algorithm.
% 

% One may recognize that 
The setup described above is
% and statistical tasks associated with noisy linear equations are 
% strikingly similar 
nearly identical
to that of standard linear regression, studied extensively in statistics, for which a variety of efficient algorithms can achieve nearly optimal sample complexity under certain assumptions on $\matA$. 
% \kirilc{I think the phrasing with real numbers is a little too imprecise. CLWE is with real numbers but is hard.}
The primary difference lies in the 
% discrete nature of our problem. The 
% combination of 
modular arithmetic, which in the presence of even small errors makes the above tasks algorithmically challenging. 

Depending on the specific choice of $q$ and distributions of $\matA,\vecs$, and $\vece$, researchers have put forth conjectures on how much time and equations are necessary to solve the problem. 
% This variety of different variants has proven very useful in applications (cf.\ paragraph below).
    Such \emph{hardness conjectures} 
    % justify the security of
    % form the foundation upon which 
   form the basis of numerous cryptographic primitives 
    % are built 
    and also have implications towards understanding the computational 
    complexities of other problems.
    One of the goals of average-case complexity theory is to simplify the landscape and
   % Consequently, 
   relate the computational complexities of different problems. Can one show that the various hardness conjectures follow from just one of them?
   % se hardness conjectures relate to each other. Can we trace them back to a select few? 
   %\gb{I edited this last para}
   % , is crucial towards developing a sound basis for average case complexity theory.
   
% However, these applications depend on conjectures for many different variants of the problem, with varying levels of confidence in each. 
% Some of the conjectures have been verified under standard hardness assumptions or related to more well-studied variants of the problem.
% Yet, in many cases, the best evidence we have for a given conjecture is based simply on agreement with current state-of-the-art algorithms, or in some cases, analysis of certain restricted classes of algorithms. 

% Stefan: previous phrasing
%In this paper, we relate two canonical settings of noisy linear equations. On the one hand, we have the standard (dense) setting in which each row $a_i$ of $A$ is independent uniform over $(\zqz)^n$ and the secret $s$ is uniform over $(\zqz)^n.$
%We reduce this setting to the $k$-sparse setting where every row $a_i$ of the public matrix $A$ has an independent uniformly random support of size $k$ while the secret is again uniform.
In this paper, we relate two canonical settings of noisy linear equations.
We reduce the standard (dense) setting, in which each row of the public matrix $\matA$ is independent and uniform over $(\zqz)^L$, to the
$k$-sparse setting, in which each row  of the public matrix $\matA$ is $n$-dimensional and has an independent uniformly random support of size $k$.
In both cases, the secret $\vecs$ is uniform over $(\zqz)^L$, respectively $(\zqz)^n$. For concreteness, we state our theorem below in terms of \lpn{} and \lwe, which will be defined shortly.

\begin{theorem}[Informal, see \cref{thm:mainreduction} 
and \cref{cor:explwe,cor:nearexplpn,cor:subexphardness,cor:quasihardness}]
\label{thm:maininformal}
    There is a $\poly(n)$-time reduction from standard \lpn{} (respectively standard \lwe) in dimension $$L = \frac{k\log n}{\poly(\log q,\log k, \log\log n)}$$ to \klpn{} (respectively \klwe) in dimension $n$.
    The reduction preserves the number of samples up to a $1-o(1)$ multiplicative factor.
    Furthermore, for decision and search, the reduction preserves the noise distribution\footnote{In particular, it also to the learning with rounding problem.} and for refutation it preserves the strength of the refutation algorithm up to an additive $o(1)$ term.
\end{theorem}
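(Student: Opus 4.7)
The plan is a per-sample reduction that maps each dense sample $(a, y = \langle a, s \rangle + e)$ with $a \in (\Z/q\Z)^L$ to a sparse sample $(a', y')$ with $a' \in (\Z/q\Z)^n$ of support size $k$, where $y' = y + \langle a', u\rangle$ for a public uniform shift $u \in (\Z/q\Z)^n$; this leaves the noise distribution unchanged and lets me force the effective sparse secret to be exactly uniform. The engine is a public encoding matrix $V \in (\Z/q\Z)^{L \times n}$ through which the implicit sparse secret is $s' := V^{\top} s + u$. Since $\langle a', s' \rangle = \langle V a', s \rangle + \langle a', u \rangle$, the reduction only needs to produce $a'$ from the $k$-sparse distribution satisfying $V a' = a$ and set $y' = y + \langle a', u \rangle$; then $(a', y')$ is a bona fide sparse sample with secret $s'$ uniform on $(\Z/q\Z)^n$ and the same error $e$.

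To make preimage sampling both efficient and distributionally correct, I would set $L = k L'$ with $n/k \approx q^{L'}$, matching the counting relation $\binom{n}{k}(q-1)^k \gtrsim q^L$ that pins us in exactly the regime $L \approx k \log n / \log q$ of the theorem. The construction is block-structured: partition $[n]$ into $k$ blocks $B_1,\dots,B_k$ of size $n/k$, fix bijections $\beta_j : (\Z/q\Z)^{L'} \to B_j$, and let the $i$-th column of $V$ for $i \in B_j$ be $\beta_j(i)$ embedded into the $j$-th coordinate block of $(\Z/q\Z)^L$. Parsing $a = (a^{(1)}, \dots, a^{(k)})$, I would draw uniform $r_j \in (\Z/q\Z)^{\times}$, set $i_j := \beta_j(a^{(j)} / r_j)$, and put $a'_{i_j} = r_j$ with zeros elsewhere. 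Then $Va' = a$ holds by construction; the values on the support are i.i.d.\ uniform in $\zqmult^k$; each $i_j$ is uniform in $B_j$ independently of $a$; and the per-sample time is polynomial in $n$.

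The main obstacle is that the support $T = \{i_j\}_{j=1}^k$ is a uniform transversal of the \emph{fixed} partition rather than a uniform element of $\binom{[n]}{k}$. I would bridge this by prepending a publicly-shared uniform random permutation $\pi$ of $[n]$ and outputting the row $a' \circ \pi^{-1}$: since $\pi$ acts transitively on $k$-subsets, the marginal support distribution is exactly uniform on $\binom{[n]}{k}$. The delicate point is that $\pi$ is shared across samples and induces a mild joint correlation among supports; I expect to control this via a local coupling that exchanges an $o(1)$ fraction of transversals with arbitrary $k$-subsets, and this is precisely where the $1 - o(1)$ multiplicative slack on the sample count stated in the theorem comes from. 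The shift $u$ then absorbs the structural bias in $V^{\top} s$, so that $s'$ becomes the truly uniform sparse secret the oracle expects.

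Given this per-sample, polynomial-time reduction: search reduces because the sparse oracle returns $s'$, and subtracting $u$ and inverting the publicly-known block encoding of $V$ recovers $s$; decision follows because both null-hypothesis distributions also match up to the same $o(1)$ coupling slack; and refutation reduces via the exact identity $\langle a', s' \rangle = \langle a, s \rangle + \langle a', u \rangle$, which pulls any sparse refutation certificate back to a dense one, losing only an additive $o(1)$ in strength from the distributional slack. The noise distribution is preserved verbatim by construction, which is what lets the same reduction apply unchanged to the learning-with-rounding variant mentioned below the theorem.
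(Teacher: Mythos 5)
Your proposal has a genuine gap at its central step: making the sparse supports uniform over $\binom{[n]}{k}$. Your construction produces, for each sample, a transversal of a fixed partition of $[n]$ into $k$ blocks (one support index per block), and you then apply a single random permutation $\pi$ shared across all samples. While the \emph{marginal} support of one sample, averaged over $\pi$, is uniform, the \emph{joint} distribution across samples is nowhere near the target $\randomsupport(n,k,\cD)^{\otimes m}$: conditioned on $\pi$, two coordinates lying in the same permuted block can never co-occur in any support, whereas under genuinely uniform $k$-subsets every pair of coordinates co-occurs once $m$ is polynomially large. Moreover, transversals form only an $e^{-\Theta(k)}$ fraction of all $k$-subsets, so per sample (conditioned on $\pi$) the support distribution already has total variation distance $1-e^{-\Theta(k)}$ from uniform; no ``local coupling exchanging an $o(1)$ fraction of transversals'' can repair this, and the $1-o(1)$ sample preservation in the theorem has nothing to do with such a coupling (in the paper it comes from a per-sample rejection probability of $O(1/L+k^2/n)$). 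Since the sparse-problem oracle is only guaranteed on inputs whose coefficient rows are drawn from $\randomsupport(n,k,\cD)$, feeding it blockwise-structured inputs voids its guarantee, and this breaks all three of decision, search, and refutation. This is exactly the pitfall the paper flags in its ``toy reduction for block-wise sparsity'': permuting coordinates cannot convert blockwise sparsity into true sparsity because the same permutation must be reused for all $m$ samples.

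The missing idea is how to have the support location itself depend on the randomness of the dense vector $b$ in a way that yields $k$ independent, (near-)uniform indices in $[n]$ while a single fixed public gadget $G$ certifies $a^\top G = b^\top$. The paper does this by splitting $b$ into $t$ words of length $h$, decoding each word into a $k$-tuple over the small alphabet $[hq]$ by sampling uniformly from the preimage of a highly structured word matrix $V$ (all multiples of the identity), proving pointwise near-uniformity of preimage sizes via a random-walk mixing argument, equalizing them exactly with rejection sampling, and then reading the \emph{columns} of the resulting $t\times k$ index array as $k$ i.i.d.\ uniform indices in $[n]=(hq)^t$ (distinct with probability $1-k^2/n$). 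That column-wise recombination, rather than a per-block choice plus permutation, is what makes the output distribution exactly correct jointly across samples; your proposal would need an entirely different mechanism to achieve the same, and as written it does not.
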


All four settings of $\{\text{dense, $k$-sparse}\},$ $\{\text{\lpn, \lwe}\}$ 
are common in the literature and have found important applications to cryptography, statistics, and learning theory.
Yet, the relationship between the standard dense public matrix setting versus the sparse public matrix setting has remained elusive so far (although, see~\cref{prelim:sparselpn,prelim:sparselwe} for a discussion of prior work that made progress in this direction). 

This setting should not be confused with the setting in which the rows of the coefficient matrix are dense but the secret is sparse.
The relation of the sparse secrete versions to the standard (dense) settings is much better understood; we refer to~\cref{sec:otherworks} for a more in-depth discussion. 

% Understanding for what variants we can do the former is a major outstanding problem.

% Previous version:
%In its most basic form, one is given $m$ noisy linear equations $(a_i, y_i)_{i = 1}^m$ over $\mathbb{Z}_q$ where $a_i \in (\zqz)^n$ and $y_i = \langle a_i, s\rangle + e_i$ for an unknown secret $s\in (\zqz)^n$ and unknown error $e_i.$
%Succinctly, we write $(A, y) $ where $y = As + e.$ 
%The goal, reminiscent of standard linear regression, is to  either, find the secret $s$  on input $(A,y)$, or distinguish it from $(A,y)$ where $y$ is independent of $A$.
%Yet, unlike the case with standard linear regression, even the less ambitious second goal seems to be algorithmically very challenging over finite fields in many natural settings.
%Further, based on the choice of $q$ and the distribution of distribution for $A,s$, and $e$, this model exhibits a very rich structure and gives rise to useful hardness assumptions for cryptography and as a starting point for average case reductions.

\subsection{LPN and LWE}
% \stefanc{concrete instantiation of the above; two most important special cases: (dense) \lpn{} and \lwe{} + relation to XOR refutation}
% \kirilc{We do need to introduce the full names ``Learning parity with Noiuse'' and ``Learning With Errors'' before the abbreviations.}
% For decision and search, arguably two of the most-studied settings are \lpn{} when the field is binary (i.e., $q = 2$) the error vector $e$ is sparse, and 

In \lpn, the ring is binary, i.e. $q = 2$,
% \footnote{} 
and the entries of the error vector are independently distributed as $\bern(\delta)$, where $\delta$ is a noise parameter (small $\delta$ yields a sparse error vector, with most entries equal to zero). 
Some works also consider \lpnspace over large rings but again with sparse noise~\cite{ishai2009secure,jain2021indistinguishability,jain2022indistinguishability,gibbs24somewhathomomorphic}. Unless otherwise stated, we will understand that $q = 2$ when we refer to \lpn.
In \lwe, the modulus $q$ is ``large'' (typically at least polynomial in $L$)
%compared to the magnitude of the noise 
and the noise is dense (most commonly drawn from a discrete Gaussian distribution of standard deviation at least $2\sqrt{L},$ in which case a reduction from worst-case lattice problems is known \cite{regev05LWE}).
% We will focus on two canonical choices for the distribution of the public matrix. 

\subsubsection{The Standard Setting: Uniformly Random (Dense) Public Matrix}
In the standard variants 
% \stefanc{RE "instance" vs "variant" from above, here we do use variant (and using instance would sound off I feel)} 
of both \lpnspace and \lwe, the public matrix $A$ and secret $s$ are drawn independently and uniformly over $(\zqz)^L$ and, hence, both are dense. 
Under suitable choices of the noise parameters, the decision versions of both problems are widely believed to be nearly exponentially hard.\footnote{Note that search is as least as hard as decision, and so is the natural refutation task, as discussed in~\cref{sec:tasks}.}
In particular,\footnote{Oftentimes in statistical problems, there is a trade-off between the number of samples required and the running time of the algorithm. In this work, we mostly focus on a running-time lower bound. In that case, the lower bound holds regardless of the number of samples used. 
% \stefanc{Maybe we could also leave out the following two sentences}
% Note that this is without loss of generality when the number of samples and running time are comparable -- any algorithm using $m$ samples runs in time $\Omega(m).$
} 
\begin{quote}\begin{center}
       the fastest known algorithm for \lpnspace runs in time $2^{\Omega(L/\log L)}$~\cite{BKW}, 
\end{center}
\end{quote} and any improvement to this would be considered a breakthrough.
%Furthermore, it is well-known that within the broad class of \emph{statistical query algorithms}~\cite{kearns98sq}, time $2^{\Omega(n)}$ is required, i.e., one cannot even match the algorithm in~\cite{BKW}. \vinod{Is it even worth saying this last sentence? It just tells me that SQ lower bounds are useless, definitely when it comes to LPN.}

%\stefanc{Is there any other evidence of hardness? Eg failure of most common attacks, etc.?}
%\stefanc{I wouldn't mention SQ here since it predicts a lower bound that we know is false.}\kirilc{I still think it is an important hardness evidence . And it is correct (within SQ). We can maybe something like: 
%``Provably, within the statistical-query model of computation, one cannot even match the \cite{BKW} algorithm as there is a $\exp(\Omega(n))$ lower bound \cite{kearns94learnability}.''
%}
%In the binary setting, no known algorithm takes time $\exp(o(n/\log n))$ and $\exp(o(n/\log n))$ samples and there is an $\exp(\Omega(n))$ SQ lower-bound \cite{kearns98sq,BKW}.

For \lwe, several algorithms improve on the most na\"{i}ve brute-force enumeration over the $q^L$ possible secrets, but for large enough $q$ and noise rate,
\begin{quote}\begin{center}
        the fastest known algorithm for \lwespace takes time $2^{\Omega(L)}$.
\end{center}
\end{quote} There is compelling evidence that this cannot be improved upon, as such an algorithm would yield a $2^{o(L)}$-time (quantum) algorithm for several \emph{worst-case} lattice problems (e.g. the approximate shortest vector problem with polynomial approximation factors) believed to require exponential time; see~\cite{regev05LWE,DBLP:conf/stoc/AggarwalDRS15,DBLP:journals/corr/abs-2211-11693} and the references therein.

The assumed near-exponential hardness in conjunction with the simple linear structure make both standard \lpnspace and \lwespace (and variants of \lpnspace and \lwespace known to inherit their hardness from the standard variants) extremely useful for a variety of applications, which we discuss in \cref{prelim:lpn,prelim:lwe}.
% Stefan: The classical reduction from lattice problems only gives $2^{\sqrt{n}}$-hardness iirc

% Worth mentioning is also the setting in which the public matrix $A$ is uniform over $\zqz$ but the secret $s$ comes from some other distribution $\mathsf{D}^{\mathsf{secret}}$ (for example, uniform over $k$-sparse vectors). It turns out that the respective problem is as hard as a standard problem (with $A,s$ both uniformly random) in which the secret $s$ has entropy similar to  

% is uniform and $k$-sparse. Since there are $\binom{n}{k} = \exp(O(k\log n))$ choices for the secret, one may expect that the problem is as hard as a standard problem in dimension $\Theta(k\log n).$ This is indeed the case and several works reduce the hardness of noisy linear equations with dense public matrix and sparse secret from the setting of dense public matrix and dense secret \cite{arora11newalgorithms,miccancio13sparsesecret,ragavan24io}. For that reason, we treat the hardness based on variants with sparse secret as 
\subsubsection{The Setting of a Sparse Public Matrix}

% \stefanc{Usefulness of sparse version}
In many applications, additional structure of the noisy linear equations is very useful.
Starting with the work of~\cite{alekhovich03averagecasevsapproximation}, researchers have considered 
% \gb{maybe clarify that setting was considered before, but sparsity was exploited here for application}\stefanc{Oh, I didn't know that, where was the average-case setting considered before? In the Alekhnovich paper he claims to introduce the average-case version of the problem (first in a slightly different form, but see Def. 3.1 and Thm 3.1). He seems to list worst-case variations that have been considered before. }
noisy linear equations in which each row $a_i$ (in dimension $n$) of the public matrix
% variants of the above problems in which the rows 
% of 
$A$ is $k$-sparse, i.e., has $k$ non-zero entries. Typically, the locations of the supports are uniform and independent between samples.
We denote the respective \lpn{} and \lwe{} variants by \klpn{} and \klwe{}\footnote{We note that ``\klwe'' does not refer to a unique model as different distributions for the values on the support are possible when 
$q>2.$ The results in the current work apply to \emph{any} distribution on the support. In~\cite{jain24sparseLWE}, the authors only consider the uniform distribution on the support. 
}.
One reason the sparse public matrix variant is useful is that the projections $a_i^\top s$ of $s$ on $a_i$ capture local interactions of the coordinates of the secret $s$ --- each $a_i^\top s$ depends only on $k$ of the coordinates of $s$. % when $s$ is $k$-sparse.
Hence, one can use \klpn{} and \klwe{} to argue about problems related to learning from ``local'' structure, which are ubiquitous in computer science. 

%In the $k$-sparse case
For both \klpn{} and \klwe{}, there is a simple brute-force algorithm that succeeds (for decision/search) in time $n^{O(k)}$ given $n^{1.01 k}$ equations. With access to that many random $k$-sparse equations, each of the possible $\binom{n}{k}$ supports will appear $n^{\Omega(k)}$ times. Restricting to a given support of size $k,$ one obtains a $k$-dimensional problem which can be solved in time $2^{O(k)}$ via the respective algorithms for dense $k$-dimensional \lpn/\lwe. Many other $n^{O(k)}$-time algorithms are known provided a sufficient number of samples, for example via gradient-descent dynamics in neural networks \cite{barak22hidden}.
% , i.e., contain few non-zero entries (that are, say, uniformly distributed on $\zqz\setminus\Set{0}$)\footnote{Other distributions are also conceivable, we refer to the discussion after our main theorem.}.
Perhaps surprisingly, for small values of $k,$ 
no better running time than the naive $n^{\Theta(k)}$ is known.
Further, with access to fewer than $n^{k/2}$ samples, there exists some evidence for $\exp(\poly(n))$-hardness of refuting $k$-sparse \lpn{} \cite{kothari17soslbanycsp}. %\vinod{``strong evidence'' seems like overstating it. These are SOS lower bounds for a problem that potentially has algebraic structure one can exploit and SOS algorithms are blind to. So maybe say ``some evidence''?}
% \stefanc{I would rather cite a reference than a theorem in our paper, to avoid taht people think that we prove this.}
% Much of this work has focused on the setting of $k$-sparse \lpn{} (which we denote by \klpn). 
% The state-of-the-art algorithms succeed given $n^{k/2}$ equations, 
% There has been a rich line of work probing the fundamental limits of algorithms for the refutation task \footnote{Since a refutation algorithm can be used to solve the decision problem, the algorithmic upper bounds directly transfer to the decision setting. Due to a search-to-decision reduction \cite{bogdanov19XORcodes}, these also imply search algorithms with a modest sample blow-up. } for the setting of $k$-sparse \lpn{} (which we denote by \klpn).
% Given $m=\tilde{\Omega}(n^{k/2})$ equations, a spectral algorithm solves the refutation problem in time $n^{O(k)}$~\cite{coja2007strong,Allen2015HowTR,barak16noisytensor}.
% Given fewer equations, $O(n^{k/2(1-\delta) + \delta})$ for any constant $\delta > 0$, the fastest known algorithm runs in time $2^{\tilde{O}(n^\delta)}$~\cite{raghavendra17stronglyrefuting}.
% % Presently, the only existing hardness evidence that suggests that this might be tight is in the form of 
% These algorithms are believed to be optimal, with evidence consisting of matching (up to lower-order terms \stefanc{I think the exponent is probably off by multiplicative log-factors, do you count that as "lower order"?}) lower bounds against \emph{sum-of-squares algorithms}~\cite{grigoriev2001linear,schoenebeck2008linear}. 

The study of the sparse variant \klwe{} was initiated much more recently \cite{jain24sparseLWE}, but again no improvement to the naive $n^{\Omega(k)}$ sample and time requirement is known (and in \cite{jain24sparseLWE}, the authors show that natural strategies fail). 
Altogether, 
\begin{quote}
    \begin{center}
         the fastest known algorithms for
        \klpn{} and \klwe{} take $n^{\Omega(k)}$ time.
    \end{center}
\end{quote}
% In~\cref{sec:low_degree_hardness}, we give heuristic evidence towards this claim within the framework of \emph{low-degree polynomial tests}. that has recently emerged in the field of algorithmic high-dimensional statistics~\cite{barak19sos,hopkins2017efficient,hopkinsThesis} -- we remark that our low-degree polynomial hardness applies to any finite field and for arbitrary distribution on the support for the rows of the public matrix $A$.
 
% On a high level, most works assume that for any $k$, there is no algorithm that performs significantly better than brute-force for (the decision version of) $k$-sparse \lpn, which requires $n^{\Omega(k)}$ equations and time.
% This is known to be true in the \emph{statistical query model}, a restricted class of algorithms~\cite{kearns98sq}. 
% Alternatively, some results hold under the weaker condition that super-polynomially many equations and time are required for $k$ mildly growing with $n$ (such as $\omega(1)$ or $\poly(\log n)$).
% \gb{fold into earlier}

\paragraph{Applications of \klpn{} and \klwe.}
%Hardness assumptions on \klpn, such as optimality of the $n^{\Omega(k)}$ running time above, have found many applications.
Assuming that the $n^{\Omega(k)}$ runtime for \klpn{} is optimal has become a standard and very useful assumption in learning theory and cryptography.
In~\cite{feige02refutingkxor,alekhovich03averagecasevsapproximation}, a natural hardness assumption for 3-\lpn{} was shown to imply new hardness of approximation results for min bisection, densest subgraph, %\gb{might want to remove $k$ from here given that we use it in a different context}
 and max bipartite clique. 
Hardness of \klpn{} has also been instrumental in deriving hardness results for statistical tasks: agnostically learning halfspaces
\cite{daniely2016complexity}, learning DNFs \cite{daniely2021local,bui2024structured},  and noisy tensor completion~\cite{barak16noisytensor} (see also our result described in~\cref{sec:applications_intro}).
% The neural network lower bounds (\cite{shalev2017failures,goel2017reliably,goel2019time}) use sparse secrets instead of sparse rows

Beyond applications to other problems, the refutation task on \klpn{}
% , in which the number of non-zero entries is equal to $k$, 
(also known as refuting random \kxor{} formulas) is a canonical problem in average-case complexity that is of interest in its own right~\cite{feige02refutingkxor,alekhovich03averagecasevsapproximation,feige06witnesses,barak16noisytensor,Allen2015HowTR,raghavendra17stronglyrefuting,kothari17soslbanycsp,guruswami22algorithmscsp,guruswami23semirandom}.
% \gb{Feige paper showing existence of short refutation proofs? Pravesh has some stuff too?}

In cryptography, sparse \lpn{} has found applications in private- and public-key encryption \cite{ishai08cryptoconstant,applebaum10pkcryptographyfromdifferentasssumptions,dottling12indcca}, signature schemes \cite{ishai08cryptoconstant},
construction of pseudorandom generators with linear stretch in $\mathbf{NC}^{\mathbf{0}}$ \cite{applebaum06prgnc0}, oblivious transfer \cite{couteau21silverlpnot}, multi-party homomorphic secret sharing \cite{dao23multipartyhomomorphic},
 indistinguishability obfuscation \cite{ragavan24io} and, concurrent to our work, somewhat homomorphic encryption~\cite{cryptoeprint:2024/1760}.

As noted above, the study of \klwe{} 
was only recently initiated in~\cite{jain24sparseLWE}. The authors use \klwe{} to construct state-of-the-art linear (and constant degree) homomorphic encryption schemes. In the current work, we give a further application of \klwe{} --- we derive a lower bound for tensor completion based on the worst-case hardness of lattice problems.

\subsubsection{An Unsatisfactory State of Affairs: Motivation For This Paper}

% Hardness of \klpn{} has been one of the central assumptions in high-dimensional statistics, learning theory, and cryptography, facilitating a range of new hardness results since its introduction more than 20 years ago \cite{alekhovich03averagecasevsapproximation}. The \klwe{} assumption has only been introduced recently \cite{jain24sparseLWE}, but is already promising with applications to state-of-the-art linear homomorphic encryption 
% % \stefanc{Maybe we can be a bit more specific here. In particular, if we can say that it enabled sth that people had been trying to get before, this gives more credibility to the next sentence.} 
% in cryptography (\cite{jain24sparseLWE}) and noisy tensor completion (this work).
In light of the continuous use of the hardness of sparse noisy linear equations over the past more than 20 years
%(starting with \cite{alekhovich03averagecasevsapproximation} for \lpn)
and recent developments (\cite{jain24sparseLWE} for \lwe), we anticipate 
further applications of the assumptions to fundamental problems in cryptography, learning theory, and statistics. Of central importance to this endeavour are the following questions: 
\begin{quote}
    \begin{center}
        \emph{What are reasonable hardness assumptions for problems on $k$-sparse linear equations?\\
        How can we be confident in these assumptions?}
    \end{center}
\end{quote}
% Do we need to make an entirely new hardness assumption for each variant of \lwe, \lpn, or other instance of noisy linear equations? 
% \stefanc{This questions makes our result (next line) seem a bit partial, no? I feel this more general question already appears in the first part of the intro (or should be added there), so that we can focus on the more specialized one here. I think if we had this "point of complexity is to not make new assumption each time" sentence and/or a version of the more general question, this part of the intro could echo this question, but this time in the concrete form for the k-sparse variant.}
These questions are the main motivation for our work.
%We address it for sparse equations via a reduction:
We address them via a reduction: we exhibit an algorithm that transforms standard noisy linear equations into sparse noisy linear equations. 
Our reduction shows that significantly improving over the brute-force running times of the sparse version would improve the state-of-the-art algorithms for the dense version, which would be a major breakthrough.
%While one may be unsatisfied that reductions only show \emph{conditional hardness} based on the standard problems, they also have an advantage as algorithmic primitives. Our reductions show that sufficiently fast algorithms for the sparse problems immediately imply state-of-the-art algorithms for the standard problems. Algorithms for \lpn{} and \lwe{} beating the best known (near)-exponential running times would be considered a major breakthrough. 
We remark that our aim is to obtain \emph{tight} hardness for the sparse problems under standard assumptions on the dense problem, essentially converting between brute-force running times.
That is, obtaining $n^{\Omega(k)}$-type lower bounds for the sparse problem assuming $2^{\Omega(n/\poly(\log n))}$-hardness for the dense problem.
%i.e., of the $n^{\Omega(k)}$ form matching brute-fore algorithms, starting with the near-exponential hardness that is believed for dense problems.
%In other words, if brute-force ($2^{n\times\polylog(n)}$-type complexity) algorithms are near-optimal for the standard (dense) variants, then we hope to conclude that brute force ($n^{\Omega(k)}$-type complexity) algorithms are near-optimal for the $k$-sparse variants.  

% % Viewed as an algorithmic primitive, 
% Conversely,
% our reductions show that any substantial improvement on brute-force algorithms for the sparse variants imply much better than brute-force algorithms for the standard variants. In particular, an algorithm for $k$-sparse LWE running in time $n^{o(k/(\log k + \log \log n))}$ would yield a $2^{o(n)}$-time (quantum) 
% % \gb{do we need quantum here? that's weaker than saying we get a classical alg} \stefanc{Yes, the classical reduction from lattice problems has a quadratic blow-up in the dimension. I.e., if latice problems take time $2^n$ classically, than LWE takes time $2^{\sqrt{n}}$. We need that it takes time $2^{n^{1/2 + \alpha}}$ for $\alpha > 0$ at the least.}
% algorithm for worst-case lattice problems in dimension $n.$ Such an algorithm would be considered a major breakthrough.
% \gb{this is essentially the same as the prior paragraph, can combine}

We note that the prior work of \cite{jain24sparseLWE} also gives a reduction from $k$-dimensional standard (dense) \lwe{} to $k$-sparse $n$-dimensional \lwe{}, showing that if the standard version takes exponential time, then the $k$-sparse version takes time at least $2^{\Omega(k)}$.
This lower bound, however, is suboptimal, as it does not depend on the dimension $n$ and does not show hardness when $k = O(\log n)$, a sparsity that is 
desirable for the applications in~\cite{jain24sparseLWE} and crucial for the applications of \klpnspace to hardness of learning. 
In particular, the level of sparsity often naturally corresponds to some measure of complexity in the learning problem (such as the number of terms in a DNF formula), and the goal is to rule out polynomial-time algorithms for as simple settings as possible (e.g., DNFs with as few terms as possible).
Further, it is far from the 
running time of the naive $n^{\Omega(k)}$ algorithm. We discuss this further in \cref{sec:priorreduction}.

% How can we be confident in where the hardness threshold lies, and needed runtimes? Do we need to make a separate assumption? 

% Despite its wide-spread use as a hardness assumption, the relation of solving sparse noisy linear equation to the (more well-studied) dense version, or any other believed-to-be-hard problems has remained mostly elusive, and the only hardness evidence comes in form of ruling out restricted classes of algorithms (with the exception of~\cite{jain24sparseLWE}).
% In particular, for almost all applications, one must posit an entirely separate assumption about the sparse version (from the dense version).
% For \lwespace the situation is even more troubling: For the dense version, there is a worst-to-average-case reduction from well-studied lattice problems~\cite{regev05LWE}, but no such reduction is known for the sparse case that would yield tight hardness results.

% \gb{remark on how reduction demonstrates a fundamental relationship between problems, and is not just about demonstrating computational hardness}
% Understanding whether we can base hardness of the sparse version on similar well-studied assumptions is thus a major outstanding question.

% More ambitiously, a conceptually clean way to achieve this would be to answer the following question
% \begin{quote}
% \begin{center}
%     \textit{Can we show a tight reduction to solving sparse noisy linear equations from its standard (dense) counterpart?}
% \end{center}
% \end{quote}

\subsection{Main Result: Near-Optimal Time-Sparsity Tradeoffs}

%\stefanc{key question: Can you show hardness under dense assumption, our result: yes}
%Despite its widespread use as a hardness assumption, so far it has remained unclear what the relation between sparse \lpn{} and the standard, dense, version is.
%More ambitiously, one could ask the following question about the relation of standard and sparse versions of solving
% \stefanc{Is there a better verb than "solving" that better captures all three tasks?} 
%general noisy linear equations.
%\begin{quote}
%\begin{center}
%    \textit{Can we base the hardness of solving sparse noisy linear equations on the hardness of the standard (dense) counterpart?}
%\end{center}
%\end{quote}
Our main result,
\cref{thm:maininformal},
is 
% an affirmative answer to the question above.
% In particular, we show 
a nearly tight reduction from the standard (dense) setting to the sparse setting for solving noisy linear equations. Our hardness results in the sparse problem nearly match known algorithmic guarantees under standard assumptions on the dense problem.

Our reduction is flexible and addresses a variety of scenarios. It applies to all three of the decision, search, and refutation problems (see~\cref{sec:tasks} for definitions).
% \kirilc{i would use from ... to ... rather than between because our reductions go only one way}
Second, our reduction applies simultaneously to a wide range of ring sizes, capturing both the \lpn{} setting of $q = 2$ and polynomial $q$, and standard settings of \lwe{} such as $q$ being a large polynomial (or even larger) in the dimension of the dense problem. Third, the reduction is agnostic to the error distribution in the case of decision and search problems.
Finally, our reduction can accommodate other variations, such as 
an arbitrary distribution over $\zqmult^k$ on the support of the rows of $A$ (see \cref{sec:complete_arbitrary_rows}) and varying support sizes across samples (see \cref{sec:variablesupportsizes}).
\paragraph{Implied Hardness Results and Comparison to Known Algorithms.}
Our result implies that, assuming (near-)exponential hardness of the dense problems, we achieve hardness of the sparse problems that (nearly) matches algorithmic upper bounds.

% In particular, assuming that there is no $2^{o(L/\log L)}$-time algorithm for the decision/search/ refutation problem for dense \lpn, we conclude that there is no $n^{o(\tfrac{k}{ \log k + \log \log n})}$-time algorithm for the decision/search/refutation problem for sparse \lpn.

\begin{corollary}[\klpn{} hardness]
\label{cor:introhardnessklpn}
    If there is no $2^{o(L/\log L)}$-time algorithm for the decision/search/ refutation problem for standard \lpn{} in dimension $L,$ then there is no $n^{o(\tfrac{k}{ \log k(\log k + \log \log n)^2})}$-time algorithm for the decision/search/refutation problem for \klpn{} in dimension $n.$
\end{corollary}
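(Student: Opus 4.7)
The plan is to derive \cref{cor:introhardnessklpn} as an immediate consequence of \cref{thm:maininformal} by composing with a hypothetical fast algorithm and then performing a careful asymptotic calculation. Specifically, I would argue by contrapositive: if $k$-\lpn{} admits an algorithm that is too fast, then composing it with the reduction yields an algorithm for standard \lpn{} that beats $2^{o(L/\log L)}$, contradicting the hypothesis.

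Concretely, suppose that an algorithm $\cA$ solves (decision, search, or refutation) $k$-\lpn{} in dimension $n$ in time $T(n,k) = n^{o(k/(\log k\cdot(\log k+\log\log n)^2))}$. Substituting $q=2$ (so $\log q = O(1)$) in \cref{thm:maininformal}, there is a $\poly(n)$-time reduction from the same task on standard \lpn{} in dimension
\[
L \;=\; \frac{k\log n}{\log k\cdot (\log k + \log\log n)}
\]
to $k$-\lpn{} in dimension $n$, preserving the task (the noise distribution in the case of decision and search, and refutation strength up to an additive $o(1)$). Composing with $\cA$ gives an algorithm for standard \lpn{} in dimension $L$ that runs in time $\poly(n) + T(n,k)$.

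The remaining step is to verify that $\poly(n) + T(n,k) = 2^{o(L/\log L)}$. Writing everything in base $2$, $T(n,k) = 2^{o(k\log n/(\log k\cdot(\log k+\log\log n)^2))}$. Because the denominator in $L$ is at most polylogarithmic in $k$ and $\log n$,
\[
\log L \;=\; \Theta(\log k + \log\log n),
\]
and hence
\[
\frac{L}{\log L} \;=\; \Theta\!\left(\frac{k\log n}{\log k\cdot(\log k+\log\log n)^2}\right).
\]
Thus $T(n,k) = 2^{o(L/\log L)}$. The overhead $\poly(n) = 2^{O(\log n)}$ is likewise $2^{o(L/\log L)}$ in the nontrivial regime $k = \omega(\log k\cdot(\log k+\log\log n)^2)$; outside this regime the claim is vacuous, since the putative lower bound on $T(n,k)$ is polynomial in $n$ and always satisfied by a nontrivial algorithm.

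There is no serious obstacle in this derivation: the heavy lifting is done entirely by \cref{thm:maininformal}. The only subtlety that demands care is that the precise form of the $\poly(\log q, \log k,\log\log n)$ denominator in the reduction's dimension (given in the formal \cref{thm:mainreduction}) is, for $q=2$, exactly $\log k\cdot(\log k+\log\log n)$; this is what makes $L/\log L$ line up with the exponent in the corollary. The identical template proves \cref{cor:explwe} (using $2^{\Omega(\ell)}$-hardness of \lwe{} and saving one factor of $\log L$ in the exponent), with only the final arithmetic step adjusted.
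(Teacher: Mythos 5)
Your proposal is correct and follows essentially the same route as the paper: the paper proves this exact statement (formally, \cref{cor:nearexplpn}) by instantiating the reduction with $q=2$ and $h=\Theta(k/(\log k+\log\log n))$, $t=\Theta(\log n/\log k)$ so that $L=\Theta\big(\tfrac{k\log n}{\log k(\log k+\log\log n)}\big)$, and then observing $L/\log L=\Theta\big(\tfrac{k\log n}{\log k(\log k+\log\log n)^2}\big)$, exactly your contrapositive composition and arithmetic. The only (minor) difference is that the paper's formal version makes the side conditions explicit ($k=o(\sqrt n)$ and $k=\omega((\log\log n)^2\log\log\log n)$, the latter matching your ``nontrivial regime'' where the $\poly(n)$ overhead is absorbed), rather than dismissing the complementary regime as vacuous.
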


This result is nearly tight, since as noted earlier, given $O(n^{1.01k})$ samples there are $n^{O(k)}$-time algorithms. 
 % (and this is tight within the SoS hierarchy, see \cref{sec:preliminaries}) 
%  \stefanc{Do we need the comment about the search-to-decision reduction?}
% It is known that a $T$-time algorithm using $m$ equations for decision, implies a $\poly(T,m,n)$-time algorithm using $m^{1+2/k}$ equations~\cite{bogdanov19XORcodes}, yielding an $n^{O(k)}$-time algorithm using $n^{O(k)}$ equations for search.
% We remark that our lower bound results leave open the intriguing possibility of a decision/search and refutation gap, i.e., that the latter task requires strictly more samples and/or time (for either of the sparse or dense cases). 

Our lower bounds do not restrict the number of samples accessible to the algorithm.
It is interesting to compare this with the more fine-grained picture for refutation when given fewer than $O(n^{k/2})$ samples.  
% For decision and search, the fastest known algorithms in the sparse case run in time $n^{O(k)}$ \stefanc{add citations}.
% For refutation the algorithmic landscape is slightly more complex. 
Given only $O(n^{k(1-\delta)/2+\delta})$ samples, for some constant $\delta$, the fastest known algorithm runs in time $2^{\tilde{O}(n^\delta)}$ \cite{raghavendra17stronglyrefuting}.
Furthermore, in the case of binary \lpn{} there is evidence, based on lower bounds against the powerful sum of squares hierarchy, that this might be necessary~\cite{grigoriev2001linear,schoenebeck2008linear,kothari17soslbanycsp}. In~\cref{sec:low_degree_hardness}, we provide similar evidence for any finite ring $\zqz$ and for arbitrary distribution on the support for the rows of the public matrix $A$ within the framework of \emph{low-degree polynomial tests}~\cite{barak19sos,hopkins2017efficient,hopkinsThesis}.
We leave open the problem of showing an analogous result using a reduction-based approach as in this paper.

% % We refer to~\cref{sec:low_degree_hardness}
% \gb{These last two paragraphs seem out of place to me.}
% \stefanc{Maybe we just briefly restate that this is close to algorithms in the sense that in time $n^{O(k)}$ you can solve the problem and state the question for better lower bounds given fewer samples? I.e., matching the SoS lower bounds}

In a similar vein to our \klpn{} result above, we obtain the following hardness in the case of \lwe:
\begin{corollary}[\klwe{} hardness]
    If there is no $2^{o(L)}$-time algorithm for the decision/search/refutation problem for standard \lwe{} in dimension $L$ with $q = \poly(L)$, there is no $n^{o(\tfrac{k}{\log k\cdot( \log k + \log \log n)})}$ time algorithm for the decision/search/refutation problem for \klwe{} in dimension $n.$
\end{corollary}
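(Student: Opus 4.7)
The plan is to derive this corollary as a direct consequence of the main reduction (\cref{thm:maininformal}) via a straightforward parameter calculation and a contrapositive argument. First I would recall that \cref{thm:maininformal} gives a polynomial-time reduction from standard \lwe{} in dimension $L = \Theta(k\log n / \poly(\log q, \log k, \log\log n))$ to \klwe{} in dimension $n$, and that this reduction preserves the modulus, the noise distribution, the task type (decision, search, or refutation), and the number of samples up to a $1-o(1)$ multiplicative factor.

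Next I would specialise to the regime $q = \poly(L)$. Since $L \le k\log n$ trivially, we get $\log q = O(\log L) = O(\log k + \log\log n)$, so the denominator $\poly(\log q, \log k, \log\log n)$ collapses to a polynomial in $\log k$ and $\log\log n$ only. Inspecting the exponent in the statement, the relevant form is $\log k\cdot(\log k + \log\log n)$, so $L = \Theta\!\bigl(k\log n / (\log k\cdot(\log k + \log\log n))\bigr)$, and consequently
\[
2^{L} \;=\; n^{\Theta(k / (\log k\cdot(\log k + \log\log n)))}\,.
\]

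With this identity in hand, I would argue by contrapositive. Suppose for contradiction there is an algorithm $\mathcal{A}$ solving decision (respectively search or refutation) \klwe{} in dimension $n$ in time $T = n^{o(k/(\log k\cdot(\log k + \log\log n)))}$. Composing the reduction of \cref{thm:maininformal} with $\mathcal{A}$ yields an algorithm for the corresponding task on standard \lwe{} in dimension $L$ with modulus $q$ and matching noise distribution, whose running time is at most $\poly(n) + T = 2^{o(L)}$ by the parameter identity above. This contradicts the assumed $2^{\Omega(L)}$ hardness of standard \lwe{} in dimension $L$. The same argument handles all three tasks uniformly because \cref{thm:maininformal} covers them uniformly.

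The one point to verify carefully is that the unspecified $\poly(\log q, \log k, \log\log n)$ in the main reduction really is at most $O(\log k\cdot(\log k + \log\log n))$ once $\log q = O(\log k + \log\log n)$; this is bookkeeping against the quantitative form of \cref{thm:mainreduction} rather than any new idea, and is the main (minor) obstacle. A secondary technicality is that the reduction only preserves the sample count up to a $1 - o(1)$ factor, so I would need to ensure the sample budget of $\mathcal{A}$ is still met after this shrinkage — but this is immediate since $\mathcal{A}$'s sample requirement is defined up to constants in $n$ and the loss is sub-constant with high probability.
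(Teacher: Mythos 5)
Your proposal is correct and follows essentially the same route as the paper: the paper proves this by instantiating the main reduction with $h = \Theta\bigl(k/(\log k + \log\log n)\bigr)$ and $t = \Theta(\log n/\log k)$, so that $L = \Theta\bigl(k\log n/(\log k(\log k + \log\log n))\bigr)$ once $q = \poly(L)$, and then argues by contrapositive exactly as you do. The only bookkeeping the paper's formal version makes explicit that you gloss over is that $\poly(n) = 2^{o(L)}$ forces $L = \omega(\log n)$, i.e.\ the mild sparsity restrictions $k = \omega\bigl((\log\log n)(\log\log\log n)\bigr)$ and $k = o(\sqrt{n})$, which the informal corollary statement suppresses.
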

% Assuming that there is no $2^{o(L)}$-time algorithm for (either of the three tasks on) standard \lwe{} (with $q = \poly(L)$), we conclude that there is no $n^{o(\tfrac{k}{ \log k + \log \log n})}$-time algorithm (for the same task) in $k$-sparse \lwe.
% Again, all variants can be solved in time $n^{O(k)}$  using $n^{O(k)}$ many equations).

For all three of decision/search/refutation for both \lwe{} and \lpn, our results imply that under standard near-exponential hardness assumptions on the dense problem, there is no polynomial-time algorithm for the sparse problem when $k = \omega((\log \log n)^2(\log\log\log n))$ for \lpn{} and $k = \omega((\log \log n)(\log\log\log n))$ for \lwe{} (and $k = o(\sqrt{n})$ always).
For both \lwe{} and \lpn{}, we also derive similar hardness results in the sparse setting assuming only sub-exponential or quasi-polynomial hardness of the standard problems.  
\cref{sec:standardlpn,sec:standardlwe} describe these and other results.

\paragraph{Downstream Applications to Cryptography and Learning Theory.}
As a direct consequence of~\cref{cor:introhardnessklpn}, many (but not all) previous results based on \klpn{} can now be based on standard \lpn.
% Most notably we show that the \lpn{} assumption implies hardness results for agnostically learning halfspaces, learning $\exp((\log \log n)^{\omega(1)})$-term DNFs, ... \stefanc{add applications to crypto}.
We refer to~\cref{sec:applications_learning,sec:applications_crypto} for an in-depth discussion.

\subsection{Application to Tensor Completion}
\label{sec:applications_intro}

Besides the downstream applcations obtained by combining~\cref{thm:maininformal} with previous reductions, we also show a new reduction from $k$-sparse \lwe{} to noisy tensor completion (see, e.g., \cite{barak16noisytensor}) that we discuss next.
Together with~\cref{thm:maininformal} this gives the first hardness for this problem based on a well-established worst-case assumption (when combined with the standard reduction from the gap-shortest vector problem to the dense version of \lwe~\cite{regev05LWE,peikert09publickey}).

% \paragraph{Application To Tensor Completion.}
%Besides the aforementioned downstream applications, we also apply our result in the setting of \lwe{} to give a lower bound for the tensor completion problem~\cite{barak16noisytensor}.
We define the problem rigorously in \cref{sec:definingtensorcompletion} and only give an informal overview here.
In the tensor completion problem, there is an unknown  order $k$ tensor that is approximately rank $r$,
\begin{align}
\label{eq:align}
T = \sum_{i = 1}^r \sigma_i \cdot( u_i^{1}\otimes  u_i^{2}\otimes\cdots\otimes u_i^{k}) + \Delta  
\end{align} 
in $\mathbb{R}^{n \times \ldots \times n}$ and we observe $m$ uniformly random entries. 
Here, 
$\Delta$ is some appropriately constrained noise tensor.
For normalization, we assume that the components $u^j_i\in \mathbb{R}^n$ satisfy $\|u^j_{i}\|_2 = \sqrt{n}$. The
$\sigma_i$ are scalars (note that they can make the norms arbitrary large or small).
The goal is to construct an estimate $\hat{T}$ such that $\norm{T - \hat{T}}_1 \leq o(\norm{T}_1)$ using as few observed entries as possible.
% on input $\big\{\big((t_i^1, t_i^2, \ldots, t_i^k), T_{t_i^1, t_i^2, \ldots, t_i^k}\big\}_{i= 1}^{m}$ where $(t_i^1, t_i^2, \ldots, t_i^k)_{i=1}^m$ are independent uniform $k$-tuples of indices. \gb{edit this last description.}

%\kb{Iam not sure what exactly intractable is here. It is IT solvable. I would rewrite as:

%No $n^{O(k)}$-time algorithms for this problem are known given fewer significantly fewer than $n^{k/2}$ observations.
%}
% \stefanc{I was a bit confused by the literature here, it seems that Ankur's paper only gets the above guarantee when $k = 3$ and $\norm{T}_1 = \Omega(1)$ and that the second paper only works in the noiseless case (but for general $k$) -- but the error is in Frobenius norm. Kiril said that Ankur's paper also works for general $k$, but I couldn't find a reference.}
% \kb{I am a little confused what you mean by $\norm{T}_1 = \Omega(1).$ This just happens because each component is normalized to have 2-norm $\sqrt{n}.$ But everything is scale-invariant. That is, if you blow up both the noise and error by the same factor nothing changes? }
There exist algorithms \cite{barak16noisytensor} (and \cite{montanari2016spectraltensor} for the noiseless version) for this problem running in time $n^{O(k)}$ with access to $m = \tilde{O}(rn^{k/2})$ observed entries under two assumptions which are \emph{necessary even for inefficient algorithms.}
First, \emph{the noise $\Delta$ has small norm relative to $T$}, for instance it holds that $\norm{\Delta}_1 = o(\norm{T}_1)$. 
Second, the components are incoherent, i.e. \emph{no component $u^j_i$ is concentrated on a small set of coordinates.}
We discuss these assumptions more in \cref{sec:definingtensorcompletion}.

Of course, if one aims to explicitly store $\widehat{T}\in \mathbb{R}^{n^{\otimes k}},$ there is no hope for an algorithm running in time $o(n^{k}).$ Yet, one may still hope that one can learn a data-structure $\widehat{T}$ which on input $(i_1,i_2,\ldots,i_k)$ returns $\widehat{T}_{i_1, i_2, \ldots, i_k}\in \mathbb{R}$ such that $\mathbb{E}_{(i_1, \ldots, i_k)} \lvert T_{i_1, \ldots, i_k} - \widehat{T}_{i_1, \ldots, i_k} \rvert = o\big(\mathbb{E}_{(i_1, \ldots, i_k)} \lvert {T}_{i_1, \ldots, i_k} \rvert\big).$ For example, if the algorithm finds scalars $(\widehat{\sigma_i})_{i = 1}^r$ and components $(\widehat{u_{i}}^j)_{1\le i \le k, 1 \le j \le r}$ such that 
$$
\sum_{i = 1}^r \sigma_i \cdot( u_i^{1}\otimes  u_i^{2}\otimes\cdots\otimes u_i^{k})\approx \sum_{i = 1}^r \widehat{\sigma_i} \cdot( \widehat{u_i}^{1}\otimes  \widehat{u_i}^{2}\otimes\cdots\otimes \widehat{u_i}^{k}),
$$ then the data structure can 
compute $\widehat{T}_{i_1, i_2, \ldots, i_k}$ as $\sum_{\ell = 1}^r\widehat{\sigma_\ell}(\widehat{u_\ell}^1)_{i_1}(\widehat{u_\ell}^2)_{i_2}\cdots(\widehat{u_\ell}^k)_{i_k}$ in time $\tilde{O}(kr)$, which is much faster than $n^k$ when the rank $r$ is small.

Such an algorithm running in time $n^{o(k)},$ however, is not known. We give the first, to the best of our knowledge, hardness evidence via a reduction from the decisional binary $k$-sparse \lwe{} problem (in which all values on the support are equal to 1).
Combining this with~\cref{thm:maininformal} and the worst-to-average-case reduction from the gap shortest vector problem to standard (dense) LWE~\cite{regev05LWE} gives hardness evidence based on worst-case lattice problems.
% Note that we can only rule out algorithms running in time $n^{o(\tfrac k {\log k + \log \log n})}$, which is less then needed to write down an estimate for each entry of the tensor.
% We thus require the following natural strengthening of the algorithmic guarantee: We require a procedure that, given an index $(i_1, \ldots, i_k) \in [n]^k$ chosen uniformly at random, outputs an estimate $\hat{T}_{i_1, \ldots, i_k}$ such that $\mathbb{E}_{(i_1, \ldots, i_k)} \lvert T_{i_1, \ldots, i_k} - \hat{T}_{i_1, \ldots, i_k} \rvert \leq o(\tfrac 1 {n^k} \norm{T}_1)$.\footnote{Note that an algorithm achieving small error in $\ell_1$-norm can be used to satisfy this.}
% We refer to~\cref{sec:definingtensorcompletion,sec:resultsontensor} for more details.
\begin{theorem}[Lattice-Based Lower-Bound For Tensor Completion, Informal]
\label{thm:tensorcompletioninformal}
Assume that the gap-shortest vector problem in dimension $L$ takes time at least $2^{\Omega(L)}$ in the worst-case (for quantum algorithms).
Then, order-$k$ noisy tensor completion with rank $r\ge 2^{k-1}$ cannot be solved in time $n^{o(\frac{k\log n}{\log k\cdot (\log k + \log \log n)})}$ even under the {small noise} and {incoherence} assumptions.
\end{theorem}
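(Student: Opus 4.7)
The plan is to compose three reductions. First, Regev's celebrated quantum worst-to-average-case reduction turns a $2^{o(L)}$-time algorithm for dense $\lwe$ in dimension $L$ over $\zqz$ (with $q = \poly(L)$ and discrete Gaussian errors) into a $2^{o(L)}$-time quantum algorithm for $\gapsvp$ in dimension $L$ with polynomial approximation factor. Second, the \klwe{} hardness corollary of \cref{thm:maininformal} converts dense $\lwe$ in dimension $L = \Theta(k \log n /(\log k\cdot (\log k + \log \log n)))$ into decisional binary $\klwe$ in dimension $n$, preserving the noise distribution. The novel ingredient is a final reduction from decisional binary $\klwe$ to noisy order-$k$ rank-$2^{k-1}$ tensor completion that I describe next.

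The key idea is to exponentiate linear $\klwe$ equations into a rank-one multiplicative structure. Let $\omega = e^{2\pi \imath / q}$ and define the planted vector $u \in \mathbb{C}^n$ by $u_j = \omega^{s_j}$. For a binary \klwe{} sample $(S_i, y_i)$ with $y_i \equiv \sum_{j \in S_i} s_j + e_i \pmod q$, we have $\omega^{y_i} = \omega^{e_i} \cdot \prod_{j \in S_i} u_j = \omega^{e_i} \cdot T^{\star}_{S_i}$, where $T^{\star} := u^{\otimes k} \in \mathbb{C}^{n^{\otimes k}}$. Taking real and imaginary parts via Euler's formula, both $\mathrm{Re}(T^{\star})$ and $\mathrm{Im}(T^{\star})$ decompose into at most $2^{k-1}$ real rank-one tensors whose components are the vectors $(\cos(2\pi s_j/q))_j$ and $(\sin(2\pi s_j/q))_j$ in $\mathbb{R}^n$. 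Because every entry of $u$ has unit modulus, each component vector has $\ell_\infty$-norm $O(1)$ and $\ell_2$-norm $\Theta(\sqrt{n})$, so the incoherence hypothesis is automatic. The entry-wise perturbation introduced by LWE noise has magnitude $|\omega^{e_i}-1| = O(|e_i|/q)$, which is $o(1)$ in the standard LWE regime, and aggregates to a noise tensor $\Delta$ with $\|\Delta\|_1 = o(\|T^{\star}\|_1)$. The reduction outputs observations at ordered $k$-tuples obtained by symmetrizing $S_i$, noting that tuples with coordinate repetitions form an $O(k^2/n) = o(1)$ fraction and may be filled arbitrarily.

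The main obstacle is to verify, against the definition of noisy tensor completion in \cref{sec:definingtensorcompletion}, that this translation is a bona fide decision-to-decision reduction: under the planted \klwe{} distribution the resulting instance satisfies the low-rank, incoherence, and small-noise hypotheses and a successful tensor completion algorithm returns $\hat T$ with $\|\hat T - T\|_1 = o(\|T\|_1)$; under the uniform null distribution of \klwe, however, the observed values are uniform over $\zqz$ and no rank-$2^{k-1}$ incoherent tensor can fit them on a random polynomially-large held-out set of coordinates, so any recovery procedure fails detectably on this holdout. The decision reduction therefore runs the hypothesized tensor completion algorithm on a subset of samples and tests consistency on the remainder. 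Chaining this with the two prior reductions shows that an algorithm for tensor completion running in the forbidden time would yield a $2^{o(L)}$-time quantum algorithm for $\gapsvp$, contradicting the assumed worst-case hardness.
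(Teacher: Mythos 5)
Your proposal is correct and follows essentially the same route as the paper: it chains Regev's worst-to-average-case reduction and the dense-to-sparse \lwe{} reduction (\cref{cor:explwe}) with the same exponentiation trick $u_j = e^{2\pi i s_j/q}$ that turns binary $k$-sparse \lwe{} samples into observed entries of a rank-one complex tensor whose real part has rank $2^{k-1}$, with $O(1)$-incoherence from unit modulus, entrywise noise of order $\min(|e|,q-|e|)/q$, and the null case handled by the same train/holdout consistency test that converts a completion algorithm into a distinguisher. This is exactly the paper's \cref{thm:tensorcompletion} (via \cref{alg:tensorcompletion}) composed with \cref{cor:explwe}, up to minor imprecision in your phrasing of the noise wrap-around and the null-case entry distribution.
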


We remark that the small noise and incoherence assumption hold with exponentially high probability.
Our lower bound will follow by showing that we cannot solve the following tensor distinguishing problem: Either we observe i.i.d. sampled entries of a noisy low-rank tensor, or we observe i.i.d. sampled entries of a tensor with independent, mean $o(1)$, and constant variance entries.

% Sparse \lpn{} also has several applications to cryptography. In \cite{applebaum10pkcryptographyfromdifferentasssumptions}, the authors show a public-key encryption scheme based on sparse  \lpn. More recently, \cite{ragavan24io} use sparse \lpn{} (together with several other cryptographic assumptions) for indistinguishability obfuscation.   Combining with our \cref{thm:maininformal}   

% \stefanc{Can be merged with previous discussion}
% \paragraph{LWE with Sparse Public Matrix.}
% The study of \lwe{} with a sparse public matrix was initiated much more recently by \cite{jain24sparseLWE}.
% The authors show several structural results for sparse \lwe{} instances such as an efficient algorithm for generating the public matrix together with a short trapdoor. The authors further use sparse \lwe{} to construct very efficient linear homomorphic encryption schemes which, in turn, have downstream applications to tasks such as private information retrieval and 
% private set intersection. \kirilc{write more when paper is out}

% \section{Introduction Version 3}
% \label{sec:intro}
% \input{introductionV3}

\section{Overview of Our Reduction}
\label{sec:technical_overview}
We next give an overview of the proof of our main theorem, namely \cref{thm:maininformal}.
% We first focus on the case when the distribution on the support is deterministically 1.
% We show how to achieve arbitrary distribution on the support as well as varying support sizes at the end of this section.

\subsection{The High-Level Approach: Linear Mapping from Dense to Sparse Equations}%Gadget Matrices}

In what follows, we denote the input to the dense problem by $(B,y)$ and the input to the sparse problem by $(A,y')$.
Suppose we are given $(B,y)$ such that $B$ is uniform over $(\zqz)^{m \times L}$ and we want to solve one of the decision/search/refutation problems given access to an oracle for the same problem in the sparse setting.
The high-level idea is to transform each dense equation to a sparse one in a way that preserves the approximate linear relation $b_i ^\top s\approx y$ (or absence thereof) between ``coefficients'' $b_i$ in row $i$ of $B$ and the label $y_i$.

A common approach in the literature towards this end (e.g. \cite{boneh13bit} for binary LWE, and \cite{jain24sparseLWE} for sparse LWE in certain regimes, discussed in \cref{sec:priorreduction}) is to map the coefficient vector of each equation.
If we can find a linear mapping, this will preserve the linear relation between the coefficient vector and the label.
More specifically, we aim to find a matrix $G \in (\zqz)^{n \times L}$ (we sometime refer to this as a \emph{gadget} or \emph{decoding matrix}) such that given a sample $(b_i, y_i) \in (\zqz)^L \times \zqz$, we can efficiently find a $k$-sparse $a_i\in (\zqz)^{n}$ such that $a_i^\top G = b_i^\top$.
Note that given this, it holds that %when $y_i \approx b_i^\top s$ for some vector $s$, this implies that $y_i \approx a_i^\top (Gs)$, that is, the approximate linear relation is preserved.
%this equation implies
\begin{align}
\label{eq:gadgeteq1}
y_i \approx b_i^\top s \text{ for some } s 
% \text{ if and only if } 
\quad \Leftrightarrow \quad y_i \approx a_i^\top (Gs) \text{ for some } s. %b_i^\top s = (a_i^\top G)s = a_i^\top (Gs).
\end{align}
Further, it should hold that the sparse equation has the correct distribution, i.e., that $a_i$ is uniform over $k$-sparse vectors in $(\zqz)^n$.
That is, we require the following property:
\begin{align}   
\label{eq:gadgeteq2}
    \text{If }b_i\sim\unif((\zqz)^L), \text{ then }a_i \text{ is a uniformly random }k\text{-sparse vector in }(\zqz)^{n}.
\end{align}

We refer to the process of computing $a$ given $b$ as the \emph{decoding algorithm}. (Indeed, a watchful reader would already have seen the analogy with syndrome decoding in coding theory).
We claim that such a primitive can imply a reduction between all of the decision/search/refutation versions.
We illustrate this here for the search task, and refer to~\cref{sec:reduction_for_all_three_tasks} for the other two tasks.\footnote{Crucially, for the refutation task we need that \emph{almost all} of the equations are mapped successfully with high probability, while this is not necessary for decision and search. See~\cref{sec:reduction_for_all_three_tasks} for details.}
Assume that $y_i = b_i^\top s + e_i$ for some error term $e_i$.
Then, we can compute the mapped equation $y_i' = y_i + a_i^\top z$, where $z \in (\zqz)^n$ is drawn uniformly at random.
It holds that
\[
    y_i' = y_i + a_i^\top z = b_i^\top s +e_i + a_i^\top z= a_i^\top (Gs + z) + e_i \,,
\]
where we use the same $z$ for each sample. The only role of $z$ is that the final secret $Gs + z$ is uniform over $(\zqz)^n$ and independent of everything else.
Thus, given a search oracle for the sparse version we can recover $Gs + z$ and hence also $Gs$.
If $G$ is full-rank we can then recover $s$. 

\subsection{Prior Gadget Matrix Constructions and Challenges}
\label{sec:priorreduction}
The only prior work that we are aware of which reduces standard (dense) noisy linear equations to sparse ones is 
\cite{jain24sparseLWE} in the case of  \lwespace (and large-modulus LPN) with $q$ prime.
The authors follow the gadget matrix approach outlined above with $L = k$ and $G \in (\zqz)^{n \times k}$ the Vandermonde matrix, i.e., whose $i$-th row is equal to $(i^0, i^1, i^2, \ldots, i^{k-1})$.

To solve $a^\top G = b^\top$ on input $b\sim\unif((\zqz)^k)$ such that $a$ is a uniformly random $k$-sparse vector, 
they first draw a uniformly random support $S\in \binom{[n]}{k}$ \emph{independently} of $b$.
Then, they restrict $a$ and $G$ to coordinates $S$ and solve the equation $a^{\top}_SG_{S\times [k]} = b^\top$.
Whenever $q\ge n,$ \emph{every $k\times k$ minor of $G$ is of full rank} and, hence, a unique solution for $a_S$ exists.
Since $b$ is uniform over $(\zqz)^k$, it is easy to check that $a_S$ is also uniform over $(\zqz)^k$, so $a$ has the correct distribution.
Overall, the reduction of \cite{jain24sparseLWE} shows that $n$-dimensional $k$-sparse \lwe{} is at least as hard as $k$-dimensional \lwe; assuming exponential hardness of \lwespace implies that the $k$-sparse version takes time at least $2^{\Omega(k)}$.

We outline two ways in which one might hope to improve over this reduction:

\paragraph{Desideratum 1: Small Values of $k.$}
% As discussed before, even assuming exponential hardness of \lwe, 
As noted just above, the hardness implied by \cite{jain24sparseLWE} for $n$-dimensional \klwe{} is at best only $2^{\Omega(k)}$, which does not yield any hardness against polynomial-time (in $n$) attacks when $k = O(\log n)$. %This is a concern because certain protocols such as the linear homomorphic encryption based on sparse \lwe{} has overhead  $O(k\log \log q)$ and rate $O(k)$ and, thus, improves on previous work only in the regime $k = (\log n)^{o(1)}$ \cite{jain24sparseLWE}. The $2^{\Omega(k)}$ hardness also falls short of the conjectured $n^{\Omega(k)}$ lower bound that is frequently assumed and known to be true in restricted models of computations such as the SoS hierarchy~\cite{grigoriev2001linear,schoenebeck2008linear} and the SQ model for the related sparse \lpn problem (and, as we show, in the low-degree framework \cref{lemma:lowdegreeinzq} for sparse \lwe). 
%\gb{This paragraph is redundant and belongs in "unsatisfactory state of affairs". The only needed sentences are the first two, JLS gives $2^k$ and we need $n^k$.}
%\stefanc{I commented out the rest}
%\stefanc{I tried making the argument below a bit more concise and phrasing it in a way that ties it back in to the entropy argument we made above, here's the new version:}
It is instructive to analyze where their reduction is loose.
As an intermediate step of transforming a dense $b \in (\zqz)^L$ to a $k$-sparse $a \in (\zqz)^n$, they compute a dense vector $\tilde{a} \in (\zqz)^k$ such that $\tilde{a}^\top \tilde{G} = b$, for some deterministic matrix $\tilde{G}$.
Note that the entropy of $b$ is equal to $L \log q$ and this must be smaller than the entropy of $\tilde{a}$.
Since the entropy of $\tilde{a}$ is at most $k\log q$, we must have $k \log q \geq L \log q$ and hence $k \geq L$. This approach does not exploit the entropy of the support location of $a\in (\zqz)^n.$ There are $\binom{n}{k}$ choices for the support of $a,$ which yields $\log \binom{n}{k}\approx k\log n$ additional bits of entropy. Thus, if one simply requires $a^\top G = b^\top$ (instead of $\tilde{a}^\top \tilde{G} = b$), the corresponding entropy calculation is $k \log n + k\log q\ge L\log q,$ allowing for larger values of $L.$
We record this as a guiding principle:
\begin{equation}
\label{eq:P1}
\tag{P1}
\textit{
        For small $k,$ when solving $a^\top G = b^\top$, one must choose the support of $a$ based on $b$. 
    }
\end{equation}

Note that a similar entropy does give \emph{some} limitations for general decoding matrices when $L = \omega(k).$
Since the entropy of $a$ is at most $k \log n$ it must hold that $L \log q \leq k \log n$, i.e., $L \leq k \log n / \log q$.\footnote{In this argument we implicitly assumed that $G$ is deterministic. Since we use the same $G$ for all equations, even for a random $G$ the average entropy added on any single equation is at most $\tfrac{n L \log q}{m}$ which is negligible for all parameter settings we consider.}
We will show that this upper bound can nearly be achieved.
In particular, we will be able to choose $L = k\log n/\poly(\log k,\log\log n, \log q)$.

\paragraph{Desideratum 2: Small Moduli.}
We would like our reduction to work also for small moduli, and in particular for $q = 2$ as in the case of \lpn.
Note that a key step in the reduction of \cite{jain24sparseLWE} is the inversion of an arbitrary $k \times k$ sub-matrix of $G$, which requires every $k\times k$ minor of $G$ to be full rank.
Such matrices $G$ are known as \emph{maximum distance separable} and in the regime $n\ge k,$
conjecturally, cannot exist if $q< n-2$ (e.g. \cite{huffman03fundamentals}). Hence, one cannot hope for a reduction in the important case of binary \lpn{} using the approach of \cite{jain24sparseLWE}.

% \paragraph{Challenge 3: Other Distributions on the Support.} Finally, it is unclear how to 
We would also like to be more versatile, e.g. obtain sparse vectors $a_i$ with distribution on the support different from uniform over $(\zqz)^k$. As one example, our application to the hardness of tensor completion from \lwespace{} requires \emph{binary} $k$-sparse coefficients $a_i$. This does not seem possible using the approach in \cite{jain24sparseLWE}.
% , in the current work, for example, we need binary $k$-sparse samples $a_i$ for the application to tensor completion.

\subsection{A Toy Reduction for Block-wise Sparsity}

A gadget mapping dense $L$-dimensional vectors to {\em blockwise} $k$-sparse $n$-dimensional vectors is not hard to construct. A blockwise sparse vector is one whose coordinates are divided into $k$ blocks, each of which has a single non-zero coordinate. Indeed, for a parameter $t=L/k$, let $H \in (\zqz)^{t \times q^{t}}$ denote a matrix whose columns are all the vectors in $(\zqz)^{t}$. Then, we will let the decoding gadget $G = I_k \otimes H \in (\zqz)^{t k \times q^{t}k}$. First of all, since $L=t k$ and $n= q^{t} k$, notice that $$ L = \frac{k\log (n/k)}{\log q} \,.$$
Secondly, given an $L$-dimensional vector $b$, one can decode each block of $t$ coordinates of $b$ into a unit vector in $q^{t}$ dimensions. Concatenating the unit vectors gives us a block-wise sparse $n$-dimensional vector $a$ such that $a^\top G = b^\top$. 
%While block-wise sparsity may suffice for some applications, we are not content with it. 

Permuting the coordinates might seem to suffice to turn blockwise sparsity into true sparsity. However, since the same permutation has to be used for all $m$ samples, the blockwise structure starts becoming apparent from even a small number of samples. For example, given $\omega(n^2)$ samples, every two coordinates $i,j$ appear in some row of the public matrix $A$ when the supports are uniformly random $k$-tuples. This property is not shared by the blockwise model for any number of samples under any permutation.

To decode to random sparse vectors, we need a cleverer idea which we describe next. We begin with an informal non-mathematical overview.

\subsection{Motivation for Our Reduction: Translation of Natural Languages}
\label{sec:naturallanguages}
% A key step in our reduction will be to split the decoding problem into smaller sub-problems and then combining the solutions to these sub-problems into a solution to the full problem.
The problem at hand is reminiscent of an age-old task: translating from one language (dense coefficient vectors) to another (sparse coefficient vectors). 
% Our approach is motivated by how translation of natural languages works. 
Translation of natural languages motivates our approach.
First, we split each sentence (coefficient vector) into words and translate word-by-word using a dictionary. Second, we use the grammar of the two languages to combine the respective words into meaningful sentences (coefficient vectors).
The intuition why such an approach is efficient is that the small number of possible words (as opposed to possible sentences or even paragraphs) allows for efficient translation. Imagine if one had to print a dictionary in the real world containing translations of all possible sentences!

For us, the grammar of the ``dense language'' is simple. Each sentence is simply a concatenation of words. Our dictionary allows us to translate each of these words into a sequence of $k$ indices over some small domain. These new sequences of length $k$ are the words in the ``sparse language''.      
% Our key contribution is a grammar for the sparse language. The grammar
One of our contributions is to specify a grammar that allows us to combine $k$ indices over a small domain into a $k$-sparse equation. It should already be clear that the approach taken is more sophisticated than simple concatenation, which would blow-up the sparsity $k.$

\subsection{Overview of Our Reduction to Sparse Equations} 
\label{sec:nontechnicaloverview}
We begin by explaining the reduction in the case that the distribution on the support is deterministically equal to 1, which captures both sparse problems \klpn{} and \emph{binary} \klwe.
At the end of this section, we explain how a simple modification addresses arbitrary distributions over $\zqmult^k,$ the integers coprime with $q,$ on the support.

%We start with the same idea as in \cite{jain24sparseLWE}.
Recall that as in~\cite{jain24sparseLWE}, we want to construct a gadget matrix $G\in (\zqz)^{n\times L}$ and, on input a uniform random $b \in (\zqz)^L$, solve $a^\top G = b^\top$ for a $k$-sparse $a\in \{0,1\}^n$. 
However, as observed in \eqref{eq:P1}, in doing so, we need to use the randomness of $b$ in choosing the location of the support of $a.$
In particular, we need to sample from (some large enough subset of) the possible supports. This task can be interpreted as a sampling version of 
the popular $k$-SUM problem, in which given $n$ vectors  in $(\zqz)^L$ and a target $b$ one needs to find $k$ of them that sum to $b.$
Evidence for hardness of this problem exists both in the average-case setting (when 
the $n$ vectors are random)
based on lattice problems \cite{brakerski2020hardness} and in the setting of worst-case vectors based on the exponential-time hypothesis \cite{patrascu10fastersat,abboud22sethksum}. Essentially, these results state that the time required is $\min(n^{\Omega(k)}, 2^{\Omega(L/\log k)})$, both of which are too slow for our purposes. Instead, we need a reduction that runs in time $n^{o(k)}$ (to conclude $n^{o(k)}$ hardness of $k$-sparse \lwe) and $2^{L/\log L} = 2^{L/\log \omega(k)}$ in the relevant regime $L = \omega(k)$
in the case of \lpn{} (as our reduction needs to be faster than the strongest plausible $2^{L/\log L}$ hardness of \lpn). This step of solving $k$-sum poses a serious obstacle to our approach. Unfortunately,
our task is actually even harder -- we furthermore need to \emph{sample} from the $k$-SUM solutions. 

We circumvent the hardness of $k$-sum by endowing $G$ with a convenient structure that makes finding solutions and sampling from them easy.

\paragraph{Word Decoding: Splitting into Small Sub-Problems That Can Be Solved Efficiently.}
We will decompose the problem into $t$ sub-problems by splitting the string $b$ into $t$ equal parts (similar to the celebrated BKW algorithm \cite{BKW}) of size $h\coloneqq L/t.$
Namely, represent each $b\in (\zqz)^L$ as a concatenation of $t$ uniform independent strings $b = (b_{1}, b_{2}, \ldots, b_{t})$ in $(\zqz)^h$.
% We will think of each $b_i$ as a concatenation of $t$ independent strings  of length $h.$ 
We will decode each $b_{j}$ into a $k$-tuple of indices over $[hq]^k$ via a smaller gadget matrix $V \in (\zqz)^{hq \times h}$ (this is the dictionary mentioned above). As we will need to solve $k$-SUM over this $V,$ it is natural that we want $V$ to have a very simple explicit structure. We take $V$ to contain all multiples of the identity:
\begin{align*}
    V= 
    \begin{pmatrix}
      0 \times I_h\\
      1\times I_h\\
      \quad \vdots\\
      (q-1)\times I_h
    \end{pmatrix}.
\end{align*}
We call the $hq$ rows $V_{1,:}, V_{2,:},\ldots, V_{hq,:} \in \mathbb{Z}^h_q$ words.
%Here, $h$ is the word length.
We want to find $\xi_j = (\xi_j(1),  \ldots, \xi_j(k)) \in [hq]^k$ such that $$\sum_{\psi = 1}^k V_{\xi_j(\psi),:} = b_{j}\,.$$ % (hereon, we identify $k$-sparse binary vectors with $k$-tuples of indices). 
Furthermore, if $b_j$ is uniform over $(\zqz)^h$, we will require the marginal distribution of $\xi_j$ to be uniform over $[hq]^k$.
We achieve this by outputting a uniformly random $\xi_j$ that satisfies $\sum_{\psi = 1}^k w_{\xi_j(\psi)} = b_{j}$.

Here, we take the opportunity to introduce some useful notation:
\begin{equation}
\label{eq:preimagedef}
\begin{split}
\Preimage_{V}(b) \coloneqq \Set{(\xi(1), \xi(2), \ldots,\xi(k))\in [hq]^k\suchthat \sum_{\psi = 1}^k V_{\xi(\psi).:} = b}.
\end{split}
\end{equation}
The simple structure of $V$ allows for extremely efficient sampling from the preimage of $b.$

\begin{lemma}[Efficient preimage sampling]
\torestate{
\label{lem:samplingintro}
% There exists an algorithm (\cref{alg:samplingfrompreimage}) with the following property.
On input $k>h+2$ and $b \in (\zqz)^h,$ \cref{alg:samplingfrompreimage} outputs a uniformly random sample from $\Preimage_V(b).$ The algorithm runs in time $\poly(k,h,\log q).$
}
\end{lemma}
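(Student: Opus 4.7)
The plan is to identify $\Preimage_V(b)$ with the set of ordered tuples $((c_\psi,i_\psi))_{\psi=1}^k\in(\zqz\times[h])^k$ satisfying $\sum_\psi c_\psi e_{i_\psi}=b$, using the correspondence $V_{\xi,:}=c\cdot e_i$ induced by the block structure of $V$. For a target $v\in(\zqz)^h$ and step budget $r\ge 0$, let $N(v,r)$ denote the number of such $r$-tuples summing to $v$. Since the number of ways to write any fixed element of $\zqz$ as a sum of $r_j\ge 1$ elements of $\zqz$ is exactly $q^{r_j-1}$ independent of the target, grouping the steps by which coordinate they hit shows that $N(v,r)$ depends on $v$ only through $w=|\mathsf{supp}(v)|$; write this count as $N_w(r)$.

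Next, I would derive the three-term recursion
$$
N_w(r)=(h+w(q-2))\,N_w(r-1)+w\,N_{w-1}(r-1)+(h-w)(q-1)\,N_{w+1}(r-1),
$$
with base case $N_0(0)=1$ and $N_w(0)=0$ for $w>0$. The three terms correspond to conditioning on the first step $(c_1,i_1)$ and partitioning $\zqz\times[h]$ by its effect on the Hamming weight of the residual: unchanged ($i_1\in\mathsf{supp}(v)$ with $c_1\ne v_{i_1}$, or $i_1\notin\mathsf{supp}(v)$ with $c_1=0$), decreased by $1$ ($i_1\in\mathsf{supp}(v)$ with $c_1=v_{i_1}$), and increased by $1$ ($i_1\notin\mathsf{supp}(v)$ with $c_1\ne 0$). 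At the boundary $w=h$ the coefficient $(h-w)(q-1)$ vanishes, so the recursion never queries $N_{h+1}$. Filling the full table $(N_w(r))_{0\le w\le h,\,0\le r\le k}$ takes $O(hk)$ arithmetic operations on integers of bit-length $O(k\log(hq))$, i.e., $\poly(k,h,\log q)$ time.

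The sampling algorithm then draws $(c_\psi,i_\psi)$ one position at a time, maintaining the running residual $v^{(\psi)}=b-\sum_{\psi'<\psi}c_{\psi'}e_{i_{\psi'}}$ and selecting $(c_\psi,i_\psi)$ with probability $N(v^{(\psi)}-c e_i,k-\psi)/N(v^{(\psi)},k-\psi+1)$. To avoid iterating over all $qh$ pairs, I would first sample the weight-change class $\Delta w\in\{-1,0,+1\}$ according to the three recursion terms, and then pick a pair uniformly within the chosen class; each class is a simple product of an explicit subset of $[h]$ with an explicit subset of $\zqz$, so this costs only $\poly(h,\log q)$. Correctness is a telescoping calculation: every valid tuple is output with probability $\prod_\psi \tfrac{N(v^{(\psi)},k-\psi)}{N(v^{(\psi-1)},k-\psi+1)}=\tfrac{1}{N(b,k)}$, which is the uniform distribution on $\Preimage_V(b)$.

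The main point to verify is that the procedure never gets stuck, i.e., $N(v^{(\psi)},k-\psi+1)>0$ throughout. This follows by induction: only pairs $(c,i)$ with $N(v-ce_i,k-\psi)>0$ ever receive positive probability, and $N(v,r)>0$ whenever $r\ge|\mathsf{supp}(v)|$ (one can realize $v$ by hitting each support coordinate exactly once and padding with $r-|\mathsf{supp}(v)|$ zero-coefficient steps). The hypothesis $k>h+2$ comfortably guarantees $k\ge|\mathsf{supp}(b)|$, preserving the invariant at every step. Combining $\poly(k,h,\log q)$ preprocessing with $k$ sampling steps of the same cost yields the claimed runtime.
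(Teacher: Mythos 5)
Your proof is correct, but it takes a genuinely different route from the paper. The paper's \cref{alg:samplingfrompreimage} is a bespoke direct sampler: it draws indices uniformly from $[hq]$ until the number of remaining slots equals the number of coordinates still disagreeing with $b$, then handles the boundary case and finally a forced completion plus a random permutation, and uniformity is proved by exhibiting a bijection between elements of $\Preimage_V(b)$ and realizations of the algorithm's randomness. You instead reduce sampling to counting: you establish that the number $N_w(r)$ of length-$r$ completions depends only on the support size $w$ of the residual, derive exactly the three-term recursion that the paper uses separately in \cref{alg:sizeestimation} (\cref{lem:sizeestimationintro}), and then perform standard sequential sampling proportional to completion counts, with the weight-class trick to avoid enumerating all $hq$ pairs when $q$ is large; the telescoping product gives exact uniformity, and non-stuckness follows since $N(v,r)>0$ whenever $r\ge|\mathsf{supp}(v)|$, which $k>h+2$ comfortably ensures. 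Your approach buys transparency (uniformity is immediate from the telescoping identity rather than a case-by-case unique-trace argument), works for any $k\ge|\mathsf{supp}(b)|$, and unifies the sampling and counting primitives that the paper keeps separate; the paper's direct sampler buys a structural property of its execution, namely that the first $k-h-2$ drawn indices are i.i.d.\ uniform, which is how it proves \cref{cor:marginaldistributionofwalk} (with your sampler that corollary still holds, since it is a statement about $\unif(\Preimage_V(u))$ itself, but it would need a separate argument). One small imprecision: the ``unchanged'' class is not literally a product of a subset of $[h]$ with a subset of $\zqz$, since the excluded coefficient $v_i$ depends on $i$; this is harmless, as you can split it into the zero-coefficient part and the support part and sample each uniformly in $\poly(h,\log q)$ time.
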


On a high level, this can (roughly) be achieved by sampling indices uniformly at random (from $[hq]$) until the number of indices left matches the number of coordinates that differ between the current state, i.e., the sum of rows of $V$ indexed by the current indices, and $b$.
The choice of the remaining coordinates is then fixed.
Permuting them randomly will result in a uniformly random element from the preimage. We give full detail in \cref{sec:proofofsampling}.

When combining the different words into sentences, it will be crucial that $\xi_j(1),  \ldots, \xi_j(k)$ are independent and uniform over $[hq].$ To guarantee this, a key property will be that each $\Preimage_V(b)$ is of nearly equal size. Otherwise, $k$-tuples $\xi_j(1),  \ldots, \xi_j(k)$ corresponding to $b$ with large (respectively, small)
$\Preimage_V(b)$ will be under-represented (respectively, over-represented). Again, the simple structure of $V$ can be utilized to show the following fact.

\begin{lemma}[Uniformity of preimage size]
\torestate{
\label{lem:nearuniformityintro}
    Suppose that $k \ge 4h(\log h +\log q +  \log (1/\psi))$ where $\psi\in (0,1)$ is any real number. Then, for each $u \in (\zqz)^h$,
    $$
    \frac{(1-\psi)(hq)^k}{q^h}\le 
    |\Preimage_V(u)|
    \le 
    \frac{(1+\psi)(hq)^k}{q^h}.
    $$
}
\end{lemma}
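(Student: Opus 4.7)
\textbf{Proof plan for \cref{lem:nearuniformityintro}.} My approach is Fourier-analytic over the additive group $(\zqz)^h$. For each character $\chi \in (\zqz)^h$, write $\chi(v) = \omega^{\langle \chi, v\rangle}$ with $\omega = e^{2\pi i/q}$, and use orthogonality to represent the indicator that $\sum_{\psi=1}^k V_{\xi(\psi),:} = u$ as $\tfrac{1}{q^h}\sum_{\chi}\chi(-u)\prod_\psi \chi(V_{\xi(\psi),:})$. Swapping the order of summation and using independence of the $k$ coordinates gives
\begin{equation*}
|\Preimage_V(u)| \;=\; \frac{1}{q^h}\sum_{\chi \in (\zqz)^h} \chi(-u)\, f(\chi)^k, \qquad f(\chi) := \sum_{\xi \in [hq]} \chi(V_{\xi,:}).
\end{equation*}

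The key step is to exploit the explicit block structure of $V$ to evaluate $f(\chi)$ exactly. Since the rows of $V$ are precisely the vectors $c\cdot e_i$ for $(c,i)\in \{0,1,\ldots,q-1\}\times[h]$, one has $f(\chi)=\sum_{i=1}^h\sum_{c=0}^{q-1}\omega^{c\chi_i}$. The inner sum is $q$ when $\chi_i=0$ and $0$ otherwise, so $f(\chi)=q(h-\|\chi\|_0)$ where $\|\chi\|_0$ is the Hamming weight of $\chi\in (\zqz)^h$. Plugging back in, the $\chi=0$ term yields the main contribution $(hq)^k/q^h$, and every nonzero-weight $\chi$ contributes at most $(h-1)^k q^k/q^h$ in absolute value. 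Hence
\begin{equation*}
\Bigl| |\Preimage_V(u)| - \frac{(hq)^k}{q^h}\Bigr| \;\le\; \frac{q^k}{q^h}\sum_{j=1}^{h}\binom{h}{j}(q-1)^j (h-j)^k \;=\; \frac{(hq)^k}{q^h}\sum_{j=1}^h \binom{h}{j}(q-1)^j\Bigl(1-\frac{j}{h}\Bigr)^k.
\end{equation*}

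The remaining task is to show that the sum on the right is at most $\psi$ under the hypothesis $k\ge 4h(\log h + \log q + \log(1/\psi))$. The plan is to apply the crude bounds $\binom{h}{j}(q-1)^j \le (hq)^j$ and $(1-j/h)^k \le e^{-kj/h}$, which turn the sum into a geometric series with ratio $\rho := hq\cdot e^{-k/h}$. The stated hypothesis forces $k/h \ge 4(\log h + \log q + \log(1/\psi))$, so $\rho \le 1/( h^3 q^3\psi^4)$, which is much smaller than $\psi/2$; thus $\rho/(1-\rho)\le \psi$, completing the argument. The only subtle point — not an obstacle so much as a bookkeeping check — is to confirm that the constants in the hypothesis leave enough room to absorb the geometric-series factor and the loose estimates $\binom{h}{j}(q-1)^j\le (hq)^j$ and $1-x\le e^{-x}$; this is the reason the lemma is stated with the factor $4$ rather than $1$.
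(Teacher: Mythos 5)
Your proof is correct, but it takes a genuinely different route from the paper. You use Fourier analysis over the additive group $(\zqz)^h$: orthogonality of characters plus the block structure of $V$ lets you evaluate the character sum exactly, $f(\chi) = q\,(h - \|\chi\|_0)$, so the deviation of $|\Preimage_V(u)|$ from $(hq)^k/q^h$ is controlled by a single sum over Hamming weights, and both the lower and upper bounds drop out of one triangle-inequality estimate. The paper instead argues probabilistically: it views a uniform element of $[hq]^k$ as a $k$-step lazy random walk on $(\zqz)^h$, proves the lower bound by a coupon-collector argument (conditioned on every coordinate being touched, the endpoint is exactly uniform), and proves the upper bound by conditioning on the set of touched coordinates and bounding $\sum_{B \supseteq \mathrm{supp}(u)} q^{-|B|}(|B|/h)^k$ via monotonicity of the summand. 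Your approach is arguably cleaner and symmetric (one computation gives both directions, and it even yields an exact formula for the preimage size as a signed sum over weights), while the paper's argument is more elementary and matches the random-walk intuition it advertises in the overview; both hinge on the same structural fact that the rows of $V$ are all scalar multiples of unit vectors. One small slip to fix: from $k/h \ge 4(\log h + \log q + \log(1/\psi))$ you get $\rho = hq\,e^{-k/h} \le \psi^4/(h^3q^3)$, not $1/(h^3q^3\psi^4)$; with the corrected bound, since $h^3q^3 \ge 8$, you have $\rho \le \psi/8$, so $\rho/(1-\rho) \le \psi$ and the geometric-series step goes through exactly as you intended.
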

For reasons that will become clear, we will choose $\psi = o(1/t),$ say $\psi = (ht)^{-2}.$

We give the proof of 
\cref{lem:nearuniformityintro}
in \cref{sec:randomwalkproof}, which follows by analyzing the point-wise convergence of the following random walk on $(\zqz)^h.$ At each step $j,$ the random walk chooses independently uniformly random $i_j \in [h]$ and $m_j\in \{0,1,2,\ldots, q-1\}$ and moves $m_j$ units in the direction $e_{i_j}$. Recalling the definition of $\Preimage_V(b)$ in \eqref{eq:preimagedef}, we interpret this as $\xi(j) = h m_j +i_j$ so that $V_{\xi(j), :} =m_j (e_{i_j})^\top.$

In the case $q = 2,$ this is just the standard lazy random walk over $\{0,1\}^h.$ It is well-known that this walk converges to an $\epsilon$- total variation distance from the uniform distribution after roughly $h(\log h + \log(1/\epsilon))$ steps (e.g. \cite{LevinPeresWilmer2006}). For our applications, a stronger point-wise convergence guarantee is crucial.
Trying to deduce a point-wise convergence bound from the total variation bound would require $k \gtrsim h^2$, which would result in a much looser lower bound. Instead, we directly verify the needed point-wise convergence. 

In order to make the output tuples exactly equal, we will use a simple rejection-based method that rejects each $b_j$ with probability $1 - \frac{|\Preimage_V(b_j)|}{{(1+\psi)(hq)^k}/{q^h}},$ which is a well-defined probability and is at most 
$2\psi$ under the guarantee of \cref{lem:nearuniformityintro}.
To make this algorithmically efficient, we need to be able to compute $|\Preimage_V(b_j)|.$ Again, the simple structure of $V$ allows such a primitive:

\begin{lemma}[Computing preimage sizes]
\torestate{
\label{lem:sizeestimationintro}
There exists an algorithm, \cref{alg:sizeestimation}, which on input $k$ and $b \in (\zqz)^h,$ outputs
$|\Preimage_V(b)|.$
The algorithm runs in time $\poly(k,h,\log q).$
}
\end{lemma}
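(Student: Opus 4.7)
The plan is to exploit the product structure of $V$ to reduce the counting of $|\Preimage_V(b)|$ to a short sum that can be evaluated efficiently in closed form.

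First, I would decompose each index $\xi(\psi) \in [hq]$ as a pair $(m_\psi, i_\psi) \in \{0, 1, \ldots, q-1\} \times [h]$ via $\xi(\psi) = m_\psi h + i_\psi$, so that $V_{\xi(\psi),:} = m_\psi \cdot e_{i_\psi}^\top$. The constraint $\sum_\psi V_{\xi(\psi),:} = b$ then decouples across coordinates: for each $j \in [h]$,
$$\sum_{\psi \,:\, i_\psi = j} m_\psi \;\equiv\; b_j \pmod{q}.$$
Grouping tuples by the multiplicities $n_j = |\{\psi : i_\psi = j\}|$, we obtain
$$|\Preimage_V(b)| \;=\; \sum_{\substack{(n_1,\ldots,n_h)\in \mathbb{Z}_{\geq 0}^h \\ n_1+\cdots+n_h = k}} \binom{k}{n_1,\ldots,n_h} \prod_{j=1}^h N(n_j, b_j),$$
where $N(n,\beta)$ denotes the number of ordered $n$-tuples in $\{0,\ldots,q-1\}^n$ summing to $\beta \bmod q$. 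A direct counting argument---fix the first $n-1$ entries arbitrarily, then the last entry is determined modulo $q$---yields $N(n,\beta) = q^{n-1}$ for all $n \geq 1$ and $\beta \in \zqz$, while $N(0,\beta) = \indicator[\beta = 0]$.

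Next, I would package this sum using exponential generating functions. Letting $f_j(x) = \sum_{n\geq 0} N(n, b_j)\, x^n/n!$, one has $f_j(x) = \tfrac{1}{q}(q-1 + e^{qx})$ if $b_j = 0$ and $f_j(x) = \tfrac{1}{q}(e^{qx} - 1)$ if $b_j \neq 0$. Writing $w$ for the Hamming weight of $b$, standard EGF manipulations give
$$|\Preimage_V(b)| \;=\; \frac{k!}{q^h} \cdot [x^k]\, (q-1+e^{qx})^{h-w}(e^{qx}-1)^w.$$
Expanding both binomials in $e^{qx}$ and reading off $k! \cdot [x^k]\, e^{qrx} = (qr)^k$ yields the closed form
$$|\Preimage_V(b)| \;=\; \frac{1}{q^h} \sum_{a=0}^{h-w}\sum_{c=0}^{w} \binom{h-w}{a}\binom{w}{c}(q-1)^{h-w-a}(-1)^{w-c}\bigl(q(a+c)\bigr)^k.$$

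Finally, the algorithm simply computes this $O(h^2)$-term sum: each binomial coefficient and each power can be computed in time $\poly(h, k, \log q)$ via repeated squaring, so the total runtime is $\poly(k, h, \log q)$ as required. There is no real obstacle here---the structural simplicity of $V$ (its rows being multiples of standard basis vectors) is exactly what makes the coordinate decoupling and the closed-form evaluation go through. If one wishes to avoid the signed cancellations inherent in the formula above, an equivalent coordinate-by-coordinate dynamic program tabulating $T(j, \ell) = \sum_{n \leq \ell} \binom{\ell}{n} T(j-1, \ell-n)\, N(n, b_j)$ achieves the same runtime bound using only nonnegative arithmetic.
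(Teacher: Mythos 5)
Your proof is correct, and it takes a genuinely different route from the paper's. The paper first proves, via an explicit bijection, that $|\Preimage_V(b)|$ depends only on the Hamming weight of $b$, and then computes it by a dynamic program $f(t,r)$ indexed by (number of coordinates where the running sum disagrees with $b$, number of rows added so far), with the recursion justified by a case analysis on the last row appended. You instead decouple the constraint coordinate-by-coordinate, observe the elementary count $N(n,\beta)=q^{n-1}$ for $n\ge 1$, and package the multinomial convolution with exponential generating functions, arriving at the closed form
$$|\Preimage_V(b)| \;=\; \frac{1}{q^h}\sum_{a=0}^{h-w}\sum_{c=0}^{w}\binom{h-w}{a}\binom{w}{c}(q-1)^{h-w-a}(-1)^{w-c}\bigl(q(a+c)\bigr)^k,$$
which I verified on small instances (e.g.\ $h=2$, $q=2$, $k=2$ gives $6$ for $b=0$ and $2$ for $b=(1,1)$, matching direct enumeration). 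Your approach buys an explicit $O(h^2)$-term formula, and the fact that the answer depends on $b$ only through its weight $w$ falls out automatically rather than requiring the paper's separate bijection lemma; the bit-length of all intermediate integers is $O(k\log(hq))$, so the claimed $\poly(k,h,\log q)$ runtime is justified (exact integer arithmetic also disposes of the signed cancellations, which are only a concern in floating point). The paper's DP, and the coordinate-by-coordinate DP you sketch at the end (which is itself different from the paper's recursion, being indexed by coordinates of $b$ rather than by decoding steps), buy nonnegative arithmetic and a derivation closer in spirit to the decoding algorithms used elsewhere in the paper, but both establish the same lemma.
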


The key observation for the algorithm is that due to symmetry, the size of the set $\Preimage_V(b)$ only depends on the number of non-zero entries of $b$.
This size can be efficiently computed via dynamic programming. We give full detail in \cref{sec:estimationproof}.

\begin{remark} \cref{lem:nearuniformityintro} explains why at least when using the gadget matrix $V,$ we need to split the original $L$-dimensional problem into $t$ smaller problems. For the desired mixing to occur (even in total variation), we need $k \ge h\log(h)$.  Hence, if we directly use $V$ to decode the entire dense vector, we will reduce the hardness of $k$-sparse \lpn/\lwe{} in dimension $n$ from standard \lpn/\lwe{} in dimension $h = O(k/\log k).$ Thus, the obtained hardness will be at most $2^{O(k/\log k)},$ which is
significantly worse than the desired $2^{k\log n/\poly(\log k, \log \log n)}$ for small values of $k.$  We overcome this issue via our novel ``grammar'' of sparse sentences.
\end{remark}

\paragraph{Sentence Decoding: Combining the Solutions of Small Sub-Problems.}
 We will think of concatenations of words as sentences.
 Each sentence will have exactly $t$ words.
 The rows of our final gadget matrix will be all possible sentences, of which there are $ (hq)^t$.

Let $n = (hq)^t$.
We will use the decodings $\{\xi_j\}_{j=1}^t$ of $\{b_{j}\}_{j = 1}^t$ obtained via the matrix $V$ to produce a $k$-tuple $\xi \in [n]^k = ([hq]^t)^k$ decoding of $b.$ The key insight is that $\{b_{j}\}_{j = 1}^t$ are $t$ i.i.d. vectors in $(\zqz)^h$ and, hence, $\{\xi_j\}_{j=1}^t$ are $t$ i.i.d. tuples in $[hq]^k.$ In particular, $(\xi_j(\psi))_{1\le j \le t, 1\le \psi\le k}$ are $kt$ independent uniform integers in $[hq].$

By definition, $\xi$ should be uniformly random. We define $\xi(\psi)\coloneqq (\xi_1(\psi),\xi_2(\psi),\ldots, \xi_t(\psi))$ for $1 \leq \psi \leq k$ and $\xi= (\xi(1), \xi(2), \ldots, \xi(k)).$ Visually, we represent $\xi$ as the following matrix in $[hq]^{t\times k}.$
$$\xi = 
\begin{pmatrix}
    \xi_1(1) \; \xi_1(2)\; \ldots \; \xi_1(k)\\
    \xi_2(1) \; \xi_2(2) \; \ldots\; \xi_2(k)\\
    \vdots\\
    \xi_t(1)\; \xi_t(2)\; \ldots  \; \xi_t(k)\\
\end{pmatrix}.
$$
In it, each row corresponds to a decoded word (that is, $\xi_j(1), \xi_j(2), \ldots, \xi_j(k)$ is the $k$-tuple corresponding to $b_j$). Each column $\xi(\psi) = (\xi_1(\psi), \xi_2(\psi), \ldots, \xi_t(\psi))^\top$ corresponds to a $t$-tuple in $[hq]^t$ and, hence, can be treated as an index in $[(hq)^t] = [n].$ This is the $k$-sparse grammar we referred to.
Crucially, each of the $k$ indices (columns) are independent uniform. Thus, we have managed to translate $b\in (\zqz)^{ht}$ into $k$ uniform independent indices in $[(hq)^t] = [n],$ as desired. 

To finish, we put this in the language of gadget matrices and demonstrate a linear relationship between $b$ and $\xi.$ Let $G\in (\zqz)^{n \times L}$ be such that its rows are indexed by $ (\phi_1, \ldots, \phi_t) \in [hq]^t$ and each row is defined by
\begin{equation}
\label{eq:gdefintro}
G_{(\phi_1, \phi_2,\ldots, \phi_t),:} = 
(V_{\phi_1,:},V_{\phi_2,:},\ldots, V_{\phi_t,:})\in (\zqz)^{ht} = (\zqz)^{L} \,.
\end{equation} 
In other words, $G$ contains all possible concatenations of $t$ words from $V.$
%We think of each $t$-tuple of words from $W$ as a sentence and, hence, $G$ consists of all possible sentences using $t$ words.
% We offer the following illustration for $t = 3$, where $\mathbbm{1}_r$ is the all-ones vector in $\mathbb{R}^r$
% \[\arraycolsep=8pt
% G(W) = \left(\begin{array}{ccc}
% w_1^\top \otimes \mathbbm{1}_r & w_1^\top \otimes \mathbbm{1}_r & W \\
% \vdots & \vdots & \vdots \\
% w_1^\top \otimes \mathbbm{1}_r & w_r^\top \otimes \mathbbm{1}_r & W \\
% \\
% \hline 
% \\
% \vdots & \vdots & \vdots \\
% \\
% \hline \\
% w_r^\top \otimes \mathbbm{1}_r & w_1^\top \otimes \mathbbm{1}_r & W \\
% \vdots & \vdots & \vdots \\
% w_r^\top \otimes \mathbbm{1}_r & w_r^\top \otimes \mathbbm{1}_r & W \\
% \end{array} \right)\,.
% \]

Since $\sum_{\psi = 1}^k V_{\xi_j(\psi),:} = b_{j}^\top$ holds for each $j$, it must be the case that 
\begin{equation}
\begin{split}
\label{eq:introsumingadget}
\sum_{\psi = 1}^k G_{\xi(\psi)} = \sum_{\psi = 1}^k
 \Big( V_{\xi_1(\psi),:},V_{\xi_2(\psi),:},\ldots, V_{\xi_t(\psi),:}\Big) &= \bigg( \sum_{\psi = 1}^k V_{\xi_1(\psi),:},  \sum_{\psi = 1}^k V_{\xi_2(\psi),:},\ldots,  \sum_{\psi = 1}^k V_{\xi_t(\psi),:}\bigg) \\
 &= \big( b_1^\top, b_2^\top, \ldots, b_t^\top\big) = b^\top\,.
 \end{split}
\end{equation}
Under a mild condition $k = o(\sqrt{n}),$ the birthday paradox shows that with high probability, all of the $\xi(\psi)$ are distinct. When this event holds we can represent $\xi$ as a (binary) $k$-sparse vector $a$ that is uniformly random (the non-zero coordinates of $a$ are indexed by $\xi(1), \xi(2), \ldots, \xi(k)$).
\eqref{eq:introsumingadget} is equivalent to $a^\top G = b^\top$ as desired.

%Ignoring the fact that there might exist $\psi_1\neq \psi_2$ such that $\xi(\psi_1) = \xi(\psi_2),$ we are left with uniform $k$-tuples over $G.$ Hence if $a \in (\zqz)^n = (\zqz)^{r^t}$ is such that $a(\xi(\psi)) = 1$ for all $\psi \in \{1,2,\ldots, k\}$ and $a(\zeta) = 0$ otherwise, we are left with a $k$-sparse vector $a$ such that $a^\top G = b^\top.$

% Stefan: Most of this is in the paragraph in the beginning, so we could remove this
\begin{comment}
\paragraph{Endgame.}
Altogether, we have $AG(W) = B$ and the matrix $A$ is drawn uniformly from the binary $k$-sparse  \lwe. Hence, we transform $(B, Bs + e)$ into $(A, AG(W)s + e).$ We are not done yet as the secret $Gs$ is not uniform over $(\zqz)^{n}$ as it is in the column span of $G \in (\zqz)^{n\times L}$ which is clearly not full rank ($L\ll n$). Nevertheless, we can simply choose a fresh $z\sim (\zqz)^n$ and find $(A, A(Gs+z) + e).$  Now, $s_1 = Gs + z\sim \unif((\zqz)^n).$ 
 
This concludes the reduction for the decision problem in the case $\mathcal{D}\equiv 1^k.$ When reducing between the search and refutation problems, some extra steps are needed but these are mostly technical details. In particular, for search we require that $G$ is of rank $L$ and for refutation that even for worst case $b_j,$ the $k$-tuple $\xi_j(1), \xi_j(2), \ldots, \xi_j(k)$ consists of pairwise different indices with high probability.
\end{comment}

\paragraph{Example Choice of Parameters for LPN.}
%The running time of our reduction is \linebreak $\poly(n,m,h,k) = \poly(n,m).$
%In order to obtain hardness results, we need to ensure that this $\poly(n,m)$ running time is less than the assumed hardness of the initial $L$-dimensional noisy linear equation problem.
We formally choose the parameters in \cref{sec:parameters} but here we give a brief intuition about how large they are for concreteness.
For simplicity, we focus on the decision version of \lpn, i.e., $q = 2$.
We assume $2^{\Omega(L/\log L)}$ hardness of standard  \lpn{} and will derive a $n^{\Omega(\frac{k}{\log k (\log k + \log \log n)^2})}$ lower bound for the $k$-sparse decision version.\footnote{For technical reasons we require the condition that $k = o(\sqrt{n})$.}

Note that $t = \frac{\log n}{\log h + \log q} = \frac{\log n}{1 + \log h}.$ Hence, $L = ht = \frac{h\log n}{1 + \log h}.$ This value is clearly increasing in $h.$ Thus, we want to choose the largest $h$ possible in order to get the strongest runtime lower bound.

However, to apply Lemma~\ref{lem:nearuniformityintro} we require 
$k \ge 4h(\log h + \log q + \log (ht)^2)$ and $4h(\log h + \log q + \log (ht)^2) = \Theta(h(\log h + \log\log n)).$ Hence, we can choose $h$ at most $h = c\frac{k}{\log k + \log\log n}$ for some small absolute constant $c>0.$ Thus, $t = \frac{\log n}{\log h + \log q} =\Theta(\frac{\log n}{\log k}).$ Altogether, $L = ht = \Theta(\frac{k\log n}{\log k(\log k + \log\log n)}).$ We obtain  $$\exp\Big(\Omega\big(\frac{L}{\log L}\big)\Big) = \exp\Big(\Omega\big(\frac{k\log n}{\log k(\log k + \log\log n)^2}\big)\Big)\quad \text{ hardness.}$$

\paragraph{Other Distributions on the Support.} 
Given a vector $b\in (\zqz)^L,$ we first solve the binary problem. That is, find a $k$-tuple $(i_1, \ldots, i_k) \in [n]^k$ such that the corresponding rows of $G$ sum to $b$.
If $b$ is uniform, then so is $(i_1, \ldots, i_k),$ as already mentioned.
%For each sample $b,$ we use the sentence-decoding matrix $G(V)$ to find some $k$-sparse binary $a$ such that $a^\top G(V) = b^\top$ whose support is furthermore uniformly distributed. Let the support be $(i_1, i_2, \ldots, i_k)\in [n]^k.$
Next, sample $\rho = (\rho_1, \rho_2, \ldots, \rho_k)\in \zqmult^k$ from $\mathcal{D}$.
Suppose that the row of $G$ indexed by $i_j$ corresponds to
\[
(\kappa_1(j)e^\top_{\Delta_1(j)},\kappa_2(j)e^\top_{\Delta_2(j)},\ldots, \kappa_t(j)e^\top_{\Delta_t(j)}) \,,
\]
where $\kappa_i(j) \in 
\zqz$ and $\Delta_i(j) \in [h].$
We use the convention that the first $h$ (zero) rows of $V$ are $0e_1^\top,0e_2^\top,\ldots,0e_s^\top.$
Now, replace this index with the index $i_j'$ corresponding to
\[
(\kappa_1(j)\rho_j^{-1}e^\top_{\Delta_1(j)},\kappa_2(j)\rho_j^{-1}e^\top_{\Delta_2(j)},\ldots, \kappa_t(j)\rho_j^{-1}e^\top_{\Delta_t(j)})\,.
\]
Since $\rho_j $ and $q$ are coprime, $i_j'$ is still uniform over $[n]$.
Suppose again that all $i_j'$ are distinct, we can show that this happens with high probability.
Let $a'$ be the vector with support $(i'_1, i'_2, \ldots, i'_k)$ such that $a'_{i'_j} = \rho_j$.
Then, $(a')^\top G = a^\top G = b^\top$ and clearly the support of $a'$ is uniform and the distribution on the support equal to $\cD$. We give full detail in \cref{sec:complete_arbitrary_rows}.

\paragraph{Variable Support Sizes.} Finally, we can also accommodate for varying support sizes. The reason is that all of our primitives - \cref{lem:samplingintro,lem:sizeestimationintro,lem:nearuniformityintro} only rely on $k$ being sufficiently large and, thus, it is enough to accommodate for the effectively smallest possible support size $k.$ 

More concretely,
suppose that the distribution of support sizes is $\mathcal{K}$ where with probability $1 - \eta$ with $\eta = o(1/m)$ the support size is at least $k_\minsf.$ Then, for each sample $b_i,$ we first draw $k_i\sim \mathcal{K}$ and then use the word-decoding and sentence-decoding procedures for $k_i.$
We can show that the $k$-tuples are still uniform for all $k_i \ge k_\minsf$ if the input $b$ is random.
With high probability, we never see $k_i< k_\minsf.$ More detail is given in \cref{sec:variablesupportsizes}.

\subsection{Reduction for Decision, Search, and Refutation}
\label{sec:reduction_for_all_three_tasks}

As before, \cref{eq:gadgeteq1,eq:gadgeteq2} are sufficient to obtain a reduction for any of the decision, search, and refutation tasks.
%Indeed, this is easiest to see for the search task.
%Assume that $y_i = b_i^\top s + e_i$ for some error term $e_i$.
%Then, on input $\{(b_i, y_i)\}_{i=1}^m$, we can compute $\{a_i\}_{i=1}^m$ as above and output $\{(a_i, y_i' \coloneqq y_i + a_i^\top z)\}$, where $z \in (\zqz)^n$ is drawn uniformly at random (we add $a_i^\top z$ so that the secret is uniformly random).
%Note that $a_i$ has the correct distribution by \cref{eq:gadgeteq2}.
%Further, it holds that
%\[
%    y_i' = y_i + a_i^\top z = a_i^\top (Gs) +e_i + a_i^\top z= a_i^\top (Gs + z) + e_i \,,
%\]
%where $Gs + z$ is uniform over $(\zqz)^n$.
Thus, given a search oracle we can recover $Gs + z$ and hence also $Gs$.
%Under the additional assumption that $G$ is full-rank we can then recover $s$.
%Our construction will have this property.
We already discussed this for the search task.
Recall that given $(b_i, y_i)$, our reduction outputs $(a_i, y_i' \coloneqq y_i + a_i^\top z)$, where $z$ is uniform over $(\zqz)^n$.
Note that this implies that the ``planted'' distribution of the decision task will be mapped correctly.
Similarly, if in the original input $y_i$ and $b_i$ are independent then so will $y_i'$ and $a_i$ be since decoding $b_i$ into $a_i$ does not depend on $y_i$ and $y_i' = a_i^\top z + y_i$.
Note that for both decision and search, the reduction preserves the error of each equation.

% \stefanc{ToDo: Check if notation is consistent with preliniaries and technical sections}\kirilc{I think we need to fix notation here, but otherwise I really like this paragraph!}
Lastly, we discuss refutation. For simplicity of exposition, we focus on the \lpn{} setting here, but the same arguments apply to \lwe{} (see \cref{sec:reductionandtasks}).
% In particular, given $(A,y)$ we want to output a value $\phi \in [0,1]$ such that at most a $\phi$-fraction of the equations $(A,y)$ are satisfiable simultaneously.\footnote{We can also handle more general refutation tasks with respect to other error metrics, see~\cref{sec:tasks}.}
% % Further, with high probability $\phi$ should be small (ideally close to 1/2, since that is the tightest upper bound with high probability).
% % In the above we considered the slightly idealized setting in which every equation is decoded successfully with probability 1.
% % Note that for decision and search even only preserving a small 1/100 (or less) fraction of the samples would be sufficient.
% % For refutation, this is not sufficient, since our algorithm has to produce a correct upper bound on \emph{all} inputs.
% % An extreme case would be when the input is highly satisfiable, e.g., due to many duplicate rows, and the decoding process looses all the equations that make it highly satisfiable.
% % Luckily, we are able to construct gadget matrices and a decoding algorithm that only loose an $o(1)$-fraction of the samples.
Given $(A,y),$ we want two guarantees. For all approximately-satisfiable systems, i.e. $As\approx y$ for some $s,$ we want to output \textit{approximately satisfiable} with probability 1.
For the case when $A,y$ are drawn independently (and hence the system is far from satisfiable with high probability), we want to output \textit{strongly unsatisfiable}, with high probability.
% In the above we considered the slightly idealized setting in which every equation is decoded successfully with probability 1.

In the discussion on search and testing, we considered the slightly idealized setting in which every equation is decoded successfully. Yet, many reductions (including ours) lose samples.
Note that for decision and search even only preserving a small 1/100 (or less) fraction of the samples would be sufficient.
For refutation, this is not sufficient since our algorithm has to map approximately satisfiable inputs to approximately satisfiable outputs. Yet, consider the following extreme case. A very large $99/100$ fraction of the clauses in the input are satisfiable due to the fact that they are all the same $(a_1,y_1).$ The rest $1/100$ of instances are highly unsatisfiable. 
A decoding process which always fails on $a_1$ and succeeds on all the other samples
would lose the $99/100$ fraction of satisfiable equations and
could produce a highly unsatisfiable instance. 

Luckily, we are able to construct gadget matrices and a decoding algorithm that only lose a $o(1)$-fraction of the samples with high probability over worst-case inputs. Indeed, recall that each $b_j\in (\zqz)^h$ is rejected with probability at most $O(\psi) = O((ht)^{-2})$ and, hence, $b$ is rejected with probability at most $O(t(ht)^{-2}) = o(1).$
This resolves the above-described issue. Hence, our reduction maps approximately satisfiable instances to approximately satisfiable instances and instances with $B,y$ independent to instances with $A,y$ independent. 
Thus, it also works for refutation.

\subsection{Organization of the Rest of the Paper}
In \cref{sec:background_short} we give formal definitions and summarize relevant prior work.
In \cref{sec:ourcontributions} we formally state our results.
%Then, we explain the challenges in reducing dense linear equations to sparse linear equations and overview how we overcome these challenges.
The following section \cref{sec:reductiontosparsity} contains the full reduction.
Our reduction as well as its run-time are parametrized. In \cref{sec:parameters} we describe several specific parameter instantiations which lead to hardness results for (sparse) \klpn{} and \klwe{} based on popular assumptions for the hardness of standard \lpn{} and  \lwe{.
We end with our application to noisy tensor completion in \cref{sec:tensorcompletion}.

\section{Background on Noisy Linear Equations and Tensor Completion}
\label{sec:background_short}
\subsection{Noisy Linear Equations over \texorpdfstring{$\zqz$}{ZqZ}}
 In this current paper, we focus on the following family of noisy linear equations over $\zqz$ unless otherwise stated. Recall that $\zqz$ is the ring of integers modulo $q,$ which is furthermore a field when $q$ is a prime.
Let $q$ be a given modulus and $\Dcoeff$ be a distribution over $(\zqz)^n$ and $\Derr= (\Derr(y))_{y \in \zqz}$ be a family of distributions.
We generate samples $\{(a_i, y_i)\}_{i=1}^m$ as follows:
\begin{enumerate}
    \item Sample $a_1, a_2, \ldots, a_m\iidsim \Dcoeff$.
    \item Sample $s\sim \unif((\zqz)^n)$. 
    \item Sample $e_j\sim \Derr(a_j^\top s)$ independently for each $j \in [m].$
    \item Output $\{(a_i, y_i \coloneqq a_i^\top s + e_i)\}$.
\end{enumerate}
We will denote an input consisting of $m$ samples from this distribution by 
\begin{align}
\label{eq:defnle}
    \NLE(\samples = m,\modulus = q,  \Dcoeff, \Derr).
\end{align}
For example, $\NLE(\samples = m,\modulus = 2,  \unif(\mathbb{Z}_2^n), \bernoulli(\frac{1}{2}-\gamma))$ captures standard \lpnspace  and 
$\NLE(\samples = m,\modulus = q,  \unif((\zqz)^n), D_\sigma)$ captures standard \lwe. 
Most frequently, the noise $\Derr(y)$ is independent of $y\in \zqz,$ e.g. Discrete Gaussian. We state in this more general case to include problems like learning with rounding, which are also captured by our reduction \cite{banerjee12pseudorandomlattices}.

The following notation will be useful for sparse linear equations.
Suppose that $\mathcal{D}$ is a distribution over $\zqmult^k,$ where $(\mathbb{Z}/q\mathbb{Z})^\times$ denotes the set of residues relative prime with $q.$
Let $\randomsupport(n,k,\mathcal{D})$ be the following distribution over $k$-sparse vectors in $(\zqz)^n$:
First, draw a uniformly random set $\{i_1, \ldots, i_k\}$ of size $k$.
Then, draw $\rho = (\rho_1, \rho_2, \ldots, \rho_k)\sim \mathcal{D}$ and set $a_{i_j} = \rho_j$ for all $1\le j \le k$ and $a_i = 0$ otherwise. 

We are interested in instances of $\NLE(m, q,  \randomsupport(n,k,\mathcal{D}), \Derr).$
Concrete examples of interest include:
\begin{enumerate}
    \item \emph{\klpn}:  Let $q = 2$ and $\mathcal{D}$ output the vector $1^k$ deterministically. 
    \item \emph{Binary \klwe}: Let $\mathcal{D}$ output the vector $1^k$ deterministically.
    This is the basis for our reduction to tensor completion, see \cref{sec:tensorcompletion}.
    \item \emph{\klwe{} with (Almost) Uniform Values on The Support}: Let $\mathcal{D}$ be $\unif\Big(\zqmult^k\Big)$.
\end{enumerate}
The last example almost recovers the sparse \lwespace instance from \cite{jain24sparseLWE}.
The only difference is that in \cite{jain24sparseLWE}, the distribution $\mathcal{D}$ on the support is $\unif\Big((\zqz)^k\Big)$ instead. 
This seems to make little difference in the regime of a large prime $q$ and $k$ as in \cite{jain24sparseLWE}.
However, our work focuses on the case when both $k$ and $q$ are very small with respect to $n$.
In particular, this means that under $\mathcal{D}=\unif\Big((\zqz)^k\Big),$ there would be many samples $a$ with support of size 1.
This makes both testing and search easy.
We refer the reader to \cref{sec:variablesupportsizes} for a further discussion.  

\subsection{Statistical Tasks for Noisy Linear Equations over \texorpdfstring{$\zqz$}{ZqZ}}}
\label{sec:tasks}
% In the learning parity with noise problem,  There are three differe
Associated to noisy linear equations are several different tasks. The simplest one is \emph{decision}. 

\begin{definition}[Decision]
\label{def:detection}
On input $m$ pairs $\{(a_i, y_i)\}_{i = 1}^m$ as well as the parameters $m,q,$ \linebreak
$\Dcoeff,\Derr,$ one needs to solve the following hypothesis testing problem.
\begin{equation}
\begin{split}
    & \mathsf{H}_0: \quad a_1,\ldots, a_m\iidsim\Dcoeff \text{ and independently } y_1, \ldots, y_m\iidsim\unif(\zqz) \label{eq:tetsing} \quad \text{ versus} \\
    & \mathsf{H}_1: \quad \NLE(\samples = m,\modulus = q,  \Dcoeff, \Derr).
\end{split}
\end{equation}
\end{definition}
In words, the goal is to decide whether there is an approximate linear dependence between the samples $a_1, a_2, \ldots, a_m$ and labels $y_1, y_2, \ldots, y_m$.
Another task is to recover the (approximate) linear relation under the promise that there exists one:
\begin{definition}[Search]
\label{def:search}
On input $m$ pairs $\{(a_i, y_i)\}_{i = 1}^m$ from 
$$\NLE(\samples = m,\modulus = q,  \Dcoeff, \Derr)$$ 
as well as the parameters $m,q,$
$\Dcoeff,\Derr$, recover $s$.
\end{definition}

Finally, of interest is the refutation task. This task arises naturally in certain frameworks such as the sum-of-squares hierarchy when one tries to solve the search or detection task (see, e.g., \cite{barak16noisytensor,raghavendra17stronglyrefuting}). In words, the task is to construct a proof that the uncorrelated $\mathsf{H}_1$ distribution in \eqref{eq:tetsing} does not correspond to \emph{any} secret vector $s$.
%We formalize in a way similar to \cite{raghavendra17stronglyrefuting}.
One way to formalized this would be to output an upper bound on the number of simultaneously satisfiable equations.
We will use the following more general formalization:
Suppose that $\mu:\zqz\longrightarrow [0,1]$ is some weight function such that $\mu(x) = 0$ if and only if $x = 0$ and $\mu(y) = \mu(-y).$
For example, $\mu(x) $ could be the indicator $\indicator[x \neq 0]$ (which will correspond to the number of unsatisfiable equations). Or, the normalized distance to the circle defined by $\mu(x) = \big(\frac{\min(|x|, q-|x|)}{q}\big)^p$ for some $p\in [1,+\infty)$(which will correspond to an $L_p$ weight of the noise vector in \lwe).

\begin{definition}[Strong Refutation]
\label{def:refutation}
Suppose that with high probability, for some $1\ge \Delta>\delta> 0$ and\footnote{The setting of $\delta = 0$ corresponds to \emph{weak refutation} and its analogue is also studied in the literature for other natural problems such as the planted clique \cite{barak19sos}. In the case of linear equations over finite fields, however, weak refutation is trivially solved by Gaussian elimination as also noted in \cite{raghavendra17stronglyrefuting}.}% as observed in \cite{raghavendra17stronglyrefuting}.}
weight function $\mu,$ it is the case that 
\begin{align*}
&
\frac{1}{m}\sum_{i = 1}^m \mu(y_i-\langle a_i, s\rangle) \ge\Delta \text{ for all }s\\
& \quad\quad\quad\quad\text{ if } a_1,a_2,\ldots, a_m\iidsim\Dcoeff, y_1, y_2, \ldots, y_m\iidsim\unif(\zqz).
\end{align*}
An algorithm strongly
% \footnote{More standard is the notion of refutation with $\kappa= 0.$ We note, however, that this does not seem to make a big difference as we require probability at least $1-\kappa$ for \emph{any} {APPROXIMATELY SATISFIABLE} instance. In other words, we require probability of failure $\kappa$ on worst-case satisfiable instances.}
$(\mu, \delta, \Delta, p)$-refutes the approximate satisfiability of a random instance on input $m$ pairs $\{(a_i, y_i)\}_{i = 1}^m$ if it has the following guarantee.
\begin{enumerate}
    \item It outputs \textsf{STRONGLY UNSATISFIABLE} with probability at least $p$ if $a_1,a_2,\ldots, a_m\iidsim\Dcoeff$ and $ y_1, y_2, \ldots, y_m\iidsim\unif(\zqz)$.
    \item It outputs \textsf{APPROXIMATELY SATISFIABLE} with probability $1$ for any $\{(a_i, y_i)\}_{i = 1}^m$ for which there exists some $s\in (\zqz)^n$ such that $\frac{1}{m}\sum_{i = 1}^m \mu(y_i-\langle a_1, s\rangle)  \le \delta$.
\end{enumerate}  We call an $m$-tuple 
$\{(a_i, y_i)\}_{i = 1}^m$ with the last property $\delta$-\textsf{APPROXIMATELY SATISFIABLE}.
\end{definition}

\begin{remark}[On the Relative Hardness of Testing, Search, and Refutation]
An algorithm solving search also solves the testing task. Similarly, under the additional guarantee for refutation that 
\begin{align*}
    & \frac{1}{m}\sum_{i = 1}^m \mu(y_i-\langle a_1, s\rangle)  \le \delta\text{ for some $s$}\\
& \quad\quad\quad\quad\text{ if } \{(a_i, y_i)\}_{i = 1}^m\sim 
\NLE(\samples = m,\modulus = q,  \Dcoeff, \Derr)
\end{align*}
with high probability,
a refutation algorithm also solves the testing task.
Thus, a lower-bound for testing also implies a lower bound for the other two problems. 
% In this paper, our reductions are oblivious to the task in hand. In particular, they provide lower bounds for the simplest testing problem and, by implication, for search and refutation. 
\end{remark}

We now give concrete instantiations of the above for the case of \lpnspace  and \lwe.

\subsection{Learning Parity With Noise}
\label{prelim:lpn}
Based on the hardness of \lpn, one can construct one-way functions, secret key identification schemes \cite{krzysztof12cryptoLPN}, bit commitment schemes \cite{abhishek12comitments}, 
private key encryption schemes \cite{gilbert08howtoencrypt,applebaum09fastcrypto},
public key encryption schemes \cite{applebaum10pkcryptographyfromdifferentasssumptions,kiltz14cca,yu16auxilliary}, signatures schemes with constant overhead \cite{ishai08cryptoconstant}, 
two-round oblivious transfer \cite{dottling20tworound},
and others. In learning theory, \lpnspace  is perhaps the most canonical problem with a statistical-computational gap and has 
% (amongst other) 
found applications to lower bound constructions for learning probabilistic finite automata with an evaluator \cite{kearns94learnability}, certain classes of graphical models \cite{mossel06phylogeny,bresler19bolzmann}, 
% \stefanc{I think they show the reverse direction, i.e., that learning DNFs and juntas reduces to \lpn. Also, note that they mostly care about the sparse \emph{secret} \lpnspace variant, but you could take the sparsity parameter to be $n$, that is true.}\cite{feldman2006new},
agnostically learning halfspaces \cite{kalai2008agnostically,klivans2014embedding},
learning ReLU functions \cite{goel2017reliably,goel2019time}, and others
.

% \stefanc{Aside for us: I'm wondering now whether the dense version of \lpnspace is equivalent to learning DNFs with suitably many terms, by combining the feldman result above with the combined reduction of~\cite{daniely2021local,bui2024structured}.} 

\subsubsection{Standard \lpn}
\label{sec:standardlpn}

As discussed before, we set $q= 2,\Dcoeff = \unif(\{0,1\}^n)$ and $\Derr = \bernoulli(\frac{1}{2}-\gamma)$ for some $\gamma>0$.
When $\gamma>0$ is a constant, $n$ samples are sufficient and necessary to solve the search problem \emph{information theoretically} with high probability. 
The celebrated Blum-Kalai-Wasserman algorithm \cite{BKW} solves the search problem in time $2^{O(n/\log n)}$ with access to $2^{O(n/\log n)}$ samples.
In \cite{lyubashesky05lpn}, Lyubashevsky builds on the BKW algorithm and gives a $2^{O(n/\log \log n)}$-time algorithm with access to only $n^{1+\epsilon}$ samples for any constant $\epsilon>0.$
The algorithms of  Blum-Kalai-Wasserman and Lyubashevsky are sate-of-the-art for standard learning parity with noise and the respective running-times are oftentimes assumed as optimal.
Formal evidence for the hardness of learning parity with noise is a $\exp(\Omega(n))$ lower bound in the statistical-query model \cite{kearns94learnability}.
While the reader might be at first surprised that this suggests stronger hardness, it is no mistake.
The statistical-query model is a restricted model of computation and does not capture the combinatorial algorithms of \cite{BKW,lyubashesky05lpn}.
Finally, \cite{feldman2006new} provides hardness evidence for \lpn{} via reductions from learning DNF formulas and juntas. 

\paragraph{Our Hardness Assumptions.}

In this current work, we will analyse the implications of the following standard hardness assumptions of standard \lpn:
\begin{enumerate}[leftmargin=4\parindent]
    \item [\mylabel{itm:nearexplpn}{\textbf{(\lpnspace A1)}}] \emph{Near-Exponential Hardness of \lpn: } Detection/Search/Refutation for standard \lpnspace  is hard  in time  $2^{o(n/\log n)}.$
    \item [\mylabel{itm:subexplpn}{\textbf{(\lpnspace A2)}}] \emph{Sub-Exponential Hardness of \lpn: } Detection/Search/ Refutation for standard \lpnspace  is hard in time  $2^{n^{1-\alpha}}$ for some absolute constant $\alpha \in (0,1).$
    \item [\mylabel{itm:quasilpn}{\textbf{(\lpnspace A3)}}] \emph{Quasi-Polynomial Hardness of \lpn: }  Detection/Search/ Refutation for standard \lpnspace  is hard in time  $2^{(\log n)^{1+c}}$ for some absolute constant $c>0.$
\end{enumerate}

% \paragraph{A Selection of Applications.} From the \lpnspace assumption, one can construct one-way functions, secret key identification schemes \cite{krzysztof12cryptoLPN}, bit commitment schemes \cite{abhishek12comitments} and many others (see \cite{krzysztof12cryptoLPN} for more applications to cryptography). In learning theory, \lpnspace takes a central role in understaidng the limitations of the statistical-query model \cite{kearns98sq,BKW}  and has found applications to lower-bound constructions for learning probabilistic finite automata with an evaluator \cite{kearns94learnability}, certain classes of graphical models \cite{mossel06phylogeny,bresler19bolzmann} and many others.

\subsubsection{Sparse \lpn}
\label{prelim:sparselpn}
The problem of sparse \lpnspace  was introduced in \cite{alekhovich03averagecasevsapproximation,feige02refutingkxor}.
In the $k$-sparse \lpnspace  setting, we set $q= 2,\Dcoeff = \randomsupport(n,k,1^k)$ and $\Derr = \bernoulli(\frac{1}{2}-\gamma)$ for some $\gamma>0.$
Note that such equations can equivalently be interpreted as random noisy $k$-XOR formulas. Concretely, viewing each $k$-sparse sample $a_i$ as the indicator of $(t_i^1, t_i^2, \ldots, t_i^k)\in [n]^k,$ one is provided with $m$ uniformly drawn samples 
\begin{align}
\label{eq:noisykxor}
\Big\{\big((t_i^1, t_i^2, \ldots, t_i^k), (\bigotimes_{j = 1}^k s_{t_i^j})\otimes e_{i}\big)\Big\}_{i= 1}^{m}.
\end{align}

The refutation task is at least as hard as the decision task.
Further, there are search-to-decision reductions~\cite{applebaum12pseudorandom,bogdanov19XORcodes}.
In particular, \cite{bogdanov19XORcodes} shows that under the assumption that one can solve decisional $k$-sparse \lpnspace in time $\mathcal{T}$ with $m$ samples, one can also solve the search version in time $\poly(\mathcal{T}, m, n)$ using $O(m^{1 + 2/k})$ samples.

The task of refuting noisy $k$-XOR has received significant attention.
In this task, we take $\mu(x) = \indicator[x\neq 0]$ in \cref{def:refutation} and, hence, one aims to refute the fact that a non-trivial fraction of clauses can be satisfied. 
Spectral and sum-of-squares refutation algorithms achieve the following guarantees. Using $m = \tilde{O}(n^{k/2})$ samples, there exists an algorithm running in time $n^{O(k)}$ which refutes the noisy $k$-XOR instance~\cite{barak16noisytensor,Allen2015HowTR}. 
Further, there are sub-exponential time algorithms that work even when given less samples.
For any $\delta\in (0,1),$ there exists an algorithm running in time $\exp(\tilde{O}(n^\delta))$ which refutes the noisy $k$-XOR instance using $m = \tilde{O}(n^{(k/2 - 1)(1-\delta) +1})$ samples~\cite{raghavendra17stronglyrefuting}. 

\paragraph{Known Hardness Results}

The following hardness results are known for $k$-sparse \lpn.
Within the sum-of-squares hierarchy, the algorithms of~\cite{barak16noisytensor,montanari2016spectraltensor,raghavendra17stronglyrefuting} are tight up to logarithmic factors (by setting $D = n^{\delta}$ in the following).
In particular, the canonical degree-$D$ SoS relaxation, fails to refute an instance of the noisy $k$-XOR with $m = n\times \tilde{O}(n/D)^{\frac{k-1}{2}}$ samples.  
Further, in certain regimes, there exists further evidence of hardness against other restricted models of computation such as $\mathbf{AC^0}$ circuits and myopic algorithms, see~\cite{applebaum12pseudorandom}. 

The only hardness evidence (conditionally) ruling out any polynomial time algorithm for refuting noisy $k$-XOR that we are aware of is the following win-win result.
In \cite{liu17hardnesssparselpn}, the authors show that one of the following two statements is true.
Either there is a polynomial-time reduction from standard (dense) \lpn{} with noise rate $\mu$ to a version of $k$-sparse \lpn{} in which each entry of the coefficent vectors are distributed as $\bernoulli(\mu)$.
Or, one can build a public key encryption scheme from standard \lpn. The authors of \cite{liu17hardnesssparselpn} view this as evidence for the hardness of sparse \lpn{} as public-key encryption from standard \lpn{} with low noise has remained elusive over the years and would be a major breakthrough.

\subsection{Learning With Errors and Lattice Problems}
\label{prelim:lwe}
\lwespace  is a cornerstone of modern cryptography that enables fully homomorphic encryption \cite{brakerski11fhe} (building on Gentry's original work \cite{gentry09fhe,gentry10fhe} and improvements \cite{smart10fhe,disjk10fheintegers,brakerski11fheringlwe} which all rely on variants of \lwe), non-interactive zero-knowledge proofs \cite{canetti19fiatshamir}, public-key encryption \cite{regev05LWE}, succinct non-interactive argument systems \cite{jin24snarg,mathialagan24snark} and many others (see \cite{peikert16decade}).
More recently, the \lwespace assumption 
% (and its continuous analogue \clwespace which turns out to be equivalent \cite{gupte2022continuous})
has also been used to establish hardness of several fundamental statistical problems such as learning Gaussian mixture models \cite{bruna21clwe,gupte2022continuous}, learning (intersections of) halfspaces \cite{klivans06cryptographic,diakonikolas2022cryptographic,tiegel23agnostic,diakonikolas2023near,tiegel24intersection}, learning periodic neurons \cite{zadik24neurons}, and others. 
\subsubsection{Standard \lwe}

In the standard learning with errors problem, one takes $\Dcoeff = \unif((\zqz)^n)$ and $\Derr = \mathcal{D}_{\sigma}$, where $\mathcal{D}_{\sigma}$ is a discrete Gaussian over the integer lattice centered at $0$ (see \cite{regev05LWE} for a formal definition).

There are several known algorithms for the search variant, achieving different guarantees in different regimes. First, one can adapt the Blum-Kalai-Wasserman algorithm \cite{BKW,wagner02generalizedbirthday} and obtain a running time of $q^{\Omega(n/\log(q/B))}$ with $q^{\Omega(n/\log(q/B))}$ samples where $B$ is the support-size of the noise\footnote{That is, there exists some set $\mathcal{B}\subseteq \zqz$ of size $B$ such that $e_1, e_2, \ldots, e_m\in B$ with high probability.}. Next, an algebraic algorithm of Arora and Ge \cite{arora11newalgorithms} works in time $O(n^B)$ with $n^{O(B)}$ samples. Finally, the Lenstra-Lenstra-Lovasz algorithm \cite{Lenstra1982LLL} and improvements \cite{SCHNORR1987201} based on a reduction to lattice problems succeed whenever $q/B = \exp(\Omega(n))$ or $m = \exp(\Omega(n)).$

In particular, for the setting of large polynomial modulus $q = \Theta(n^C)$ (where $C$ is a large constant) and large enough  Gaussian noise $D_\sigma$ (with $\sigma = n^{c}$ for some $c\ge 1, c\ll C$) the fastest algorithms run in time $2^{\Omega(n)}$.
This is no coincidence, and is known to be tight under standard assumptions on \emph{worst-case} lattice probblems.
We do not define these problems as we will not need them in the current paper and reference \cite{peikert16decade}.
In particular~\cite{regev05LWE}, let $q \le \exp(\poly(n))$ and 
$\beta \in (0,1)$ be such that $\beta q\ge 2\sqrt{n}.$
Then, the decisional problem for  
$$
\LWE(\samples = m, \modulus = q, \Dcoeff = \unif((\zqz)^n),
\Derr = D_{\beta q}
)
$$
is at least as hard as the gap shortest vector problem and the shortest independent vectors problem on $n$-dimensional lattices within approximation ratio $\tilde{O}(n/\beta)$ for quantum algorithms.
In \cite{peikert09publickey}, this result is dequantized for a variation of the gap shortest vector problem and under the additional assumption $q \ge 2^{n/2}$ for the standard  gap shortest vector problem.
This assumption was removed in~\cite{brakerski2013classical}, however, they only showed hardness under the assumption that $\sqrt{n}$-dimensional lattice problems are hard.

\paragraph{Our Hardness Assumptions.}

Still, a near- or sub-exponential time algorithm for $n$-dimensional \lwespace would be considered a breakthrough.
Together with the known best algorithms for \lwespace and lattice problems, this gives rise to the following common hardness assumptions: 

\begin{enumerate}[leftmargin=4\parindent]
    \item [\mylabel{itm:nearexplwe}{\textbf{(LWE A1)}}] \emph{Exponential Hardness of \lwe: } 
    Detection/Search/Refutation for standard \lwe
    $$
\LWE(\samples = m, \modulus = q, \Dcoeff = \unif((\zqz)^n),
\Derr = D_{\beta q}
)
$$
    is hard in time  $2^{o(n)}$ when $\beta q>2\sqrt{n}.$
    \item [\mylabel{itm:subexplwe}{\textbf{(LWE A2)}}] \emph{Sub-Exponential Hardness of \lwe: } For some constant $\beta \in (0,1),$ 
    Detection/Search/ Refutation for standard LWE
    $$
\LWE(\samples = m, \modulus = q, \Dcoeff = \unif((\zqz)^n),
\Derr = D_{\beta q}
)
$$
    is hard in time  $2^{n^{1-\alpha}}$ when $\beta q>2 \sqrt{n}$ for some absolute constant $\alpha\in (0,1).$
    \item [\mylabel{itm:quasilwe}{\textbf{(LWE A3)}}] \emph{Quasi-Polynomial Hardness of \lwe: } For some constant $\beta \in (0,1),$ 
    Detection/Search/ Refutation for standard LWE
    $$
\LWE(\samples = m, \modulus = q, \Dcoeff = \unif((\zqz)^n),
\Derr = D_{\beta q}
)
$$
    is hard in time  $2^{(\log n)^{1+c}}$ when $\beta q>2 \sqrt{n}$ for some absolute constant $c>0.$
\end{enumerate}

% \paragraph{Applications.} The learning with errors problem has revolutionized cryptography, with applications to fully homomorphic encryption \cite{brakerski11fhe}, non-interactive zero-knowledge proofs \cite{canetti19fiatshamir}, public-key encryption \cite{regev05LWE}, and succinct non-interactive argument systems \cite{jin24snarg} (see \cite{peikert16decade} for more applications to cryptography). More recently, the hardness of LWE has been used to show hardness of various statistical problems such as learning Gaussian mixture models \cite{bruna21clwe,gupte2022continuous}, learning (intersections of) halfspaces \cite{klivans06cryptographic,tiegel23agnostic,tiegel24intersection}, learning periodic neurons \cite{zadik24neurons}, and others. What makes hardness of statistical problems based on LWE particularly appealing is that even though LWE is an average-case problem, under certain distributional assumptions on $A, s, e,$ it inherits a worst-case hardness from standard lattice problems \cite{regev05LWE,peikert09publickey}.

\label{sec:standardlwe}
\subsubsection{Sparse \lwe}
\label{prelim:sparselwe}
The study of \lwespace with a sparse public matrix is much more recent and was initiated in \cite{jain24sparseLWE}. Specifically, they consider the case of 
$$
\LWE\Big(\samples = m,\modulus = q,  \randomsupport\big(n,k,\unif((\zqz)^k)\big), D_\sigma\Big).
$$
They show that when $q \geq n$, $k$-sparse \lwespace in dimension $n$ is at least as hard as 
standard $k$-dimensional \lwe.
Beyond that, the authors of \cite{jain24sparseLWE} show that natural strategies based on lattice reductions, finding short vectors in the kernel of the public matrix, and variants of the BKW algorithm do not seem to break $k$-sparse \lwe.

\subsection{The Tensor Completion Problem}
\label{sec:definingtensorcompletion}

A noisy rank-$r$ order-$k$ tensor is defined by the equation 
\begin{equation}
\label{eq:defnoisytensor}
T = \sum_{i = 1}^r \sigma_i \times u_i^{1}\otimes  u_i^{2}\otimes\cdots\otimes u_i^{k} + \Delta \in \mathbb{R}^{n\times n\times \cdots \times n}.
\end{equation}
where $\sigma_i$ are scalars (note that they can make the norms arbitrary large), $\Delta$ is some noise tensor and $u_{i}^{j} \in \mathbb{R}^{n}$ for all $i \in [r], j\in [k]$ with $\|u^j_{i}\|_2 = \sqrt{n}$ for normalization. 

\paragraph{Tensor Completion.}
In the tensor completion problem, one observes a set of randomly chosen entries of an unknown tensor.
That is, for a uniformly random subset $S$ of $[n]^k,$ one observes $\{T_{i_1, i_2, \ldots, i_k}\}_{(i_1, i_2, \ldots, i_k)\in S}.$
The goal is to output a tensor $\widehat{T}\in (\mathbb{R}^{n})^{\otimes k}$ such that 
$T$ and $\widehat{T}$ are close in some metric.
This problem has numerous applications throughout the sciences including 
image impainting \cite{liu09tensorcompletionimpainting}, hyperspectral imaging \cite{li10hyperspectral,signoretto19tensorcompletionhyperspectral}, geophysical imaging \cite{kreimer13tensorcompletiongeophysics}.
A standard choice is the (normalized) $\ell_1$ norm~\cite{barak16noisytensor}, i.e. 
$$\frac{1}{n^k}\sum_{i_1, i_2, \ldots, i_k\in [n]^k}|T_{i_1, i_2, \ldots, i_k} - \widehat{T}_{i_1, i_2, \ldots, i_k}| =\frac{1}{n^k} \|T - \widehat{T}\|_1.$$

Of course, insofar as $\widehat{T}\in (\mathbb{R}^{n})^{\otimes k},$ explicitly outputting a tensor $\widehat{T}$ requires time and space $\Omega(n^k).$ Yet, in the low-rank regime this information is compressible. A noiseless rank-$r$ order-$k$ tensor is uniquely specified by $rnk$ numbers, the entries of $\{u_i^j\}_{1\le i \le r, 1\le j \le k}.$ Given these numbers, one can evaluate an entry in $\tilde{O}(rk)$ time. 

Thus, we formalize the tensor completion task via the following 
evaluation
oracle.

\begin{problem}[Tensor Completion --- Search Problem]
\label{problem:tensorcompletion}
In the noisy tensor completion problem, there is a hidden noisy order-$k,$ rank-$r$ tensor $T$ as in \eqref{eq:defnoisytensor} where $\|\Delta\|_1 =o(\|T\|_1).$ 
On input a uniformly random subset $S$ of $[n]^k$ of size $m,$ one observes $\{T_{i_1, i_2, \ldots, i_k}\}_{(i_1, i_2, \ldots, i_k)\in S}.$  The goal is to compute in time $\mathcal{T}$ a data structure $\widehat{T}$ with the following guarantees. First, $\widehat{T}$ should produce a real number on input $(i_1, i_2, \ldots,i_k)\in [n]^k$ in time $\mathcal{T}.$ Second, 
$$
\expect_{(i_1, i_2, \ldots,i_k)\sim \unif([n]^k)}
\left[\left|\widehat{T}[i_1, i_2, \ldots,i_k] - 
T_{i_1, i_2, \ldots,i_k}
\right|\right] \leq o\Big(\frac{1}{n^k}\|T\|_1\Big) \,.
$$
\end{problem}

One can also ask for the analogous refutation (with,say, $\mu(x, y) = \|x-y\|_1$) and testing tasks for tensor completion introduced in \cref{sec:tasks}. 

\paragraph{Typical Assumptions.} For the problem to be solvable even by inefficient algorithms, several assumptions are necessary: %\gb{i added necessary... true?}:
\begin{enumerate}
    % \item \emph{Distribution of Entries.} Typically, $S$ is uniformly random. This is to prevent from degenerate cases when one observes too little information to be able to say anything meaningful (even information-theoretically). For example, consider the very large set $S$ of size $n^{k-1}$ including all entries with last coordinate fixed to $1.$ The problem is clearly information-theoretically impossible.
    \item \emph{Incoherence.} One needs to impose some condition that the vectors are spread-out.
    Otherwise, if the mass concentrates on a small set, given only random samples one may not be able to observe any non-trivial entries.
    The following condition is sufficient: A vector $a\in \mathbb{R}^n$ is $C$-incoherent if 
    $\|a\|_\infty\le \frac{C}{\sqrt{n}}\|a\|_2.$ Then, assume that all components are incoherent for some small $C$ \cite{barak16noisytensor,montanari2016spectraltensor}.
    \item \emph{Signal-to-Noise Ratio.} Finally, one needs to impose a condition on the noise $\Delta.$ As a minimal assumption, one needs $\|\Delta\|_1 = o(\|T\|_1).$ 
\end{enumerate}

\paragraph{Algorithms.} Perhaps surprisingly, without any further distributional assumptions on $\Delta$ and $\{u_i^j\}_{1\le i \le r, 1\le j \le k},$ the above three conditions are sufficient for efficient algorithms \cite{barak16noisytensor} (the related \cite{montanari2016spectraltensor}  is on noiseless tensor completion). Here we give the respective guarantees.

\begin{theorem}[Informal, \cite{barak16noisytensor,montanari2016spectraltensor}] Consider an instance of the tensor completion problem \cref{problem:tensorcompletion} with $\polylog(n)$-incoherent components and $r\le n^{k/2-1}.$ Then, there exists an algorithm solving the search tensor completion problem with access to $\polylog(n)\times (rn^{k/2})$ samples in $\poly(n^{k/2})$ time.
\end{theorem}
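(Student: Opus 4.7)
The plan is to reduce the tensor completion problem to a robust matrix completion problem via flattening (this is the Montanari--Sun approach; Barak--Moitra extend it via sum-of-squares to tolerate worst-case $\ell_1$-bounded noise).

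\textbf{Step 1: Flattening.} I would group the $k$ modes into two halves and reshape $T \in \mathbb{R}^{n \otimes k}$ into a matrix $M \in \mathbb{R}^{n^{\lceil k/2 \rceil} \times n^{\lfloor k/2 \rfloor}}$ by indexing rows by the first $\lceil k/2 \rceil$ coordinates and columns by the last $\lfloor k/2 \rfloor$. Under this reshaping, each rank-one term $u_i^1 \otimes \cdots \otimes u_i^k$ becomes the rank-one matrix $(u_i^1 \otimes \cdots \otimes u_i^{\lceil k/2 \rceil})(u_i^{\lceil k/2 \rceil+1} \otimes \cdots \otimes u_i^k)^\top$, so $M - \tilde\Delta$ has rank at most $r$, where $\tilde\Delta$ is the matricization of $\Delta$ and satisfies $\|\tilde\Delta\|_1 = \|\Delta\|_1 = o(\|T\|_1)$.

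\textbf{Step 2: Incoherence transfer.} The left and right singular vectors of $M - \tilde\Delta$ lie in the span of the tensor products $u_i^1 \otimes \cdots \otimes u_i^{\lceil k/2 \rceil}$ (resp.\ the other half). Since $\|u \otimes v\|_\infty \leq \|u\|_\infty \|v\|_\infty$ and $\|u \otimes v\|_2 = \|u\|_2 \|v\|_2$, $\polylog(n)$-incoherence of each $u_i^j$ implies that the tensor-product factors are $\polylog(n)$-incoherent as $n^{\lceil k/2 \rceil}$-dimensional vectors. A short argument (using that the singular vectors lie in the span of at most $r$ incoherent vectors) transfers this to incoherence of the singular vectors themselves, up to factors depending on the condition number, which we may assume is bounded via a standard reduction.

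\textbf{Step 3: Robust matrix completion.} Observing a uniformly random subset of $[n]^k$ of size $m$ corresponds to observing a uniformly random subset of entries of $M$ of size $m$. Setting $N = n^{\lceil k/2 \rceil}$, matrix-completion guarantees for incoherent low-rank matrices (e.g., nuclear norm minimization~\cite{candes2009exact} or more robust variants) recover $M$ from $\tilde O(rN \cdot \polylog N) = \tilde O(r n^{k/2} \cdot \polylog n)$ random entries in time $\poly(N) = \poly(n^{k/2})$, producing an estimate $\hat M$ with $\|\hat M - M\|_F^2 \le o(\|M\|_F^2)$ under the noise hypothesis. The regime $r \le n^{k/2 - 1}$ ensures that $rN \le N^2/n$, so the number of samples $\tilde O(rN)$ is well within the matrix-completion feasibility range.

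\textbf{Step 4: Evaluation oracle.} I would simply fold $\hat M$ back into a tensor $\hat T$ and answer a query $(i_1, \ldots, i_k)$ by $\hat T_{i_1, \ldots, i_k} = \hat M_{(i_1, \ldots, i_{\lceil k/2 \rceil}),\, (i_{\lceil k/2 \rceil+1}, \ldots, i_k)}$, which is an $O(\polylog n)$-time lookup. The average $\ell_1$ error bound
\[
\frac{1}{n^k} \sum_{i_1,\ldots,i_k} \bigl|\hat T_{i_1,\ldots,i_k} - T_{i_1,\ldots,i_k}\bigr| \le \frac{1}{n^{k/2}} \|\hat M - M\|_F = o\Bigl(\tfrac{1}{n^k} \|T\|_1\Bigr)
\]
follows from Cauchy--Schwarz together with $\|M\|_F \le \|M\|_1/\sqrt{\mathrm{rank}}$ type inequalities (combined with the $\ell_1$-to-$\ell_2$ noise translation that uses incoherence to bound entrywise magnitudes).

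The main obstacle is \textbf{Step 3} in its robust form: standard matrix-completion guarantees typically assume entrywise bounded or spectrally bounded noise, whereas our hypothesis is the weaker $\|\Delta\|_1 = o(\|T\|_1)$ (allowing adversarial spikes). This is precisely the difficulty that the Barak--Moitra SoS-based approach resolves, using a degree-$O(1)$ sum-of-squares relaxation together with an SoS-certifiable low-rank recovery argument to turn an $\ell_1$ noise bound into an average-case $\ell_1$ recovery guarantee. Translating the noise from $\ell_1$ to the Frobenius-compatible norm required by vanilla nuclear-norm completion is therefore the delicate step; without SoS, one would need to first argue that the large-magnitude entries of $\tilde\Delta$ are sparse enough to be absorbed via an additional sparse-plus-low-rank decomposition (as in robust PCA), which is the other natural route.
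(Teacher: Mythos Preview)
The paper does not prove this theorem; it is stated as a known result and simply cited to \cite{barak16noisytensor,montanari2016spectraltensor} with no argument given. So there is no ``paper's own proof'' to compare against --- the statement functions purely as background.

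Your sketch is a reasonable outline of the approach in those cited works: flatten to an $n^{\lceil k/2\rceil}\times n^{\lfloor k/2\rfloor}$ matrix, transfer incoherence to the tensor-product factors, and invoke (robust) matrix completion. You also correctly flag that Step~3 is where the real work lies: vanilla nuclear-norm completion does not handle the adversarial $\ell_1$-bounded noise model $\|\Delta\|_1=o(\|T\|_1)$, and this is precisely what the Barak--Moitra SoS relaxation is designed to address. One small issue: in Step~4 the inequality ``$\|M\|_F\le \|M\|_1/\sqrt{\mathrm{rank}}$'' is not a standard relation between entrywise $\ell_1$ and Frobenius norms (rank does not enter that comparison); what you actually need is Cauchy--Schwarz in the form $\|M\|_1\le n^{k/2}\|M\|_F$ together with incoherence to relate $\|T\|_1$ and $\|T\|_F$. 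But since the paper treats the theorem as a black-box citation, none of this is at odds with anything the authors claim.
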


\paragraph{Lower Bounds.} One may recognize that the decisional version of noisy $k$-XOR (sparse \lpn) problem is directly related to a tensor distinguishing problem that can be solved via a tensor completion algorithm. Indeed, in the planted case of noise $k$-XOR, let the secret be $s\in \{\pm 1\}^n.$ Then, one observes entries of the tensor $T = s^{\otimes k} + \Delta$ where each entry of $\Delta$ is in $\{-2,0,2\}.$ Namely, $\Delta_{i_1,i_2,\ldots,i_k} = 0$ in the case of noise equal to 0 and $\Delta_{i_1,i_2,\ldots,i_k} = -2 s_{i_1}s_{i_2}\cdots s_{i_k}$ in the case of noise equal to 1. Furthermore, the tensor is $1$-incoherent. 
In the null case, one just observes completely independent entries that are uniform over $\{-1,1\}$.
Hence, one can translate the lower bounds for noisy $k$-XOR into lower-bounds for tensor completion. 

In the current work, we will develop a similar bound but instead based on sparse \lwespace (and, via our \cref{thm:maininformal}) based on standard \lwe. The advantage of an \lwespace lower-bound (over an \lpnspace  lower bound) is that standard \lwespace inherits the worst-case hardness of lattice problems as in \cite{regev05LWE,peikert09publickey}. Additionally, we obtain a lower bound based on \emph{standard} \lpnspace  as we know it implies the hardness of \emph{sparse} \lpn.

\subsection{Other Works}
\label{sec:otherworks}
\subsubsection{Other ``Non-Standard'' Variants of \lpn{} and  \lwe} 
% \kirilc{I rewrote this}
We remark that besides the sparse variants of \lpn{} and \lwe{} in this paper, other settings beyond the standard ones have also been studied. Again, the motivation is similar --- additional structure in the noisy linear equations is useful towards constructing cryptographic protocols and computational lower bounds.
Different variants are useful towards different goals.
In all cases, a worthwhile pursuit is to relate to the hardness of the standard variants via reductions. Such reductions are well-understood in some cases and entirely open in others.

\paragraph{Structured Public Matrices.} The main object of study in the current work is noisy linear equations with sparse public matrices. Other structures of the public matrix have also proven useful in applications. In \cite{applebaum23sampling}, the authors study \lpn{} with a public matrix with $k$-sparse rows but correlated support locations which ensure a certain structure of the kernel. In \cite{dao24sparsedense}, the authors study a dense-sparse \lpn{} in which the public matrix is a product of a sparse and dense matrix. For these two variants, to the best of our knowledge, no reduction is known from standard \lpn. 

The authors of
 \cite{boneh13bit} show a reduction from standard \lwe{} to \lwe{} when each entry of the public matrix $A$ is an independent uniform bit. We will say more about the high-level idea of this reduction briefly.

\paragraph{Structured Secrets.}

Perhaps the most canonical instance of a structured secret is the case when the secret is $k$-sparse.
But the rows of the coefficient matrix are dense (and uniform over $(\zqz)^L$.
Works using a sparse secret in \lpn{} abound
\cite{kearns98sq,feldman2006new,klivans2014embedding,valiant15findingcorrelations,goel2017reliably,goel2019time,bresler19bolzmann,YAN202176improved,barak22hidden,abbe24staircase,bshouty2024approximatingnumberrelevantvariables}. Similarly, common is \lwe{} with a secret drawn from a $k$-sparse or otherwise low-entropy distribution \cite{goldwasser10entropiclwe,miccancio19binarysecret,brakerski20geenralentropic,gupte2022continuous}. Reductions from standard \lwe{} to \lwe{} with a different secret distribution are somewhat well understood. It turns out that the key 
quantity determining the hardness is 
the min-entropy of the distribution
\cite{goldwasser10entropiclwe,miccancio19binarysecret,brakerski20geenralentropic,gupte2022continuous}. For example, \cite{gupte2022continuous} show that modulus-$q$ \lwe{} with $k$-sparse binary secret in dimension $n$ is as hard as standard modulus-$q$ \lwe{} in dimension $\Theta(k\log (n/k)/\log q).$ 
In addition, for \lpn{}, the reduction of Applebaum, Cash, Peikert and Sahai~\cite{applebaum09fastcrypto} can be repurposed to show that choosing the \lpn{} secret from the error distribution has the same hardness as choosing it at random, modulo a slight difference in the number of samples.

We note that the reduction techniques for structured secrets do not seem applicable to our case of structured public matrices -- as the secret is unknown in both the starting (standard) problem and target problem, no step of explicit decoding a dense secret vector into a sparse secret vector is needed and existential arguments based on the leftover hash lemma suffice. In our case, we need to explicitly decode dense row vectors of the public matrix into sparse vectors, so we need constructive rather than existential arguments. 

\paragraph{Structured Noise.} 
The key difference between \lpn{} and \lwe{} is the sparsity of the noise distribution. While \lwe{} is typically studied with dense (discrete Gaussian) noise, \lpn{} is often of interest with sparse noise. Yet, in certain applications it turns out that \lwe{} with a sparse noise distribution is useful, e.g. \cite{miccancio13sparsenoise,sun20revisitingbinarynoiselwe,jain2021indistinguishability,jain2022indistinguishability,ragavan24io}. It turns out that the noise distribution has a profound effect on the hardness of \lwe. For example, if the noise is binary, the celebrated Arora-Ge algorithm breaks \lwe{} in polynomial time with access to quadratically many samples \cite{arora11newalgorithms}. In \cite{miccancio13sparsenoise,sun20revisitingbinarynoiselwe}, the authors give a direct reduction from worst-case lattice problems to \lwe{} with binary (or otherwise small) noise, but those only apply to a regime of $n(1+ \Omega(1/\log n))$ samples.  

Another setting is that of learning with rounding (LWR)~\cite{banerjee12pseudorandomlattices} where the samples are of the form $(a, \lfloor \frac{p\cdot a^T s}{q} \rfloor \bmod{p})$.
One could think of LWR as LWE with a deterministic, sample-dependent, noise. Our reductions are agnostic of the noise distributions and apply to the LWR setting as well: that is, they reduce (dense) LWR to sparse LWR in the same way as they reduce LWE to sparse LWE.

% While a direct reduction from lattice problems might be possible in the case of a sparse public matrix, such an approach seems more challenging at the time. \gb{what does this mean?}\kb{I think we meant here that reductions directly from worst case problems (lattices) are usually harder... But I don't think we are trying to say anything formal here.. So can just get rid of the sentence}

%As a simple example, if the public matrix $A$ is supported only on its first $k \ll n$ coordinates, the problem is effectively only $k$-dimensional rather than $n$-dimensional and, thus, presumably much easier. Hence,

\paragraph{Continuous \lwe{} (\clwe).} \clwe{} was recently introduced in \cite{bruna21clwe}. In it, the public matrix $A$ has i.i.d. Gaussian entries, the secret $s$ is a random unit vector, and the noise is Gaussian.
The modular arithmetic is enforced in the labels, explcitly  $y_i \equiv a_i^\top s + e_i\pmod{1}.$
In \cite{gupte2022continuous}, the authors show that \clwe{} and \lwe{} are equivalent. \clwe{} is the starting point of many recent hardness results of statistical tasks \cite{bruna21clwe,gupte2022continuous,diakonikolas2022cryptographic,diakonikolas2023near,tiegel23agnostic,tiegel24intersection,zadik24neurons}.

\subsubsection{Memory Lower Bounds in the Streaming Model}
Researchers have also considered other computational resources than running time and number of samples. For example, in \cite{raz17timespace,moshkovitz17mixing,garg17extractor,garg21memory,} the authors study the streaming model 
in which the pairs $(a_i, y_i)$ arrive sequentially. As usual in streaming algorithms, space is the main resource of interest. The state-of-the-art results show that the search problem for standard \lpn{} takes either memory $\Omega(n^2/\epsilon)$ or time 
$\exp(\Omega(n))$ \cite{garg21memory}. In the case of \klpn{} for $k\le n^{.9},$ one needs either 
memory $nk^{.99}/\epsilon$ or $\exp(\Omega(k\log k))$ samples \cite{garg21memory}. Results for larger $k$ and other variants of \lpn{} such as the case of a sparse secret are also covered in the aforementioned works.

% \section{}
% \label{sec:preliminaries}
% \input{Preliminaries}

\section{Our Contributions}
\label{sec:ourcontributions}

\subsection{Main Reduction}
% Given is a \emph{prime modulus} $q,$ an \emph{initial dimension} $L,$ a \emph{target dimension} $n$ and \emph{target sparsity} $k$ and two distributions. One is an \emph{error distribution} $\mathcal{E}$ over $\zqz$ and the other is a distribution $\mathcal{D}$ over $(\zqz\backslash \{0\})^k,$ which is the distribution $\mathcal{D}$ on the entries of the support. We call $\randomsupport(n,k,\mathcal{D})$ the distribution over $(\zqz)^n$ generated as follows. First, one draws a uniform $k$-tuple of distinct indices $i_1, i_2, \ldots, i_k$ in $[n]$ and then samples $\rho = (\rho_1, \rho_2,\ldots, \rho_k)$ from $\mathcal{D}.$ One sets $a\in (\zqz)^n$ to be the $k$-sparse vector for which $a_{i_t} = \rho_t$ for $1\le t \le k.$ The vector $a$ is a sample from $\randomsupport(n,k,\mathcal{D}).$

% We want to base the hardness of $\lwespace$ in dimension $n$ with $k$-sparse rows and distribution on the support $\mathcal{D}$ on the hardness of standard $L$-dimensional $\lwespace$ with $m$ samples and error distribution $\mathcal{E}.$ 

%\stefanc{Where do we use that $m \geq n$?}\kirilc{Never really.. You are right. I put it since problems are not even defined otherwise, but perhaps there could be some interest if one were to consider non-uniqueness/improper regimes. }
Here, we state our main reduction guarantee. It holds under the following assumption on the set of parameters.

\begin{parameters}
\torestate{
\label{parameterdependenciess}
Let $L,q,n,k,m, h,t \in \mathbb{N}$ are such that
\begin{enumerate}
    \item $n  = (hq)^t,L = ht.$
    \item $k \ge 4h(\log h + \log q + 2\log(ht))$ and $k = o(\sqrt{n}).$
\end{enumerate}}
\end{parameters}

The reduction is slightly different depending on the specific task --- testing, search, or refutation. Nevertheless, we  can summarize as follows.
\begin{theorem}
    [Full statements in \cref{cor:testingreduction,cor:searchreduction,cor:refutationreduction}] 
    \label{thm:mainreduction}
    Let $L,q,n,k,m, h,t \in \mathbb{N}$ satisfy the conditions in \cref{parameterdependenciess}.
    Let $\cD$ be any distribution over $\zqmult^k$ from which we can produce a sample in time $\Tsample$. 
    Assume that there exists an algorithm that, given $m_1\coloneqq \lfloor m(1- 3/L - k^2/n)\rfloor$ samples, runs in time $\mathcal{T}$ and solves $\textsf{problem}\in \{\textsf{testing},\textsf{search}, \textsf{refutation}\}$ with probability $p$ in the $k$-sparse setting.
    I.e., where the relevant distributions are
\begin{align*}
    \NLE(\samples = m_1, \modulus = q, \Dcoeff = \randomsupport(n, k,\mathcal{D}),\Derr)
\end{align*}
and/or $(\randomsupport(n, k,\mathcal{D}) \times \unif(\zqz))^{\otimes m_1}$.
Then, there exists an 
algorithm which solves in time\linebreak $\mathcal{T} + O(\poly(n,m,\Tsample))$ $\textsf{problem}$ with probability $p- \exp(- 2mL^{-2})$ in the standard setting.
I.e., where the relevant distributions are
\begin{align*}
    \NLE(\samples = m, \modulus = q, \Dcoeff = \unif((\zqz)^L),\Derr)
\end{align*}
and/or $(\unif((\zqz)^L) \times \unif(\zqz))^{\otimes m}$.
\end{theorem}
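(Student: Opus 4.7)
The plan is to implement the reduction sketched in \cref{sec:nontechnicaloverview} and check that it has the claimed sample loss, success probability, and task-preservation properties. Fix parameters satisfying \cref{parameterdependenciess} and build the gadget matrix $G\in(\zqz)^{n\times L}$ from \eqref{eq:gdefintro}, whose $n=(hq)^t$ rows are indexed by tuples $(\phi_1,\ldots,\phi_t)\in[hq]^t$ and contain all concatenations of $t$ rows of $V$. Because $V$ contains the zero rows $0\cdot e_1,\ldots,0\cdot e_h$ among its first $h$ rows, the column span of $G$ is all of $(\zqz)^L$, which will be useful for the search-to-search reduction. The bulk of the proof consists of showing that a single dense row $b\in(\zqz)^L$, uniform by assumption, can be mapped to a row $a\in(\zqz)^n$ whose marginal is exactly $\randomsupport(n,k,\cD)$, while linearly preserving the relation $b^\top s=a^\top (Gs)$.

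First I would describe the per-sample decoding subroutine. Given a dense $b=(b_1,\ldots,b_t)$, I compute $|\Preimage_V(b_j)|$ for each block using \cref{lem:sizeestimationintro}, then reject (output $\bot$) with probability $1-|\Preimage_V(b_j)|/\bigl((1+\psi)(hq)^k/q^h\bigr)$, where $\psi=(ht)^{-2}$; by \cref{lem:nearuniformityintro} this is a valid probability bounded by $2\psi$. Conditioned on not rejecting, sample $\xi_j\sim\unif(\Preimage_V(b_j))$ via \cref{lem:samplingintro}. After a union bound across the $t$ blocks, $b$ survives with probability at least $1-2t\psi\ge 1-2/L$. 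Conditioning on acceptance of every block, the key claim is that the $kt$ random variables $\{\xi_j(\psi)\}_{j\in[t],\psi\in[k]}$ become i.i.d.\ uniform over $[hq]$: this is exactly the point of rejection against the maximal preimage size, combined with the fact that marginalising the uniform distribution on $[hq]^k$ along the preimage partition of $(\zqz)^h$ gives uniform $b_j$. Reshape into columns $\xi(\psi)=(\xi_1(\psi),\ldots,\xi_t(\psi))\in[(hq)^t]=[n]$, so the $k$ columns are i.i.d.\ uniform indices in $[n]$; condition on the ``birthday'' event $\mathcal{E}$ that the $k$ columns are distinct, which holds with probability at least $1-\binom{k}{2}/n\ge 1-k^2/n$. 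Then sample $\rho\sim\cD$ and, using the modification described at the end of \cref{sec:nontechnicaloverview} (multiply each column's word-coordinates by $\rho_j^{-1}$), produce a relabelled $k$-tuple of indices carrying values $\rho$. The resulting vector $a$ has support uniform over $\binom{[n]}{k}$ and values distributed as $\cD$, i.e.\ $a\sim\randomsupport(n,k,\cD)$, while \eqref{eq:introsumingadget} together with linearity of the relabelling gives $a^\top G=b^\top$.

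Next I assemble the full reduction and argue per-task correctness. Run the per-sample routine on each of the $m$ input rows; call $M$ the (random) number of non-rejected rows and condition on $M\ge m_1=\lfloor m(1-3/L-k^2/n)\rfloor$. Since each row survives both the rejection and birthday events with probability at least $1-2/L-k^2/n\ge 1-3/L-k^2/n$ independently, a Hoeffding/Chernoff bound for the sum of $m$ independent Bernoulli indicators yields $\Pr[M<m_1]\le \exp(-2m/L^2)$, matching the stated slack. For \textbf{decision}, output the first $m_1$ surviving pairs $(a_i,y_i)$ unchanged: under $\hone$ we have $y_i=b_i^\top s+e_i=a_i^\top(Gs)+e_i$, and since $b$ was uniform the pair $(a_i,y_i)$ has the sparse $\hone$ law with secret $Gs$ that is still uniform when we additionally add $a_i^\top z$ for a single fresh $z\sim\unif((\zqz)^n)$ (so the secret becomes $Gs+z\sim\unif$); under $\hzero$, $y_i$ is independent of $b_i$ hence of $a_i$ after the same $+a_i^\top z$ shift. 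For \textbf{search}, do the same masking step; the sparse oracle returns $s'=Gs+z$, and since $G$ has full column rank (as noted above), $s=G^+(s'-z)$ is recovered in polynomial time. For \textbf{refutation}, the preservation of the noise distribution on every surviving sample and the guarantee $M\ge m_1$ imply that $\delta$-approximately satisfiable inputs map to $\delta(1+o(1))$-approximately satisfiable outputs (absorbing the $o(1)$ into the $o(1)$ slack on $\Delta-\delta$), while independent inputs map to independent outputs. In all three cases the call to the sparse oracle costs $\cT$ and the per-sample work is dominated by \cref{lem:samplingintro,lem:sizeestimationintro} plus one sample from $\cD$, for total time $\cT+O(\poly(n,m,\Tsample))$.

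The main obstacle is the exact matching of marginals: it is not enough that $a$ be close in total variation to $\randomsupport(n,k,\cD)$, because the sparse oracle is only assumed to succeed on inputs with exactly this distribution. This is precisely why the rejection step against $(1+\psi)(hq)^k/q^h$ is needed, and why the two-sided bound in \cref{lem:nearuniformityintro} (rather than a one-sided total-variation mixing bound) is indispensable; a secondary subtlety is that the rejection probability must be summable to $o(1)$ over the $t$ blocks and the $m$ samples simultaneously, which is why the parameter choice $\psi=(ht)^{-2}$ is baked into \cref{parameterdependenciess}. Once these two facts are in hand, everything else (masking by $z$, linearity, birthday bound, Hoeffding on the surviving count) is routine.
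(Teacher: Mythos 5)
Your plan follows essentially the same route as the paper: per-block rejection sampling against the maximal preimage size (\cref{lem:nearuniformityintro,lem:samplingintro,lem:sizeestimationintro}), assembly of the $t$ decoded words into $k$ i.i.d.\ uniform column indices, the birthday event, the word-shift $\rho_j^{-1}$ trick for general $\cD$, masking by a single fresh $z$, a Hoeffding bound on the number of surviving samples, and the task-by-task bookkeeping; for decision and search the argument is complete (modulo the slip that what you need is that $G$ has rank $L$, i.e.\ its \emph{row} span is all of $(\zqz)^L$ --- its column span is a subspace of $(\zqz)^n$).

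The one step that does not go through as written is refutation. \cref{def:refutation} requires that every $\delta$-\textsf{APPROXIMATELY SATISFIABLE} input be declared \textsf{APPROXIMATELY SATISFIABLE} with probability $1$, and such inputs are worst case, not uniform. You establish the per-row survival probability $1-2/L-k^2/n$ only for uniform $b$ (uniformity is what gives you i.i.d.\ columns and hence the birthday bound), and you then ``condition on $M\ge m_1$''; but on an adversarial satisfiable input the event $M<m_1$ has positive probability, so conditioning on it silently breaks the probability-$1$ clause. The paper handles this in two ways that your plan omits: (i) the decoding failure bound is proved for \emph{every} fixed $b$ --- the rejection part because \cref{lem:nearuniformityintro} is two-sided, and the birthday part via \cref{cor:marginaldistributionofwalk}, which shows that a uniform element of $\Preimage_V(b_j)$ has pairwise-uniform coordinates for arbitrary $b_j$; and (ii) the reduction defaults to outputting \textsf{APPROXIMATELY SATISFIABLE} whenever fewer than $m_1$ samples are successfully decoded, which preserves the probability-$1$ guarantee while costing only the $\exp(-2mL^{-2})$ term on the random side, where $b$ \emph{is} uniform and your analysis applies. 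With these two additions (either one of which suffices for the satisfiable side, and the paper proves both) your refutation case matches \cref{cor:refutationreduction}; the weight-preservation inequality you invoke is otherwise exactly the paper's.
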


We note that our reduction is polynomial-time given that we can sample from $\mathcal{D}$ in polynomial time. For example, if $\mathcal{D}$ is just deterministically $1^k,$ we have $\Tsample = O(1)$ deterministically.

% \begin{theorem}
% \torestate{

% Let $L,q,n,k,m, n, r,h,t \in \mathbb{N}$ satisfy that $q$ is a prime, $n = (r(q-1))^t, L = ht, r^k\geq 256k^{2k}q^{2h}$, and $k \leq \sqrt{n/2}$.
% %Suppose that $q,n,k,m\ge n$ and the auxiliary integer parameters $r,h$ and $t \coloneqq \frac{\log n}{\log r + \log (q-1)}$ satisfy that $r^k \ge 256k^{2k}q^{2h}$ and $k \leq \sqrt{n/2}$ Let $L = ht = h\frac{\log n}{\log r + \log (q-1)}.$
% Let $\cD$ be a distribution on $(\zqz\setminus \set{0})^k$ such that there exists an algorithm sampling from $\cD$ in time $\poly(n, m, (rq)^k).$
% Then, there exists an algorithm running in time $\poly(n, m, (rq)^k)$ with the following guarantee. On input $m$ independent samples from 
% $$
% \NLE(\mathsf{modulus} = q, \mathsf{samples}\sim \unif((\zqz)^{L}), \mathsf{secret}\sim \unif((\zqz)^L),\mathsf{error}\sim\mathcal{E})\,,
% $$
% with probability $1-2^{-n}$ it produces $m_1$ independent samples from 
% $$
% \NLE(\mathsf{modulus} = q, \mathsf{samples}\sim \randomsupport(n,k,\mathcal{D}), \mathsf{secret}\sim \unif((\zqz)^n),\mathsf{error}\sim\mathcal{E})\,,
% $$
% where $m_1\ge\frac{m}{4\times 3^t}$ with probability at least $1 - \exp(-\tfrac 1 8 m 9^{-t})$.
% }
% \end{theorem}

We also discuss the probability of success. We always take $ L = n^{o(1)}$ and $m = \Omega(n),$ sometimes as large as $m = n^{o(k/\poly(\log k, \log \log n))}.$
Thus, $\exp(-2mL^{-2})$ is only a small inverse-exponential additive factor. The number of samples is preserved up to $1 - o(1)$ multiplicative factor. 

The key primitive in this reduction is decoding individual equation from the standard (dense) setting to the sparse setting. 
\begin{theorem}
    \torestate{
    \label{thm:decoding_matrix_and_alg}
    Suppose that 
    \cref{parameterdependenciess} holds. Then, there exists an algorithm
    (\cref{alg:sentencedecoding}) that
    performs the following operation in time  $\poly(n,\Tsample)).$
    \begin{enumerate}
        \item Given $b \in (\zqz)^L$, \cref{alg:sentencedecoding} 
        % runs in time $\poly((rq)^k,n)$ and 
        either outputs $\fail$ or outputs a $k$-sparse $a \in (\zqz)^n$ such that $a^\top G = b^\top$. Furthermore, the algorithm outputs $\fail$ with probability at most $2/L + k^2/n.$
        \item If $b \sim \unif((\zqz)^L)$, then, conditioned on not failing, the output $a$ of \cref{alg:sentencedecoding} is distributed according to $\randomsupport(n,k,\cD)$. 
    \end{enumerate}}
\end{theorem}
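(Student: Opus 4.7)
The plan is to realize \cref{alg:sentencedecoding} via the block-wise decoding outlined in \cref{sec:nontechnicaloverview}. On input $b \in (\zqz)^L$, partition $b = (b_1, \ldots, b_t)$ into blocks of length $h$ and, for each $j$, compute $N_j := |\Preimage_V(b_j)|$ using \cref{lem:sizeestimationintro}; setting $\psi := L^{-2}$ and $M := (1+\psi)(hq)^k/q^h$, reject the input with probability $1 - N_j/M$ (this is a valid probability by \cref{lem:nearuniformityintro}). If no block is rejected, draw $\xi_j \sim \unif(\Preimage_V(b_j))$ via \cref{lem:samplingintro} and form the $k$ column indices $\xi(\psi) := (\xi_1(\psi), \ldots, \xi_t(\psi)) \in [hq]^t \cong [n]$; return $\fail$ if these are not pairwise distinct. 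Finally, draw $\rho \sim \cD$, and using the block form $V_{hm+i,:} = m\, e_i^\top$, reindex each $\xi_j(\psi) = h\, m_j(\psi) + i_j(\psi)$ to $\xi'_j(\psi) := h(\rho_\psi^{-1} m_j(\psi) \bmod q) + i_j(\psi)$; output the $k$-sparse $a$ with $a_{\xi'(\psi)} = \rho_\psi$.

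Validity of $a^\top G = b^\top$ is immediate from \eqref{eq:gdefintro}: by construction $\rho_\psi \cdot G_{\xi'(\psi),:} = G_{\xi(\psi),:}$, hence $a^\top G = \sum_\psi G_{\xi(\psi),:} = (\sum_\psi V_{\xi_1(\psi),:}, \ldots, \sum_\psi V_{\xi_t(\psi),:}) = (b_1, \ldots, b_t) = b^\top$, where the last equality uses $\xi_j \in \Preimage_V(b_j)$. For the distributional claim under $b \sim \unif((\zqz)^L)$, a Bayes' rule computation gives, for any $\eta \in [hq]^k$, $\Pr[\xi_j = \eta \text{ and accepted}] = \sum_u q^{-h}\,(N_u/M)\,N_u^{-1}\,\ind[\eta \in \Preimage_V(u)] = 1/(Mq^h)$, which is independent of $\eta$ since each $\eta$ lies in exactly one $\Preimage_V(u)$; therefore $\xi_j$ is uniform on $[hq]^k$ conditional on acceptance. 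Independence across $j$ is inherited from that of the $b_j$, so the $kt$ entries $\xi_j(\psi)$ are i.i.d.\ uniform on $[hq]$, each column $\xi(\psi)$ is uniform on $[n]$, and the columns are mutually independent. Since $\gcd(\rho_\psi, q) = 1$, the map $m \mapsto \rho_\psi^{-1} m \bmod q$ is a bijection on $\{0, \ldots, q-1\}$, so $\xi'(\psi)$ is still uniform on $[n]$; conditioning on pairwise distinctness gives a uniformly random size-$k$ support, while the values $\rho$ are drawn from $\cD$ by construction, matching $\randomsupport(n, k, \cD)$.

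The failure probability splits into rejection and collision contributions. The second bullet of \cref{parameterdependenciess} gives $k \ge 4h(\log h + \log q + \log(1/\psi))$ at $\psi = L^{-2} = (ht)^{-2}$, so \cref{lem:nearuniformityintro} places every $N_j$ in $[(1-\psi)(hq)^k/q^h,\, M]$ and thus bounds each per-block rejection probability by $2\psi/(1+\psi) \le 2\psi$; a union bound over the $t$ blocks gives total rejection probability at most $2t\psi = 2/(h^2 t) \le 2/L$ for every $b$. The preceding paragraph shows that under uniform $b$ the $\xi(\psi)$'s are i.i.d.\ uniform on $[n]$, so the distinctness check fails with probability at most $\binom{k}{2}/n \le k^2/n$ by a union bound over pairs; the two contributions sum to the claimed bound. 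The running time is $t\cdot \poly(k,h,\log q) = \poly(n)$ for the size-estimation and preimage-sampling primitives, $O(\Tsample)$ for the single draw from $\cD$, and $O(kt)$ arithmetic operations for the $\rho$-rewrite, totaling $\poly(n, \Tsample)$. The most delicate design choice is the $\rho$-reindexing: naively multiplying the support values of $a$ by $\rho$ would break $a^\top G = b^\top$, so the algorithm must instead move each support index to the row of $G$ that equals the $\rho_\psi^{-1}$-scaling of the original row; the fact that this is implemented by a simple bijective relabeling on the $m$-component of the $V$-coordinates is exactly what makes both the linear identity and the uniformity of $\xi'(\psi)$ transfer through the same step.
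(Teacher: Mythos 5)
Your construction is the same as the paper's: word decoding by rejection sampling against the bound of \cref{lem:nearuniformityintro}, uniform preimage sampling per block, assembling the $k$ column indices, and the $\rho^{-1}$-rescaling word shift for general $\cD$; your Bayes computation and the bijectivity argument for $T_\rho$ match \cref{lemma:worddecodingguarantee,lem:word_shift}.

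There is, however, one genuine gap. Part 1 of the theorem asserts the $2/L + k^2/n$ failure bound for \emph{every} $b \in (\zqz)^L$ (this worst-case guarantee is what the refutation reduction needs, since approximately satisfiable instances are adversarial). Your rejection bound $2t\psi \le 2/L$ is indeed worst-case, but your collision bound is derived only ``under uniform $b$'': you use that the $\xi_j(\psi)$ are i.i.d.\ uniform, which you only know after averaging over $b_j \sim \unif((\zqz)^h)$. For a fixed, adversarial $b_j$, a uniform element of $\Preimage_V(b_j)$ is \emph{not} uniform on $[hq]^k$, so you still need an argument that two columns $\xi(\psi_1),\xi(\psi_2)$ collide with probability at most $1/n$. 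The paper supplies exactly this via \cref{cor:marginaldistributionofwalk}: for every $u$, any $\le k-h-2$ coordinates of a uniform sample from $\Preimage_V(u)$ are exactly uniform on $[hq]^{|K|}$ (a structural consequence of \cref{alg:samplingfrompreimage}, whose first $k-h-2$ draws are i.i.d.\ uniform), which gives pairwise uniformity and independence of the columns for arbitrary $b$ and hence the $k^2/n$ bound worst-case. Without this (or an equivalent marginal statement), your proof only establishes the failure bound on average over $b$, which suffices for decision and search but not for the statement as written. A smaller shared subtlety: like the paper's formal modification, you test distinctness of $\xi$ but output the support $\xi' = T_\rho[\xi]$; since the per-column bijections depend on $\rho_\psi$, distinctness of $\xi$ does not force distinctness of $\xi'$, and the clean fix (consistent with the paper's overview) is to perform the distinctness check on $\xi'$ itself, which preserves all the bounds because $\xi'$ has the same (pairwise) uniformity as $\xi$.
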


\subsection{Implied Hardness}  We summarize with tables the implied hardness of sparse \lwespace and sparse \lpnspace based on multiple standard assumptions of the hardness of standard variants.
The full statements and proofs (based on concrete choices of the relevant parameters) are in~\cref{cor:testingreduction,cor:searchreduction,cor:refutationreduction}.

\begin{table}[h]
\centering
\begin{tabular}{ |M{5cm}|M{5cm}|M{5cm}|  }
% \label{tab:hardness_results}
 \hline
 \multicolumn{3}{|c|}{Hardness of Sparse \lpnspace for Decision and Search} \\
 \hline
 \hline
 Assumed time complexity of standard \lpnspace in dimension $L$ with error distribution $\Bern(\gamma).$ & Implied time complexity of $k$-sparse \lpnspace in dimension $n$ with error distribution $\Bern(\gamma).$&
Restriction on sparsity\\
 \hline
 \hline

   \cref{cor:nearexplpn}: & & \\

 $\exp\paren{\Omega(L/\log L)}$ &
 $\exp\paren{\Omega\big(\frac{k\log n}{\log k(\log k + \log \log n)^2}\big)}$ &
 $k = \omega((\log\log n)^2 \log\log\log n)$, $k = o(\sqrt{n})$
 \\
 
 \hline

 \cref{cor:subexphardness}: & & \\
 $\exp\paren{\Omega(L^{1-\alpha})},\alpha\in(0,1)$&
 $\exp\paren{\Omega\big(\big(\frac{k(\log n)}{\log k(\log k + \log \log n)}\big)^{1-\alpha}\big)}$&
 $k = \omega( (\log n)^{\alpha/(1-\alpha)}\log\log n),$ $ k = o(\sqrt{n})$\\
 \hline

 \cref{cor:quasihardness}: & & \\
 $\exp\paren{\Omega((\log L)^{1+c})},c>0$&
 $\exp\paren{\Omega\big((\log k)^{1+c}\big)}$&
 $k = \exp(\omega((\log n)^{1/(1+c)})),$ $ k = o(\sqrt{n})$\\
 \hline
\end{tabular}
\end{table}
The results for \lwespace are similar looking.

\begin{table}[H]
\centering
\begin{tabular}{ |M{5cm}|M{5cm}|M{5cm}|  }
 \hline
 \multicolumn{3}{|c|}{Hardness of Sparse \lwespace} \\
 \hline
 \hline
 Assumed time complexity of standard \lwespace in dimension $L$ with modulus $q = L^\kappa$ (and error distribution $\Derr$ in case of decision and search)& Implied time complexity of $k$-sparse \lwespace in dimension $n$ with modulus $q,$ distribution on the support $\mathcal{D}$ (and error distribution $\Derr$ in case of decision and search)&
 Restriction on sparsity\\
 \hline
 \hline
 \cref{cor:explwe}: & & \\
 $\exp\paren{\Omega(L)}$&
 $\exp\paren{\Omega\big(\frac{k\log n}{\log k(\log k + \log \log n)}\big)}$&
 $k = \omega((\log\log n)(\log\log\log n)),$
 $ 
 k = o(\sqrt{n})
 $\\
 \hline
 \cref{cor:subexphardness}: & & \\
 $\exp\paren{\Omega(L^{1-\alpha})},\alpha\in(0,1)$&
 $\exp\paren{\Omega\big(\big(\frac{k(\log n)}{\log k(\log k + \log \log n)}\big)^{1-\alpha}\big)}$&
 $k = \omega( (\log n)^{\alpha/(1-\alpha)}\log\log n),$ $ k = o(\sqrt{n})$\\
 \hline

 \cref{cor:quasihardness}: & & \\
 $\exp\paren{\Omega((\log L)^{1+c})},c>0$&
 $\exp\paren{\Omega\big((\log k)^{1+c}\big)}$&
 $k = \exp(\omega((\log n)^{1/(1+c)})),$ $ k = o(\sqrt{n})$\\
 \hline
\end{tabular}
\end{table}

\subsection{Tightness of Results}
\label{sec:tightness}
Of course, one may wonder whether our hardness results for sparse \lpnspace and sparse \lwespace are tight even under the most aggressive (near)-exponential assumptions, in which case we get a $n^{\Omega(\tfrac{k}{\poly(\log k, \log \log n)})}$ time complexity. 
To the best of our knowledge, even for the decision problem (which is the easiest among the three) the best known algorithm requires $O(\sqrt{q} \cdot (qn)^{k/2})$ time.
%$$\omega(n^{k/2 + \varepsilon})$ samples where $\varepsilon > 0$ is an arbitrarily small absolute constant (for small values of $q$).

\paragraph{Folklore Algorithm for the Decision Problem.}

Observe that there are $\binom{n}{k}(q-1)^{k}$ distinct $k$-sparse vectors over $(\zqz)^n$.
Recall that under both null and planted, $a_i\sim \randomsupport(n,k,\mathcal{D})$.
Split them into $O(\sqrt{q})$ groups of size $O((qn)^{k/2})$ each. 
Then, with large constant probability, in at least a (say) 1/10 fraction of the groups, one can find two indices $i_t\neq j_t$ such that $a_{i_t} =  a_{j_t}$.
Without loss of generality, assume this happens for groups $1, \ldots, s'$ for some ($s' = O(\sqrt{q})$).
Next, in each of these groups, compute $\psi_t\coloneqq y_{i_t} - y_{j_t}$.
Note that under null, we have that $\psi_t \iidsim \unif(\zqz)$, while under planted $\mathcal{E}^{*}\coloneqq \mathcal{L}(e_1 - e_{2})$ where $e_1, e_2\iidsim\mathcal{E}.$\footnote{For a random variable $X$, we denote by $\mathcal{L}(X)$ the distribution of $X$.}
Under the mild condition that $\mathcal{E}^{*}$ is at least constantly far in total variation to $\unif(\zqz)$, we can now successfully test between the two cases with constant success probability since they are supported on $q$ elements and we have $t = O(\sqrt{q})$ samples (see, e.g.,~\cite{canonne2020survey}).
We can amplify the success probability to $1-\eta$ by standard amplification, if we increase the number of equations to $O(\sqrt{q} (qn)^{k/2} \cdot \log (1/\eta))$ (and runs in the same time).

\paragraph{Limitation of Sample-to-Sample Decoding.} Of course, one should observe that we get lower bounds of the form $n^{\frac{k}{\poly(\log k, \log \log n)}}$ instead of 
$n^k.$ While this gap does not seem significant, one may still hope to close it. It turns out that at least in the case of \lpn, the gap is unavoidable using the gadget matrix approach of decoding each equation individually.
Indeed, suppose that we start from $L$-dimensional standard \lpnspace and decode each $b \in \{0,1\}^L$ individually (for simplicity, say with probability 1) into some a $k$-sparse $a\in \{0,1\}^n$ such that $a^\top G = b^\top$ for some gadget matrix $G.$ Since $b$ can be recovered from $a,$ we know that the entropy of $a$ is at least the entropy of $b.$ That is, $\log\binom{n}{k}\ge L,$ hence  
$k\log n \ge L.$ However, the hardness of the starting $L$-dimensional problem is at most $\exp(L/\log L)$ due to \cite{BKW}. Thus, the hardness of the final problem is at most $\exp(L/\log L)\le \exp(k\log n /\log (k\log n)) = n^{\frac{k}{\log k + \log \log n}}.$ In particular, this also means that one cannot hope to get hardness using this approach when $k = O(\log \log n).$ Thus, our reduction is close to optimal within the class of reductions using gadget matrices and decoding equations individually.

%Whenever $q = n^{o(1)},$ this is on the order of $n^{k(1+o(1))}.$ 
%Suppose that one has access to $n^{k/2 + \varepsilon}$ samples $(a_i,y_i)$ for any constant $\varepsilon>0$. 
%Recall that $a_i\sim \randomsupport(n,k,\mathcal{D}), y_i\sim \unif(\zqz)$ under the null distribution and 
%$y_i = \langle a_i,s\rangle + e_i$ under the planted distribution. Split them into $n^{\varepsilon/2}$ groups indexed by $[n^{\varepsilon/2}]$ of size $n^{k/2 + \varepsilon}$ each. 
%With high probability, in each group $t\in [n^{\varepsilon/2}]$ one can find two indices $i_t\neq j_t$ such that $a_{i_t} =  a_{j_t}.$ Then, under the null distribution, $\psi_t\coloneqq y_{i_t} - y_{j_t}\sim\unif(\zqz)$ while  under the planted distribution, 
%$\psi_t\coloneqq y_{i_t} - y_{j_t} = \langle a_{i_t}, s\rangle + e_{i_t} -
%\langle a_{j_t}, s\rangle - e_{j_t}
%= e_{i_t} - e_{j_t}.$ Thus, under the null distribution, $\psi_1, \psi_2, \ldots, \psi_{n^\varepsilon}$ are iid samples from $\zqz$ while under the planted, they are from the distribution $\mathcal{E}^{*}\coloneqq \mathcal{L}(e_1 - e_{2})$ where $e_1, e_2\iidsim\mathcal{E}.$\footnote{For a random variable $X$, we denote by $\mathcal{L}(X)$ the distribution of $X$.} Since $q = n^{o(1)},$ one can test between those two distributions with $n^\varepsilon$ samples.

\paragraph{Low-Degree Polynomial Lower Bounds.}

We give evidence for the tightness of the $n^{k/2}$ behaviour in the low-degree polynomial testing framework. The proof is standard for the low-degree polynomial method, so we defer it to \cref{appendix:proofofldp}. It recovers the refutation lower-bound \cite{schoenebeck2008linear,kothari17soslbanycsp} when $k = \log(n)^{O(1)}.$
We remark that this result suggests hardness even when the error distribution is constantly 0.
This is not a mistake, as the low-degree framework is known not to capture algorithms for the noiseless case, such as Gaussian elimination or the LLL algorithm~\cite{zadik22latticebasedsurpassSoS,diakonikolas2022non}.

\begin{proposition}
\torestate{
\label{thm:genericLDP}
The detection problem with any error distribution $\mathcal{E}$ and distribution on the support $\mathcal{D}$
for 
    $
\lwespace(\textsf{modulus} = q, \mathsf{samples} \sim \randomsupport(n,k,\mathcal{D}), \mathsf{secret} \sim \unif((\zqz)^n), \mathsf{error} \sim \mathcal{E})
$
is hard for degree-$D$ polynomial tests when $m = o\big(\frac{n}{k} \times (ne^3/kD)^{k/2 -1}\big).$
}
\end{proposition}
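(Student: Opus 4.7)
The plan is to upper bound the squared $L^2$ norm of the degree-$\le D$ projection of the likelihood ratio between planted and null, $\|L^{\le D}\|^2$, and show it is $1 + o(1)$ in the claimed regime; by the standard low-degree framework~\cite{hopkinsThesis,hopkins2017efficient}, this rules out any degree-$D$ polynomial test distinguishing $H_0$ from $H_1$.

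First I would write out the likelihood ratio in the additive Fourier basis. Let $\chi_\alpha(x) = e^{2\pi i \alpha x/q}$ denote the characters on $\zqz$. Under $H_0$, $y_i$ is uniform on $\zqz$ independent of $a_i$, so the per-sample likelihood ratio equals $L_i(s) = q\cdot \Pr_{\mathcal{E}}[e = y_i - \langle a_i, s\rangle]$. Fourier-expanding and averaging over the uniform secret yields
\[
L \;=\; \E_s\prod_{i=1}^{m}L_i(s) \;=\; \sum_{\boldsymbol{\alpha}\in(\zqz)^m} \Paren{\prod_i \hat{\mathcal{E}}(\alpha_i)\, \chi_{\alpha_i}(y_i)}\cdot\mathds{1}\Brac{\textstyle\sum_{i=1}^{m}\alpha_i a_i \equiv 0 \bmod q},
\]
with $|\hat{\mathcal{E}}(\alpha)|\le 1$ uniformly. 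Next I would project onto degree-$\le D$ polynomials in $\{(a_i,y_i)\}_{i=1}^m$ and compute the squared norm. Using that the $y$-characters are orthonormal under $H_0$, cross-terms with mismatched $\alpha_i$'s cancel, reducing $\|L^{\le D}\|^2 - 1$ to a sum over subsets $J \subseteq [m]$ of ``active'' indices (those with $\alpha_i\ne 0$) of quantities $\E_{\{a_i\}}\bigl[\mathds{1}[\sum_{i\in J}\alpha_i a_i = 0]\bigr]$, restricted by the degree budget. The crucial observation is that this indicator is only non-zero when every coordinate in $\bigcup_{i\in J}\support(a_i)$ appears in the supports of at least two samples from $J$: otherwise one would need $\alpha_i\cdot v_{i,j}\equiv 0 \bmod q$ for a single isolated coordinate, which is impossible since $\alpha_i, v_{i,j}\in\zqmult$.

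The main (and delicate) step is the combinatorial matching bound. For $|J| = d$, the probability over the uniformly random $k$-subsets that each coordinate is covered at least twice reduces, via a standard configuration/matching argument, to at most $(ekd/n)^{kd/2}$ in the dominant perfect-pairing case, with higher-multiplicity patterns being strictly smaller. After summing over $\binom{m}{d}\le (em/d)^d$ choices of $J$, $q^{O(d)}$ choices of the Fourier modes $\alpha_i$ (bounded using $|\hat{\mathcal{E}}|\le 1$), and the $\cD$-probabilities on the values (crudely bounded by $1$), I would get
\[
\|L^{\le D}\|^2 - 1 \;\le\; \sum_{d=2}^{D}\Paren{\frac{em}{d}}^d \Paren{\frac{ekd}{n}}^{kd/2}q^{O(d)} \;\le\;\sum_{d\ge 2}\Paren{C\cdot\frac{mk}{n}\Paren{\frac{ekD}{n}}^{k/2-1}}^{d},
\]
which under the hypothesis $m = o\bigl(\tfrac{n}{k}(ne^3/kD)^{k/2-1}\bigr)$ is $o(1)$ as a convergent geometric series.

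The main obstacle will be controlling configurations where some coordinate appears in $\ge 3$ supports (not just pairs); these are typically negligible compared to the perfect-matching contribution but require a careful case analysis. I anticipate adapting the standard arguments from random $k$-XOR refutation lower bounds, e.g.~\cite{kothari17soslbanycsp,schoenebeck2008linear}, where the analogous analysis is carried out over $\mathbb{F}_2$. Accommodating an \emph{arbitrary} distribution $\cD$ on $\zqmult^k$ rather than uniform is what rules out a sharper value-counting step, but replacing the uniform bound by the crude $\Pr_\cD[\cdot]\le 1$ together with $|\hat{\mathcal{E}}|\le 1$ leaves the final scaling unchanged; this flexibility is precisely what lets the proposition apply for any error distribution $\mathcal{E}$ (even the noiseless $\mathcal{E} = \delta_0$) and any support-value distribution $\cD$.
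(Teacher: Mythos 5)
Your route is the unconditional low-degree likelihood-ratio calculation, and it has a genuine gap exactly where the factor ``$q^{O(d)}$ choices of the Fourier modes'' appears and then silently vanishes. After the orthogonality step, the degree-$d$ contribution to $\|L^{\le D}\|^2-1$ is $\sum_{|J|=d}\sum_{\alpha_J}\prod_{i\in J}|\hat{\mathcal{E}}(\alpha_i)|^2\,\Pr_A\big[\sum_{i\in J}\alpha_i a_i=0\big]$, and the bound $|\hat{\mathcal{E}}(\alpha)|\le 1$ does not cancel the sum over the $(q-1)^d$ nonzero mode choices. The proposition is claimed for \emph{any} error distribution, including the noiseless one $\mathcal{E}=\delta_0$, where $\hat{\mathcal{E}}(\alpha)=1$ for all $\alpha$, so $\sum_{\alpha\neq 0}|\hat{\mathcal{E}}(\alpha)|^2=q-1$ and the geometric ratio in your final display should carry an extra factor of order $q$. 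Concretely, take $\mathcal{D}=1^k$, $\mathcal{E}=\delta_0$, and $q$ very large: already the $d=2$ term is of order $m^2(q-1)/\binom{n}{k}$, which is $\omega(1)$ at the claimed sample size once $q\gg \binom{n}{k}/m^2$. So the unconditional second moment genuinely blows up in regimes covered by the proposition (this is the classic ``rare but catastrophic event'' phenomenon: a low-probability linear dependency among $\le D$ coefficient vectors confers an advantage of order $q$), and your argument as written cannot establish the statement in its stated generality over $q$ and $\mathcal{E}$; the last displayed inequality does not follow from the one before it.

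The fix is essentially the paper's argument, and you already have its combinatorial core: your ``every coordinate covered at least twice'' counting is the same estimate the paper uses to show that, when $m = o\big(\frac{n^{k/2}}{k^{k/2}D^{k/2-1}e^{3k}}\big)$, with high probability \emph{every} $D$ of the vectors $a_1,\dots,a_m$ are linearly independent over $\zqz$ (the unit values on the supports are what make the isolated-coordinate argument work, as you note). Instead of feeding this into an unconditional norm bound, condition on that high-probability event, as allowed by the paper's definition of low-degree hardness: conditioned on $A$ with no dependent $D$-subset, every nontrivial character $\exp(2\pi i\langle y,v\rangle/q)$ with $|\support(v)|\le D$ has conditional mean exactly $0$ under both hypotheses (under $\mathsf{H}_1$ because $\sum_{i\in S}v_ia_i\neq 0$ and the secret is uniform), so all coordinate-degree-$D$ means match exactly, with no dependence on $q$, $\mathcal{D}$, or $\mathcal{E}$. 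With that restructuring your covering estimate does the real work; without it, the proof does not go through.
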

Interpreting degree-$O(k\log n)$ polynomial tests as a proxy for algorithms running in time $n^{O(k)},$ we get $n^{O(k)}$ time hardness with $(n/k^2\polylog(n))^{k/2}$ samples. 
% While the $n^{O(k)}$ behaviour seems nearly tight, one may notice that even in the case of \lpnspace we get a multiplicative factor two loss in the hidden constant in $O(\cdot).$ We ask:
% \begin{problem} Is there a reduction which shows that $k$-sparse LPN in dimension $n$ with access to $2^{\frac{c_2}{(1 + o(1))}\times\frac{k\log n}{\log k + \log \log n}}$ samples is hard in time $2^{\frac{c_1}{(1 + o(1))}\times\frac{k\log n}{\log k + \log \log n}}$ based on the assumption that 
% standard LPN in dimension $L$ is hard with access to $2^{c_2L/\log L}$ samples in time $2^{c_1L/\log L}.$
% \end{problem}
One must note that this result suggests that when the number of samples is significantly lower than $n^{k/2},$ say $m = \tilde{O}(n)^{(k/2-1)(1-\delta) + 1},$ the distinguishing problem is hard even for degree $D = \tilde{O}(n^{\delta})$ polynomials. We pose the following question as an open problem:
\begin{problem} Find a reduction from standard \lpnspace to $k$-sparse \lpnspace suggesting 
$\exp(\tilde{O}(n^{\delta}))$ hardness with access to only 
$m=\tilde{O}(n)^{(k/2-1)(1-\delta) + 1}$ samples.
\end{problem}
In light of the recent work \cite{chen2024algorithmssparselpnlspn}, such a reduction should be noise-aware. In the low noise-regime of $\Bern(\gamma)$ where $\gamma\le n^{-\frac{1+\delta}{2}}$ and $m \gg n^{\frac{1+\delta}{2} + k\frac{1-\delta}{2}},$ the work \cite{chen2024algorithmssparselpnlspn} exhibits a polynomial-time algorithm.

Recalling our discussion of the $n^{\frac{k}{\poly(\log k, \log \log n)}}$ versus 
$n^k$ gap, we again note that a reduction based on separately decoding individual equations cannot give the desired result.

\subsection{Application to Noisy Tensor Completion}
\label{sec:resultsontensor}

Our lower bound for tensor completion will follow from the following theorem.
It says that we can map samples from the decisional $k$-sparse \lwespace problem to samples from the following tensor distinguishing problem:
Samples from the planted distributions will be transformed into i.i.d.\ sampled entries of a noisy low-rank tensor, and samples from the planted distributions will be transformed into i.i.d.\ sampled entries of a tensor with independent entries with mean $o(1)$ and constant variance.
Note that a completion algorithm as outlined in~\cref{sec:tensorcompletion} can be used to solve this distinguishing problem, by splitting the set of input samples into two, running the algorithm on the first half, and computing the empirical error on the second half.

\begin{theorem}
\torestate{
\label{thm:tensorcompletion}
There exist an algorithm running in time $O(m\log q)$ which on input
$$
\NLE(\samples = m,\modulus = q,\Dcoeff =\randomsupport(n,k,1^k),\Derr = \mathcal{E})
$$
produces $m$ uniformly and independently sampled entries from an order-$k$ rank-$2^{k-1}$ real noisy tensor $T$ with noise $\Delta.$ Furthermore, $T,\Delta$ with high probability satisfy the following assumptions.
\begin{enumerate}
    \item \emph{Incoherence:} The noiseless tensor is $O(1)$-incoherent.
    \item \emph{Noise:} The noise $\Delta$ satisfies 
    $\|\Delta\|_1/\|T\|_1 = O(\alpha),$ where 
    $\alpha\coloneqq \expect_{e\sim \mathcal{E}}[\min(|e|, |q-e|)/q].$
\end{enumerate}
Further, the same algorithm maps $m$ i.i.d.\ samples of the form $(b_i,y_i) \sim \randomsupport(n,k,1^k) \times \unif(\zqz)$ to $m$ uniformly and independently sampled entries from a tensor in which each entry is independent, has mean $o(1)$, and constant variance (assuming that $q = \omega(1)$).
}
\end{theorem}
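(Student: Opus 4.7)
The plan is to interpret each LWE label through the map $f(x) \coloneqq \cos(2\pi x/q)$ and exhibit the tensor $T^0_{j_1,\ldots,j_k} \coloneqq \cos(2\pi \sum_\ell s_{j_\ell}/q)$ as a real tensor of rank at most $2^{k-1}$. Define $u_j \coloneqq \sqrt{2}\cos(2\pi s_j/q)$ and $v_j \coloneqq \sqrt{2}\sin(2\pi s_j/q)$ so that $u_j + i v_j = \sqrt{2}\exp(2\pi i s_j/q)$ and $\|u\|_2, \|v\|_2 = (1+o(1))\sqrt n$ with high probability by Hoeffding. Then $[(u+iv)^{\otimes k}]_{j_1,\ldots,j_k} = 2^{k/2}\exp\bigl(2\pi i \sum_\ell s_{j_\ell}/q\bigr)$, whence $T^0 = 2^{-k/2}\Re\bigl[(u+iv)^{\otimes k}\bigr]$. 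Expanding $(u+iv)^{\otimes k} = \sum_{S \subseteq [k]} i^{|S^c|}\bigotimes_\ell w^{(S)}_\ell$ with $w^{(S)}_\ell \in \{u, v\}$ and keeping only subsets $S$ with $|S^c|$ even yields a decomposition into $2^{k-1}$ rank-one terms with coefficients $\sigma = \pm 2^{-k/2}$ and components of norm $\sqrt n$, matching \cref{eq:defnoisytensor}.

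The reduction, on input $(a_i, y_i)$ with support $\{j_1,\ldots,j_k\}$, samples a uniformly random permutation $\pi$ of $[k]$ and outputs the index-value pair $\bigl((j_{\pi(1)},\ldots,j_{\pi(k)}),\, \cos(2\pi y_i/q)\bigr)$. Across samples the outputs are i.i.d., and the index distribution is uniform over $k$-tuples with distinct coordinates, which is $O(k^2/n) = o(1)$-close to $\unif([n]^k)$ when $k = o(\sqrt n)$. One extends $T$ to all of $[n]^k$ by setting $T_{\vec j} = \cos(2\pi(\sum_\ell s_{j_\ell} + e_{\vec j})/q)$ for independent $e_{\vec j} \sim \mathcal E$ (using the observed error on observed indices, which are distinct w.h.p.\ by a birthday argument for $m = o(n^{k/2})$). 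Using $|\cos(x) - \cos(x+\theta)| \leq |\theta|$ gives $|\Delta_{\vec j}| \leq 2\pi \min(|e_{\vec j}|, q - |e_{\vec j}|)/q$, and hence $\E\|\Delta\|_1 \leq 2\pi \alpha n^k$. For the signal, whenever $j_1,\ldots,j_k$ are distinct, $\sum_\ell s_{j_\ell} \bmod q$ is uniform on $\zqz$, so $\E|T^0_{\vec j}| = \tfrac{1}{q}\sum_r |\cos(2\pi r/q)| = \tfrac{2}{\pi} - o(1)$; applying McDiarmid's inequality to $\|T^0\|_1$ (a function of $s_1,\ldots,s_n$ with bounded differences $O(kn^{k-1})$) shows $\|T^0\|_1 = \Theta(n^k)$ w.h.p., yielding $\|\Delta\|_1/\|T\|_1 = O(\alpha)$ as required.

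For incoherence, $\|u\|_\infty \leq \sqrt 2$ deterministically and $\|u\|_2 = (1+o(1))\sqrt n$ w.h.p.\ by Hoeffding (similarly for $v$), so $\|u\|_\infty/\|u\|_2 = O(1/\sqrt n)$, establishing $O(1)$-incoherence of every rank-one component. For the null case, if $y_i \sim \unif(\zqz)$ independently of the support, then $\cos(2\pi y_i/q)$ has mean $\tfrac{1}{q}\sum_{r=0}^{q-1}\cos(2\pi r/q) = 0$ exactly and variance $\tfrac{1}{q}\sum_r \cos^2(2\pi r/q) \geq \tfrac{1}{2}$ for $q \geq 2$ (and equal to $\tfrac 12$ for $q > 2$); since the values are independent of the indices, the produced entries form an independent tensor with mean $0 = o(1)$ and constant variance whenever $q = \omega(1)$, matching the claim.

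The main conceptual step is spotting the complex-exponential factorization that exposes the rank-$2^{k-1}$ structure hidden inside the additive-modular dependence of $y_i$ on the secret; once this is in place, the rest of the argument reduces to standard concentration. The most delicate technical point is the concentration of $\|T^0\|_1$, since entries sharing coordinates are correlated; the bounded-differences inequality handles this cleanly provided $k = o(\sqrt{n/\log n})$, which is subsumed by $k = o(\sqrt n)$. A minor secondary issue is matching the index distribution of sparse \lwe{} (uniform $k$-subsets) with tensor completion's uniform sampling from $[n]^k$; random permutation of the support together with the $O(k^2/n)$ total-variation slack between distinct-tuple sampling and uniform $[n]^k$-sampling resolves this within the stated approximation guarantees.
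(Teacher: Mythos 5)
Your proposal is correct and follows essentially the same route as the paper: embed the secret via $z_t=\exp(2\pi i s_t/q)$, map each label through $\cos(2\pi y/q)$, take the real part of the rank-one complex tensor $z^{\otimes k}$ to obtain the rank-$2^{k-1}$ real decomposition, bound the noise entrywise by $O(\min(|e|,q-|e|)/q)$, and verify incoherence and the null-case mean/variance exactly as the paper does. Your additional care with the index distribution (randomly ordering the support and the $O(k^2/n)$ total-variation slack to $\unif([n]^k)$) and the McDiarmid bound for $\|T^0\|_1$ merely make explicit steps the paper dispatches as "straightforward concentration."
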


We summarize the implied hardness based on standard \lwespace by composing \cref{thm:tensorcompletion} with our reductions to sparse \lwe. 
In the case of tensors over $\mathbb{C}$, we can reduce the rank above to 1, see \cref{thm:tensorcompletioncomplex}.

\begin{table}[H]
\centering
\begin{tabular}{ |M{5cm}|M{5cm}|M{5cm}|  }
 \hline
 \multicolumn{3}{|c|}{Hardness of Noisy Tensor Completion based on Standard \lwespace} \\
 \hline
 \hline
 Assumed time complexity of standard \lwespace in dimension $L$ with modulus $q =L^\kappa = \poly(L)$ and error distribution $D_{\beta q},\beta q>2\sqrt{L}.$& Implied time complexity of  rank-$2^{k-1}$ order-$k$ noisy tensor completion with $O(1)$ incoherent entries&
 Restriction on the order of the tensor\\
 \hline
 \hline
  &&\\
 \ref{itm:nearexplwe}: $\exp\paren{\Omega(L)}$ & $\exp\paren{\Omega(\frac{k\log n}{\log k(\log k + \log \log n)})}$, with error $\frac{\|\Delta\|_1}{\|T\|_1} = O\big(\big(\frac{\log k(\log k + \log \log n)}{k\log n}\big)^{\kappa-1/2}\big) $ & $k = \omega((\log\log n)(\log\log\log n)),$
 $ 
 k = o(\sqrt{n})
 $
 \\
 & & \\
 
 \hline
 &&\\
  &
 $\exp\paren{\Omega((\frac{k\log n}{\log k(\log k + \log \log n)})^{1-\alpha})}$, &
 \\
\ref{itm:subexplwe}:
 $\exp\paren{\Omega(L^{1-\alpha})}$& with error $\frac{\|\Delta\|_1}{\|T\|_1} = O\big(\big(\frac{\log k(\log k + \log \log n)}{k\log n}\big)^{\kappa-1/2}\big)$ & $k = \omega( (\log n)^{\alpha/(1-\alpha)}(\log \log n)^2),$ $ k =o(\sqrt{n})$ \\
 && \\
 \hline
  &&\\
 \ref{itm:quasilwe}:
 $\exp\paren{\Omega((\log L)^{1+c})}$&
 $\exp\paren{\Omega((\log k)^{1+c})}$  with error $\frac{\|\Delta\|_1}{\|T\|_1} = O\big(\big(\frac{\log k(\log k + \log \log n)}{k\log n}\big)^{\kappa-1/2}\big).$&
 $k = \exp(\omega( (\log n)^{1/(1+c)}),$ $ k =o(\sqrt{n}).$\\
 && \\
 \hline
\end{tabular}
\end{table}

Perhaps surprisingly, we derive the essentially same time and sample-complexity (albeit with different noise and rank) from standard \lpn.
These follow directly from our reduction from standard \lpnspace to $k$-sparse \lpnspace and the discussion at the end of~\cref{sec:definingtensorcompletion}.

\begin{table}[H]
\centering
\begin{tabular}{ |M{5cm}|M{5cm}|M{5cm}|  }
 \hline
 \multicolumn{3}{|c|}{Hardness of Noisy Tensor Completion based on Standard \lpnspace} \\
 \hline
 \hline
 Assumed time complexity of standard \lpnspace in dimension $L$ with error distribution $\Bern(1/2-\gamma).$& Implied time complexity of  rank-$1$ order-$k$ noisy tensor completion with $O(1)$ incoherent entries.
 &
 Restriction on the order of the tensor\\
 \hline
 \hline
  &&\\
 \ref{itm:nearexplpn}:
 $\exp\paren{\Omega(L/\log L)}$ &
 $\exp\paren{\Omega(\frac{k\log n}{\log k(\log k + \log \log n)^2})}$, with error $\frac{\|\Delta\|_1}{\|T\|_1} = 1 -2\gamma + \tilde{O}(\frac{1}{\sqrt{m}})$&
 $k = \omega((\log\log n)^2\log\log\log n)$
 $ 
 k = o(\sqrt{n})
 $\\
  &&\\
 \hline
  &&\\
 \ref{itm:subexplpn}:
 $\exp\paren{\Omega(L^{1-\alpha})}$&
 $\exp\paren{\Omega((\frac{k\log n}{\log k(\log k + \log \log n)})^{1-\alpha})}$,  with error $\frac{\|\Delta\|_1}{\|T\|_1} = 1 -2\gamma + \tilde{O}(\frac{1}{\sqrt{m}})$&
 $k = \omega( (\log n)^{\alpha/(1-\alpha)}(\log \log n)^2),$ $ k =o(\sqrt{n}).$\\
  &&\\
 \hline
  &&\\
 \ref{itm:quasilpn}:
 $\exp\paren{\Omega((\log L)^{1+c})}$&
 $\exp\paren{\Omega((\log k)^{1+c})}$,  with error $\frac{\|\Delta\|_1}{\|T\|_1} = 1 -2\gamma + \tilde{O}(\frac{1}{\sqrt{m}})$&
 $k = \exp(\omega( (\log n)^{1/(1+c)}),$ $ k =o(\sqrt{n}).$\\
  &&\\
 \hline
\end{tabular}
\end{table}

\subsection{Downstream Applications to Learning Theory}
\label{sec:applications_learning}

% \subsubsection{Hardness of Learning}

% \kirilc{Hmm... Are we sure that this goes here? To me it feels more natural when we define the problems in the intro.}
The decision version of sparse \lpn, or equivalently, distinguishing noisy planted $k$-XOR instances from fully random ones, is a popular assumption to show hardness result for PAC learning and related models.
It is particularly appealing for the following reason:
Since sparse \lpnspace is a parameterized assumption, often the sparsity naturally maps to some other parameter in the learning problem, so you can understand the dependence on this parameter via the dependence of sparse \lpnspace on the sparsity.
Ruling out algorithms for smaller sparsities then naturally leads to hardness results with better dependence on the relevant parameters in the learning problem.

We briefly state the relevant definitions.
In PAC learning, or the related agnostic learning model, we have the following set-up:
Let $\cC$ be a (known) class of Boolean functions over some domain $\cX$.
In PAC learning, there is an unknown function $f^* \in \cC$ and we observe \iid samples $(x_i, f^*(x_i))$, where $x_i \sim \cD_x$ for some arbitrary (unknown) distribution $\cD_x$ over $\cX$.
The task is to output an arbitrary function $\hat{f}$ (not necessarily in $\cC$) such that $\prob_{x \sim \cD_x}(\hat{f}(x) \neq f^*(x)) \leq \varepsilon$ for some target accuracy $\varepsilon$.
The fact that we do not require that $\hat{f} \in \cC$, is referred to as \emph{improper} learning.
A weaker notion of learning, called weak learning, is to output $\hat{f}$ that achieves error better than 1/2 (or random guessing in the case of biased labels).

The agnostic model aims to incorporate noise in the PAC model.
Instead of receiving clean \iid samples $(x_i,f^*(x_i))$, we observe \iid samples $(x_i,y_i) \sim \cD$ drawn from some arbitrary \emph{joint} distribution over $\cX \times \{0,1\}$, and it is not necessarily the case that for all $y_i = f^*(x_i)$ for a function $f^* \in \cC$.
The goal is to output $\hat{f}$ (again, not necessarily in $\cC$) that competes with the best classifier in $\cC$.
Formally, let $\OPT = \inf_{f \in \cC} \prob_{(x,y) \sim \cD} (f(x) \neq y)$.
We ask to output $\hat{f}$ such that $\prob_{(x,y) \sim \cD} (\hat{f}(x) \neq y) \leq \OPT + \varepsilon$.
Sometimes (in both settings) we also consider the case where the marginal distribution of $x$ is known, e.g., standard Gaussian or uniform over the hypercube.

\paragraph{PAC Learning DNFs.}

A DNF formula over $n$ (boolean) variables is a Boolean function that is a disjunction of conjunctions.
We say a DNF formula has $s$ \emph{terms}, if there are at most $s$ disjunctions (there can be arbitrarily many conjunctions in each term).
DNF formulas are one of the simplest yet most fundamental function classes in learning theory.
In particular, learning DNFs reduces to many other natural learning problems such as learning intersections of halfspaces, agnostically learning halfspaces, and learning sparse polynomial threshold functions.
DNFs with only a constant number of terms can be learned in polynomial time \cite{valiant1984theory}.
Without restricting the number of terms, the fastest known algorithm runs in time $n^{O(n^{1/3} \log s)}$~\cite{klivans_dnfs}.

To-date, the strongest known lower bound, ruling out efficient algorithms to learn $\omega(1)$-term DNFs, was shown in~\cite{daniely2021local} under the assumption that there exist pseudo-random generators (short PRGs) with constant locality and arbitrarily large polynomial stretch.
In two very recent works~\cite{bui2024structured,ragavan24io}, it has been observed that in some applications, this assumption can be replaced by the sparse \lpnspace assumption (by introducting the so-called notion of a \emph{structured-seed local PRG}).
The work of \cite{bui2024structured} shows that this is also the case of the application to learning DNFs in~\cite{daniely2021local}.
In particular, letting $k$ be large enough, combining the construction of~\cite{daniely2021local} with their structured-seed local PRG, \cite[Section 6.4]{bui2024structured} show that, if there is no polynomial-time algorithm for $k$-sparse \lpn, there is no polynomial-time algorithm to learn DNFs with $\exp(k^{\omega(1)})$ terms.
Assuming that the assumption holds with $k = \omega(1)$, implies that there is no polynomial time algorithm to learn DNFs with $\omega(1)$ terms.
Using~\cref{cor:nearexplpn}, we can conclude the following, setting $k = \omega((\log \log n)^2 \log \log n)$:
\begin{theorem}
    \label{thm:application_dnfs}
    Let $L \in \mathbb{N}$ and $\alpha \in (0,1/2)$ be an absolute constant. Assuming the decision version of standard (dense) \lpnspace in dimension $L$ with constant noise rate requires time $2^{\Omega(L/\log L)}$, there is no polynomial-time algorithm for learning $\exp((\log \log n)^{\omega(1)})$-term DNFs over $n$ variables.
\end{theorem}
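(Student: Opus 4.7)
The plan is to compose Corollary~\ref{cor:nearexplpn} with the learning-DNFs reduction of~\cite{bui2024structured,daniely2021local}. Under the assumed $2^{\Omega(L/\log L)}$ hardness of dense \lpnspace{} with constant noise rate, Corollary~\ref{cor:nearexplpn} guarantees that $k$-sparse \lpnspace{} in dimension $n$ cannot be solved in time $n^{o(k/(\log k(\log k + \log\log n)^2))}$, provided $k = \omega((\log\log n)^2 \log\log\log n)$ and $k = o(\sqrt{n})$. Crucially, the reduction of \cref{thm:mainreduction} preserves the noise distribution, so constant-noise hardness in the dense setting translates into constant-noise hardness in the sparse setting, which is exactly the regime needed downstream.

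Next, I invoke \cite[Section 6.4]{bui2024structured}, which (building on~\cite{daniely2021local} and their structured-seed local PRG framework) shows that a polynomial-time algorithm for learning DNFs with $\exp(k^{\omega(1)})$ terms over $n$ variables would yield a polynomial-time algorithm for $k$-sparse \lpnspace{} in dimension $n$ with constant noise. Contrapositively, once we rule out polynomial-time algorithms for the latter, we rule out polynomial-time learners for DNFs with $\exp(k^{\omega(1)})$ terms.

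Now I choose the sparsity parameter. Fix any $k = k(n)$ with $k = \omega((\log\log n)^2 \log\log\log n)$ and $k = o(\sqrt{n})$ (e.g.\ $k = (\log\log n)^2(\log\log\log n)^2$). A direct computation shows $\log k = \Theta(\log\log\log n)$ and $\log k + \log\log n = \Theta(\log\log n)$, so $k/(\log k(\log k + \log\log n)^2) = \omega(1)$, and hence $k$-sparse \lpnspace{} in dimension $n$ requires super-polynomial time. At the same time, $k^{\omega(1)} = (\log\log n)^{\omega(1)}$, so the hardness-of-learning statement from~\cite{bui2024structured} yields exactly the conclusion: no polynomial-time algorithm learns DNFs over $n$ variables with $\exp((\log\log n)^{\omega(1)})$ terms.

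The only subtlety, and the closest thing to an obstacle, is verifying that the parameter regimes of the two reductions align: \cite{bui2024structured} requires (i) $k$-sparse \lpnspace{} with \emph{constant} noise rate, (ii) sparsity $k = \omega(1)$, and (iii) enough samples to instantiate their structured-seed local PRG. Item (i) follows because Corollary~\ref{cor:nearexplpn} is noise-preserving; item (ii) is trivially satisfied by our choice of $k$; item (iii) is handled because Corollary~\ref{cor:nearexplpn} gives super-polynomial hardness even when the attacker is allowed polynomially many samples, which is what~\cite{bui2024structured} consumes. No new ideas beyond this bookkeeping are required.
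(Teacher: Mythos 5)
Your proposal is correct and takes essentially the same route as the paper: compose \cref{cor:nearexplpn} (noise-preserving hardness of $k$-sparse \lpn{} from $2^{\Omega(L/\log L)}$-hard dense \lpn) with the reduction of \cite[Section~6.4]{bui2024structured} from learning $\exp(k^{\omega(1)})$-term DNFs to $k$-sparse \lpn, choosing $k$ just above $(\log\log n)^2\log\log\log n$. Your parameter bookkeeping (superpolynomial $n^{\omega(1)}$ hardness and $k^{\omega(1)} = (\log\log n)^{\omega(1)}$) matches the paper's argument.
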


%\stefanc{Comment to us for how the combined reduction of the two papers works: First, they build a PRG with (roughly) quadratic stretch and locality $k$. By iterating this $t$ times, they can create a PRG with stretch $n^{(2^t)}$ and locality $k^t$. In the reduction to DNFs, the stretch corresponds to the number of samples we get and if the locality is $\ell$, there will be $2^\ell$ terms in the DNF. The blow-up in the dimension is at most quadratic, so we can ignore it. Hence, if we assume access to polynomially many samples, say $n^s$ for $s$ an arbitrarily large constant, our DNF has $\exp(k^{\log s})$ terms. Since the algorithm might require $s$ to be any arbitrarily large but fixed integer, it needs to be able to handle DNFs with $\exp(k^{\omega(1)})$ terms. One more sublety is that the predicate to compute that output bit from the $\ell$ input bits in the local PRG needs to be the same for all output bits -- for constant locality this turns out to be w.l.o.g., but since we can only get locality $\omega(\log \log n)$, a priori this would be an issue. In~\cite{bui2024structured} they state that for their local PRG, this property is satisfied by construction. See Remark 6.6 in their paper.}
%\stefanc{I feel there should be a more direct way to do this reduction, but it's maybe not so clear what you would stand to gain (except clarity).}

\paragraph{Agnostically Learning Halfspaces over Arbitrary Distributions.}

Daniely~\cite{daniely2016complexity} shows that under the assumption that $k$-sparse \lpnspace with constant noise rates for $k = \log^s n$ for some large enough absolute constant $s$, takes time at least $n^{\Omega(k)}$, we cannot agnostically learn halfspaces up to any error better than 1/2 in polynomial time (in dimension $N$), even if $\OPT = \exp(-\log^{1-\nu}(N))$ for some arbitrarily small constant $\nu$.
Let $\alpha > 0$ be a small enough constant (depending on $\nu$). Upon careful examination of their proof, one can see that their reduction still goes through if we only assume there is no $\exp(\Omega((k \log n)^{1-\alpha}))$-time algorithm for $k$-sparse \lpnspace when $k = \log^s n$.\footnote{Specifically, set $r = \tfrac{k^{1-\alpha}}{\log^\alpha(n)}$ and $d = \tfrac{r}{\log \log k}$ in section "Connecting the dots" in Appendix~3 of~\cite{daniely2016complexity}.}
Together with~\cref{cor:subexphardness} we obtain the following theorem:\footnote{We do not make the dependence of $\alpha$ on $\nu$ explicit, but offer the following explanation: It stems from the fact that for smaller $\nu$ we need to choose $s$ larger. Since our reduction from dense to sparse \lpnspace restricts how large $k$ can be, we need to choose $\alpha$ small enough, so that we can allow $k = \log^s n$. The exact dependence can in principle be extracted from the proofs.}
\begin{theorem}
    \label{thm:application_daniely_agnostic_LTF}
    Let $\nu > 0$ be an absolute constant.
    Then, there exists $\alpha = \alpha(\nu) < 1$ such that assuming the decision version of standard (dense) \lpnspace with constant noise rate in dimension $L$ takes time at least $2^{\Omega(L^{1-\alpha})}$, there is no polynomial-time algorithm to agnostically learn halfspaces, even if $\OPT = \exp(-\log^{1-\nu}(N))$.
\end{theorem}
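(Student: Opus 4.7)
The plan is to compose Daniely's reduction from $k$-sparse \lpnspace to agnostically learning halfspaces with our \cref{cor:subexphardness} on sub-exponential hardness of sparse \lpn. The delicate part is matching the parameter regimes so that the sub-exponential dense hardness assumption propagates through both reductions.

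First I would revisit the reduction of \cite[Appendix 3]{daniely2016complexity}, which takes an instance of $k$-sparse \lpnspace in dimension $n$ with $k = \log^s n$ and produces an instance of agnostic halfspace learning in dimension $N = \poly(n)$ with $\OPT \le \exp(-\log^{1-\nu} N)$, where $s = s(\nu)$ must be chosen sufficiently large. As stated, Daniely's reduction uses the $n^{\Omega(k)}$ hardness of $k$-sparse \lpn, but the internal reduction is quantitatively more granular. Plugging in the parameters $r = k^{1-\alpha}/\log^{\alpha} n$ and $d = r/\log \log k$ suggested in the theorem statement, a careful audit of his proof shows that a polynomial-time agnostic learner for halfspaces with the stated $\OPT$ guarantee would yield an algorithm for $k$-sparse \lpnspace in dimension $n$ running in time $\exp(O((k \log n)^{1-\alpha}))$. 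Thus it suffices to rule out algorithms of this running time.

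Next I would apply \cref{cor:subexphardness}: under the hypothesis that dense \lpnspace in dimension $L$ requires time $2^{\Omega(L^{1-\alpha})}$, the corollary yields $\exp\bigl(\Omega\bigl((k \log n / (\log k \,(\log k + \log \log n)))^{1-\alpha}\bigr)\bigr)$ hardness for $k$-sparse \lpn, provided $k = \omega((\log n)^{\alpha/(1-\alpha)} \log \log n)$. For $k = \log^s n$ we have $\log k = s \log \log n$, so $\log k\,(\log k + \log \log n) = O((\log \log n)^2)$ and the implied hardness is $\exp\bigl(\Omega\bigl((k \log n)^{1-\alpha} / (\log \log n)^{2(1-\alpha)}\bigr)\bigr)$. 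This matches the $\exp(\Omega((k \log n)^{1-\alpha}))$ lower bound that Daniely's reduction requires up to a lower-order $\polylog \log n$ factor in the exponent, which can be absorbed by slightly shrinking $\alpha$. The admissibility condition $\log^s n = \omega((\log n)^{\alpha/(1-\alpha)} \log \log n)$ holds whenever $s > \alpha/(1-\alpha)$; since $s$ is determined by $\nu$, we pick $\alpha = \alpha(\nu) < 1$ small enough to satisfy this constraint, which accounts for the stated dependence.

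The main obstacle is the parameter audit inside Daniely's reduction: one has to confirm that the choice $r = k^{1-\alpha}/\log^\alpha n$ and $d = r/\log\log k$ simultaneously preserves the $\OPT \le \exp(-\log^{1-\nu} N)$ guarantee and replaces his $n^{\Omega(k)}$ target with the weaker $\exp(\Omega((k \log n)^{1-\alpha}))$ target that our Corollary can supply. This is a bookkeeping exercise within \cite{daniely2016complexity}; once verified, composing the two reductions immediately yields the theorem.
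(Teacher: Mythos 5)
Your proposal follows the paper's own argument: the paper likewise re-parametrizes Daniely's reduction (setting $r = k^{1-\alpha}/\log^{\alpha} n$ and $d = r/\log\log k$ in the ``Connecting the dots'' section of \cite{daniely2016complexity}) so that ruling out $\exp(\Omega((k\log n)^{1-\alpha}))$-time algorithms for $\log^s n$-sparse \lpnspace suffices, and then invokes \cref{cor:subexphardness} with $\alpha$ chosen small enough as a function of $\nu$ (through $s$) to supply that sparse hardness from the dense sub-exponential assumption. Your additional bookkeeping --- absorbing the $(\log k(\log k + \log\log n))^{1-\alpha}$ loss in the exponent by slightly shrinking $\alpha$, and checking the admissibility condition $\log^s n = \omega((\log n)^{\alpha/(1-\alpha)}\log\log n)$ --- is exactly the slack the paper hides in ``let $\alpha$ be a small enough constant,'' so the two arguments coincide (both defer the same parameter audit inside \cite{daniely2016complexity}).
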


\subsection{Downstream Applications to Cryptography}
\label{sec:applications_crypto}

\paragraph{PRGs with Linear Stretch and small locality.} Following the work of \cite{applebaum06prgnc0}, one can create a pseudorandom generator with linear stretch and locality $\max(k,5)$ from the hardness of \klpn{}  in polynomial time. Thus, assuming near-exponential hardness \ref{itm:nearexplpn} of standard \lpn{}, we obtain a PRG with linear stretch and locality $k$ for any $k =\omega((\log \log n)^2\log\log\log n)$, i.e. slightly super constant locality. Under the weaker subexponential hardness \ref{itm:subexplpn}, we obtain a PRG with linear stretch and locality $k$ for any $k =\omega((\log n)^{(1-\alpha)/\alpha}(\log \log n)^2)$.

\paragraph{Multi-party Homomorphic Secret Sharing~\cite{dao23multipartyhomomorphic} and Somewhat Homomorphic Encryption~\cite{cryptoeprint:2024/1760}.} The multiparty homomorphic secret sharing protocol in \cite{dao23multipartyhomomorphic} is based on the hardness of the decision problem for
\klpn{} with $k = \polylog(n)$ against poly-time distinguishers. Our results show that such hardness follows from near-exponential hardness \ref{itm:nearexplpn} or the weaker sub-exponential hardness \ref{itm:subexplpn} of standard \lpn.  The same holds for the recent work of \cite{cryptoeprint:2024/1760} who construct a somewhat homomorphic encryption scheme from sparse LPN and the decisional Diffie-Hellman assumption.

\medskip\noindent
In contrast, some other applications, most notably indistinguishability obfuscation~\cite{ragavan24io} requires the hardness of $k$-sparse LPN for constant $k$. As of now, our reductions do not have any implications to this work.

\section{Full Reduction from Dense to Sparse Equations}
\label{sec:reductiontosparsity}
Here, we make the outline in \cref{sec:nontechnicaloverview} rigorous.
% and proof~\cref{thm:mainreduction} restated below.
% %\stefanc{$\cD$ not quantified and relation between $r,t,q,n$ missing?}\kirilc{They are in the previous section in ``parameters''. Maybe we can just add a reference to there.}
% %\stefanc{There are actually not that many, we could just have it in the theorem. I think $n = (r(q-1))^t, L = ht, r^k \geq \ldots, k \leq \sqrt{n/2}$ are enough, the other ones can be derived from that}
% \restatetheorem{thm:mainreduction}
% %\stefanc{Do we ever use the fact that we can choose different values of $x$? Else, I would just fix $x$ to e.g. 1/2 for clarity Also, I think we can take $C = 1/8$, then I would also fix it to this. I think the less parameters there are the cleaner it is.}
Recall that the key ingredient is decoding each dense noisy linear equation individually into a sparse noisy linear equation. Formally, our procedure achieves the following guarantee.

\restatetheorem{thm:decoding_matrix_and_alg}
% The bulk of this section is devoted to proving~\cref{thm:decoding_matrix_and_alg} below.
% It follows by implementing the word and sentence decoding steps outlined in~\cref{sec:nontechnicaloverview}.
% We summarize this in the following theorem.
% \begin{theorem}
%     \label{thm:decoding_matrix_and_alg}
%     Suppose that \cref{parameterdependenciess} holds.
% There exists a pair of algorithms $\computedecoding$ and  $\dec_V$ with the following guarantee.
%     In time $\poly(n,(rq)^k),$ the algorithm $\computedecoding$ computes two matrices $V \in (\zqz)^{r(q-1) \times h}$ and $G \in (\zqz)^{n \times L}$ with success probability at least $1-2^{-n}$ such that $G$ has rank $L.$ Then, using $V$ and $G,$ the algorithm $\dec_V$
%     %(\cref{alg:sentencedecoding} below) 
%     performs the following operation in time  $\poly(n,(rq)^k).$
%     \begin{enumerate}
%         \item Given $b \in (\zqz)^L$, $\dec_V$ 
%         % runs in time $\poly((rq)^k,n)$ and 
%         either outputs $\fail$ or outputs a $k$-sparse $a \in (\zqz)^n$ such that $a^\top G = b^\top$. Furthermore, the algorithm outputs $\fail$ with probability at most $5\times L^{-4}.$
%         \item If $b \sim \unif((\zqz)^L)$, then, conditioned on not failing, the output $a$ of $\dec_V$ is distributed according to $\randomsupport(n,k,\cD)$. 
%     \end{enumerate}
% \end{theorem}
By performing this operation for all input samples, we obtain our full reduction from dense to sparse noisy linear equations:

\begin{algorithm}
    \caption{Full Reduction to Sparse Noisy Linear Equations}
    \label{alg:fullreduction}
     \hspace*{\algorithmicindent} 
     \textbf{Parameters:} 
     $k,L,n$ and
     \cref{alg:sentencedecoding}.\\ %\textbf{Parameters:} A good decoding matrix $V(W)\in (\zqz)^{(r(q-1))^t\times ht}$ and $t, n = (r(q-1))^t.$\\ 
     %\hspace*{\algorithmicindent} \textbf{ }\\
     \hspace*{\algorithmicindent} \textbf{Input:} $m$ samples $(b_i, y_i)$ from either $\NLE( q, \unif((\zqz)^L), \unif((\zqz)^L),  \mathcal{E})$ or $\unif((\zqz)^L) \times \unif(\zqz)$.\\
     \hspace*{\algorithmicindent} \textbf{ }\\
     \hspace*{\algorithmicindent} \textbf{Procedure:}\\
     \hspace*{\algorithmicindent} 
     \hspace*{\algorithmicindent} 
     \textbf{1.} Sample $z\sim \unif((\zqz)^n).$\\
     \hspace*{\algorithmicindent}\hspace*{\algorithmicindent} \textbf{2.} For $i = 1,2 \ldots, m:$\\
     \hspace*{\algorithmicindent} \hspace*{\algorithmicindent}\hspace*{\algorithmicindent} \textbf{a.} Run \cref{alg:sentencedecoding} on input $b_i$ and parameters $k,L,n.$\\
     \hspace*{\algorithmicindent} \hspace*{\algorithmicindent}\hspace*{\algorithmicindent} \textbf{b.} If \cref{alg:sentencedecoding} outputs $\fail$, discard the sample.\\
     \hspace*{\algorithmicindent}\hspace*{\algorithmicindent} \hspace*{\algorithmicindent} \textbf{c.} Else, let $a_i \in (\zqz)^n$ be its output.
     Output the sample $(a_i, \langle a_i,z\rangle + y_i)$.  
\end{algorithm}

We analyze \cref{alg:fullreduction} in \cref{sec:reductionandtasks} separately for the testing, search, and refutation tasks. 
We remark that for the refutation task it would be sufficient to output the sample $(a_i, y_i)$ in Step 1.c) above (but it still is correct as written).

Before that, we give the proof of \cref{thm:decoding_matrix_and_alg}. 
The proof follows the steps in \cref{sec:nontechnicaloverview} and is organized as follows. Just like in \cref{sec:nontechnicaloverview}, we begin with the case of binary $k$-sparse vectors.
\begin{enumerate}
    \item In \cref{sec:worddecodingprimitives}, we define the properties of the word decoding $V$ and sentence decoding $G$ matrices needed.
    \item In \cref{sec:worddecodingalgorithmfull} we define and analyze the word-decoding procedure for binary $k$-sparse vectors.
    \item In \cref{sec:sentencedecodingalgorithm} we define and analyze the sentence-decoding procedure for binary $k$-sparse vectors.
    \item In \cref{sec:complete_arbitrary_rows} we show how to use the sentence decoding for binary $k$-sparse vectors for arbitrary distributions $\mathcal{D}$ on the support.
    \item In \cref{sec:variablesupportsizes} we show how to handle distributions with variable support sizes.
    % \item In \cref{sec:reductionandtasks} we analyze our reduction in the cases of testing, search, and refutation tasks.
\end{enumerate}

\subsection{Word Decoding Primitives and Properties}
\label{sec:worddecodingprimitives}
Here we prove the three main claims about word decoding -- \cref{lem:samplingintro,lem:nearuniformityintro,lem:sizeestimationintro} as well as the fact that our sentence-decoding matrix is invertible.

\subsubsection{Efficient Preimage Sampling: Proof of \cref{lem:samplingintro}}
\label{sec:proofofsampling}
For a vector $b \in (\zqz)^h$, recall that we defines its preimage with respect to a matrix $V$ as
\begin{equation}
\begin{split}
\Preimage_{V}(b) \coloneqq \Set{(\xi(1), \xi(2), \ldots,\xi(k))\in [hq]^k\suchthat \sum_{\psi = 1}^k V_{\xi(\psi).:} = b}.
\end{split}
\end{equation}
Further, recall that our word-decoding matrix was defined as
\begin{align*}
    V= 
    \begin{pmatrix}
      0 \times I_h\\
      1\times I_h\\
      \quad \vdots\\
      (q-1)\times I_h
    \end{pmatrix} \,.
\end{align*}
We now restate and prove~\cref{lem:samplingintro}.
\restatelemma{lem:samplingintro}

\begin{algorithm}
    \caption{Sampling From $\Preimage_V(b)$}
    \label{alg:samplingfrompreimage}
     \hspace*{\algorithmicindent} \textbf{Parameters:} $k,h$ such that $k> h+2.$ 
     \hspace*{\algorithmicindent} \textbf{ }\\
     \hspace*{\algorithmicindent} \textbf{Input:} $b\in (\zqz)^h.$ \\
     \hspace*{\algorithmicindent} \textbf{ }\\
     \hspace*{\algorithmicindent} \textbf{Initialize:} $\beta_0 = (\underbrace{0,0,\ldots, 0}_h) \in (\zqz)^h, A_0 = [h].$ 
     \hspace*{\algorithmicindent} \textbf{ }\\
     \hspace*{\algorithmicindent} \textbf{Procedure:} For $r = 0$ to $k:$\\
     \hspace*{\algorithmicindent} 
     \hspace*{\algorithmicindent}
     \textbf{1. }If $|A_r|= k-r,$ compute the $k-r$ vectors $\xi'({r+1}), \xi'({r+2}), \ldots, \xi'({k})$ corresponding\\ 
     \hspace*{\algorithmicindent} 
     \hspace*{\algorithmicindent}
     to the indices
     of
     $\{(b - \beta_r)_{j}\times e_j^\top\}_{j \in A_r}$ in $V.$ Let $\pi$ be a random permutation of $r+1,r+2, \ldots, k.$ 
     \\
     \hspace*{\algorithmicindent} 
     \hspace*{\algorithmicindent}
     Set $\xi({j})= \xi'({\pi(j)}).$
     Return 
     $
     \xi(1),\xi(2), \ldots, \xi(k).
     $
     \\
     \hspace*{\algorithmicindent} 
     \hspace*{\algorithmicindent}
     \textbf{2. }Else if $|A_r| = k-r-1.$ Take $\xi({r+1})$ to be a uniformly random element indexing one of\\
     \hspace*{\algorithmicindent} 
     \hspace*{\algorithmicindent}
     the
     zero
     rows in $V$ and  rows corresponding to coordinates where $\beta_r$ and $b$ differ: 
     $$\{1,2,\ldots, h\}\quad\bigcup
     \bigcup_{m \in A_r, j \in \{1,2,\ldots,q-1\}}\{jh + m\}.$$\\      
     \hspace*{\algorithmicindent}
     \hspace*{\algorithmicindent} 
     \textbf{3. }Else if $|A_r| \le k-r-2,$ sample $\xi({r+1})\sim \unif([hq])$.\\
     \hspace*{\algorithmicindent} 
     \hspace*{\algorithmicindent} 
     \textbf{4. }Compute $\beta_{r+1} = \beta_{r}+ V_{\xi({r+1}), :}$ and 
     $A_{r+1}= \{m \in [h]\; : \; (\beta_{r+1})_m\neq b_m\}.
     $
     \\
     %\\
     %\hspace*{\algorithmicindent} 
     %\textbf{3. }Return $(a',y).$\\
\end{algorithm}
The running time of~\cref{alg:samplingfrompreimage} is clearly $\poly(k,h,\log q)$ We proceed to correctness, which follows from the following two claims.

\begin{claim} The algorithm always returns. When it returns $\xi(1),\xi(2), \ldots, \xi(k),$ it holds that 
$$
\sum_{\ell= 1}^k V_{\xi(\ell),:} = b.
$$
\end{claim}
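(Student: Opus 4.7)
The claim consists of two assertions: (i) the loop always reaches a step at which Case~1 triggers (and hence returns), and (ii) the output then satisfies $\sum_{\ell=1}^k V_{\xi(\ell),:} = b$. My plan is to introduce the potential $f(r) \coloneqq |A_r| - (k-r)$ and show that it must pass through $0$, which is precisely the trigger condition for Case~1.

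\textbf{Termination via a potential argument.} Note that $\beta_r = \sum_{\ell=1}^r V_{\xi(\ell),:}$ and $A_r$ records the coordinates on which $\beta_r$ disagrees with $b$. Initially, $|A_0| = |\mathsf{supp}(b)| \le h$, so $f(0) \le h - k < 0$ by the hypothesis $k > h+2$. I will analyse the one-step increments of $f$ by cases:
\begin{itemize}
\item In Case 2 ($f(r) = -1$), the chosen $\xi(r+1)$ either indexes a zero row of $V$ (so $|A_{r+1}| = |A_r|$) or indexes a nonzero multiple of $e_m^\top$ with $m \in A_r$, which can only fix or leave unchanged the $m$-th coordinate, so $|A_{r+1}| - |A_r| \in \{-1,0\}$. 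Thus $f(r+1) - f(r) \in \{0,1\}$, i.e.\ $f(r+1) \in \{-1, 0\}$.
\item In Case 3 ($f(r) \le -2$), the row $V_{\xi(r+1),:}$ is supported on a single coordinate, so $|A_{r+1}| - |A_r| \in \{-1,0,1\}$, hence $f(r+1) - f(r) \in \{0,1,2\}$.
\end{itemize}
In both cases $f$ is non-decreasing, and in Case~3 it grows by at most $2$, while in Case~2 it grows by at most $1$. Consequently $f$ cannot jump from a negative value to a strictly positive value: from $f(r) = -1$ the next value is in $\{-1,0\}$, and from $f(r) \le -2$ the next value is at most $f(r)+2 \le 0$. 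Since $f(0) < 0$ and $f(k) = |A_k| \ge 0$, $f$ must equal $0$ at some step $r^\ast \le k$, at which point Case~1 executes and the algorithm returns.

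\textbf{Correctness of the returned tuple.} At the step $r^\ast$ where Case~1 fires, the partial sum is $\beta_{r^\ast} = \sum_{\ell=1}^{r^\ast} V_{\xi(\ell),:}$, and by the definition of $A_{r^\ast}$ we have $(\beta_{r^\ast})_j = b_j$ for $j \notin A_{r^\ast}$ and $(\beta_{r^\ast})_j \ne b_j$ for $j \in A_{r^\ast}$. The $k-r^\ast$ remaining indices $\xi'(r^\ast{+}1),\ldots,\xi'(k)$ are chosen so that the corresponding rows of $V$ are exactly $\{(b-\beta_{r^\ast})_j\, e_j^\top : j \in A_{r^\ast}\}$; since $(b-\beta_{r^\ast})_j \ne 0$ for each $j \in A_{r^\ast}$, these rows are well defined (they belong to the nonzero blocks of $V$). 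The permutation $\pi$ only reorders these $k-r^\ast$ rows, which does not affect the sum. Therefore
\[
\sum_{\ell=1}^k V_{\xi(\ell),:} \;=\; \beta_{r^\ast} \;+\; \sum_{j \in A_{r^\ast}} (b-\beta_{r^\ast})_j\, e_j^\top \;=\; \beta_{r^\ast} + (b - \beta_{r^\ast}) \;=\; b,
\]
as claimed.

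\textbf{Main obstacle.} The only non-routine part is the termination argument; a naive bound on $|A_{r+1}| - |A_r|$ would allow $f$ to jump over $0$ and the algorithm to abort. The key observation is that the algorithm's bifurcation into Case~2 (at $f = -1$) and Case~3 (at $f \le -2$) is designed precisely so that the unit-step restriction near $0$ forbids overshoot, guaranteeing that $f$ lands on $0$.
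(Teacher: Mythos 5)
Your proof is correct and takes essentially the same route as the paper: the paper tracks the quantity $k-r-|A_r|$ (the negation of your potential $f$), proves by the same two-case increment analysis that it stays nonnegative and is forced to equal zero by step $k$ so Case~1 must fire, and then argues correctness exactly as you do, via $\beta_{r^\ast}$ being the partial sum and the appended rows summing to $b-\beta_{r^\ast}$. One cosmetic nit: the algorithm initializes $A_0=[h]$ rather than the support of $b$, so your exact increment range $\{-1,0,1\}$ (and the monotonicity of $f$) need not hold at $r=0$; this is immaterial, since the argument only uses $f(0)<0$ and the upper bounds giving $f(r+1)\le 0$ whenever $f(r)<0$, which remain valid.
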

\begin{proof} 
\textbf{If the algorithm returns, it does so correctly.}
Suppose that case 1 occurs. That is, the algorithm reaches a point where $|A_r| = k-r.$ As $|A_r|\le h <k,$ it follows that $r\ge 1.$

By the definition of $\xi({r+1}), \xi({r+2}), \ldots, \xi({k}),$ it holds that 
$$
\sum_{\ell = r + 1}^k 
V_{\xi(\ell),:} = 
\sum_{\ell = r + 1}^k 
V_{\xi'(\ell),:}
=
\sum_{m\in A_r}  
(b - \beta_r)_{j}\times e_j^\top = 
b - \beta_r,
$$
as $A_{r}$ is defined as $\{m \in [h]\; : \; (\beta_{r})_m\neq b_m\}$ for $r\ge 1.$ Now, observe that case 1 can only occur once as the algorithm returns when it occurs. Hence, by the definition of $\beta_r,$ it holds that
$
\beta_r = 
\sum_{\ell = 1}^r 
V_{\xi(\ell),:}.
$
Altogether, 
$$
\sum_{\ell = 1}^k 
V_{\xi(\ell),:} = 
\sum_{\ell = 1}^r 
V_{\xi(\ell),:} + 
\sum_{\ell = r+1}^k 
V_{\xi(\ell),:} = 
\beta_r + (b - \beta_r) = b.
$$

\noindent
\textbf{The algorithm always returns.}
Now, we need to show that the algorithm always reaches case 1. Track the quantity $k-r- |A_r|.$ When $r = 0,$ this quantity is $k - h>0.$ When $r = k$ (if this point is ever reached), 
$k - r - |A_r| = - |A_k|\le 0.$

We make the following simple observation that $k-r- |A_r|\ge 0$ at any point of time during the algorithm. This follows by induction and case analysis of which of 2 and 3 is invoked:
\begin{enumerate}
    \item If case 2 is invoked, that is, 
    $k-r- |A_r| = 1,$ then 
    $k-(r+1)- |A_{r+1}| \in\{0, 1\}.$
    The vector $\beta_{r+1}$ coincides with $\beta_r$ outside of 
    coordinates indexed by $A_r = \{m \in [h]\; : \; (\beta_{r})_m\neq b_m\},$ so $A_{r+1}\subseteq A_r$ and 
    $k-(r+1)- |A_{r+1}|\ge k - (r+1) - |A_r| = 0.$ Note also that $\beta_r,\beta_{r+1}$ differ on at most one coordinate, so 
    $|A_{r+1}|\ge |A_r|-1.$ Thus, 
    $k-(r+1)- |A_{r+1}|\le k - (r+1) - (|A_r|-1)=k - r-|A_r| = 1.$
    \item If case 3 is invoked, that is, 
    $k-r- |A_r| \ge 2,$ then $k-(r+1)- |A_{r+1}| \ge 0.$ Indeed, note that $\beta_r$ and $\beta_{r+1}$ differ on at most one coordinate. Thus, $|A_{r+1}|\le |A_r| + 1$ and, so, 
    $k-(r+1)- |A_{r+1}| \ge k-(r+1)- (|A_{r}|+1)\ge  0.$
\end{enumerate}
In particular, this means that if the algorithm has not returned by step $k$ yet, $0 \le k - k - |A_k| = - |A_k|\le 0.$ Hence, $|A_k| = 0.$ Thus, it will return at step $k.$\end{proof}

So far, we know that the algorithm always returns correctly. We need to show that its output is uniform over $\Preimage_V(b).$ This follows immediately from the following claim.
\begin{claim} Consider any $(\xi''(1), \xi''(2), \ldots,\xi''(k))\in \Preimage_V(b).$ There exists a unique realization of the randomness used in \cref{alg:samplingfrompreimage} such that the algorithm outputs $\xi''(1), \ldots, \xi''(k).$    
\end{claim}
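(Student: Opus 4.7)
The plan is to recover the unique realization of the algorithm's randomness by working backwards from $\xi''$. The randomness consists of (i) the uniform draws $\xi(r+1) \in [hq]$ at each step where Case~2 or Case~3 is invoked, and (ii) the uniform permutation $\pi$ drawn when Case~1 is invoked. I will show that both components are uniquely determined once $\xi''$ is fixed.

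First, I would consider the ``virtual run'' of the algorithm on $\xi''$ in which the draws are forced to be $\xi(j) = \xi''(j)$. This virtual run is deterministic: $\beta_r$ and $A_r$ are computed exactly as in the algorithm. The key invariant I want is $\delta_r \coloneqq k - r - |A_r| \ge 0$. This holds because $\xi''$ is a valid preimage, so $\beta_k = b$ and $|A_k| = 0$, and because $|A|$ can decrease by at most one per step, a telescoping argument forces $|A_r| \le k - r$ for every $r$. Hence the first step $r^*$ at which $\delta_{r^*} = 0$ (equivalently, at which Case~1 is triggered in the virtual run) is well-defined.

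Next, I would verify that at every step $r < r^*$ the forced value $\xi''(r+1)$ lies in the allowed set of the invoked case. For Case~3 this is automatic. For Case~2, where $\delta_r = 1$, the only choices that keep $\delta_{r+1} \ge 0$ are (a) a zero row, or (b) a nonzero row $c \cdot e_m^\top$ with $m \in A_r$; any other choice would be a nonzero row at a coordinate outside $A_r$, thereby increasing $|A|$ by one and sending $\delta_{r+1}$ to $-1$, which the invariant above forbids. These are exactly the elements of the Case~2 allowed set. Hence $\xi''(r+1)$ is both a legitimate and the \emph{unique} random draw producing $\xi''$ at step $r$.

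Finally, at Case~1 (step $r^*$) the remaining $k - r^*$ entries $\xi''(r^*+1), \ldots, \xi''(k)$ must sum to $b - \beta_{r^*}$, whose support $A_{r^*}$ has size exactly $k - r^*$. A pigeonhole argument then forces each of these entries to be a nonzero row $c \cdot e_m^\top$ with $m \in A_{r^*}$, where each $m \in A_{r^*}$ is hit exactly once and the value $c$ is forced to be $(b - \beta_{r^*})_m$. Therefore the multiset $\{\xi''(r^*+1), \ldots, \xi''(k)\}$ coincides with the set $\{\xi'(r^*+1), \ldots, \xi'(k)\}$ computed by the algorithm, and because these indices are pairwise distinct, the permutation $\pi$ matching $\xi'(\pi(j)) = \xi''(j)$ is unique. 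The main subtle point is the pigeonhole step in Case~1, which crucially uses the equality $|A_{r^*}| = k - r^*$; once that is in hand, uniqueness of the realization follows immediately.
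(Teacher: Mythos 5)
Your proposal is correct and follows essentially the same route as the paper's proof: you establish the invariant $k-r-|A_r|\ge 0$ (the paper phrases it as $\beta''_r$ and $b$ differing in at most $k-r$ coordinates), use it to show each pre-Case-1 draw is forced and lies in the allowed set of the invoked case, and then use the exact-count/pigeonhole argument at the first Case-1 step to pin down the remaining entries and the permutation $\pi$. The only differences are cosmetic (backward telescoping in place of the paper's direct observation, and no explicit introduction of the times $T$ and $Z$).
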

\begin{proof} We define the following quantities. Let $\beta''_r = \sum_{j = 1}^rV_{\xi''(j),:}$ and $A''_r= \{m\in [h]\; : \; (\beta_r)_m\neq b_m\}.$ Note that $\beta''_k = b,A''_k = \emptyset.$
We make the following observation:
\begin{observation}
\label{obs:differonfew}
For any $r,$ the vectors $\beta''_r$ and $b = \beta''_k$ differ on at most $k- r$ coordinates.
\end{observation}
\begin{proof}
    $
    \beta''_k - \beta''_r = 
    \sum_{j = r+1}^kV_{\xi''(j),:},
    $
    which is a sum of at most $k-r$ unit vectors.
\end{proof}
Let also
$$
Z = \min\Set{ z \in [k]\; \suchthat
k - r - |A''_{r}| = 0
}.
$$
This is a well-defined quantity since clearly $A''_k = \emptyset$ by the definition of $\Preimage_V(b)$ and, thus, for $k = r,$
$k - r - |A''_r| = 0.$
Define also 
$$
T = \min\Set{ t \in [k]\; \suchthat
k - t- |A''_{t}| \le 1
}.
$$
Again, $T$ is well-defined and $T\le Z.$
We now claim that the unique randomness that leads to $(\xi(1), \xi(2), \ldots,\xi(k)) = (\xi''(1), \xi''(2), \ldots,\xi''(k))$ in \cref{alg:samplingfrompreimage} is the following.
\begin{enumerate}
    \item On steps $r = 0,1,2,\ldots, T-1,$ we are in case 3 and 
    $\xi(r+1) = \xi''(r+1).$
    \item On (the potentially empty set of) steps $r = T, T+1, \ldots, Z-1,$ we are in case 2 and $\xi(r+1) = \xi''(r+1).$
    \item On steps $r = Z,$ we are in case 1. Then, 
    the sampled $\xi'(r+1), \ldots, \xi'(k)$ are all distinct and are a permutation of $\{\xi''(r+1), \ldots, \xi''(k)\}.$ The sampled permutation $\pi$ is the unique permutation of $\{r+1,\ldots, k\}$ such that $\xi''(j) = \xi'(\pi(j)).$ 
\end{enumerate}
We prove these claims by analyzing the algorithm step by step.
\begin{enumerate}
    \item \textbf{Steps $r = 0,1,2,\ldots, T-1.$} Note that initially, $k - 0 - |A_0| =k - s\ge 2.$ Hence, case 1 is invoked. It must be the case that $\xi(1) = \xi'(1).$ As 
    $k - t- |A''_{t}| \ge 2 $ for $t \in \{1,2,\ldots, T-1\}$ by the definition of $T,$ case 1 must also be invoked at steps $0,1,2,\ldots, T-1.$
    \item 
    \textbf{Steps $r = T, T+1, \ldots, Z-1.$}
    Suppose that $T <Z$ (that is, $\{T, T+1, \ldots, Z-1\}$ is a non-empty set of steps).
    Consider step $T.$ At this point of time, $k - T - |A''_T|= 1.$ Now, we claim that $\xi''(T+1)$ must be either indexing a zero row in $V$ or a row in which $\beta_T, b$ differ, i.e.
    $$\{1,2,\ldots, h\}\quad\bigcup
     \bigcup_{m \in A_r, j \in \{1,2,\ldots,q-1\}}\{jh + m\}.$$
    Indeed, if indexes some other row $m',$ then $\beta''_{T+1}$ and $b$ would differ in $A''_T\cup \{m'\}$ so $|A''_{T+1}|= |A''_T| + 1.$ Thus, $k - |A''_{T+1}| - (T+1) = k - |A''_T| - T - 2 = -1.$ This is impossible since $\beta''_{T+1}$ and $\beta''_k = b$ differ on at most $k - (T+1)$  coordinates by \cref{obs:differonfew}, meaning that 
    $|A''_{T+1}|\le k - (T+1).$  

    Altogether, this means that indeed 
    $$
    \xi''(T+1)\in
    \{1,2,\ldots, h\}\quad\bigcup
     \bigcup_{m \in A_r, j \in \{1,2,\ldots,q-1\}}\{jh + m\},
    $$
    which is exactly the admissible set for $\xi(T+1)$ by \cref{alg:samplingfrompreimage}. Thus, there is a unique choice of $\xi(T+1).$

    Finally, by the definitions of $Z,T,$ it must be the case that 
    $$
    k - r - |A_r| = 1\text{ for }T\le r \le Z-1,
    $$
    so there is similarly a unique choice at each step $ T+1, T+2,\ldots, Z-1.$

    \item
    \textbf{Step $Z.$}
    Finally, suppose that $r = Z.$ Thus, 
    $0 = k- r- |A''_r|.$
    Note that at this point of time, $\beta''_r$ and $b = \beta''_k$ differ on exactly $k - r$ coordinates. However,
    $$
    \beta''_k - \beta''_r =  \sum_{j = r+1}^kV_{\xi''(j),:}
    $$
    is a sum of at most $k-r$ unit vectors. This means that  $\beta''_k - \beta''_r$ is a sum of \emph{exactly} $k-r$ unit vectors, which are in some arbitrary order 
    $$
    \{(b - \beta''_r)_j \times e_j^\top\}_{j \in A''_r}.
    $$
    As in step 1 of \cref{alg:samplingfrompreimage}, the algorithm correctly computes indices $\xi'(r+1), \ldots, \xi'(k)$ corresponding to these vectors.
    In particular, $\xi''(r+1), \ldots, \xi''(k)$ is some permutation of $\xi'(r+1), \ldots, \xi'(k).$
    As $\xi'(r+1), \ldots, \xi'(k)$ are distinct (since $A_r$ indexes $k-r$ distinct unit vectors $e_j^\top$), there is a unique permutation $\pi$ that sends $\xi'(r+1), \ldots, \xi'(k)$ to $\xi''(r+1), \ldots, \xi''(k).$
\end{enumerate}

Altogether, we have shown that indeed there is a unique choice of the random bits used by the algorithm that yield $\xi''(1), \xi''(2), \ldots, \xi''(k).$ This is sufficient.
\end{proof}

We also derive the following corollary from the proof. 

\begin{corollary}
\label{cor:marginaldistributionofwalk}
Suppose that $k > h+2$ and let $u \in (\zqz)^h.$ Let $K\subseteq [k]$ be such that $|K|\le k - h- 2.$ Then, if $(\xi(1), \xi(2), \ldots, \xi(k))\sim \unif(\Preimage_V(u)),$ it holds that
$$
(\xi(j))_{j \in K}\text{ is uniform over }[hq]^{|K|}.
$$
\end{corollary}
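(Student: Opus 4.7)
The plan is to combine the case analysis of \cref{alg:samplingfrompreimage} used in the proof of \cref{lem:samplingintro} with the exchangeability of $\unif(\Preimage_V(u))$. Two observations are key: (i) the set $\Preimage_V(u) = \{(\xi(1),\ldots,\xi(k)) \in [hq]^k : \sum_{\psi=1}^k V_{\xi(\psi),:} = u\}$ is invariant under permutations of the $k$ coordinates, since the defining sum depends only on the multiset of entries; hence the uniform distribution on it is exchangeable. (ii) By inspection of \cref{alg:samplingfrompreimage}, whenever case~3 is invoked at step $r$, the next index $\xi(r+1)$ is drawn as $\unif([hq])$ independently of all prior randomness.

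By observation~(i), to establish the corollary it suffices to prove the statement for the specific set $K = \{1, 2, \ldots, |K|\}$. I would then show that throughout the first $|K|$ iterations of \cref{alg:samplingfrompreimage}, case~3 is invoked. Since $A_r \subseteq [h]$ we have $|A_r| \le h$ for all $r$, and the hypothesis $|K| \le k - h - 2$ gives, for every $r \in \{0, 1, \ldots, |K|-1\}$,
\[
k - r - |A_r| \;\ge\; k - r - h \;\ge\; k - (|K|-1) - h \;\ge\; 3 \;\ge\; 2,
\]
which is exactly the condition for case~3. In particular this uses only the weak upper bound $|A_r| \le h$, so no finer control on the walk is needed. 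By observation~(ii), the outputs $\xi(1), \xi(2), \ldots, \xi(|K|)$ produced in these steps are therefore i.i.d.\ $\unif([hq])$, so their joint distribution (under the algorithm) is $\unif([hq]^{|K|})$.

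To conclude, I appeal to \cref{lem:samplingintro}, which asserts that the tuple $(\xi(1), \ldots, \xi(k))$ returned by \cref{alg:samplingfrompreimage} is distributed as $\unif(\Preimage_V(u))$. Consequently the marginal of $\unif(\Preimage_V(u))$ on its first $|K|$ coordinates equals $\unif([hq]^{|K|})$; combined with the exchangeability from observation~(i), this gives the same marginal for any $K \subseteq [k]$ with $|K| \le k - h - 2$. The only step that requires any care is verifying that case~3 genuinely applies at each of the first $|K|$ iterations, and this reduces to the elementary counting inequality above — so no real obstacle is expected.
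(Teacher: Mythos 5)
Your proposal is correct and matches the paper's own proof: the paper likewise reduces by symmetry to $K=\{1,\ldots,k-h-2\}$ and observes that, since $|A_r|\le h\le k-r-2$ for the relevant steps, only case~3 of \cref{alg:samplingfrompreimage} is invoked during the first $k-h-2$ iterations, so those coordinates are i.i.d.\ uniform over $[hq]$, and the conclusion follows from the uniformity of the sampler (\cref{lem:samplingintro}). Your explicit verification that the deterministic bound $|A_r|\le h$ forces case~3 (and hence avoids any conditioning bias) is exactly the paper's argument, just spelled out in slightly more detail.
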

\begin{proof}
    Due to symmetry, it is enough to consider $K = \{1,2,\ldots, k - h- 2\}.$ Note, however, that in \cref{alg:samplingfrompreimage}, during the first $k-h-2,$ only case 3 is invoked. This is true since $|A_r|\le h \le k - r-2$ whenever $r \le k - h- 2.$ Hence, $\xi(1), \xi(2), \ldots, \xi(k-h-2)$ are uniform over $[hq]^{k-h-2}.$ 
\end{proof}

\subsubsection{Computing Preimage Sizes: Proof of \cref{lem:sizeestimationintro}}
\label{sec:estimationproof}
We next proof~\cref{lem:sizeestimationintro}, restated below.
$V$ and $\Preimage_{V}(b)$ are as in the previous section.
\restatelemma{lem:sizeestimationintro}

\begin{algorithm}
    \caption{Computing $|\Preimage_V(b)|.$}
    \label{alg:sizeestimation}
     \hspace*{\algorithmicindent} \textbf{Parameters:} $k,h$ such that $k>s.$ 
     \hspace*{\algorithmicindent} \textbf{ }\\
     \hspace*{\algorithmicindent} \textbf{Input:} $b\in (\zqz)^h.$ 
     \hspace*{\algorithmicindent} \textbf{ }\\
     \hspace*{\algorithmicindent} \textbf{Preprocessing: }
     Compute $\Delta = |\{ m \in [h]\; : \; b_m\neq 0\}|.$
     \\
     \hspace*{\algorithmicindent} \textbf{ }\\
     \hspace*{\algorithmicindent} \textbf{Initialize: } Array 
     $f$ indexed by $\{-1,0,1,2,\ldots, h,h+1\}\times \{0,1,2,\ldots, k\}$ where $f[0,0] = 1$ and\\
     \hspace*{\algorithmicindent}
     all other values are equal to $0.$
     \\
     \hspace*{\algorithmicindent} \textbf{Procedure:}\\
     \hspace*{\algorithmicindent}
     For $r = 1$ to $k:$\\
     \hspace*{\algorithmicindent} \hspace*{\algorithmicindent} \hspace*{\algorithmicindent} For $t = 0$ to $h:$\\
     \hspace*{\algorithmicindent} \hspace*{\algorithmicindent} \hspace*{\algorithmicindent}
     \hspace*{\algorithmicindent} 
     Compute $f(t,r) = t \times f(t-1,r-1) + 
   (h + t(q-2))\times f(t,r-1) + 
   (h-t)(q-1) \times f(t+1,r-1).$\\
     \hspace*{\algorithmicindent} \textbf{Return} $f(\Delta, k).$
     %\\
     %\hspace*{\algorithmicindent} 
     %\textbf{3. }Return $(a',y).$\\
\end{algorithm}
Again, the algorithm clearly works in time $\poly(k,h,\log q).$ We proceed to correctness. We claim that $f(t,r)$ computes exactly 
$$
\Big|\Set{
\xi(1), \xi(2), \ldots, \xi(r)\in [hq]^r\suchthat
\sum_{j = 1}^r V_{\xi(j),:} = \beta}
\Big|
$$
where $\beta\in (\zqz)^h$ is an arbitrary vector with $t$ non-zero coordinates. 

We first need to show that this quantity is indeed well-defined. 

\begin{lemma} Suppose that $\beta_1,\beta_2\in (\zqz)^h$ are two vectors both of which have exactly $t$ non-zero coordinates. Then, 
$$
\Big|\Set{
\xi(1), \xi(2), \ldots, \xi(r)\in [hq]^r\suchthat
\sum_{j = 1}^r V_{\xi(j),:} = \beta_1}
\Big| = 
\Big|\Set{
\xi(1), \xi(2), \ldots, \xi(r)\in [hq]^r\suchthat
\sum_{j = 1}^r V_{\xi(j),:} = \beta_2}
\Big|.
$$
\end{lemma}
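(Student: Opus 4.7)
The plan is to derive an explicit combinatorial formula for the count and observe that it depends on $\beta$ only through $|\{i : \beta_i \neq 0\}| = t$. The key is to exploit the very simple structure of $V$: every row has the form $m\cdot e_i^\top$ for some $m \in \{0,1,\ldots,q-1\}$ and $i \in [h]$, so the constraint $\sum_{j=1}^r V_{\xi(j),:} = \beta$ decouples across the $h$ coordinates.

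First, I would identify each index $\xi(j) \in [hq]$ with a pair $(m_j, i_j) \in \{0,\ldots,q-1\} \times [h]$ via $V_{\xi(j),:} = m_j e_{i_j}^\top$. Letting $J_i = \{j \in [r] : i_j = i\}$ and $r_i = |J_i|$, the constraint $\sum_j V_{\xi(j),:} = \beta$ is equivalent to the $h$ independent congruences $\sum_{j \in J_i} m_j \equiv \beta_i \pmod q$, one for each $i \in [h]$. Grouping the count by the composition $(r_1,\ldots,r_h)$ and the partition of $[r]$ realizing it gives
\[
|\{\xi \in [hq]^r : \textstyle\sum_j V_{\xi(j),:} = \beta\}| \;=\; \sum_{\substack{(r_1,\ldots,r_h) \\ r_1+\cdots+r_h = r}} \binom{r}{r_1,\ldots,r_h} \prod_{i=1}^h N(r_i,\beta_i),
\]
where $N(r_i,\beta_i)$ counts tuples $(m_j)_{j \in J_i} \in \{0,\ldots,q-1\}^{r_i}$ with $\sum m_j \equiv \beta_i \pmod q$.

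Second, I would invoke the standard group-theoretic fact that for any target $c \in \zqz$ and any $r_i \geq 1$, the number of tuples $(m_1,\ldots,m_{r_i}) \in \{0,\ldots,q-1\}^{r_i}$ summing to $c \pmod q$ equals $q^{r_i-1}$ (which follows immediately by fixing $m_1,\ldots,m_{r_i-1}$ arbitrarily and noting that $m_{r_i}$ is then uniquely determined in $\zqz$, and any residue lies in $\{0,\ldots,q-1\}$). When $r_i = 0$, trivially $N(0,\beta_i) = \mathbbm{1}[\beta_i = 0]$. Crucially, for $r_i \geq 1$, the value $N(r_i,\beta_i) = q^{r_i - 1}$ does not depend on $\beta_i$ at all.

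Therefore the product $\prod_i N(r_i,\beta_i)$ depends on $\beta$ only through the indicators $\mathbbm{1}[\beta_i = 0]$ for the $i$ with $r_i = 0$; equivalently, the entire sum depends only on the \emph{support pattern} of $\beta$. Finally, the formula is symmetric under simultaneously permuting the coordinates $i \in [h]$ of $\beta$ and of the composition $(r_1,\ldots,r_h)$, so the count depends only on $|\{i : \beta_i \neq 0\}|$. Applied to $\beta_1$ and $\beta_2$, both having exactly $t$ non-zero coordinates, this yields the claimed equality. There is no real obstacle here beyond careful bookkeeping; the argument is a direct consequence of the product structure of $V$ and the uniformity of the convolution of $r_i$ copies of $\{0,\ldots,q-1\}$ on $\zqz$.
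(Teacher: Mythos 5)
Your proof is correct, but it takes a genuinely different route from the paper. The paper proves the lemma by constructing an explicit bijection between $\Preimage_V(\beta_1)$ and $\Preimage_V(\beta_2)$: writing each $\beta_u$ as a combination of unit vectors supported on its $t$ nonzero coordinates, it maps each index $\xi(i)$ (corresponding to $\gamma_i e_{m_i}^\top$) to a new index via a case analysis on whether $m_i$ lies in the support of $\beta_1$ and whether it is the first occurrence of that coordinate in the tuple, absorbing the difference $j_\ell^2-j_\ell^1$ of the target values into that first occurrence. Your argument instead decouples the constraint $\sum_j V_{\xi(j),:}=\beta$ across the $h$ coordinates, groups the ordered tuples by the composition $(r_1,\ldots,r_h)$ recording how many indices land on each coordinate, and uses the elementary fact that for $r_i\ge 1$ the number of $(m_j)\in\{0,\ldots,q-1\}^{r_i}$ with $\sum_j m_j\equiv c\pmod q$ is $q^{r_i-1}$ independently of $c$ (and equals the indicator of $c=0$ when $r_i=0$); this makes the count manifestly a function of the support pattern of $\beta$ only, and permutation symmetry over coordinates reduces it to a function of $t$. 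What each buys: the paper's bijection avoids any counting formula and is a short combinatorial matching, consistent in spirit with its later algorithmic arguments; your computation is arguably more transparent, yields an explicit closed-form expression
$$
\sum_{\substack{r_1+\cdots+r_h=r\\ r_i\ge 1\ \forall i\in\support(\beta)}}\binom{r}{r_1,\ldots,r_h}\, q^{\,r-|\{i\,:\,r_i\ge 1\}|},
$$
and as a by-product reproves (and could replace) the well-definedness needed to justify the paper's dynamic program $f(t,r)$, at the cost of slightly more bookkeeping with ordered tuples and the zero rows. One small point worth stating explicitly if you write this up: the identification $\xi(j)\leftrightarrow(m_j,i_j)$ assigns each of the $h$ zero rows to a distinct coordinate $i_j$, and since they contribute $0$ to every coordinate the coordinatewise decoupling remains exact; your formula already handles this correctly because $N(r_i,c)=q^{r_i-1}$ counts tuples that may include zeros.
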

\begin{proof} Let 
\begin{align*}
    & \beta_1 = \sum_{\ell = 1}^t j^1_\ell \times e_{\kappa(1,\ell)}^\top,\quad\quad \beta_2 = \sum_{\ell = 1}^t j^2_\ell \times e_{\kappa(2,\ell)}^\top,\\
\end{align*}
where $j^u_\ell \in (\zqz)\backslash\{0\}$ are the non-zero coefficients and $\kappa(u,\ell)\in [h]$ are the corresponding coordinates. 

We construct the following bijection between 
$
\Preimage_V(\beta_1)$ and
$\Preimage_V(\beta_2).$

Take some $(\xi(1), \xi(2), \ldots, \xi(k))\in \Preimage_V(\beta_1).$ Suppose that $\xi(i)$ corresponds to the vector $\gamma_i e^\top_{m_i}.$ Now, define $\xi'(i)$ as follows:
\begin{enumerate}
    \item If $m_i\not \in \{\kappa(1,1), \kappa(1,2),\ldots, \kappa(1,t)\},$ then 
    $\xi'(i) \coloneqq \xi(i).$
    \item If $m_i = \kappa(1,\ell)$ for some $\ell$ and $m_j\neq m_i$ for any $j<i,$ then define $\xi'(i)$ as the index corresponding to 
    $(\gamma_i + j_\ell^2 - j_\ell^1)\times e_{\kappa(2,\ell)}^\top.$
    \item If $m_i = \kappa(1,\ell)$ for some $\ell$ and there exists some $j<i$ such that $m_j = m_i,$ then define $\xi'(i)$ as the index corresponding to 
    $\gamma_i\times e_{\kappa(2,\ell)}^\top.$
\end{enumerate}
One can trivially check that 
$(\xi'(1), \xi'(2), \ldots, \xi'(k))\in \Preimage_V(\beta_2)$ and the constructed map $\xi\longrightarrow\xi'$ is a bijection between 
$
\Preimage_V(\beta_1)$ and
$\Preimage_V(\beta_2).$
\end{proof}

With this in mind, $\Big|\Set{
\xi(1), \xi(2), \ldots, \xi(r)\in [hq]^r\suchthat
\sum_{j = 1}^r V_{\xi(j),:} = \beta}
\Big|
$
where $\beta\in (\zqz)^h$ is an arbitrary vector with $t$ non-zero coordinates is a well-defined quantity. Now, we can show by induction on $r$ that $f(t,r)$ computes this quantity. The base case $r = 0$ is clear. Now, we proceed to induction. Take any $\beta\in (\zqz)^h$ with $t$ non-zero coordinates and  
$$
\xi(1), \xi(2), \ldots, \xi(r)\in [hq]^r\text{ such that }
\sum_{j = 1}^r V_{\xi(j),:} = \beta.
$$
Consider $\xi(r).$ There are 4 different possibilities:
\begin{enumerate}
    \item If $V_{\xi(r),:} = \underbrace{0,0,\ldots, 0}_r.$ There are $h$ such choices for $\xi(r).$ For each of them, note that 
    $\sum_{j = 1}^{r-1} V_{\xi(j),:} = \beta.$ By induction, there are $f(t,r-1)$ for $\xi(1), \xi(2), \ldots, \xi(r-1).$
    \item $V_{\xi(r),:}$ is a unit vector with support disjoint from $\beta.$ Note that since $\beta$ has support of size $t,$ there are $h-t$ choices for the location of the non-zero coordinate of $V_{\xi(r),:}$ and $q-1$ choices for its value. For each of these $(q-1)(h-t)$ choices, $\sum_{j = 1}^{r-1} V_{\xi(j),:}$ has support of size $t+1.$ Hence, there are 
    $f(t+1, r-1)$ ways to choose $\xi(1), \xi(2), \ldots, \xi(r-1).$
    \item $V_{\xi(r),:} = me_j^\top$ is a unit vector such that 
    $j$ is in the support of $\beta,$ but $\beta_j \neq m.$ There are 
    $t$ choices for $j$ and, for each of them,
    $q-2$ choices for $m.$ Furthermore, $\sum_{j = 1}^{r-1} V_{\xi(j),:}$ has support of size $t.$ Hence, there are 
    $f(t, r-1)$ ways to choose $\xi(1), \xi(2), \ldots, \xi(r-1).$
    \item $V_{\xi(r),:} = me_j^\top$ is a unit vector such that 
    $j$ is in the support of $\beta$ and $\beta_j =  m.$ There are $t$ choices for $j$  and for each, a unique choice for $m.$ Furthermore, $\sum_{j = 1}^{r-1} V_{\xi(j),:}$ has support of size $t-1.$ Hence, there are 
    $f(t-1, r-1)$ ways to choose $\xi(1), \xi(2), \ldots, \xi(r-1).$   
\end{enumerate}
    Combining all four cases, we conclude that indeed
    \begin{align*}
    f(t,r) &=  
    h\times f(t,r-1)
   + (h-t)(q-1)\times f(t+1, r-1)
   + t(q-2)\times f(t, r-1) 
   + t\times f(t-1,r-1)\\
   & = 
   t \times f(t-1,r-1) + 
   (h + t(q-2))\times f(t,r-1) + 
   (h-t)(q-1) \times f(t+1,r-1).
    \end{align*}

\subsubsection{Uniformity of Preimage Size: Proof of \texorpdfstring{\cref{lem:nearuniformityintro}}{near uniformity}}
\label{sec:randomwalkproof}
We next proof~\cref{lem:nearuniformityintro}, restated below.
\restatelemma{lem:nearuniformityintro}
We will rephrase \cref{lem:nearuniformityintro} in terms of a random walk over $(\zqz)^h.$ At each time step $j,$ the walk chooses a uniformly random direction $e_i^\top$ and  uniformly random $m \in \zqz$ and takes a step of length $m$ in direction $e_i^\top.$ This corresponds to $\xi(j) = V_{\xi(j),:} = me_i^\top = V_{mh + i,:}.$ Our goal is to show the following point-wise convergence of the random walk.

\begin{lemma} Consider the following random walk over $(\zqz)^h$ starting at $x_0= (0,0,\ldots, 0).$ At each step $j,$ a uniformly random $i \in [h]$ and $m\in \{0,1,2,\ldots,q-1\}$ are drawn. Then,
$x_j = x_{j-1} + me_i.$

Suppose that $k \ge 4h(\log h+\log q +  \log (1/\psi))$ where $\psi\in (0,1)$ is any number. Then, for each $u \in (\zqz)^h,$
$$
\frac{1-\psi}{q^h}\le
\prob[x_k = u]\le \frac{1+\psi }{q^h}.
$$
\end{lemma}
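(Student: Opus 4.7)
The plan is to use Fourier analysis on the abelian group $(\zqz)^h$. Write $\omega = e^{2\pi i/q}$ and for each character index $\chi \in (\zqz)^h$ let $\chi(x) = \omega^{\langle \chi, x \rangle}$. Let $\mu$ denote the one-step distribution of the walk and $\mu^{*k}$ its $k$-fold convolution, so $\Pr[x_k = u] = \mu^{*k}(u)$.

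First I would compute the Fourier coefficients of a single step. Since one step adds $m e_i$ with $(i,m)$ uniform over $[h]\times \zqz$,
\[
\widehat{\mu}(\chi) \;=\; \frac{1}{hq}\sum_{i=1}^h \sum_{m=0}^{q-1} \omega^{m\chi_i} \;=\; \frac{1}{h}\bigl|\{\,i : \chi_i = 0\,\}\bigr| \;=\; 1 - \frac{w(\chi)}{h},
\]
where $w(\chi) \coloneqq |\{i : \chi_i \ne 0\}|$ is the Hamming weight of $\chi$ (viewing $\chi$ as a vector in $(\zqz)^h$), and I used that $\sum_{m=0}^{q-1}\omega^{m a}$ equals $q$ if $a \equiv 0 \pmod q$ and $0$ otherwise. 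Since the $k$ steps are i.i.d., $\widehat{\mu^{*k}}(\chi) = (1 - w(\chi)/h)^k$.

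Next, by Fourier inversion and the triangle inequality,
\[
\Bigl| \Pr[x_k = u] - \frac{1}{q^h} \Bigr| \;\le\; \frac{1}{q^h} \sum_{\chi \ne 0} \bigl(1 - w(\chi)/h\bigr)^k.
\]
Grouping characters by their weight $w \in \{1,\ldots,h\}$ (there are $\binom{h}{w}(q-1)^w \le (hq)^w$ characters of weight $w$) and using $1 - w/h \le e^{-w/h}$,
\[
\sum_{\chi \ne 0} \bigl(1 - w(\chi)/h\bigr)^k \;\le\; \sum_{w=1}^{h} (hq)^w\, e^{-wk/h} \;=\; \sum_{w=1}^{h} \bigl(hq\, e^{-k/h}\bigr)^w.
\]

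Finally, I would plug in the hypothesis $k \ge 4h(\log h + \log q + \log(1/\psi))$, which gives $hq\, e^{-k/h} \le \psi^4/(h^3 q^3) \le \psi/2$ for the relevant parameter range, so the geometric series above is bounded by $2\, hq\, e^{-k/h} \le \psi$. Rearranging, $\bigl|\Pr[x_k = u] - q^{-h}\bigr| \le \psi/q^h$, which is exactly the two-sided pointwise bound claimed in the lemma. The only mildly delicate step is ensuring the constant $4$ in the hypothesis is large enough to absorb the combinatorial factor $(hq)^w$ in the sum over weights; everything else is a routine character computation.
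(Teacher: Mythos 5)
Your proof is correct, and it takes a genuinely different route from the paper. You diagonalize the walk via characters of $(\zqz)^h$: a single step has nonnegative Fourier coefficients $\widehat{\mu}(\chi)=1-w(\chi)/h$, so $\widehat{\mu^{*k}}(\chi)=(1-w(\chi)/h)^k$, and Fourier inversion plus the count $\binom{h}{w}(q-1)^w\le (hq)^w$ of weight-$w$ characters gives the two-sided pointwise bound in one shot; the constant $4$ in the hypothesis indeed suffices, since $k\ge 4h(\log h+\log q+\log(1/\psi))$ forces $hqe^{-k/h}\le \psi^4/(h^3q^3)\le \psi/8$, making the geometric series over weights at most $\psi$. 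The paper instead argues combinatorially, with the two sides handled by different ideas: the lower bound conditions on the coupon-collector event that every direction $i\in[h]$ is chosen at least once (in which case $x_k$ is exactly uniform), and the upper bound sums over the set $B$ of directions hit, using $\Pr[\{i_1,\dots,i_k\}=B]\le(|B|/h)^k$ together with a monotonicity argument to show the $|B|=h-1$ term dominates. Your character argument is more uniform (both bounds fall out of the same estimate, and it would also yield $\ell_2$/total-variation bounds for free), while the paper's proof is elementary, avoids Fourier machinery, and makes the probabilistic mechanism (full coverage of coordinates forces exact uniformity) transparent. Both crucially exploit that a step perturbs a single uniformly chosen coordinate by a uniform amount.
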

\begin{proof}
Denote by $i_1,i_2, \ldots, i_k$ the respective draws of $i$ in the random walk.

\noindent
\textbf{Lower Bound.} Notice that conditioned on the fact that $\{i_1,i_2, \ldots, i_k\} = [h],$ $x_k$ is uniform over $(\zqz)^h.$ This is true since to each coordinate, we have added (at least once) a uniform element from $(\zqz)^h.$ Hence, 
$$
\prob[x_k = u]\ge 
\frac{1}{q^h}
\prob\big[\{i_1,i_2, \ldots, i_k\} = [h]\big].
$$
Now, we bound the right-hand side as follows. 
\begin{align*}
& \prob\big[\{i_1,i_2, \ldots, i_k\} = [h]\big]\ge 
1- \sum_{i \in [h]}\prob\big[i \not \in \{i_1,i_2, \ldots, i_k\}\big]\\
& =
1 - h\times
\prob\big[1 \not \in \{i_1,i_2, \ldots, i_k\}\big] = 1 - h\times \prod^k_{j =1}\prob[i_j \neq 1]\\
& = 1 - h \times (1-1/h)^k \ge 1 - h \times (1-1/h)^{2h(\log h + \log \psi^{-1})}\\
& \ge 1 - h\times e^{-(\log h + \log \psi^{-1})} =
1- \psi.
\end{align*}

\noindent
\textbf{Upper Bound.} Suppose that $u$ has support $A = \{i \in [h]\;: \; u_i\neq 0\}.$ Hence, $A\subseteq \{i_1,i_2, \ldots, i_k\}.$
It follows that 
\begin{align*}
    & \prob[x_k = u] = \sum_{B \subseteq [h]\; : \; A\subseteq B}
    \prob[x_k = u|\{i_1,i_2, \ldots, i_k\} = B]\times 
    \prob[\{i_1,i_2, \ldots, i_k\} = B].
\end{align*}
Now, observe that if 
$A\subseteq B = \{i_1,i_2, \ldots, i_k\}, $ then
$x_k|_{B}\sim \unif((\zqz)^{|B|}).$ Furthermore, over the complement, we have $x_k|_{[h]\backslash B} = 0.$ Thus, $\prob[x_k = u|\{i_1,i_2, \ldots, i_k\} = B] = 1/q^{|B|}$ as the values on $\{i_1,i_2, \ldots, i_k\}$ should match. 

On the other hand, 
$$
\prob[\{i_1,i_2, \ldots, i_k\} = B]\le \big(\frac{|B|}{h}\big)^k
$$
since each $i_j$ needs to be in the set $B.$
Altogether, 
\begin{align*}
    & \prob[x_k = u]
    = 
    \sum_{B\subseteq [h]\; : \; A\subseteq B}
    \frac{1}{q^{|B|}}\big(\frac{|B|}{h}\big)^k
    \le 
    \sum_{B\subseteq [h]}
    \frac{1}{q^{|B|}}\big(\frac{|B|}{h}\big)^k
     = \sum_{t = 0}^h 
    \binom{h}{t}
    \frac{1}{q^t}
    \big(\frac{t}{h}\big)^k =
    \frac{1}{q^h} +
    \sum_{t = 1}^{h-1}
    \binom{h}{t}
    \frac{1}{q^t}
    \big(\frac{t}{h}\big)^k.
\end{align*}

To bound the last expression, we will first show that $t\longrightarrow \binom{h}{t}
    \frac{1}{q^t}
    \big(\frac{t}{h}\big)^k$ is increasing on $\{1,\ldots,h-1\}.$ Indeed, this is equivalent to 
\begin{align*}
    & \binom{h}{t}
    \frac{1}{q^t}
    \big(\frac{t}{h}\big)^k \le 
    \binom{h}{t+1}
    \frac{1}{q^{t+1}}
    \big(\frac{t+1}{h}\big)^k \Longleftrightarrow\\
    & q\le \big(\frac{t+1}{t}\big)^k\times \frac{\binom{h}{t+1}}{\binom{h}{t}}\Longleftrightarrow\\
    & 
    q\times (t+1)/(h-t)\le \big(\frac{t+1}{t}\big)^k
\end{align*}
As $t\in \{1,2,\ldots,h-1\},$ it is enough to show that  
$
qh\le (1 + 1/h)^{k}.
$ This follows immediately as $k\ge 4h(\log h+ \log q).$ Altogether, 
\begin{align*}
    & \prob[x_k = u]\le \frac{1}{q^h} +
    \sum_{t = 1}^{h-1}
    \binom{h}{t}
    \frac{1}{q^t}
    \big(\frac{t}{h}\big)^k\\
    & \le 
    \frac{1}{q^h} + (h-1)
    \binom{h}{h-1}
    \frac{1}{q^{h-1}}
    \big(\frac{h-1}{h}\big)^k\\
    &  = 
    \frac{1}{q^h}\Bigg(1 + qh(h-1)\big(\frac{h-1}{h}\big)^k\Bigg)\\
    & \le 
    \frac{1}{q^h}\Bigg(1 + qh(h-1)\big(\frac{h-1}{h}\big)^{4h(\log h + \log q + \log(1/\psi))}\Bigg)\\
    &\le
    \frac{1}{q^h}\big(1 + qh(h-1) e^{-2\log hq/\psi}\big)\le 
    \frac{1 + \psi}{q^h}.\qedhere
\end{align*}
\end{proof}

\subsubsection{Full Rank of Sentence-Decoding Matrix}
Our last step will be to show that our sentence-decoding matrix $G$ is full rank, which is needed for the search reduction. 
Recall (cf.~\eqref{eq:gdefintro}) that $G\in (\zqz)^{n \times L}$ was defined by indexing rows by $ (\phi_1, \ldots, \phi_t) \in [hq]^t$ and each row is defined as $G_{(\phi_1, \phi_2,\ldots, \phi_t),:} = (V_{\phi_1,:},V_{\phi_2,:},\ldots, V_{\phi_t,:})$.
\begin{lemma}
\label{lem:Ginvertibility}
The matrix $G\in (\zqz)^{n\times L}$ defined as above is of rank $L.$
\end{lemma}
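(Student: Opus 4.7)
The plan is to exhibit $L$ rows of $G$ that together form (the rows of) a permutation of the identity matrix $I_L$ over $\zqz$; this immediately implies that $G$ has rank $L$ over $\zqz$, as desired. The key observation is that the word-decoding matrix $V$ contains both an all-zero block of $h$ rows (namely the first $h$ rows, which correspond to the $0 \cdot I_h$ block) and an all-unit-vector block of $h$ rows (namely rows $h+1, h+2, \ldots, 2h$, which are precisely $e_1^\top, e_2^\top, \ldots, e_h^\top$, corresponding to the $1 \cdot I_h$ block). The sentence-decoding matrix $G$ concatenates $t$ rows of $V$, so by choosing the word in each of the $t$ slots to be either a zero word or a standard basis word, we can precisely single out any coordinate of $(\zqz)^L = (\zqz)^{ht}$.

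More concretely, for each pair $(j, i) \in [t] \times [h]$, let $\phi^{(j,i)} = (\phi_1, \phi_2, \ldots, \phi_t) \in [hq]^t$ be the index defined by $\phi_j = h + i$ and $\phi_{j'} = 1$ for every $j' \neq j$. By the structure of $V,$ the row of $G$ indexed by $\phi^{(j,i)}$ is
\[
G_{\phi^{(j,i)}, :} = (V_{1,:}, \ldots, V_{1,:}, V_{h+i,:}, V_{1,:}, \ldots, V_{1,:}) = (\underbrace{0^h, \ldots, 0^h}_{j-1 \text{ times}}, e_i^\top, \underbrace{0^h, \ldots, 0^h}_{t-j \text{ times}}) \in (\zqz)^{ht},
\]
which is the standard basis vector $e_{(j-1)h + i}$ of $(\zqz)^L$. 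As $(j,i)$ ranges over $[t] \times [h],$ these $th = L$ rows of $G$ are exactly $e_1, e_2, \ldots, e_L,$ which are manifestly linearly independent over $\zqz$. Therefore $G$ has rank at least $L$, and since $G \in (\zqz)^{n \times L},$ its rank equals $L$.

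There is no real obstacle: the whole argument rides on the simple combinatorial structure of $V$ (namely that $V$ contains a zero row and each standard basis row). The only subtlety worth noting is that this argument works uniformly over any modulus $q \geq 2$, in particular for $q=2$, because it only uses the $0 \cdot I_h$ and $1 \cdot I_h$ blocks of $V$ and does not rely on any invertibility of elements in $\zqz$. This is consistent with the fact that our reduction applies to all ring sizes simultaneously.
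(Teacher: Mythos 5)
Your proof is correct and takes essentially the same route as the paper: both exhibit the $L$ standard basis vectors of $(\zqz)^L$ as rows of $G$ by filling all but one block of a sentence with a zero word of $V$ and the remaining block with a unit-vector word, so that $I_L$ appears as a submatrix of $G$. Your indexing (using row $h+i$ of $V$ for the unit vector $e_i^\top$ and row $1$ for the zero word) is in fact more carefully aligned with the stated definition of $V$ than the paper's displayed formula, which appears to have an off-by-$h$ index in the varying block.
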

\begin{proof} We simply show that $I_L$ is a submatrix of $G.$ Indeed, note that for any $u \in [L],$ where $u = u_1h + u_2$ for $0\le u_1\le t-1, 1\le u_2\le h,$ 
$$
G_{\underbrace{1,1,\ldots, 1}_{u_1},u_2,\underbrace{1,1,\ldots, 1}_{t - u_1 -1},:} = 
\underbrace{0,0,\ldots ,0}_{u_1h + u_2 -1},1,\underbrace{0,0\ldots ,0}_{th - u_1h - u_2}.
$$    
\end{proof}

\subsection{Word Decoding}
\label{sec:worddecodingalgorithmfull}
Now, we show how to decode a single $b\sim\unif((\zqz)^h)$ using a good word-decoding matrix.

\begin{algorithm}
    \caption{Word Decoding}
    \label{alg:worddecoding}
     \hspace*{\algorithmicindent} \textbf{Parameters:} $k,h.$\\ 
     \hspace*{\algorithmicindent} \textbf{ }\\
     \hspace*{\algorithmicindent} \textbf{Input:} $b\in (\zqz)^h.$\\
     \hspace*{\algorithmicindent} \textbf{ }\\
    \hspace*{\algorithmicindent} \textbf{Procedure:}\\
     % \hspace*{\algorithmicindent} \textbf{1.} Compute $\Preimageall_V(b)$ and $\alphaall_V(b).$\\
     \hspace*{\algorithmicindent} \textbf{1.} Compute $|\Preimage_V(b)|$ using \cref{alg:sizeestimation}.\\
      \hspace*{\algorithmicindent} \textbf{2.} 
      Output $\fail$ with probability $1 - \frac{|\Preimage_V(b)|}{\frac{(1+(ht)^{-2})(hq)^k}{q^h}}$ and terminate.\\
      \hspace*{\algorithmicindent} \textbf{3.} 
      Output a uniformly random $(\xi(1), \xi(2), \ldots, \xi(k))$ from $\Preimage_V(b)$ using \cref{alg:samplingfrompreimage}.\\
\end{algorithm}

\begin{lemma}
\label{lemma:worddecodingguarantee}
Suppose that $k \ge 4h(\log h + \log q + 2\log(ht)).$ Then, \cref{alg:worddecoding} satisfies the following
\begin{enumerate}
    \item Given $b \in (\zqz)^h$, it runs in time $\poly(k,h)$ and either outputs $\fail$ or outputs a $k$-tuple\\ $(\xi(1), \xi(2), \ldots, \xi(k)) \in [hq]^k$ such that $\sum_{i=1}^k v_{\xi(i)} = b^\top$. Furthermore, it outputs $\fail$ with probability at most $2\times (ht)^{-2}.$
    \item If $b \sim \unif((\zqz)^L)$, conditioned on  \cref{alg:worddecoding} not failing, its output $(\xi(1), \xi(2), \ldots, \xi(k))$ is distributed uniformly on $[hq]^k$. 
\end{enumerate}
\end{lemma}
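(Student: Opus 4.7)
The plan is to instantiate \cref{lem:nearuniformityintro} with $\psi = (ht)^{-2}$. Under the assumed $k \ge 4h(\log h + \log q + 2\log(ht))$ the hypothesis of that lemma is met, so for every $u \in (\zqz)^h$ we obtain the envelope
\[
\frac{(1-(ht)^{-2})(hq)^k}{q^h} \;\le\; |\Preimage_V(u)| \;\le\; \frac{(1+(ht)^{-2})(hq)^k}{q^h}.
\]
The upper bound confirms that the quantity used in Step 2 of \cref{alg:worddecoding} is a valid probability in $[0,1]$, and the lower bound yields, for every input $b$,
\[
\Pr[\fail \mid b] \;=\; 1 - \frac{|\Preimage_V(b)|}{(1+(ht)^{-2})(hq)^k/q^h} \;\le\; \frac{2(ht)^{-2}}{1+(ht)^{-2}} \;\le\; 2(ht)^{-2},
\]
giving the desired failure bound. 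The $\poly(k,h,\log q)$ runtime together with membership of non-failing outputs in $\Preimage_V(b)$ follows immediately from \cref{lem:samplingintro,lem:sizeestimationintro}.

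For the distributional claim I would fix an arbitrary target $\xi^* = (\xi^*(1),\dots,\xi^*(k)) \in [hq]^k$ and set $b^* \coloneqq \sum_{\psi=1}^k V_{\xi^*(\psi),:}$, which is the unique $b \in (\zqz)^h$ whose preimage contains $\xi^*$. The joint event ``output equals $\xi^*$'' happens exactly when the uniform draw of $b$ lands on $b^*$, the rejection step does not fire, and the uniform sample from $\Preimage_V(b^*)$ (guaranteed by \cref{lem:samplingintro}) returns $\xi^*$. Multiplying the three factors gives
\[
\Pr[\text{output} = \xi^*] \;=\; \frac{1}{q^h} \cdot \frac{|\Preimage_V(b^*)|}{(1+(ht)^{-2})(hq)^k/q^h} \cdot \frac{1}{|\Preimage_V(b^*)|} \;=\; \frac{1}{(1+(ht)^{-2})(hq)^k}.
\]
The key observation, and the entire reason for the rejection step, is that this expression is independent of $\xi^*$. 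Conditioning on not failing merely rescales by a common factor, so the conditional distribution of the output is uniform on $[hq]^k$.

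I do not anticipate any real obstacle: \cref{alg:sizeestimation,alg:samplingfrompreimage} together with \cref{lem:nearuniformityintro} were designed precisely to make this rejection sampling correct and efficient. The one subtle choice is $\psi = (ht)^{-2}$: it must be small enough that a union bound over the $t$ words per sentence keeps the per-sentence failure probability $o(1)$ (as needed in \cref{thm:decoding_matrix_and_alg}), yet not so small that the near-uniformity condition $k \ge 4h(\log h + \log q + \log(1/\psi))$ becomes stricter than the sparsity hypothesis. Plugging $\psi = (ht)^{-2}$ into that condition recovers exactly the $k \ge 4h(\log h + \log q + 2\log(ht))$ bound stated in the lemma.
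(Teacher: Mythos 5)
Your proposal is correct and follows essentially the same route as the paper's own proof: instantiate \cref{lem:nearuniformityintro} with $\psi = (ht)^{-2}$ to bound the rejection probability and certify that the rejection step uses a well-defined probability, then compute $\Pr[\text{output} = \xi^*]$ as the product of the three factors (uniform draw of $b$, non-rejection, uniform sampling from the preimage), observe it equals $\tfrac{1}{(1+(ht)^{-2})(hq)^k}$ independently of $\xi^*$, and conclude uniformity after conditioning on not failing. No gaps.
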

\begin{proof} The running time is $\poly(h,k)$ by \cref{lem:samplingintro,lem:sizeestimationintro}.
The property $\sum_{i=1}^k v_{\xi(i)} = b^\top$ follows by definition of $\Preimage_V(b)$.

By \cref{lem:nearuniformityintro} for $\psi = (ht)^{-2},$
the failure probability is at most $1 - \frac{\frac{(1-(ht)^{-2})(hq)^k}{q^h}}{\frac{(1+(ht)^{-2})(hq)^k}{q^h}}= 1 - \frac{1 - (ht)^{-2}}{1 + (ht)^{-2}} \le 2(ht)^{-2}.$
Note also that $1 - \frac{|\Preimage_V(b)|}{\frac{(1+(ht)^{-2})(hq)^k}{q^h}} \geq 0$, so it is a well-defined probability. 

Finally, we analyze the marginal distribution of the output for input $b\sim \unif((\zqz)^h).$ For any fixed  $(\tilde{\xi}(1), \tilde{\xi}(2), \ldots, \tilde{\xi}(k)),$ we compute:
\begin{align*}
    & \prob[\mathsf{output} = (\tilde{\xi}(1), \tilde{\xi}(2), \ldots, \tilde{\xi}(k))]\\
    & = 
    \prob[\mathsf{output} = (\tilde{\xi}(1), \tilde{\xi}(2), \ldots, \tilde{\xi}(k)), \; \text{Algorithm does not fail at 2.},\; 
    b^\top = \sum_{\psi = 1}^k V_{\tilde{\xi}(\psi),:}
    ]\\
    & = 
    \prob[\mathsf{output} = (\tilde{\xi}(1), \tilde{\xi}(2), \ldots, \tilde{\xi}(k))| \; \text{Algorithm does not fail at 2.},\; 
    b^\top = \sum_{\psi = 1}^k V_{\tilde{\xi}(\psi),:}
    ]\times\\
    & \quad\quad\quad\quad\quad\quad\times
    \prob[\text{Algorithm does not fail at 2.}| 
    b^\top = \sum_{\psi = 1}^k V_{\tilde{\xi}(\psi),:}
    ]\times \\
    & \quad\quad\quad\quad\quad\quad\times
    \prob[b^\top = \sum_{\psi = 1}^k V_{\tilde{\xi}(\psi),:}]\\
    & = \frac{1}{|\Preimage_V(b)|}\times 
    \frac{|\Preimage_V(b)|}{\frac{(1+(ht)^{-2})(hq)^k}{q^h}}\times 
    \frac{1}{q^h} = \frac{1}{(1+(ht)^{-2})(hq)^k}.
\end{align*}
So far, we used the facts that, conditioned on not failing, the algorithm outputs a uniform sample from $\Preimage_V(b).$ The algorithm does not fail with probability  $\frac{|\Preimage_V(b)|}{\frac{(1+(ht)^{-2})(hq)^k}{q^h}}.$ Finally, for any fixed $(\tilde{\xi}(1), \tilde{\xi}(2), \ldots, \tilde{\xi}(k)),$ we have $\prob[b^\top = \sum_{\psi = 1}^k V_{\tilde{\xi}(\psi),:}] = 1/q^h$ over $b\sim \unif((\zqz)^h).$ 

As the quantity $\frac{1}{(1+(ht)^{-2})(hq)^k}$ is uniform over all choices of $(\tilde{\xi}(1), \ldots, \tilde{\xi}(k)),$ the output distribution is uniform over $[hq]^k$. 
\end{proof}

\subsection{Sentence Decoding}
\label{sec:sentencedecodingalgorithm}
We can use the word-decoding primitive \cref{alg:worddecoding} to decode sentences, and prove~\cref{thm:decoding_matrix_and_alg} as follows.
As mentioned before, we first describe the case when $\cD = 1^k$, i.e., the entries on the support are deterministically 1.
We handle general distributions in the next section.
\begin{algorithm}[H]
    \caption{Binary Sentence Decoding}
    \label{alg:sentencedecoding}
     \hspace*{\algorithmicindent} \textbf{Parameters:} $h,k,t.$
     \\ 
     \hspace*{\algorithmicindent} \textbf{ }\\
     \hspace*{\algorithmicindent} \textbf{Input:} $b\in (\zqz)^{ht}.$\\
     \hspace*{\algorithmicindent} \textbf{ }\\
    \hspace*{\algorithmicindent} \textbf{Procedure:}\\
     \hspace*{\algorithmicindent} \textbf{1.} Split $b$ into $(b_1, b_2, \ldots, b_t)$ where $b_i \in (\zqz)^h.$\\
     \hspace*{\algorithmicindent} \textbf{2.} For $j = 1, 2 ,\ldots, t:$\\ 
     \hspace*{\algorithmicindent} \hspace*{\algorithmicindent} \textbf{a.} Run \cref{alg:worddecoding} on input $b_j$ with parameters $k,h.$\\
     \hspace*{\algorithmicindent} \hspace*{\algorithmicindent} \textbf{b.}
     If \cref{alg:worddecoding} outputs $\fail$, output $\fail$ too and terminate.\\
     \hspace*{\algorithmicindent} \hspace*{\algorithmicindent} \textbf{c.} Else, store the output
     $(\xi_j(1), \xi_j(2), \ldots, \xi_j(k)) \in [hq]^k$.\\
    \hspace*{\algorithmicindent} \textbf{3.} Set $\xi(i) = (\xi_1(i), \xi_2(i), \ldots, \xi_t(i)) \in [hq]^t$ for $1\le i \le k.$\\
    \hspace*{\algorithmicindent} \textbf{4.} If there exist $i_1\neq i_2$ such that $\xi(i_1)= \xi(i_2),$ output $\fail$ and terminate.\\
    \hspace*{\algorithmicindent} \textbf{5.} Else, output $a \in (\zqz)^{(hq)^t}$ such that $a_{\xi(i)} = 1$ for $1 \leq i \leq k$ and $a_j = 0$ for all other $j$.\\
\end{algorithm}
\cref{thm:decoding_matrix_and_alg} for the case of $\cD = 1^k$ directly follows from the following lemma.
\begin{lemma}
%\stefanc{How can we avoid that the algorithm box below gets placed in the middle of the lemma?}\kirilc{I wouldn't worry too much about that for now since everything will move anyways as we write the paper.}
\label{lemma:sentencedecodingguarantee}
Suppose that $k \ge 4h(\log h + \log q + 2\log(ht))$ and $k = o(\sqrt{n}) = o((hq)^{t/2}).$
Then \cref{alg:sentencedecoding} satisfies the following
\begin{enumerate}
    \item Given $b \in (\zqz)^L$, it runs in time $\poly(n)$ and either outputs $\fail$ or outputs a $k$-sparse $a \in (\zqz)^n$ such that $a^\top G = b^\top$. Furthermore, the algorithm outputs $\fail$ with probability at most $2/L + k^2/n.$
    \item If $b \sim \unif((\zqz)^L)$, 
    conditioned on \cref{alg:sentencedecoding} not failing,
    its output $a$ is distributed according to $\randomsupport(n,k,1^k)$. 
\end{enumerate}
\end{lemma}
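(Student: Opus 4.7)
The plan is to reduce the two claims to the guarantees of \cref{alg:worddecoding} applied to the $t$ blocks, combined with a birthday-style analysis for the collision check in step 4. The running-time bound is immediate: by \cref{lemma:worddecodingguarantee} each of the $t = \log n / \log(hq)$ word-decoding calls runs in $\poly(k,h,\log q)$, and the collision check of step 4 takes $O(k^2 t)$ by sorting the $\xi(i)$ lexicographically; under $k = o(\sqrt n)$ the total is $\poly(n)$. For correctness when the algorithm does not fail, step 4 ensures the $\xi(i)$ are pairwise distinct, so $a$ is exactly $k$-sparse. To verify $a^\top G = b^\top$, I would unpack \eqref{eq:gdefintro}: the row of $G$ indexed by $\xi(i)=(\xi_1(i),\ldots,\xi_t(i))$ is the concatenation $(V_{\xi_1(i),:},\ldots,V_{\xi_t(i),:})$, so
\[
    \sum_{i=1}^k G_{\xi(i),:} = \left(\sum_{i=1}^k V_{\xi_1(i),:},\,\ldots,\,\sum_{i=1}^k V_{\xi_t(i),:}\right),
\]
and by part 1 of \cref{lemma:worddecodingguarantee} the $j$-th block of this vector equals $b_j$, so the whole concatenation equals $b^\top$.

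For the failure probability I would split the failure event into "some word-decoding invocation fails in step 2b" and "every word succeeds but there is a collision in step 4". The first has probability at most $t \cdot 2(ht)^{-2} = 2/(hL) \le 2/L$ by \cref{lemma:worddecodingguarantee} together with a union bound. For the second, the hypothesis $k \ge 4h(\log h + \log q + 2\log(ht))$ easily gives $k \ge h+4$, so \cref{cor:marginaldistributionofwalk} applies with $|K| = 2$: for any fixed $b_j$ and $i_1 \ne i_2$, conditional on word $j$ succeeding, $(\xi_j(i_1),\xi_j(i_2))$ is uniform on $[hq]^2$, whence $\Pr[\xi_j(i_1)=\xi_j(i_2) \mid \text{word $j$ succeeds}] = 1/(hq)$. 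Because the $t$ word-decoding calls use independent coins on the (possibly adversarial) inputs $b_1,\ldots,b_t$, these equalities are independent across $j$, giving $\Pr[\xi(i_1)=\xi(i_2) \mid \text{no word fails}] = 1/n$. A union bound over the $\binom{k}{2}$ pairs bounds the collision probability by $k^2/n$, and adding the two contributions yields the claimed $2/L + k^2/n$. Crucially, this bound is worst-case in $b$, as required.

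For the distributional claim with $b \sim \unif((\zqz)^L)$, the blocks $b_1,\ldots,b_t$ are i.i.d.\ uniform on $(\zqz)^h$. Part 2 of \cref{lemma:worddecodingguarantee} then tells us that conditional on word $j$ not failing, the tuple $(\xi_j(1),\ldots,\xi_j(k))$ is uniform on $[hq]^k$. Because the $t$ invocations use independent inputs and independent coins, conditioning on the joint event \emph{no word fails} keeps the entries $(\xi_j(i))_{j \in [t], i \in [k]}$ i.i.d.\ uniform on $[hq]$; hence the columns $\xi(1),\ldots,\xi(k)$ are i.i.d.\ uniform on $[hq]^t = [n]$. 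Further conditioning on the no-collision event of step 4 yields a uniform distribution over ordered $k$-tuples of distinct elements of $[n]$, and projecting to the support of $a$ (which is binary since $\mathcal{D} = 1^k$) gives a uniformly random $k$-subset, i.e.\ $\randomsupport(n,k,1^k)$. The main thing to be careful about is the bookkeeping of the conditioning --- separating the worst-case $b$ argument used for the failure bound from the uniform $b$ argument used to identify the distribution of $a$, and checking that the two successive conditionings (on no word failing, then on no collision) preserve the required independence --- but no deeper obstacle should arise.
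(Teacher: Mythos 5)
Your proposal is correct and follows essentially the same route as the paper: the $a^\top G = b^\top$ identity via blockwise concatenation, a union bound over the $t$ word-decoding failures (each $\le 2(ht)^{-2}$ by the word-decoding guarantee), pairwise uniformity of the columns $\xi(i)$ from \cref{cor:marginaldistributionofwalk} to bound collisions by $k^2/n$ for worst-case $b$, and the i.i.d.-uniform-blocks argument for the conditional distribution when $b$ is uniform. Your bookkeeping of the two conditionings (no word failure, then no collision) matches the paper's treatment, so no changes are needed.
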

%\begin{lemma}
%\label{lemma:sentencedecodingguarantee}
%Suppose that $V$ is a good word-decoding matrix and $k = o(\sqrt{n}).$ Suppose further that $b\sim \unif((\zqz)^{th}).$
%Then, \cref{alg:sentencedecoding} runs in time $O((r(q-1))^k t).$
%It outputs with probability at least $(1/3)^t\times 1/2$ and whenever it does so, 
%$
%\displaystyle \sum_{\psi = 1}^k G_{\xi({\psi})}  = b.
%$
%Finally, on input $b\sim \unif(Z_q^{ht}),$ the marginal distribution of 
%$(\xi(1), \xi(2), \ldots, \xi(k)),$ conditioned on \cref{alg:sentencedecoding} not failing, is a uniformly random subset of size $k$ of $[r(q-1)]^t$ (note that $\xi(\psi)\in [r(q-1)]^t\; \forall \psi$). 
%\end{lemma}
\begin{proof} The running time follows from \cref{lemma:worddecodingguarantee} as in Step 2 there are $t$ calls to \cref{alg:worddecoding}, each taking time $\poly(k,h) = \poly(n)$ and all other steps can be implemented to run in this time as well.
Conditioned on not outputting $\fail$, by Step 4, the vector $a$ we output is exactly $k$ sparse.
Further, using the guarantees of~\cref{alg:worddecoding} (cf.~\cref{lemma:worddecodingguarantee}) it holds that
$$
    a^\top G = \sum_{i=1}^k G_{\xi(i),:} = \sum_{i=1}^k (V_{\xi_1(i),:},V_{\xi_2(i),:},\ldots, V_{\xi_t(i),:}) =
    (b_1^\top, b_2^\top, \ldots, b_t^\top) = b^\top \,.
$$

By \cref{lemma:worddecodingguarantee}, each of the $t$ iterations in Step 2 passes independently with probability at least $1 - 2\times (ht)^{-2}$.
By union bound, the algorithm does not terminate before step 3 with probability at least $1 - 2\times h^{-2}t^{-1}\ge 1 - 2/L$ for any $b.$ Conditioned on not terminating by Step 3, the final support indices 
are the columns $\xi(1),\xi(2), \ldots, \xi(k)$ of the matrix
$$\xi = 
\begin{pmatrix}
    \xi_1(1) \; \xi_1(2)\; \ldots \; \xi_1(k)\\
    \xi_2(1) \; \xi_2(2) \; \ldots\; \xi_2(k)\\
    \vdots\\
    \xi_t(1)\; \xi_t(2)\; \ldots  \; \xi_t(k)\\
\end{pmatrix}.
$$
By \cref{cor:marginaldistributionofwalk}, every two columns 
of this matrix are uniform independent over $[hq]^{t}.$ Hence, the probability that two given rows coincide is $(1/hq)^{-t} = 1/n.$ If $k = o(\sqrt{n}),$ by union bound over all pairs, it follows that the probability two rows coincide is at most $k^2/n = o(1).$

Lastly, 
if $b\sim \unif((\zqz)^{ht}),$ then 
 $b_1, b_2, \ldots, b_t \iidsim \unif((\zqz)^{h}).$ Thus, by \cref{lemma:worddecodingguarantee}, the coordinates of $\xi$ are iid numbers distributed uniformly over $[hq].$
Conditioned on $\xi(1),\xi(2), \ldots, \xi(k)$ being pairwise distinct,  $\xi(1),\xi(2), \ldots, \xi(k)$ is a uniformly random subset of size $k$ of $[hq]^t$.
\end{proof}

\begin{remark}
Both \cref{alg:sentencedecoding} and \cref{alg:fullreduction} are trivially parallelizable. In \cref{alg:sentencedecoding}, one can fully parallelize the decoding of individual words. In \cref{alg:fullreduction}, one can fully parallelize the decoding of separate samples. 
\end{remark}

\subsection{The Case of Arbitrary Distributions on the Support}
\label{sec:complete_arbitrary_rows}

% PREVIOUS version
\begin{comment}
We can now use a simple word-shift transformation that takes a sample of the form 
$(a, y= a^\topG(V)s + e)$ for a $k$-sparse $a\in \{0,1\}^n$ and produces a sample
$(a', y= (a')^\topG(V)s + e)$ where $a'\in \zqz$ has exactly $k$ non-zero coordinates and the marginal distribution on them is $\mathcal{D}.$ We just add this last step after c. in \cref{alg:fullreduction} for each sample separately.
% from a single sample of 
% $$
% \LWE(\mathsf{modulus} = q, \mathsf{samples}\sim \randomsupport(n,k,1^k), \mathsf{secret}\sim \unif((\zqz)^n),\mathsf{error}\sim\mathcal{E}')
% $$ to a single sample of 
% $$
% \LWE(\mathsf{modulus} = q, \mathsf{samples}\sim \randomsupport(n,k,\mathcal{D}), \mathsf{secret}\sim \unif((\zqz)^n),\mathsf{error}\sim\mathcal{E}')
% $$ for any $\mathcal {D}.$
This is the only place where we exploit the structure of $V$ that it contains all the non-zero multiples of different words. 
\end{comment}

We next describe how to prove~\cref{thm:decoding_matrix_and_alg} in the case of arbitrary distributions $\cD$ on the support.
Our goal is to implement a word-shift transformation that takes a pair $(a,b)$ such that $a^\top G = b^\top$, where $a$ is uniform over $k$-sparse  binary vectors and produces $a'$ such that $(a')^\top G = b^\top$ and $a'\in \zqz$ has exactly $k$ non-zero values. These values are, furthermore, distributed according to $\mathcal{D}$ and independent of the location of the support of $a.$
Note that this would directly imply the result, by adding it after the last step in~\cref{alg:sentencedecoding}.
% However, for technical reasons it will again be easier to implement this step when working with $k$-tuples (and a vector of "values") instead of $k$-sparse vectors.
% Specifically, we will show the following lemma.
% Given such a primitive, the~\cref{alg:sentencedecoding,lemma:sentencedecodingguarantee} in the previous section can easily be adapted to yield what is necessary to prove~\cref{thm:decoding_matrix_and_alg}.
% We show how to do this at the end of this section.
% We remark that this is the only place where we exploit the structure of $V$, specifically, that it contains all non-zero multiples of different words.

% \stefanc{Much of the lemma is repeated in the observation below, maybe for this part it makes sense to just have a more "light-weight" environment (like an observation) since it's just a small thing.}
\begin{lemma}
    \label{lem:word_shift}
    % Let $r,q,k,t,h \in \mathbb{N}$ and $G$ be as in~\cref{eq:Gdefinition}.
    Let $\cD$ be an arbitrary distribution on $\zqmult^k$ from which we can sample in time $\Tsample$. 
    There exists an algorithm (\cref{alg:changingsupportdistribution} below) satisfying
    \begin{enumerate}
        \item Given $\xi \in [hq]^k$, \cref{alg:changingsupportdistribution} runs in time $O(\Tsample +\poly(k,h,t,\log q))$ and outputs $\xi' \in [(hq)^t]^k$ and $\rho \in (\zqz \setminus \set{0})^k$ such that
        \[
            \sum_{\psi=1}^k \rho_\psi \cdot G_{\xi'(\psi),:} =\sum_{\psi=1}^k G_{\xi(\psi),:} \,.
        \]
        \item If $\xi \sim \unif([(hq)^t]^k)$, then $\xi' \sim \unif([(hq)^t]^k), \rho \sim \cD$ and the two are independent.
    \end{enumerate}
\end{lemma}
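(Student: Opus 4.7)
My approach follows the description preceding the lemma statement. First, I would sample $\rho = (\rho_1,\ldots,\rho_k) \sim \cD$ in time $\Tsample$. Then, for each $\psi \in [k]$, I would decompose the row $G_{\xi(\psi),:}$ according to the block structure of $G$. Recall from \eqref{eq:gdefintro} that every row of $G$ is of the form $(V_{\phi_1,:}, V_{\phi_2,:}, \ldots, V_{\phi_t,:})$ for some $(\phi_1,\ldots,\phi_t) \in [hq]^t$, and each $V_{\phi_j,:}$ is of the form $\kappa_j e_{\Delta_j}^\top$ with $\kappa_j \in \zqz$ and $\Delta_j \in [h]$ (specifically, if $\phi_j = m h + d$ with $0 \le m \le q-1$ and $1 \le d \le h$, then $\kappa_j = m$ and $\Delta_j = d$). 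Writing the row indexed by $\xi(\psi)$ in this form as $(\kappa_j^\psi e_{\Delta_j^\psi}^\top)_{j=1}^t$, I would define $\xi'(\psi)$ to be the row index corresponding to $(\rho_\psi^{-1}\kappa_j^\psi \, e_{\Delta_j^\psi}^\top)_{j=1}^t$. Decoding/encoding each $\xi(\psi)$ takes $O(t\log q)$ time and the scalar multiplications take $O(kt \log q)$ time overall, so the total running time is $O(\Tsample + \poly(k,h,t,\log q))$ as claimed.

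Correctness of the linear identity is immediate: by construction,
\[
\rho_\psi \cdot G_{\xi'(\psi),:} = \rho_\psi \cdot \big(\rho_\psi^{-1}\kappa_1^\psi e_{\Delta_1^\psi}^\top, \ldots, \rho_\psi^{-1}\kappa_t^\psi e_{\Delta_t^\psi}^\top\big) = \big(\kappa_1^\psi e_{\Delta_1^\psi}^\top, \ldots, \kappa_t^\psi e_{\Delta_t^\psi}^\top\big) = G_{\xi(\psi),:},
\]
and summing over $\psi$ gives the desired identity.

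For the distributional property, I would argue as follows. Fix any $r \in \zqmult^k$ and condition on $\rho = r$. The map $T_r \colon [(hq)^t]^k \to [(hq)^t]^k$ defined by the procedure above acts independently on each coordinate $\psi$ and on each block $j$, and on the coefficient part it amounts to applying the bijection $m \mapsto r_\psi^{-1} m \bmod q$ to $\kappa_j^\psi \in \zqz$, while leaving $\Delta_j^\psi$ unchanged. Since $\gcd(r_\psi,q)=1$, multiplication by $r_\psi^{-1}$ is a bijection on $\zqz$; thus $T_r$ is a bijection of $[(hq)^t]^k$. Consequently, if $\xi \sim \unif([(hq)^t]^k)$ is independent of $\rho \sim \cD$, then for every measurable $A$ and every $r$,
\[
\Pr[\xi' \in A \mid \rho = r] = \Pr[\xi \in T_r^{-1}(A) \mid \rho = r] = |A|/(hq)^{tk},
\]
so $\xi' \sim \unif([(hq)^t]^k)$ and $\xi'$ is independent of $\rho$. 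The only subtle point is to ensure that $\rho_\psi \in \zqmult$ (so that $\rho_\psi^{-1}$ exists and acts as a bijection on $\zqz$), which holds by definition of $\cD$; beyond that, the proof is essentially bookkeeping. Combining this lemma with \cref{lemma:sentencedecodingguarantee} by appending the word-shift step to \cref{alg:sentencedecoding} gives the general version of \cref{thm:decoding_matrix_and_alg}, noting that the failure event and its probability are unchanged since the shift step never fails.
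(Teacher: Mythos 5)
Your proposal is correct and follows essentially the same route as the paper: you apply the block-wise word-shift $\kappa_j(\psi) \mapsto \rho_\psi^{-1}\kappa_j(\psi)$ (keeping $\Delta_j(\psi)$ fixed), verify the identity $\rho_\psi \cdot G_{\xi'(\psi),:} = G_{\xi(\psi),:}$ termwise exactly as in the paper's Observation on the word-shift transformation, and deduce uniformity and independence of $\xi'$ from the fact that $T_\rho$ is a bijection for each fixed $\rho$ independent of $\xi$. No gaps; this matches the paper's argument, including the observation that appending the shift step to the sentence-decoding algorithm leaves the failure probability unchanged.
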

\begin{comment}
\begin{definition}[Word-Shift Transformation] Suppose that $V$ has the form in \cref{eq:vform}. Suppose that $a$ is the indicator of  $(\xi(1), \xi(2), \ldots, \xi(k))\in [(r(q-1))^t]^k$ where 
     $\xi(\psi) = (\xi_1(\psi), \xi_2(\psi), \ldots, \xi_t(\psi))\in [r(q-1)]^t.$ Write each $\xi_j(\psi)$ as $\xi_j(\psi) = (q-1)\Delta_j(\psi)+ \kappa_j(\psi)$ where 
$0\le \Delta_j(\psi)\le r-1, 1\le \kappa_j(\psi)\le q-1$ (so that $v_{\xi_j(\psi)}= v_{(q-1)\Delta_j(\psi)+ \kappa_j(\psi)} = \kappa_j(\psi) w_{\Delta_j(\psi)}$). Let $\kappa'_j(\psi)\in \{1,2,\ldots, q-1\}$ be such that 
$\rho_\psi \kappa'_j(\psi)\equiv \kappa_j(\psi)\pmod{q}.$ Set 
$\xi'_j(\psi) \coloneqq (q-1)\Delta_j(\psi)+ \kappa'_j(\psi).$ Finally, let $\xi'(\psi)\coloneqq(\xi'_1(\psi), \xi'_2(\psi), \ldots, \xi'_t(\psi))$ and $a'$ be the indicator of the set $\{\xi'(1), \xi'(2), \ldots, \xi'(k)\}.$ We denote this transformation by $a'\coloneqq T_{(\rho_1,\rho_2, \ldots, \rho_k )}[a].$ It is clearly a bijection for any $(\rho_1, \rho_2, \ldots, \rho_k)\in [q-1]^k.$
\end{definition}
\end{comment}
We use the following word-shift operation.
Recall that we can identify $\xi_j(\psi) \in [hq]$ with $(\Delta_j(\psi), \kappa_j(\psi)) \in [h]\times \{0,1,2\ldots, q-1\}$ via the transformation $\xi_j(\psi) = h \cdot \kappa_j(\psi) + \Delta_j(\psi)$.
% We denote this as $\xi_j(\psi) \widetilde{=} (\Delta_j(\psi), \kappa_j(\psi))$.
% \stefanc{Spacing looks weird and tilde is very small.}
\begin{definition}[Word-Shift Transformation] 
Let $\xi = (\xi(1), \xi(2), \ldots, \xi(k))\in [(hq)^t]^k$, where\linebreak $\xi(\psi) = (\xi_1(\psi), \xi_2(\psi), \ldots, \xi_t(\psi))\in [hq]^t$.
Given $\rho \in \zqmult^k$, define $T_\rho[\xi] \in [(hq)^t]^k$ as follows:
\begin{enumerate}
    \item For each $\psi \in [k]$ and $j \in [t]$ and $\xi_j(\psi) = (\Delta_j(\psi), \kappa_j(\psi))$, define $\xi_j'(\psi) = (\Delta_j(\psi), \rho_\psi^{-1}\kappa_j(\psi))$, where arithmetic in the second coordinate is $\pmod{q}$. Note that $\rho_\psi^{-1}$ is well-defined as it is a residue in $\zqmult.$
    \item Let $T_\rho[\xi]$ be obtained by applying this transformation to all $\xi_j(\psi)$.
\end{enumerate}
\end{definition}

Note that $T_\rho$ is bijective for every $\rho$ since the operation performed on all $\xi_j(\psi)$ is bijective.
Further, it can be computed in time $\poly(k,h,\log q)$ by doing the arithmetic over each of the $k$ coordinates separately.
The key property of the word-shift operation is the following. 
\begin{observation}
\label{obs:transformationforsupport}
Recall $G$ from \cref{eq:gdefintro}.
For any $\xi \in [(hq)^t]^k, \rho \in (\zqz \setminus \set{0})^k$ and $\xi' = T_\rho[\xi]$, it holds that
\[
    \sum_{\psi=1}^k \rho_\psi \cdot G_{\xi'(\psi),:} =\sum_{\psi=1}^k G_{\xi(\psi),:} \,.
\]
\end{observation}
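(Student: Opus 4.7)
The proof is essentially a direct block-by-block calculation, so the plan is to reduce the global sum identity to a per-$\psi$ identity of the form $\rho_\psi \cdot G_{\xi'(\psi),:} = G_{\xi(\psi),:}$ and then sum over $\psi \in [k]$. This reduction is purely by linearity of the left- and right-hand sides in each summand, so there is nothing to check about the interaction between different values of $\psi$; the interesting content is entirely local.

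To verify the per-$\psi$ identity, I would first unpack the structure of $G$ using the definition in \eqref{eq:gdefintro}. For any $\psi \in [k]$, the row $G_{\xi(\psi),:}$ is the horizontal concatenation $(V_{\xi_1(\psi),:}, \ldots, V_{\xi_t(\psi),:}) \in (\zqz)^{ht}$. Under the identification $\xi_j(\psi) \leftrightarrow (\Delta_j(\psi), \kappa_j(\psi)) \in [h] \times \{0,1,\dots,q-1\}$ and the explicit block form of $V$ (all multiples of $I_h$), the $j$-th block equals $V_{\xi_j(\psi),:} = \kappa_j(\psi)\cdot e_{\Delta_j(\psi)}^{\top}$. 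The word-shift $T_\rho$ is defined so that it fixes every direction coordinate $\Delta_j(\psi)$ and replaces every magnitude coordinate $\kappa_j(\psi)$ by $\rho_\psi^{-1}\kappa_j(\psi) \bmod q$; the inverse is well-defined because $\rho_\psi \in \zqmult$ (which is where $\cD$ is supported, so this matches the hypothesis on $\rho$).

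With these unpackings in hand, the core computation is one line per block: for every $\psi$ and $j$,
\[
\rho_\psi \cdot V_{\xi'_j(\psi),:} \;=\; \rho_\psi \cdot \bigl(\rho_\psi^{-1}\kappa_j(\psi)\bigr) \cdot e_{\Delta_j(\psi)}^{\top} \;=\; \kappa_j(\psi) \cdot e_{\Delta_j(\psi)}^{\top} \;=\; V_{\xi_j(\psi),:}.
\]
Concatenating this identity across $j = 1, \ldots, t$ (and using that scalar multiplication by $\rho_\psi$ commutes with concatenation of blocks) yields $\rho_\psi \cdot G_{\xi'(\psi),:} = G_{\xi(\psi),:}$. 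Summing this identity over $\psi \in [k]$ gives the claim. There is no real obstacle here; the only genuinely substantive point is that the word-shift was designed to act on exactly the ``scalar'' coordinate of each entry of $V$ and to leave the ``direction'' coordinate invariant, which is what makes the rescaling by $\rho_\psi$ pass cleanly through the block structure of $G$.
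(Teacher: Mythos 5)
Your proof is correct and follows essentially the same route as the paper: both expand each row of $G$ via \eqref{eq:gdefintro} into its $t$ blocks $\kappa_j(\psi)e^\top_{\Delta_j(\psi)}$ of $V$ and observe that multiplying by $\rho_\psi$ undoes the $\rho_\psi^{-1}$ in the word-shift, blockwise; the only difference is that you isolate the per-$\psi$ identity before summing, whereas the paper carries the sum over $\psi$ through one chain of equalities.
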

\begin{proof}
         \begin{equation}
     \begin{split}
          %(a')^\top  G(V)
          \sum_{\psi=1}^k \rho_\psi \cdot G_{\xi'(\psi),:} 
         & = \sum_{\psi= 1}^k
         \rho_\psi G_{(\xi'_1(\psi), \xi'_2(\psi), \ldots, \xi'_t(\psi)), :}\\
         & = \sum_{\psi= 1}^k
         \rho_\psi (V_{\xi'_1(\psi),:},V_{\xi'_2(\psi),:},, \ldots, V_{\xi'_t(\psi),:})\\
         & = \sum_{\psi= 1}^k
         \rho_\psi \Big(V_{\Delta_1(\psi) + h\kappa_1'(\psi),:},V_{\Delta_2(\psi) + h\kappa_2'(\psi),:}, \ldots, V_{\Delta_t(\psi) + h\kappa_t'(\psi),:}\Big)\\
         & = \sum_{\psi= 1}^k
         \rho_\psi \Big(\kappa_1'(\psi)e^\top_{\Delta_1(\psi)},\kappa_2'(\psi)e^\top_{\Delta_2(\psi)}, \ldots, \kappa_t'(\psi)e^\top_{\Delta_t(\psi)}\Big)\\
         & = 
         \sum_{\psi= 1}^k \Big(\rho_\psi\kappa_1'(\psi)e^\top_{\Delta_1(\psi)},\rho_\psi\kappa_2'(\psi)e^\top_{\Delta_2(\psi)}, \ldots, \rho_\psi\kappa_t'(\psi)e^\top_{\Delta_t(\psi)}\Big)\\
         & \equiv_q 
         \sum_{\psi= 1}^k\Big(\kappa_1(\psi)e^\top_{\Delta_1(\psi)},\kappa_2(\psi)e^\top_{\Delta_2(\psi)}, \ldots, \kappa_t(\psi)e^\top_{\Delta_t(\psi)}\Big)\\
         & = \sum_{\psi= 1}^k
         \Big(V_{\Delta_1(\psi) + h\kappa_1(\psi)},V_{\Delta_2(\psi) + h\kappa_2(\psi)}, \ldots, V_{\Delta_t(\psi) + h\kappa_t(\psi)}\Big)\\
         & = \sum_{\psi= 1}^k
          (V_{\xi_1(\psi),:},V_{\xi_2(\psi),:}, \ldots, V_{\xi_t(\psi),:})\\
          & = \sum_{\psi = 1}^k G_{\xi(\psi),:}.\qedhere%= 
          %(\sum_{\psi= 1}^k v_{\xi_1(\psi)},\sum_{\psi= 1}^k v_{\xi_2(\psi)}, \ldots, \sum_{\psi= 1}^k v_{\xi_t(\psi)}) \equiv_q a^\top  G(V).
         \end{split}
     \end{equation}
\end{proof}

This immediately implies that the following algorithm satisfies the first property of~\cref{lem:word_shift} (the run time follows since we can sample from $\cD$ in time $\Tsample$).
\begin{algorithm}
    \caption{Fixing Distribution on Support of Rows}
    \label{alg:changingsupportdistribution}
     \hspace*{\algorithmicindent} \textbf{Parameters:} A distribution $\mathcal{D}$ over $(\mathbb{Z}\backslash\{0\})^k$.
     %A good decoding matrix $V(W)\in (\zqz)^{r(q-1)\times h}$ and a distribution $\mathcal{D}$ over $(\mathbb{Z}\backslash\{0\})^k.$\\ 
     \hspace*{\algorithmicindent} \textbf{ }\\
     \hspace*{\algorithmicindent} \textbf{Input:} $\xi \in [(hq)^t)]^k$ %A sample $(a, y =  a^\top G(V)s  + e_i\pmod{q}).$
     \hspace*{\algorithmicindent} \textbf{ }\\
     \hspace*{\algorithmicindent} \textbf{Procedure:}\\
     \hspace*{\algorithmicindent} \textbf{1.} Sample $\rho \sim \cD$.\\
     \hspace*{\algorithmicindent} 
     \textbf{2. }Output $\xi' = T_\rho[\xi].$%\\
     %\hspace*{\algorithmicindent} 
     %\textbf{3. }Return $(a',y).$\\
\end{algorithm}
%The running time of \cref{alg:changingsupportdistribution} is clearly $\poly(n,r,q)$ for each sample. To show correctness, we need to show two things:

%\textbf{Decoding correctness:} We need to show that $y = (a')^\top G(V)s + e.$ This follows immediately from \cref{obs:transformationforsupport}.

It remains to show that the distributions are transformed correctly.
Assume $\xi = (\xi(1), \ldots, \xi(k)) \sim \unif([(hq)^t]^k)$.
Note that $\rho \sim \cD$ by construction.
Hence, the distribution on the support is correct.
Now, we need to show that $\xi' \sim \unif([(hq)^t]^k)$ as well and that it is independent from $\rho$.
This follows since $\xi$ is independent from $\rho$ and the operation $T_\rho$ is a bijection. 

\paragraph{Modifying~\cref{alg:sentencedecoding} and proof of~\cref{thm:decoding_matrix_and_alg}}

We change~\cref{alg:sentencedecoding} as follows:
Let $\xi \in [(hq)^t]^k$ be the $k$-tuple computed in Step 3 and suppose that the algorithm does not terminate on Step 4.
Run~\cref{alg:changingsupportdistribution} with input $\xi$ and let $\rho$ and $\xi'$ be its output.
In Step 5,
output instead $a \in (\zqz)^{(hq)^t}$ such that $a_{\xi'(\psi)} = \rho_\psi$ for $1\leq \psi \leq k$ and 0 otherwise.

Combining the first parts of~\cref{lemma:sentencedecodingguarantee,lem:word_shift} it follows that this modification runs in time $O(\Tsample +\poly(k,h,t,\log q))$ and its output $a$ is $k$-sparse and satisfies $a^\top G = b^\top$.
From the proof of~\cref{lemma:sentencedecodingguarantee} we know that if the input $b \sim \unif((\zqz)^L)$, then $\xi \sim \unif{[(hq)^t]^k}$.
By~\cref{lem:word_shift} it follows that $\xi' \sim \unif{[(hq)^t]^k}$ as well, $\rho \sim \cD$ and the two are independent.
It follows that the support of $a$ has size exactly $k$ and is uniformly at random and independent of $\rho$.
On its support it takes values $\rho$, thus, $a \sim \randomsupport(n,k,\cD)$. Note that probability of failure is the same as in the binary case as the additional word-shifting operation never fails.

With this, the proof of \cref{thm:decoding_matrix_and_alg} is complete.

\begin{comment}
\textbf{Distributional correctness:}
That is, we want to show that $a'$ has the correct distribution. Note that the values on its support are $\rho_1, \rho_2, \ldots, \rho_k,$ and these are sampled from $\mathcal{D}_k.$ Hence, the distribution on the support is correct. Now, we need to show that the support is uniform over the $k$-tuples of different coordinates in $[(r(q-1))^t]$ and is independent from $(\rho_1, \rho_2, \ldots, \rho_k).$ This follows since $a$ is uniform over $[r(q-1)]^{t\times k}$ and independent from $\rho$ and the operation $a' = T_{(\rho_1,\rho_2, \ldots, \rho_k )} [a]$ is a bijection. 
\end{comment}

\subsection{Variable Support Sizes}
\label{sec:variablesupportsizes}
In certain applications, one might be interested in sparse noisy linear equations where the support size varies between different samples.
This, for example, is the case in the instance of \cite{jain24sparseLWE} since the distribution of entries on the $k$ selected coordinates is $\unif(\zqz)$ and, hence, some entries might be zero.
Similarly, one may consider the ``binomial sparsity'' setting of LPN where each entry is independently $1$ with probability $k/n$.

There is a simple reduction from the variable-size support to the fixed-size support setting by keeping only the most frequent support size, thus losing at most a $1/n$ fraction of the samples. 
Yet, a reduction in opposite direction seems less trivial.

Nevertheless, our reduction can also easily accommodate the setting of supports of varying size. Specifically, suppose that $\mathcal{K}$ is some distribution over $\{0,1,2,\ldots n\}$ and $(\mathcal{D}_i)_{i = 0}^n$ is a family of distributions of values on the support, where $\mathcal{D}_i$ is a distribution over $\zqmult^i.$ Then, the respective problem denoted by 
$$
\NLE(\modulus = q,\Dcoeff=\randomsupport(n, \mathcal{K}, (\mathcal{D}_i)_{i = 0}^n), \Derr)
$$
with $m$ samples
is defined as follows. Independently for each $i \in [m],$ one first draws $k_i\sim \mathcal{K},$ then draws a uniformly random set of $k_i$ distinct indices $j_1, j_2, \ldots, j_{k_i}\in [n].$ Then, draws $\rho\sim \mathcal{D}_{k_i}, \rho \in \zqmult^{k_i}$ and sets $a\in (\zqz)^n$ such that $a_{j_\ell} = \rho_{\ell}$ for each $\ell \in [k_i]$ and $a_\xi = 0$ otherwise. Then, one draws $s\sim \unif((\zqz)^n).$  Finally, one draws an error $e_i \sim \Derr(\langle a_i,s\rangle)$ and computes the sample $(a_i, y_i)$ where $y_i = \langle a_i, s\rangle + e_i.$ The corresponding $\hzero$ model
is defined similarly by drawing the samples $a_i$ as described and the labels $y_1, y_2, \ldots, y_m\iidsim\unif(\zqz)$
The public matrix and labels are $(A,y).$

\paragraph{How to Modify Our Reduction.}

We modify our reduction to accommodate for varying support sizes as follows:
Suppose that the distribution of support sizes is $\mathcal{K}$ where with probability $1 - \eta$ with $\eta = o(1/m)$ the support size is at least $k_\minsf$ and at most $o(\sqrt{n}).$ Then, for each sample $b_i,$ we first draw $k_i\sim \mathcal{S}$ and then use the word-decoding and sentence-decoding procedures for $k_i.$ Note that if the conditions of our main primitives \cref{lem:samplingintro,lem:sizeestimationintro,lem:nearuniformityintro} hold for $k_\minsf,$ hey also hold for any $k \ge k_\minsf.$ With high probability, we never see $k_i<k_\minsf$ or 
$k_i = \Omega(\sqrt{n}).$ We illustrate with two examples.

\paragraph{Example 1: The setting of binomial support.} Suppose that $\mathcal{K} = \mathsf{Binomial}(k/n,n).$ This is the setting in which each entry is independently in the support with probability $k/n.$ If $k = \omega(\log n),$ Then, by simple Chernoff bound, with probability $1 - 2\exp(- k/12),$ it is the case that $k'\sim \mathcal{K}$ satisfies $k'\in [k/2, 2k].$ Hence, one can get a high-probability reduction for $m = o(\exp(k/12))$ samples. 

Note that up to the factor $12$ in the exponent, this behaviour is tight. Specifically, in the case of binomial support with probability $k(1- k/n)^{n-1} = \exp(-\Theta(k)),$ a sample has a single non-zero coordinate. Thus, with $\poly(n)\times\exp(\Theta(k)) = \exp(O(k))$ samples, for each $i \in [n],$ there exist $\poly(n)$ samples $a_j$ such that $a_j$ has $i$ 
as its unique non-zero coordinate. This makes detection and search trivial. 

\paragraph{Example 2: The setting of \cite{jain24sparseLWE}.}
Similarly, in the set-up of 
\cite{jain24sparseLWE}, the authors study 
sparse $a_i$ with uniform ``support'' of size $k,$ in which each value on the ``support'' is distributed as $\unif(\zqz).$ 
%$\randomsupport(n,k,\unif(\mathbb{Z}^k_q)).$ 
Equivalently, when $q$ is a prime, this is the distribution $\randomsupport(n,\mathcal{K},(\mathcal{D}_i)_{i = 0}^k)$ in which $k\sim \mathcal{K}$ is distributed as $\mathsf{Binomial}(1-1/q,k)$
and $\mathcal{D}_i = \unif((\zqz)\backslash\{0\})^i) = \unif(\zqmult^i).$ 

Again, note that with probability $k(1-1/q)(1/q)^k = \exp(-\Theta(k\log q)),$ a sample has a single non-zero coordinate. Thus, the largest number of samples for which one may expect computational hardness is $\exp(O(k\log q)).$ 
At the same time, with probability $1-2\exp(-\Omega(k\log q)),$ $k'\sim \mathcal{K},$
it is the case that $k'\in [k/2, k].$ This shows that our reduction is  tight up to the exact constant in the exponent.

\subsection{Decision, Search, and Refutation in Our Reduction}
\label{sec:reductionandtasks}
We now show that solving one of the decision, search, refutation problems on the sparse instance solves the same problem on the dense instance.
We begin by noting that with high probability our reduction preserves almost all of the input samples.

\begin{lemma} Suppose that the input to \cref{alg:fullreduction} consists of $m$ samples $\{(b_i, y_i)\}_{i = 1}^m$ such that $b_i \in (\zqz)^L, y_i \in \zqz.$ Let the number of samples in the output be $m_1.$ Then, with probability at least $1-\exp(- 2mL^{-2})$, $m_1\ge m(1- 3/L - k^2/n).$
\end{lemma}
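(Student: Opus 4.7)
The plan is to reduce this to a standard concentration argument for sums of independent Bernoulli variables. For each input sample index $i \in [m]$, let $X_i \in \{0,1\}$ be the indicator that \cref{alg:sentencedecoding}, when called on $b_i$ in Step 2.a of \cref{alg:fullreduction}, does \emph{not} output $\fail$. Then by construction $m_1 = \sum_{i=1}^m X_i$.

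First I would argue that the $X_i$ are mutually independent. Although the inputs $b_1, \ldots, b_m$ may be arbitrary (worst-case) or jointly distributed, the randomness used by \cref{alg:sentencedecoding} on each invocation (the internal coin flips of \cref{alg:samplingfrompreimage} for each of the $t$ words, the rejection coin flips inside \cref{alg:worddecoding}, and the $\rho \sim \cD$ draws for the word-shift step) is drawn fresh and independently across samples. Consequently, conditional on any realization of $(b_1, \ldots, b_m)$, the variables $X_1, \ldots, X_m$ are independent, and the bound I am about to derive holds pointwise in the inputs.

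Next, by the first part of \cref{thm:decoding_matrix_and_alg}, for every fixed $b_i \in (\zqz)^L$, the failure probability of \cref{alg:sentencedecoding} is at most $2/L + k^2/n$. Hence $\E[X_i] \ge 1 - 2/L - k^2/n$ for every $i$, and $\E[m_1] \ge m(1 - 2/L - k^2/n)$.

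Finally I would apply Hoeffding's inequality to the independent bounded variables $X_i \in [0,1]$ with deviation parameter $t = m/L$:
\[
\Pr\Bigl[m_1 \le \E[m_1] - m/L\Bigr] \;\le\; \exp\!\bigl(-2(m/L)^2/m\bigr) \;=\; \exp(-2mL^{-2}).
\]
On the complementary event we have $m_1 \ge \E[m_1] - m/L \ge m(1 - 2/L - k^2/n) - m/L = m(1 - 3/L - k^2/n)$, which is exactly the claimed bound. There is no real obstacle here; the only thing to be careful about is ensuring the independence of the $X_i$'s, which follows because the decoding subroutine is run on each sample with independent randomness and the per-sample failure bound from \cref{thm:decoding_matrix_and_alg} holds uniformly over worst-case $b_i$.
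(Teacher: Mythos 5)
Your proposal is correct and follows essentially the same route as the paper: decompose $m_1$ as a sum of independent per-sample non-failure indicators, lower-bound each success probability by $1-2/L-k^2/n$ via the decoding guarantee, and apply a concentration bound for independent $[0,1]$-valued variables with deviation $m/L$ (the paper phrases it via $1/4$-sub-Gaussianity, you via Hoeffding, which yields the identical bound $\exp(-2mL^{-2})$). Your explicit remarks that the fresh per-invocation randomness gives independence conditionally on worst-case $(b_1,\ldots,b_m)$, and that the failure bound holds uniformly in $b_i$, are points the paper leaves implicit but are exactly right.
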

% We remark that typically, $t = \log(n)^{O(1)}$ and $m =\Omega(n)$ (or even as large as $m = n^{k/(\log k + \log \log n)}$) so $\exp(- 2mL^{-8})$ is an exponentially small probability of failure.
\begin{proof} For each sample $b_i,$ let $X_i$ be the indicator that \cref{alg:sentencedecoding} does not fail on input $b_i.$ The variables $X_1, X_2, \ldots, X_m$ are independent Bernoulli random variables with $X_i\sim \Bern(p_i)$ for $p_i\ge 1- 2/L - k^2/n$ by \cref{lemma:sentencedecodingguarantee}. Hence, $\expect[\sum {X_i}]\ge m(1 - 2/L - k^2/n).$ As each $X_i$ is bounded in $[0,1],$ it is $1/4$-subgaussian. Hence,  $\sum {X_i}$ is $m/4$-subgaussian and so it deviates from its mean by $m/L$ with probability at most $\exp((mL^{-1})^2/(m/2)).$     
\end{proof}

In order to reduce to a fixed-size setting, our reduction will simply fail if there are less than $m_1\coloneqq \lfloor m(1 - 3/L - k^2/n)\rfloor $ successfully decoded samples and keep the first $\lfloor m(1 - 3/L - k^2/n)\rfloor $ samples in case there are at least that many samples.

\paragraph{Decision.} We observe that \cref{alg:fullreduction} maps 
$$
\hone\; : \;\NLE(\samples = m, \modulus = q, \Dcoeff = \unif((\zqz)^L),\Derr)
$$
to 
$$
\hone'\; : \;\NLE(\samples = m_1, \modulus = q, \Dcoeff = \randomsupport(n, k,\mathcal{D}),\Derr)
$$
and
$$
\hzero\; : \;  b_1,b_2,\ldots, b_m\iidsim\unif((\zqz)^L), y_1, y_2, \ldots, y_m\iidsim\unif(\zqz)
$$
to
$$
\hzero'\; : \; 
a_1,a_2,\ldots, a_{m_1}\iidsim\randomsupport(n,k,\mathcal{D}), y_1, y_2, \ldots, y_{m_1}\iidsim\unif(\zqz).
$$
Indeed, by \cref{thm:decoding_matrix_and_alg}, a sample $b_i\sim \unif(\mathbb{Z}^L_q)$ is mapped to a sample $a_i\sim\randomsupport(n, k,\mathcal{D}).$ Hence, the distribution of the samples is correct both under $\hzero,\hone.$ We need to also analyze the labels.

In the $\hzero$ case, we have that
$y \coloneqq (y_1, y_2,\ldots ,y_m)\sim \unif(\mathbb{Z}^m_q)$ and is independent of $b_1,b_2, \ldots, b_m.$ Let $y'\in \mathbb{Z}^{m_1}_q$ be the subset of labels corresponding to successfully decoded $a_i.$
As the success of $\dec_V$ is independent of labels, the vector $y'$ is independent of $a_1, a_2, \ldots, a_{m_1}$ and uniform over $\mathbb{Z}^{m_1}_q.$ Hence, so is $y' + Az.$ Thus, $\hzero$ is indeed mapped to $\hzero'.$

In the $\hone$ case, 
$y_i = b_i^\top s+ e_i = a_i^\top Gs + e_i = a_i^\top (Gs) + e_i.$ Hence, the final label vector $y'$ is
$A(Gs + z) +e'.$ Again, note that each entry of $e'$ is independent from $A$ and distributed according to $\Derr.$ Furthermore,
as $z\sim \unif((\zqz)^n)$,
$Gs + z\sim \unif((\zqz)^n)$ and is independent of $A,e.$ Hence, $\hone$ is indeed mapped to $\hone'.$ We record these observations as follows:

\begin{corollary}[Testing Reduction]
\label{cor:testingreduction}
Assume that the conditions in \cref{parameterdependenciess} hold and let $\cD$ be an arbitrary distribution over $\zqmult^k$ from which we can sample in time $\Tsample$.
Suppose that there exists an algorithm which solves in time $\mathcal{T}$ with advantage $\epsilon$ the following sparse testing problem with $m_1\coloneqq \lfloor m(1 - 3/L - k^2/n)\rfloor$ samples:
\begin{align*}
    & \hzero'\; : \; 
    a_1,a_2,\ldots, a_{m_1}\iidsim\randomsupport(n,k,\mathcal{D}), y_1, y_2, \ldots, y_m\iidsim\unif(\zqz),\\
    & \hone'\; : \;\NLE(\samples = m_1, \modulus = q, \Dcoeff = \randomsupport(n, k,\mathcal{D}),\Derr).
\end{align*}
Then, there exists an 
algorithm which solves in time $\mathcal{T} + \poly(n,m,\Tsample)$ with advantage $\epsilon- \exp(- 2mL^{-2})$ the following standard testing problem with $m$ samples:
\begin{align*}
    & \hzero\; : \; 
    b_1,b_2,\ldots, b_m\iidsim\unif((\zqz)^L), y_1, y_2, \ldots, y_m\iidsim\unif(\zqz),\\
    & \hone\; : \;\NLE(\samples = m, \modulus = q, \Dcoeff = \unif((\zqz)^L),\Derr).\\
\end{align*}
\end{corollary}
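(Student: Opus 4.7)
The plan is to use Algorithm~\ref{alg:fullreduction} verbatim as the reduction, padding with a fallback: run it on the input $\{(b_i, y_i)\}_{i=1}^m$; if strictly fewer than $m_1 = \lfloor m(1 - 3/L - k^2/n) \rfloor$ samples survive the per-sample sentence-decoding, output a uniformly random bit; otherwise invoke the assumed sparse testing algorithm on the first $m_1$ output pairs and return its answer. The run time is $\mathcal{T}$ plus $m$ invocations of Algorithm~\ref{alg:sentencedecoding}, each $\poly(n,\Tsample)$ by Theorem~\ref{thm:decoding_matrix_and_alg}, which absorbs into $\poly(n, m, \Tsample)$.

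The correctness argument splits into a sample-count step and a distributional step. For the first, the concentration lemma stated immediately above the corollary already guarantees that the survival count is at least $m_1$ except with probability $\exp(-2mL^{-2})$, under either hypothesis; this is the only source of the additive advantage loss in the conclusion. For the second, conditioned on this high-probability event, I want that the first $m_1$ surviving output pairs are distributed exactly as $\hone'$ under $\hone$ and exactly as $\hzero'$ under $\hzero$. Theorem~\ref{thm:decoding_matrix_and_alg} gives that each surviving $a_i$ is distributed as $\randomsupport(n,k,\cD)$, and since Algorithm~\ref{alg:sentencedecoding} uses fresh independent randomness per sample and its failure event depends only on $b_i$, the surviving coefficient vectors are i.i.d.

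The label analysis then finishes both hypotheses. Under $\hone$, expand $y_i' = a_i^\top z + y_i = a_i^\top z + b_i^\top s + e_i = a_i^\top(Gs + z) + e_i$; since the common fresh $z \sim \unif((\zqz)^n)$ is drawn once and makes $s' \coloneqq Gs + z$ uniform on $(\zqz)^n$ independently of everything else, the output pairs have joint distribution $\NLE(m_1, q, \randomsupport(n,k,\cD), \Derr)$, i.e.\ $\hone'$. Under $\hzero$, the $y_i$ are uniform and independent of the $b_i$'s and hence of the $a_i$'s, so $y_i'$ is uniform independent of $a_i$, matching $\hzero'$. The one step that must be executed carefully --- and the only one I would call a potential obstacle --- is the joint-independence bookkeeping showing that conditioning on the survival pattern does not covertly couple surviving labels to surviving coefficient vectors; this holds because the decoding success event is a function of $b_i$ alone (plus internal randomness of the decoder), hence independent of $e_i$ under $\hone$ and of $y_i$ under $\hzero$, so restricting to survivors and then trimming to the first $m_1$ preserves the target product distribution exactly.
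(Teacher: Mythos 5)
Your proposal is correct and follows essentially the same route as the paper: run \cref{alg:fullreduction}, declare failure (you output a random bit) when fewer than $m_1$ samples are successfully decoded, charge that event to the concentration lemma preceding the corollary, and use \cref{thm:decoding_matrix_and_alg} plus the observation that decoding success depends only on $b_i$ (not on $y_i$ or $e_i$) to conclude that $\hone$ maps to $\hone'$ via the secret $Gs+z$ and $\hzero$ maps to $\hzero'$. The independence bookkeeping you flag is exactly the point the paper makes in its decision-case analysis, so there is no gap.
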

\paragraph{Search.} As in the case of testing, $\hone$ is mapped to $\hone'$ with the new secret $s' = Gs + z.$ Note, however, that $z$ is chosen by the reduction \cref{alg:fullreduction} and is, hence, known. Therefore, if one finds $s' = Gs + z,$ one can infer $Gs.$ Finally, as $G$ is full rank according to \cref{lem:Ginvertibility}, one can infer $s$ from $Gs.$ We record as follows:

\begin{corollary}[Search Reduction]
\label{cor:searchreduction}
Assume that the conditions in \cref{parameterdependenciess} hold and let $\cD$ be an arbitrary distribution over $\zqmult^k$ from which we can sample in time $\Tsample$.
Suppose that there exists an algorithm which solves in time $\mathcal{T}$ with success probability $p$ the following sparse search problem with $m_1\coloneqq \lfloor m(1- 3/L - k^2/n)\rfloor$ samples:
\begin{align*}
    \NLE(\samples = m_1, \modulus = q, \Dcoeff = \randomsupport(n, k,\mathcal{D}),\Derr).
\end{align*}
Then, there exists an 
algorithm which solves in time $\mathcal{T} + \poly(n,m,\Tsample)$ with probability $p- \exp(- 2mL^{-2})$ the following standard search problem with $m$ samples:
\begin{align*}
    \NLE(\samples = m, \modulus = q, \Dcoeff = \unif((\zqz)^L),\Derr).
\end{align*}
\end{corollary}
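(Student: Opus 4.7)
The plan is to show that Algorithm~\ref{alg:fullreduction}, combined with a single oracle call to the sparse search solver and a cheap post-processing step, yields a search algorithm for the dense instance. I will largely mirror the argument given in the excerpt immediately preceding the corollary and in the decision case (\cref{cor:testingreduction}), tracking three quantities: (i) the distribution of the reduction's output, (ii) the probability that the reduction produces enough samples, and (iii) the complexity and correctness of recovering $s$ from the oracle's output.

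First, I would run \cref{alg:fullreduction} on the $m$ input samples $\{(b_i,y_i)\}_{i=1}^m \sim \NLE(\samples=m,\modulus=q,\Dcoeff=\unif((\zqz)^L),\Derr)$, using a fresh uniform $z\sim\unif((\zqz)^n)$. Let $m'$ be the number of samples it outputs. If $m'<m_1$, abort. Otherwise, pass the first $m_1$ output samples $\{(a_i,y_i')\}_{i=1}^{m_1}$ to the assumed sparse search oracle. Since the lemma preceding this corollary shows that $m'\geq m_1$ with probability at least $1-\exp(-2mL^{-2})$, this abort event contributes exactly the $\exp(-2mL^{-2})$ additive loss in the success probability.

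Next, I would verify that, conditional on $m'\geq m_1$, the $m_1$ passed samples are distributed exactly as $\NLE(\samples=m_1,\modulus=q,\Dcoeff=\randomsupport(n,k,\cD),\Derr)$ with a specific, \emph{known} secret transformation. By \cref{thm:decoding_matrix_and_alg}, each successfully decoded sample satisfies $a_i^\top G = b_i^\top$ with $a_i\sim\randomsupport(n,k,\cD)$, so
\begin{equation*}
y_i' \;=\; y_i + a_i^\top z \;=\; b_i^\top s + e_i + a_i^\top z \;=\; a_i^\top(Gs+z) + e_i,
\end{equation*}
and the error $e_i$ is preserved because the decoding step depends only on $b_i$, not on $y_i$. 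Setting $s^\star \coloneqq Gs+z$, the vector $s^\star$ is uniform over $(\zqz)^n$ and independent of everything else, exactly because $z$ is a fresh uniform vector. Hence the oracle's input is a bona fide sparse search instance, and with probability $p$ the oracle returns the true $s^\star$.

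Finally, given $s^\star$ and the self-chosen $z$, I would compute $Gs = s^\star - z$ and then recover $s$. Here is where I use \cref{lem:Ginvertibility}: its proof shows that $I_L$ appears as a submatrix of $G$, i.e., there exist $L$ explicit rows of $G$ that form the identity. Reading off the corresponding $L$ coordinates of $Gs$ directly yields $s$, so recovery is constant work per coordinate and no modular inversion (which would be problematic for composite $q$) is needed. Summing up: the reduction's overhead is $\poly(n,m,\Tsample)$ from \cref{thm:decoding_matrix_and_alg} (Algorithm~\ref{alg:sentencedecoding} plus the word-shift), the oracle call takes time $\mathcal{T}$, and the post-processing is $O(L\log q)$. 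The overall success probability is at least $p-\exp(-2mL^{-2})$ by a union bound over the abort event and the oracle's failure event, completing the reduction. I do not foresee a serious obstacle: the whole argument is essentially the template of \cref{cor:testingreduction} augmented by the observation that $G$'s identity submatrix makes inverting $s\mapsto Gs$ trivial.
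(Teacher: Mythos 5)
Your proposal is correct and follows essentially the same route as the paper: run \cref{alg:fullreduction} with a fresh $z$, abort if fewer than $m_1$ samples are decoded (an event of probability at most $\exp(-2mL^{-2})$), feed the resulting correctly distributed sparse instance with secret $s' = Gs+z$ to the sparse search oracle, subtract the known $z$, and recover $s$ from $Gs$ via \cref{lem:Ginvertibility}. Your explicit use of the identity submatrix of $G$ (taken from that lemma's proof) to read off $s$ coordinate-by-coordinate is a nice clarification that avoids any subtlety about inverting a full-rank matrix over $\zqz$ for composite $q$, but it is the same argument the paper intends.
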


\paragraph{Refutation.} Finally, consider the refutation problem \cref{def:refutation}. Observe that for any $b$ that is successfully decoded into $a$ such that $b^\top  = a^\top G,$ any $y\in \zqz$ and any $s\in (\zqz)^L,$ it is the case that 
$$
\mu(b^\top s -y) = 
\mu(a^\top Gs - y) = 
\mu\big(a^\top (Gs+z) - (y + a^\top z)\big) .
$$
In particular, this means that for any fixed secret $s,$ the transformation between dense and sparse linear equations preserves the weight $\mu(\cdot).$ As $\mu(\cdot)$ takes values in $[0,1],$ this means that if $a_1, a_2, \ldots, a_{m_1}$ are the output sparse vectors and $y'_1, y'_2, \ldots, y'_{m_1}$ the corresponding output labels, for any secret $s,$
$$
(m-m_1) + \sum_{i=1}^{m_1} 
\mu\big(a_i^\top (Gs+z) - y'_i\big)\ge 
\sum_{i = 1}^m 
\mu(b_i^\top s -y_i)\ge 
\sum_{i=1}^{m_1} 
\mu\big(a_i^\top (Gs+z) - y'_i\big).
$$
Recalling that $m_1= \lfloor m(1- 3/L - k^2/n)\rfloor,$ 
\begin{align*}
& 4/L+ 2k^2/n+
\frac{1}{m_1}
\sum_{i=1}^{m_1} 
\mu\big(a_i^\top (Gs+z) - y'_i\big)
\ge 
(1+4/L + 2k^2/n)
\frac{1}{m}
\sum_{i = 1}^m 
\mu(b_i^\top s -y_i)\\
&\ge 
\frac{1}{m_1}
\sum_{i=1}^{m_1} 
\mu\big(a_i^\top (Gs+z) - y'_i\big).
\end{align*}
In particular, this means that a $\delta$-\textsf{APPROXIMATELY SATISFIABLE} instance is mapped to a $\delta(1+4/L+ 2k^2/n)$-\textsf{APPROXIMATELY SATISFIABLE} instance whenever \cref{alg:fullreduction} produces at least $m_1$ samples, 
which happens with probability at least $1-\exp(-2mL^{-2}).$ 

Furthermore, a random dense instance $\hzero$ is mapped to a random sparse instance $\hzero'$ with $m_1= \lfloor m(1- 3/L - k^2/n)\rfloor$ samples with probability at least $1-\exp(-2mL^{-2}).$ Again, if the starting dense instance is $\Delta$-\textsf{UNSATISFIABLE}, then the target sparse dense instance is $(\Delta+4/L + 2k^2/n)$-\textsf{UNSATISFIABLE}.

Altogether, this means that under a very small blow-up of parameters, one can use a refutation algorithm for the sparse instance combined with our reduction to refute the dense instance (outputting \textsf{APPROXIMATELY SATISFIABLE} whenever \cref{alg:fullreduction} fails to produce $m_1$ samples).

\begin{corollary}[Refutation Reduction]
\label{cor:refutationreduction}
Assume that the conditions in \cref{parameterdependenciess} hold and let $\cD$ be an arbitrary distribution over $\zqmult^k$ from which we can sample in time $\Tsample$.
Suppose that there exists an algorithm which $(\mu,\delta,\Delta,p)$-refutes in time $\mathcal{T}$ the following sparse random instance with $m_1\coloneqq \lfloor m(1- 3/L - k^2/n)\rfloor$ samples:
\begin{align*}
    a_1,a_2,\ldots, a_{m_1}\iidsim\randomsupport(n,k,\mathcal{D}), y_1, y_2, \ldots, y_m\iidsim\unif(\zqz).
\end{align*}
Then, there exists an 
algorithm which 
$(\mu,\delta(1 - 3/L - 2k^2/n), \Delta + 3/L + 2k^2/n, p- \exp(-2mL^{-2}))$-refutes
in time $\mathcal{T} + \poly(n,m,\Tsample)$ the following standard random instance with $m$ samples:
\begin{align*}
    b_1,b_2,\ldots, b_m\iidsim\unif((\zqz)^L), y_1, y_2, \ldots, y_m\iidsim\unif(\zqz).\\
\end{align*}
\end{corollary}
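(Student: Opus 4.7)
The plan is to run \cref{alg:fullreduction} on the dense input to obtain sparse samples $(A,y')$; if the reduction returns at least $m_1$ samples we invoke the assumed sparse refutation algorithm on the first $m_1$ of them and report its output, otherwise we default to \textsf{APPROXIMATELY SATISFIABLE}. Two ingredients drive the analysis. The first is the quantitative sample-preservation bound already proved just before the statement: the number of successfully decoded samples is at least $m_1 = \lfloor m(1-3/L-k^2/n)\rfloor$ with probability at least $1 - \exp(-2mL^{-2})$. The second is the weight-preserving identity: for every decoded pair $(b_i,y_i) \mapsto (a_i, y_i' = y_i + a_i^\top z)$ with $b_i^\top = a_i^\top G$ and every $s\in (\zqz)^L$, setting $s'\coloneqq Gs+z$ yields
\[
 \mu(b_i^\top s - y_i) \;=\; \mu\bigl(a_i^\top(Gs+z) - (y_i + a_i^\top z)\bigr) \;=\; \mu(a_i^\top s' - y_i') \,.
\]
Since $\mu\in[0,1]$, summing over $i$ and accounting for at most $m - m_1$ dropped samples gives two-sided comparisons between the dense and sparse average weights, with additive slack on the order of $3/L + 2k^2/n$ and multiplicative slack $m/m_1$.

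For completeness, suppose the dense input is $\delta(1 - 3/L - 2k^2/n)$-\textsf{APPROXIMATELY SATISFIABLE} via some secret $s$. Dropping summands can only decrease the sum, so the surviving-sample sparse average under secret $s' = Gs+z$ is bounded by $(m/m_1)\cdot \delta(1-3/L-2k^2/n) \le \delta$ for the parameter regime of \cref{parameterdependenciess}. Hence the input to the sparse refutation algorithm is $\delta$-\textsf{APPROXIMATELY SATISFIABLE}, it outputs \textsf{APPROXIMATELY SATISFIABLE} with probability $1$, and so do we; the default branch when fewer than $m_1$ samples survive also outputs \textsf{APPROXIMATELY SATISFIABLE}, so completeness holds without any further error term.

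For soundness, condition on the good event that the reduction returns at least $m_1$ samples; by the distributional analysis already carried out for \cref{cor:testingreduction} (which used no structure beyond the fact that decoding is independent of the labels), the first $m_1$ outputs are then i.i.d.\ samples from the sparse null distribution $\randomsupport(n,k,\cD) \times \unif(\zqz)$. The sparse refutation algorithm outputs \textsf{STRONGLY UNSATISFIABLE} with probability at least $p$ on this input, and union-bounding with the bad event that fewer than $m_1$ samples survive gives overall success probability at least $p - \exp(-2mL^{-2})$. The stated $\Delta + 3/L + 2k^2/n$ lower bound for the dense null follows from the identity in the other direction: the sparse null has average $\mu$-weight $\ge \Delta$ for all $s'$, every dense witness $s$ embeds via $s' = Gs+z$, and the additive dropped-sample slack transfers this bound to the dense null.

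The only non-mechanical step is the parameter bookkeeping that collates the multiplicative $m/m_1$ and additive $(m-m_1)/m_1$ errors into the $3/L + 2k^2/n$ slacks advertised in the statement; this is routine once the weight-preserving identity is in place, and uses only the mild conditions $k = o(\sqrt{n})$ and $L \ge 1$ from \cref{parameterdependenciess}.
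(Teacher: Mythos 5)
Your proposal follows essentially the same route as the paper: run \cref{alg:fullreduction}, default to \textsf{APPROXIMATELY SATISFIABLE} whenever fewer than $m_1$ samples are successfully decoded, use the weight-preserving identity $\mu(b_i^\top s - y_i) = \mu\bigl(a_i^\top(Gs+z) - (y_i + a_i^\top z)\bigr)$ together with $\mu\in[0,1]$ to get the two-sided comparison of dense and sparse average weights with additive slack of order $(m-m_1)/m_1$, and fold the $\exp(-2mL^{-2})$ sample-preservation failure into the soundness probability by a union bound. The completeness argument (probability $1$, via the default branch), the soundness argument (the decoded instance is distributed exactly as the sparse null, independent of the labels), and the running-time accounting all coincide with the paper's proof.

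The one inaccurate step is your justification of the $\Delta + 3/L + 2k^2/n$ parameter. Dropping samples can only lose weight: from the sparse-null promise (average weight at least $\Delta$ for every sparse secret) the identity yields for the dense null only
\[
\frac{1}{m}\sum_{i=1}^m \mu(b_i^\top s - y_i) \;\ge\; \frac{m_1}{m}\cdot\frac{1}{m_1}\sum_{i=1}^{m_1}\mu\bigl(a_i^\top(Gs+z)-y_i'\bigr) \;\ge\; \frac{m_1}{m}\,\Delta \,,
\]
i.e.\ $\Delta$ \emph{minus} slack, not $\Delta$ plus slack, so the dense-null lower bound you claim does not follow ``from the identity in the other direction.'' The intended reading (consistent with \cref{def:refutation}, where $\Delta$ enters only through the ``suppose'' clause about the null distribution) is the reverse: the dense problem is posed with the stronger promise $\Delta + 3/L + 2k^2/n$, which under the same identity ensures the decoded instance (for secrets of the form $Gs+z$) retains the $\Delta$-unsatisfiability matching the sparse problem's promise. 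This is a bookkeeping slip rather than a flaw in the reduction, since the sparse refuter's soundness guarantee is invoked purely distributionally and never uses the $\Delta$ promise, so your stated completeness and soundness guarantees stand.
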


\section{Certain Parameter Instantiations}
\label{sec:parameters}
%\stefanc{Where do we actually use that $a \in (0,1/2)$ for the sub-exponential hardness assumptions? At least for LPN it seems like $\alpha > 1/2$ should also work, no?}
%\kirilc{Certainly no. On the one hand, we have $r^k\gg k^k,$ so 
%$(k\log n)^{1-\alpha}>k.$ On the other hand, $L^{1-\alpha}\gg \log n$ so  %$(k\log n)^{1-\alpha}>\log n.$ Thus, 
%$(k\log n )^{2(1-\alpha)}>k\log n.$
%}

Recall that we denote the time to produce a sample from $\mathcal{D}$ by $\Tsample$.
Our reductions \cref{cor:testingreduction,cor:searchreduction,cor:refutationreduction} run in time $\mathcal{T} = \poly(n,m,\Tsample).$ We will assume that $\Tsample = \poly(n),$ so the entire reduction runs in time $\poly(n,m).$
Hence, they imply hardness only if we assume that 
the starting standard (dense) problem  is hard in time  $\poly(n)$.
% which is on the order of $  \omega(\poly((rq)^k, n, m))$ over worst-case $d= (q-1)^k\le n^{k/t}$ and exponentially small $\eta = \exp(-mL^{-8}).$
We show how to choose our parameters based on standard assumptions about how much time is needed for the dense problem.
Recall that they need to satisfy \cref{parameterdependenciess}, restated below.
\restateparameters{parameterdependenciess}

In these parameter relations, $t = \frac{\log n}{\log h + \log q}$ and $L = \frac{h\log n}{\log h + \log q}.$ As the assumed hardness of standard problems increases with the dimension $L,$ we will aim to choose the largest $L$ possible. Clearly, 
$h\longrightarrow \frac{h\log n}{\log h + \log q}$ is increasing, so we will try to choose the largest $h$ possible that satisfies 
\cref{parameterdependenciess}.

% We only phrase the decisional versions, but one immeditaley gets the analogous serach versions where the ``advantage $ \varepsilon$'' is replaved by ``probability of success $ \varepsilon$''.

\medskip

In \lpn, we have $q = 2$ and in \lwe{}, a typical setting is $q = L^{O(1)} = (ht)^{O(1)}.$ We show that we can choose the following dimension in that case.

\begin{observation}
\label{eq:choiceofL}
If $q \le L^{\kappa}$ for some absolute constant $\kappa,$ we can choose $$L = \Theta\Bigg(\frac{k\log n}{\log k (\log k + \log \log n)}\Bigg).$$    
\end{observation}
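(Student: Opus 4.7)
The plan is to exhibit a specific choice of $h$ (the word length) from which the target $L$ follows, and then verify the two conditions of \cref{parameterdependenciess}. Since $L = ht = h\log n/(\log h + \log q)$ is monotonically increasing in $h$ (for fixed $n$ and $q$), the strategy is to pick $h$ essentially as large as the constraint $k \ge 4h(\log h + \log q + 2\log(ht))$ permits.

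First I will set $h = c k/(\log k + \log\log n)$ for an absolute constant $c = c(\kappa) > 0$ to be chosen small enough. With this choice, $\log h = \log k + \log c - \log(\log k + \log\log n)$, which is $\Theta(\log k)$ in the meaningful parameter regime (where $\log k$ dominates $\log\log k + \log\log\log n$; this is harmless because for $k$ smaller, the target sparsity-to-dimension ratio itself becomes trivial). I then use the crude upper bound $L = ht \le h\log n \le k\log n$, so $\log L \le \log k + \log\log n$, and the assumption $q \le L^\kappa$ translates into $\log q \le \kappa(\log k + \log\log n)$.

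Next I will verify the constraint by computing
\[
4h\bigl(\log h + \log q + 2\log(ht)\bigr) \;\le\; 4h\cdot O(\log k + \log\log n) \;=\; O\!\left(c k \cdot \frac{\log k + \log\log n}{\log k + \log\log n}\right) \;=\; O(ck),
\]
where the constant hidden in the $O(\cdot)$ depends only on $\kappa$. Choosing $c$ small enough (as a function of $\kappa$) forces this quantity below $k$, so the first part of \cref{parameterdependenciess} is satisfied. The second condition $k = o(\sqrt{n})$ is a hypothesis that we simply carry along.

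Finally, to evaluate $L = ht$, I will feed the now-established bound $\log L = O(\log k + \log\log n)$ back into $\log q \le \kappa\log L$. In the regime where the claimed $L$ is non-trivial one has $\log L = \Theta(\log k)$, hence $\log q = O(\log k)$ and $\log h + \log q = \Theta(\log k)$. Consequently $t = \log n/(\log h + \log q) = \Theta(\log n/\log k)$, and
\[
L \;=\; ht \;=\; \Theta\!\left(\frac{k}{\log k + \log\log n}\cdot \frac{\log n}{\log k}\right) \;=\; \Theta\!\left(\frac{k\log n}{\log k\,(\log k + \log\log n)}\right),
\]
as claimed. The main subtlety, and the only place where care is needed, is the circular dependence between $\log q$ and $\log L$: $\log q$ is bounded in terms of $\log L$, but $L$ itself depends on $\log q$ through $t$. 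I will resolve this by using a loose first-round bound $L \le k\log n$ to control $\log q$, carrying out the rest of the computation with this bound, and then checking self-consistency of the final expression.
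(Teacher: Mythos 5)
Your argument is essentially the paper's own proof: you pick $h=\Theta\big(k/(\log k+\log\log n)\big)$ with a constant depending on $\kappa$, verify the constraint $k\ge 4h(\log h+\log q+2\log(ht))$ of \cref{parameterdependenciess} by bounding each logarithmic term by $O(\log k+\log\log n)$, and then read off $t$ and $L=ht$. The only difference is bookkeeping for $\log q$: you use the crude bound $L\le k\log n$ to get $\log q\le\kappa(\log k+\log\log n)$, while the paper plugs in $q\le(ht)^{\kappa}$ directly, i.e.\ $\log q\le\kappa(\log h+\log t)$; both suffice for the constraint check.

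One caution about your final step. With the target value of $L$ one has $\log L=\Theta(\log k+\log\log n)$, not $\Theta(\log k)$, so your "self-consistency" pass does not actually yield $\log q=O(\log k)$ from the hypothesis $q\le L^{\kappa}$ alone: in the regime $k=(\log n)^{o(1)}$ with $q$ genuinely as large as $L^{\kappa}$, one gets $\log h+\log q=\Theta(\log\log n)\gg\log k$, hence $t=\Theta(\log n/\log\log n)$ and an $L$ smaller than claimed by a factor $\log\log n/\log k$. To be fair, the paper's computation has the same blind spot — it writes $t=\log n/\log h=\Theta(\log n/\log k)$, silently dropping $\log q$ and assuming $\log h=\Theta(\log k)$ — so your proof matches the paper's level of rigor; but strictly speaking the clean $\Theta$-statement needs $\log q=O(\log k)$ (equivalently $q=\poly(k)$, or $k\ge(\log n)^{\Omega(1)}$), and it would be worth stating that side condition explicitly rather than deriving it from $\log L=\Theta(\log k)$.
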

\begin{proof}
In that case the inequality $k \ge 4h(\log h + \log q + 2\log(ht))$ is satisfied if $k \ge (5\kappa + 5)(h \log h + h \log t) = \Theta(h\log h + h\log t).$

Choose $h = \lceil\frac{k}{((5\kappa + 5))^2(\log k + \log\log n)}\rceil.$ Hence, $t = \frac{\log n}{\log h} = \Theta(\frac{\log n}{\log k}).$ Thus,
\begin{align*}
& 4h(\log h+ \log q  + 2(\log ht)) \le 
(5\kappa +5)(h\log h  +  h \log t)\\
& \le 
(5\kappa +5) \lceil\frac{k}{(5\kappa +5)^2(\log k + \log\log n)}\rceil\log k\\
& \quad\quad\quad\quad\quad+ 
(5\kappa +5)\lceil\frac{k}{(5\kappa +5)^2(\log k + \log\log n)}\rceil\log \log n\le 
k.
\end{align*}
It follows that $
L = ht = \Theta(\frac{k\log n}{\log k (\log k + \log \log n)}).$
\end{proof}

\begin{corollary}[Exponential Hardness of \textbf{LWE}] 
\label{cor:explwe}
Assume \ref{itm:nearexplwe}: standard testing (respectively, search, refutation) for
$$\LWE(\samples = m,\modulus = q,\Dcoeff = \unif(\mathbb{Z}_q^L),\Derr)$$ is hard to solve with advantage $p$ (respectively, probability of success) in time $2^{o(L)}$ with access to $m$ samples for some $q = L^{O(1)}.$

Let $k,n$ be such that $k = \omega\big((\log \log n)(\log \log \log n)\big), k = o(\sqrt{n}).$ Let $\mathcal{D}$ be any distribution over $\zqmult^k$ from which we can sample in $\poly(n)$ time.

Then, 
testing (respectively, search and refutation) for 
$$
\LWE(\samples = m_1, \modulus = q, \Dcoeff = \randomsupport(n,k,\mathcal{D}),\Derr)
$$
 is hard to solve with advantage (respectively, probability of success) $p + \exp(-m/(k\log n)^2)$
 in time $2^{o(\frac{k\log n}{\log k(\log k + \log \log n)})}$   with access to $m_1 = m(1 - O(1/k + k^2/n))$ samples for $q = (k\log n)^{O(1)}.$
\end{corollary}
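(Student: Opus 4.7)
The plan is to apply the reductions from \cref{cor:testingreduction}, \cref{cor:searchreduction}, and \cref{cor:refutationreduction} with the parameter choice supplied by \cref{eq:choiceofL}. The hypotheses $k = o(\sqrt{n})$ and $q = (k\log n)^{O(1)}$ put us in the regime where \cref{eq:choiceofL} applies: the target dimension $L$ will itself be $\Theta(k\log n/(\log k(\log k + \log\log n)))$, so the condition $q \leq L^{O(1)}$ is exactly $q = (k\log n)^{O(1)}$. Thus we obtain integers $h, t, L$ satisfying \cref{parameterdependenciess} with $L = \Theta(k\log n/(\log k(\log k + \log\log n)))$, and any distribution $\mathcal{D}$ with $\Tsample = \poly(n)$ is admissible.

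Having fixed parameters, I will argue by contrapositive. Suppose $\mathcal{A}$ is an algorithm solving the sparse $k$-\lwespace problem with $m_1$ samples in time $T = 2^{o(L)}$ at the stated advantage (respectively, success probability). Plugging $\mathcal{A}$ into the relevant corollary yields an algorithm for standard $L$-dimensional \lwespace running in time $T + \poly(n, m, \Tsample) = T + \poly(n, m)$, using $\Tsample = \poly(n)$. Since the conclusion is vacuous if $m \geq 2^{\Omega(L)}$, we may restrict attention to $m \leq 2^{o(L)}$, so the combined runtime is $2^{O(\log n)} + T = 2^{o(L)}$ provided that $\log n = o(L)$. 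This contradicts assumption~\ref{itm:nearexplwe}, completing the proof.

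Two small bookkeeping items remain. First, I need $\log n = o(L)$, which (using the explicit form of $L$) is equivalent to $\log k(\log k + \log\log n) = o(k)$; this is precisely where the hypothesis $k = \omega(\log\log n \cdot \log\log\log n)$ enters, via a direct calculation treating the cases $\log k \lesssim \log\log n$ and $\log k \gtrsim \log\log n$ separately. Second, the additive loss $\exp(-2mL^{-2})$ from the reduction must be bounded by the claimed $\exp(-m/(k\log n)^2)$; this is immediate because $L = O(k\log n)$ yields $L^{-2} = \Omega((k\log n)^{-2})$. The sample count $m_1 = \lfloor m(1 - 3/L - k^2/n) \rfloor$ delivered by the reduction matches the stated $m_1 = m(1 - O(1/k + k^2/n))$ after absorbing lower-order logarithmic factors coming from $3/L$ into the $O(1/k)$ term. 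The only place the argument is not completely mechanical is verifying the first bookkeeping item; everything else is direct composition of the reduction with the parameter choice.
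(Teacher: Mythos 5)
Your proof is correct and follows essentially the same route as the paper: the paper's own proof of this corollary consists of exactly your two main observations, namely that the parameter choice of \cref{eq:choiceofL} gives $L = \Theta\big(\tfrac{k\log n}{\log k(\log k+\log\log n)}\big)$, and that the hypothesis $k = \omega((\log\log n)(\log\log\log n))$ is precisely what guarantees $2^{L} = \omega(\poly(n))$, so that the assumed $2^{o(L)}$-hardness of dense \lwespace dominates the $\poly(n,m,\Tsample)$ overhead of \cref{cor:testingreduction,cor:searchreduction,cor:refutationreduction}. Your extra bookkeeping on the advantage loss and the sample count is more explicit than the paper's two-line argument and is consistent with its stated conclusion.
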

\begin{proof} In \cref{eq:choiceofL}, $L = \Theta\Big(\frac{k\log n}{\log k (\log k + \log \log n)}\Big).$ We need
$2^{L} = \omega(\poly(n))$ for the assumed hardness to be stronger than the time of the reduction. This yields
$L = \omega(\log n)$ or $k = \omega((\log \log n)( \log \log\log n)).$
\end{proof}

\begin{corollary}[Near-Exponential Hardness of \textbf{LPN}] 
\label{cor:nearexplpn}
Assume \ref{itm:nearexplpn}: standard testing (respectively, search, refutation) for
$$\LPN(\samples = m,\modulus = 2,\Dcoeff = \unif(\mathbb{Z}_2^L),\Derr)$$ is hard to solve with advantage $p$ (respectively, probability of success) in time $2^{o(L/\log L)}$ with access to $m$ samples. 

Let $k,n$ be such that $k = \omega\big((\log \log n)^2(\log \log \log n)\big), k = o(\sqrt{n}).$

Then, 
testing (respectively, search and refutation) for 
$$
\LPN(\samples = m_1, \modulus = 2, \Dcoeff = \randomsupport(n,k,1^k),\Derr)
$$
 is hard to solve with advantage (respectively, probability of success) $p + \exp(-m/(k\log n)^2)$
 in time $2^{o(\frac{k\log n}{\log k(\log k + \log \log n)^2})}$   with access to $m_1 = m(1 - O(1/k + k^2/n))$ samples.
\end{corollary}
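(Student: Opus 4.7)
The plan is to instantiate the general reduction theorem (namely \cref{cor:testingreduction,cor:searchreduction,cor:refutationreduction}) with $q=2$ and $\mathcal{D} = 1^k$, then pick parameters $h,t,L$ so that the hardness of standard \lpn{} in dimension $L$ transfers to the strongest possible hardness for \klpn{} in dimension $n$. Since the blackbox reduction runs in time $\poly(n,m)$ and preserves the number of samples up to a $1-o(1)$ factor with exponentially small failure probability, the only work left is to chase parameters.

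First, I would invoke \cref{eq:choiceofL} with $\kappa = 0$ (since $q=2$ is constant) to set
\[
    h = \Big\lceil\tfrac{k}{C(\log k + \log\log n)}\Big\rceil, \qquad t = \Big\lfloor\tfrac{\log n}{\log h + 1}\Big\rfloor, \qquad L = ht
\]
for an appropriate absolute constant $C$. A direct check shows that \cref{parameterdependenciess} is satisfied: condition (1) holds by construction, and condition (2) follows from the choice of $C$ (together with the hypothesis $k = o(\sqrt{n})$). The resulting dimension is $L = \Theta\!\big(k\log n / (\log k (\log k + \log\log n))\big)$.

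Next, I would compute $\log L$. Since $L$ is polynomial in $k\log n$ (up to polylogarithmic factors), we have $\log L = \Theta(\log k + \log\log n)$, and hence
\[
    \frac{L}{\log L} \;=\; \Theta\!\left(\frac{k\log n}{\log k\,(\log k + \log\log n)^2}\right).
\]
Under assumption \ref{itm:nearexplpn}, the standard $L$-dimensional \lpn{} problem requires time $2^{\Omega(L/\log L)}$; by the reduction, solving \klpn{} in dimension $n$ with $m_1 = m(1-o(1))$ samples and advantage $p$ in time $\mathcal{T}$ yields an algorithm for the dense problem with advantage $p - \exp(-2mL^{-2})$ in time $\mathcal{T} + \poly(n,m)$. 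Setting $\mathcal{T} = 2^{o(L/\log L)}$ contradicts the assumption, giving the claimed lower bound for all three tasks simultaneously.

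The main technical point to verify is the restriction $k = \omega((\log\log n)^2 \log\log\log n)$. This arises because the reduction itself costs $\poly(n)$ time, and for the hardness assumption on the dense side to overwhelm this overhead we need $2^{\Omega(L/\log L)} = \omega(\poly(n))$, i.e.\ $L/\log L = \omega(\log n)$. Substituting the expression above for $L/\log L$ gives exactly $k = \omega((\log\log n)^2 \log\log\log n)$. For the advantage term $\exp(-2mL^{-2})$ to be negligible relative to $p$, one simply takes $m$ somewhat larger than $L^2 = \polylog(n) \cdot k^2$, which is consistent with the hardness claim since the lower bound holds irrespective of the number of samples fed to the solver. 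I do not anticipate any serious obstacle beyond bookkeeping; the only subtlety is making sure $\log L$ is handled carefully, as a naive bound $\log L \le \log(k\log n)$ would be too crude and would weaken the exponent by a spurious factor.
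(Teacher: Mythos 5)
Your proposal is correct and follows essentially the same route as the paper: choose $h = \Theta\big(k/(\log k + \log\log n)\big)$, $t = \Theta(\log n/\log h)$ as in \cref{eq:choiceofL} so that $L = \Theta\big(k\log n/(\log k(\log k+\log\log n))\big)$, note $\log L = \Theta(\log k + \log\log n)$ so $L/\log L = \Theta\big(k\log n/(\log k(\log k+\log\log n)^2)\big)$, plug into \cref{cor:testingreduction,cor:searchreduction,cor:refutationreduction}, and obtain the sparsity restriction from requiring $2^{\Omega(L/\log L)} = \omega(\poly(n,m))$, exactly as the paper does (modulo the cosmetic point that one should take $q = 2 \le L^{\kappa}$ with some fixed $\kappa > 0$ rather than $\kappa = 0$, and that the advantage bookkeeping is cleanest via $\exp(-2mL^{-2}) \le \exp(-m/(k\log n)^2)$ rather than by constraining $m$).
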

\begin{proof} In \cref{eq:choiceofL}, $L/\log L = \Theta\Big(\frac{k\log n}{\log k (\log k + \log \log n)^2}\Big).$
\end{proof}

% \begin{proof} Let $\Psi_n$ be a sufficiently slowly decaying function in $n$ such that $\Psi_n = o_n(1), \Psi_n = \omega(\frac{\log^2k}{\log n}).$
% In \cref{parameterdependenciess}, choose 
% \begin{equation}
%     \begin{split}
%         & h = \frac{1}{10}\times \frac{k\log n}{\log k + \log \log n}\Psi_n\\
%         & r = \exp(\frac{\log n}{\log k + \log \log n}\Psi_n)\text{ which together imply }\\
%         & t = \frac{\log k + \log \log n}{\Psi_n} \text{ and }
%         L = \frac{k\log n}{10}.
%     \end{split}
% \end{equation}
% The necessary conditions $r^k\geq 1024k^{2k}q^{2h}L^{20}, r\ge ht^2,$ and 
%     $r(q-1)\ge k^2t^5$ are trivially satisfied. Furthermore, 
% $\poly((2r)^k, n, m) = 2^{o(k\log n/(\log k + \log\log n))}= 2^{o(L/\log L)}$ is trivially satisfied. 
% \end{proof}
% \begin{proof} Let $\Psi$ be a sufficiently arbitrarily slowly decaying function in $n$ such that $\Psi_n = o_n(1), \Psi_n = \omega(\frac{\log^2k}{\log n}).$
% In \cref{parameterdependenciess}, choose 
% \begin{equation}
%     \begin{split}
%         & h = \frac{1}{10}\times \frac{k\log n}{\log k + \log \log n}\Psi_n\\
%         & r = \exp(\frac{\log n}{\log k + \log \log n}\Psi_n)\text{ which together imply }\\
%         & t = \frac{\log k + \log \log n}{\Psi_n} \text{ and }
%         L = \frac{k\log n}{10}.
%     \end{split}
% \end{equation}
% The necessary conditions $r^k\geq 1024k^{2k}q^{2h}L^{20}, r\ge ht^2,$ and 
%     $r(q-1)\ge k^2t^5$ are trivially satisfied. Furthermore, 
% $\poly((2r)^k, n, m) = 2^{o(k\log n/(\log k + \log\log n))}= 2^{o(L/\log L)}$ is trivially satisfied. 
% \end{proof}

\begin{corollary}[Sub-Exponential Hardness of\textbf{ LPN/LWE}] 
\label{cor:subexphardness}
Assume \ref{itm:subexplpn} or \ref{itm:subexplwe}: standard testing (respectively, search, refutation) for
$$\LPN(\samples = m,\modulus = 2,\Dcoeff = \unif(\mathbb{Z}_2^L),\Derr)$$ 
(respectively, \lwe{} for $q = L^{O(1)}$)
is hard to solve with advantage $p$ (respectively, probability of success) in time $2^{L^{1-\alpha}}$ with access to $m$ samples.

Let $k,n$ be such that $k = \omega( (\log n)^{\alpha/(1-\alpha)}(\log \log n)^2), k = o( \sqrt{n}).$ Let $\mathcal{D}$ be any distribution over $\zqmult^k$ from which we can sample in $\poly(n)$ time.

Then, testing (respectively, search and refutation), for 
$$
\LPN(\samples = m_1, \modulus = 2, \Dcoeff = \randomsupport(n,k,\mathcal{D}),\Derr)
$$
(respectively, \lwe{} with $q = (k\log n)^{O(1)}$)
 is hard to solve in time $2^{o(\frac{k^{1-\alpha}(\log n)^{1-\alpha}}{(\log k)^{1-\alpha}(\log k + \log \log n)^{1-\alpha}})}$ with $m_1=  m(1- O(1/k + k^2/n))$ samples and advantage $p + \exp(-m/(k\log n)^2).$   
\end{corollary}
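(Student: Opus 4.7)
The plan is to instantiate the parameter choice from \cref{eq:choiceofL} and then invoke the reduction corollaries \cref{cor:testingreduction,cor:searchreduction,cor:refutationreduction} exactly as in the proofs of \cref{cor:explwe,cor:nearexplpn}, but tracking the sub-exponential rate $2^{L^{1-\alpha}}$ instead of $2^{L}$ or $2^{L/\log L}$. The only non-trivial bookkeeping is checking that the stated restriction on $k$ is exactly what is needed for the reduction's polynomial overhead to be absorbed into the sub-exponential lower bound.

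First, I would set $h = \lceil k / (C (\log k + \log \log n)) \rceil$ for a suitable absolute constant $C = C(\kappa)$ as in the proof of \cref{eq:choiceofL}, so that $t = \Theta(\log n / \log k)$, the conditions in \cref{parameterdependenciess} are satisfied, and $L = ht = \Theta\!\bigl(\frac{k \log n}{\log k (\log k + \log \log n)}\bigr)$. This works identically for $q = 2$ (the \lpn{} case) and for $q = L^{O(1)}$ (the \lwe{} case) since \cref{eq:choiceofL} is written for any $q \le L^{\kappa}$. The condition $k = o(\sqrt{n})$ gives us the second half of \cref{parameterdependenciess}.

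Next, given an algorithm $\mathcal{A}$ running in time $\mathcal{T}$ that solves the sparse task with $m_1 = \lfloor m(1 - 3/L - k^2/n)\rfloor$ samples, I would apply the appropriate reduction corollary (\cref{cor:testingreduction} for decision, \cref{cor:searchreduction} for search, or \cref{cor:refutationreduction} for refutation) to get an algorithm for the dense task in time $\mathcal{T} + \poly(n, m, \Tsample) = \mathcal{T} + \poly(n, m)$ using the assumption $\Tsample = \poly(n)$. The additive loss in success probability / advantage is $\exp(-2m L^{-2})$, which matches the stated $\exp(-m/(k \log n)^2)$ after absorbing constants because $L = \Theta(k \log n / \polylog)$ and the denominator only grows. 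For refutation, the multiplicative/additive slack $O(1/L + k^2/n)$ in the $\delta, \Delta$ parameters is also benign.

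It remains to check that the sub-exponential hardness assumption $2^{L^{1-\alpha}}$ on the dense problem is strictly stronger than the $\poly(n, m)$ overhead of the reduction, so that a purported $\mathcal{T} = 2^{o(L^{1-\alpha})}$ algorithm would contradict \ref{itm:subexplpn}/\ref{itm:subexplwe}. This requires $L^{1-\alpha} = \omega(\log n)$, equivalently $L = \omega((\log n)^{1/(1-\alpha)})$. Plugging in $L = \Theta\bigl(k \log n / (\log k (\log k + \log \log n))\bigr)$, this becomes
\[
k = \omega\!\left( (\log n)^{\alpha/(1-\alpha)} \cdot \log k \cdot (\log k + \log \log n) \right),
\]
which is implied by the hypothesis $k = \omega((\log n)^{\alpha/(1-\alpha)} (\log \log n)^2)$ after a short case split on whether $\log k = O(\log \log n)$ (which is the typical range) or $k$ itself is so large that it dominates any polylogarithmic factor. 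Substituting back gives the stated lower bound
\[
2^{o(L^{1-\alpha})} \;=\; 2^{o\bigl( (k \log n)^{1-\alpha} / (\log k)^{1-\alpha} (\log k + \log \log n)^{1-\alpha} \bigr)}.
\]

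The only step that requires any care is the last one: verifying that the stated condition on $k$ is exactly what makes the polynomial reduction overhead negligible. This is a routine calculation but one should be careful that the $\Tsample = \poly(n)$ assumption is invoked explicitly and that the refutation slack parameters stay within the range required by \cref{def:refutation}; neither presents a real obstacle.
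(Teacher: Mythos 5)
Your proposal is correct and follows essentially the same route as the paper: instantiate $h,t,L$ via \cref{eq:choiceofL}, invoke \cref{cor:testingreduction,cor:searchreduction,cor:refutationreduction}, and require $L^{1-\alpha}=\omega(\log n)$ so that the $\poly(n,m)$ reduction overhead is dominated, which yields the stated restriction $k = \omega((\log n)^{\alpha/(1-\alpha)}(\log\log n)^2)$. The paper's proof is just a terser version of this computation (it omits the case split on $\log k$ versus $\log\log n$ and the bookkeeping of the advantage loss, both of which you handle correctly).
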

\begin{proof} In \cref{eq:choiceofL},
$L^{1-\alpha} = \Theta\Big(\frac{k^{1-\alpha}(\log n)^{1-\alpha}}{(\log k)^{1-\alpha}(\log k + \log \log n)^{1-\alpha}}\Big).$ We need this quantity to be at least $\poly(n).$ Hence, $\Theta\Big(\frac{k^{1-\alpha}(\log n)^{1-\alpha}}{(\log k)^{1-\alpha}(\log k + \log \log n)^{1-\alpha}}\Big) = \omega(\log n)$ which leads to\linebreak $k = \omega( (\log n)^{\alpha/(1-\alpha)}(\log \log n)^2).$
\end{proof}

\begin{corollary}[Quasi-Polynomial Hardness of\textbf{ LPN/LWE}] 
\label{cor:quasihardness}
Assume \ref{itm:quasilpn} or \ref{itm:quasilwe}: standard testing (respectively, search, refutation) for
$$\LPN(\samples = m,\modulus = 2,\Dcoeff = \unif(\mathbb{Z}_2^L),\Derr)$$ 
(respectively, \lwe{} for $q = L^{O(1)}$)
is hard to solve with advantage $p$ (respectively, probability of success) in time $2^{(\log L)^{1+c}}$ for some constant $c>0$ with access to $m$ samples.

Let $k,n$ be such that $k = \exp(\omega( (\log n)^{1/(1+c)})), k = o( \sqrt{n}).$ Let $\mathcal{D}$ be any distribution over $\zqmult^k$ from which we can sample in $\poly(n)$ time.

Then, testing (respectively, search and refutation), for 
$$
\LPN(\samples = m_1, \modulus = 2, \Dcoeff = \randomsupport(n,k,\mathcal{D}),\Derr)
$$
(respectively, \lwe{} with $q = (k\log n)^{O(1)}$)
 is hard to solve in time $2^{o\big((\log k )^{1+c}\big)}$ with $m_1=  m(1- O(1/k + k^2/n))$ samples and advantage $p + \exp(-m/(k\log n)^2).$   
\end{corollary}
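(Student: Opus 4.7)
The plan is to directly mirror the proof template used for Corollaries \cref{cor:explwe}, \cref{cor:nearexplpn}, and \cref{cor:subexphardness}: invoke \cref{eq:choiceofL} to instantiate the parameters, then verify that the resulting hardness bound dominates the $\poly(n)$ overhead of the reduction, and finally invoke Corollaries \cref{cor:testingreduction}, \cref{cor:searchreduction}, \cref{cor:refutationreduction}.

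First, I would set $L = \Theta\Paren{\tfrac{k \log n}{\log k\,(\log k + \log \log n)}}$ as given by \cref{eq:choiceofL}. Next, I would argue that in the regime $k = \exp(\omega((\log n)^{1/(1+c)}))$, the quantity $\log k$ dominates $\log \log n$: indeed, $\log k = \omega((\log n)^{1/(1+c)})$, and since $(\log n)^{1/(1+c)} = \omega(\log \log n)$ for any constant $c>0$, we have $\log k = \omega(\log \log n)$. Therefore
\[
\log L \;=\; \log k + \log \log n - \log \log k - \log(\log k + \log \log n) \;=\; \Theta(\log k),
\]
and consequently the assumed $2^{(\log L)^{1+c}}$ lower bound for the standard dense problem translates to a $2^{\Theta((\log k)^{1+c})}$ lower bound.

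The final step is to confirm that this hardness survives the $\poly(n,m)$ overhead of the reduction, i.e.\ that $2^{(\log L)^{1+c}} = \omega(\poly(n))$. This is equivalent to $(\log L)^{1+c} = \omega(\log n)$, i.e.\ $\log L = \omega((\log n)^{1/(1+c)})$, i.e.\ $\log k = \omega((\log n)^{1/(1+c)})$, which is exactly the assumption $k = \exp(\omega((\log n)^{1/(1+c)}))$. Applying \cref{cor:testingreduction}, \cref{cor:searchreduction}, and \cref{cor:refutationreduction} with these parameters then yields hardness of $2^{o((\log k)^{1+c})}$ for the $k$-sparse problem, with the advertised sample loss $m_1 = m(1 - O(1/k + k^2/n))$ and additive degradation $\exp(-m/(k\log n)^2)$ in the advantage/success probability.

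I do not anticipate any serious obstacle here; everything is a routine instantiation following the same recipe as the earlier corollaries. The only minor subtlety is checking that $\log \log n$ is absorbed into $\log k$ in the chosen regime so that $\log L = \Theta(\log k)$, which is immediate from the quantitative bound on $k$. Also worth noting is that for the \lwe{} variant, the modulus $q = (k\log n)^{O(1)}$ is polynomial in $L$, so the hypothesis $q = L^{O(1)}$ required by \cref{eq:choiceofL} is satisfied.
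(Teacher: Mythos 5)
Your proposal is correct and follows essentially the same route as the paper: instantiate $L$ via \cref{eq:choiceofL}, observe that $(\log L)^{1+c} = \Theta((\log k + \log\log n)^{1+c}) = \Theta((\log k)^{1+c})$ in the stated regime, require this to be $\omega(\log n)$ to beat the $\poly(n,m)$ reduction overhead (which is exactly the condition $k = \exp(\omega((\log n)^{1/(1+c)}))$), and then invoke \cref{cor:testingreduction,cor:searchreduction,cor:refutationreduction}. The only difference is cosmetic: the paper keeps the bound in the form $(\log k + \log\log n)^{1+c}$ and derives the constraint on $k$ from it, while you absorb $\log\log n$ into $\log k$ first; both are valid.
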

\begin{proof} In \cref{eq:choiceofL},
$(\log L )^{1+c} = \Theta\Big((\log k + \log \log n)^{1+c}\Big).$ We need this quantity to be $\omega(\log n).$ Hence, $\Theta\Big((\log k + \log \log n)^{1+c}\Big) = \omega(\log n)$ which leads to $k = \exp(\omega((\log n)^{1/(1+c)})).$
\end{proof}

\section{Hardness of Noisy Tensor Completion}
\label{sec:tensorcompletion}
\restatetheorem{thm:tensorcompletion}

We note that assuming hardness of the distinguishing problem for \lwe, we obtain hardness for the respective distinguishing problem for tensor completion and, hence, for \cref{problem:tensorcompletion}.

Our algorithm is simple and mimics the relationship between \klpn{} and noisy \kxor{} in the paragraph on lower bounds in \cref{sec:definingtensorcompletion}. It can be summarized succinctly as follows. For a complex number $z,$ denote by $\Re(z)$ its real part.

\begin{algorithm}
    \caption{Reduction from Sparse Binary Noisy Linear Equations to Tensor Completion}
    \label{alg:tensorcompletion}
     \hspace*{\algorithmicindent} \textbf{Input:} $m$ samples $(a_j, y_j = \langle a_j, s\rangle + e_j)_{j = 1}^m$ from $
     \NLE( m,q,\randomsupport(n,k,1^k), \mathcal{E}).
     $\\
     \hspace*{\algorithmicindent} \textbf{ }\\
     \hspace*{\algorithmicindent} \textbf{Procedure: }For each $j\in [m]:$\\
     \hspace*{\algorithmicindent} 
     \hspace*{\algorithmicindent}
     \textbf{1.} Compute the support $(t_j^1, t_j^2, \ldots, t_j^k)$ of $a_j.$\\
     \hspace*{\algorithmicindent} 
     \hspace*{\algorithmicindent}
     \textbf{2.} Compute $\Re(\exp(2\pi i y_j/q )).$\\
     \hspace*{\algorithmicindent} 
     \hspace*{\algorithmicindent} \textbf{ }\\
     \hspace*{\algorithmicindent} \textbf{Output: } $\Big\{(t_j^1, t_j^2, \ldots, t_j^k),\Re\big(\exp(2\pi i y_j/q )\big) \Big\}_{j= 1}^{m}.$
\end{algorithm}

\subsection{Analysis of the Reduction}

\paragraph{The Null Distribution.}
First consider the case when each sample is drawn from \linebreak $\randomsupport(n,k,1^k) \times \unif(\zqz)$.
Clearly, all entries are independent and the indices $(t_j^1, \ldots, t_j^k)$ have the correct distributions.
Further, each tensor entry we output is distributed as $\cos(2 \pi \tfrac{x}{q})$, where $x$ is uniform over $\zqz$.
Note that as $q \rightarrow \infty$, the expectation converges to $\mathbb{E}_{\phi \sim \unif([0,2\pi])} \cos(\phi) = 0$ (since the sum approximates this integral).
It follows that the mean of each entry is $o_q(1)$, since $q$ is growing with $n$, it is also $o(1) = o_n(1)$.
An analogous argument shows that the variance is at least $\mathbb{E}_{\phi \sim \unif([0,2\pi])} \cos^2(\phi) - o(1) = \Omega(1)$.

\paragraph{From Sparse Binary LWE to Complex Tensors.}
The main idea is that each $\langle a_i, s\rangle $ is a sum of $k$ uniformly random terms. This is the same as the tensor, except that we take a sum instead of product. We can fix this easily by taking the multiplicative group $\zqz$ rather than the additive, i.e. represent each $s_t$ by $\exp(2\pi i s_t/q).$ Now, complexify $s$ by  
$$
z \coloneqq \Big(\exp\Big(\frac{2\pi i s_1}{q}\Big),\exp\Big(\frac{2\pi i s_2}{q}\Big),\ldots, \exp\Big(\frac{2\pi i s_n}{q}\Big)\Big).
$$ 
Then, 
$$\exp\Big(\frac{2\pi i y_j}{q}\Big) 
= \exp\Big(\frac{2\pi i}{q}\big(e_j + \sum_{\ell = 1}^k s_{t^\ell_j}\big) )\Big) = 
\prod_{\ell = 1}^k z_{t^\ell_j} \times 
\exp\Big(\frac{2\pi i e_j}{q}\Big)\,.
$$
Now, let $\Delta_{t_j^1, t_j^2, \ldots, t_j^k} \coloneqq \prod_{\ell = 1}^kz_{t_j^\ell}\times \exp(2\pi i e_j/q)  - \prod_{\ell = 1}^kz_{t_j^\ell}.$ Note that 
\begin{equation}
\label{eq:errorcomplexifiedissmall}
\begin{split}
 |\Delta_{A_j}| &= \Big|\prod_{\ell = 1}^kz_{t_j^\ell}\times \exp(2\pi i e_j/q)  - \prod_{\ell = 1}^kz_{t_j^\ell}\Big|\\
& = 
|\exp(2\pi i e_j/q) - 1|\le 
|\cos(2\pi i e_j/q) - 1| + 
|\sin(2\pi ie_j/q)|\le 
4\pi\times \max(|e_j|, |q-e_j|)/q. 
\end{split}
\end{equation}
Hence, under this transformation, $\Big\{\big((t_j^1, t_j^2, \ldots, t_j^k),\exp(2\pi i y_j/q ) \Big\}_{j= 1}^{m}$ exactly corresponds to the entries of the 
noisy complex tensor of order $k$ and rank 1 given by
$
T_{\mathbb{C}} \coloneqq  z^{\otimes k} + \Delta.
$

\paragraph{From a Complex Tensor to a Real Tensor.} Allowing to blow-up the rank to $2^{k-1},$ we obtain a real tensor by simply taking the real part. I.e., if $z = u + iv,\Delta = \Xi + i \Psi,$ we have that 
\begin{equation}
\begin{split}
& \Re(z^{\otimes k} + \Delta)= 
\Re((u + iv)^{\otimes k} + \Xi + i \Psi)\\
& = 
\sum_{h \in \{0,1\}^k \; : \; 1^Th \equiv 0 \pmod{2}} (-1)^{1^Th/2}
(u + h_1 v)\otimes (u + h_2 v )\otimes \cdots\otimes (u + h_k v) + \Xi \eqqcolon T_{\mathbb{R}}.
\end{split}
\end{equation}

Our output will be 
$$\Big\{\big((t_j^1, t_j^2, \ldots, t_j^k),(T_{\mathbb{R}})_{t_j^1, t_j^2, \ldots, t_j^k}) \Big\}_{j= 1}^{m}.$$
What is left is to verify the incoherence and noise rate assumptions.

\paragraph{Assumption 1: Incoherence.}
Each $u_t,v_t$ for $1\le t \le n$ has the same distribution as $\cos(2\pi \tfrac x q)$ (or $\sin(2\pi \tfrac x q)$ respectively) for $x \sim \unif(\zqz)$.
Hence, using the same argument as for the null case, the random variables $u_t^2$ (respectively $v_t^2$) are bounded and have mean $\Omega(1)$. 
Thus, with exponentially high probability, $\|u\| = \Theta(\sqrt{n})$ and $\|v\| = \Theta(\sqrt{n})$.
Hence, the incoherence is $O(1)$ with exponentially high probability. 
\paragraph{Assumption 2: Error Size.}
For each $(t^1_j, t^2_j,\ldots, t^k_j),$ by \cref{eq:errorcomplexifiedissmall}, 
$$
|\Xi_{t^1_j, t^2_j,\ldots, t^k_j}|\le 
|\Delta_{t^1_j, t^2_j,\ldots, t^k_j}|\le 
4\pi\times \max(|e_j|, |q-e_j|)/q.
$$
Hence, a straightforward concentration argument shows that $$\|\Delta\|_1 = O(n^k\times \expect_{e\sim \mathcal{E}}[\max(|e|, |q - e|)]) = O(n^k\alpha)$$ with high probability. At the same time, with high probability, $\|T_{\mathbb{R}}\|_1 = \Theta(n^k).$ Altogether, it follows that $\|\Delta\|_1 / \|T_{\mathbb{R}}\|_1 = O(\alpha).$
% \begin{align*}
%     & \frac{1}{n^k}
%     \sum_{t^1, t^2,\ldots, t^k}
%     |\Xi_{t^1_j, t^2_j,\ldots, t^k_j}|\\
%     & \le 
%     \frac{2\pi}{n^k}
%     \sum_{t^1, t^2,\ldots, t^k}
%     \max(|e_j|, |q-e_j|)/q.
% \end{align*}
% The latter expression is the average of 
% $n^k$ independent terms in the interval $[0,1],$ each with expectation $\alpha.$ Hence, it concentrates around $2\pi \alpha$

\subsection{Extensions}

\paragraph{Complex Tensors.} We extract from our proof a reduction to complex tensor completion for $T_{\mathbb{C}}$. Complex tensor completion is analogously defined to the real tensor completion. 

\begin{theorem}
\label{thm:tensorcompletioncomplex}
There exist an algorithm running in time $O(m\log q)$ which on input
$$
\NLE(\samples = m,\modulus = q,\sampledistribution =\randomsupport(n,k,1^k),\errordistribution = \mathcal{E})
$$
produces $m$ iid sampled entries from an order-$k$ rank-$1$ complex noisy tensor $T$ with noise $\Delta.$ Furthermore, $T, \Delta$ with high probability satisfies the following assumptions: 
\begin{enumerate}
    \item \emph{Incoherence:} The tensor is $O(1)$-incoherent.
    \item \emph{Noise:} The noise $\Delta$ satisfies 
    $\|\Delta\|_1/\|T\|_1 = O(\beta),$ where 
    $\beta\coloneqq \expect_{e\sim \mathcal{E}}[\min(|e|, |q-e|)/q].$
\end{enumerate}
\end{theorem}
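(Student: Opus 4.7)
The plan is to observe that the desired reduction is essentially the intermediate complex tensor construction already appearing inside the proof of \cref{thm:tensorcompletion}, stopped \emph{before} the ``realification'' step that inflates the rank from $1$ to $2^{k-1}$. Concretely, I would modify \cref{alg:tensorcompletion} so that Step~2 outputs the complex number $\exp(2\pi i y_j/q)$ directly rather than its real part, and then analyze this unmodified output as a complex tensor.

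The rank-$1$ structure is immediate. Define $z \in \mathbb{C}^n$ by $z_t := \exp(2\pi i s_t/q)$, so that $|z_t|=1$ for all $t$. Because $a_j$ is a $k$-sparse $0/1$ vector with support $(t_j^1,\dots,t_j^k)$, the identity $y_j = e_j + \sum_{\ell=1}^k s_{t_j^\ell}$ together with multiplicativity of $\exp(2\pi i \cdot /q)$ yields
\[
\exp(2\pi i y_j/q) \;=\; \exp(2\pi i e_j/q)\cdot \prod_{\ell=1}^k z_{t_j^\ell} \;=\; (z^{\otimes k})_{t_j^1,\dots,t_j^k} \;+\; \Delta_{t_j^1,\dots,t_j^k},
\]
where $\Delta_{i_1,\dots,i_k} := \big(\exp(2\pi i e_{(i_1,\dots,i_k)}/q) - 1\big)\cdot \prod_\ell z_{i_\ell}$ on observed entries (and extended arbitrarily to a fixed tensor on the remaining entries so that the samples are iid entries of $T_{\mathbb{C}} = z^{\otimes k} + \Delta$). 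Thus the algorithm exactly outputs iid uniformly sampled entries of an order-$k$, rank-$1$ complex noisy tensor.

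For the two assumptions, incoherence is essentially for free: since $|z_t| = 1$ for every $t$, one has $\|z\|_\infty = 1$ and $\|z\|_2 = \sqrt{n}$, so $\|z\|_\infty \le \tfrac{1}{\sqrt{n}}\|z\|_2$, i.e.\ $z$ is $1$-incoherent deterministically. For the noise bound, the key estimate is $|\exp(2\pi i e_j/q) - 1| = 2|\sin(\pi e_j/q)| \le 2\pi \min(|e_j|, q-|e_j|)/q$, which combined with $|\prod_\ell z_{i_\ell}| = 1$ gives $|\Delta_{t_j^1,\dots,t_j^k}| = O(\min(|e_j|, q-|e_j|)/q)$. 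Summing over all $n^k$ entries and applying a standard concentration argument (the entries of $\Delta$ are bounded by $2$ and have expected magnitude $O(\beta)$) yields $\|\Delta\|_1 = O(\beta\cdot n^k)$ with high probability. Since $|(z^{\otimes k})_{i_1,\dots,i_k}| = 1$ everywhere, $\|T_{\mathbb{C}}\|_1 = \Theta(n^k)$, giving $\|\Delta\|_1/\|T_{\mathbb{C}}\|_1 = O(\beta)$.

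The only step that requires any care is the one already subtle in the proof of \cref{thm:tensorcompletion}: correctly obtaining the $\min(|e|, q-|e|)/q$ dependence from the chord-length formula $|e^{i\theta}-1| = 2|\sin(\theta/2)|$, rather than the weaker $\max$ bound; once this is in hand the noise and signal estimates, as well as the iid sampling structure, transfer verbatim from the proof of \cref{thm:tensorcompletion}, simply skipping the decomposition $\mathrm{Re}(u+iv)^{\otimes k} = \sum_h (-1)^{\mathbf{1}^\top h/2}(u+h_1 v)\otimes\cdots\otimes(u+h_k v)$ that was responsible for the $2^{k-1}$ rank blow-up in the real case.
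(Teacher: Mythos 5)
Your proposal is correct and takes essentially the same route as the paper, which proves \cref{thm:tensorcompletioncomplex} precisely by extracting the intermediate complex tensor $T_{\mathbb{C}} = z^{\otimes k} + \Delta$ from the proof of \cref{thm:tensorcompletion} before the real-part/rank-$2^{k-1}$ step, with the same unit-modulus incoherence observation and the same chord-length noise estimate. Your insistence on the bound $|\exp(2\pi i e/q)-1| = 2|\sin(\pi e/q)| \le 2\pi\min(|e|,q-|e|)/q$ is the right reading: the paper's displayed inequality writes $\max$ where the theorem statements (correctly) use $\min$.
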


\paragraph{Gaussianizing the Components.} One can easily turn the components into jointly Gaussian. Namely, one first samples independent Gaussians  $g,h\iidsim \mathcal{N}(0,I_n)$ in \cref{alg:tensorcompletion}. The only difference is in the output --- it consists of the samples
$$\Big\{(t_j^1, t_j^2, \ldots, t_j^k),\quad\Re\Big(\exp(2\pi i y_j/q )\times\prod_{\ell = 1}^k(g_{t_j^\ell} + i\times h_{t_j^\ell})\Big) \Big\}_{j= 1}^{m}.$$

The rotational invariance of 
the Gaussian distribution shows that $\widetilde{z} = \widetilde{u} + i\widetilde{v},$ where $\widetilde{u}, \widetilde{v}\iidsim \mathcal{N}(0,I_n).$ In particular, this means that:
\begin{itemize}
    \item In the complex case, the single component is drawn from the standard complex Gaussian distribution.
    \item In the real case, each component is of the form $\widetilde{u} + h \widetilde{v}$ for $h \in \{0,1\}$ where $\widetilde{u}, \widetilde{v}\iidsim \mathcal{N}(0,I_n).$ Thus, the components are jointly Gaussian. 
\end{itemize}  

The noise assumption  $\|\Delta\|_1/\|T\|_1 = O(\alpha),$ continues to hold but the 
incoherence increases to $O(\sqrt{\log n})$. This does not seem to have computational or information-theoretic implications.

\paragraph{Symmetrizing the Components.} In \cite[Lemma A.2.]{barak16noisytensor}, the authors show a black-box reduction from a tensor-completion problem into a symmetric tensor completion problem in the case when $k$ is odd with only a $2^{k-1}$-factor blow-up in the dimension. A symmetric tensor is such that $u^1_j = u^2_j = \cdots = u^k_j\; \forall j.$ This reduction also applies to our setting.

\subsection{Implied Hardness}

\paragraph{Hardness from Standard LWE:} We can now immediately compose \cref{thm:tensorcompletion} with \cref{cor:explwe,cor:subexphardness,cor:quasihardness} to get hardness for \cref{problem:tensorcompletion} and obtain the results in \cref{sec:resultsontensor}. We note that in the case of LWE,
$\|\Delta\|_1/\|T\|_1 = O(\beta) = O(\sqrt{L}/q)$ as in \cite{regev05LWE}.

\paragraph{Hardness from Standard LPN:}
In the case of LPN, we simply use the equivalence in \cref{sec:definingtensorcompletion} between noisy $k$-XOR and tensor-completion.
Each entry of $T$ is $1.$ Each entry of $\Delta$ is independently $0$ with probability $1/2 +\gamma$ and $\pm 2$ with probability $1/2-\gamma.$ Hence, each entry has absolute value $1 - 2\gamma$ in expectation. With high probability, the average of $m$ entries is $1 - 2\gamma + \tilde{O}(1/\sqrt{m}).$

% \newpage
% \input{ApplicationsCrypto}

\newpage
\printbibliography

\appendix

\newpage

\section{Low-Degree Hardness of Sparse Linear Equations} 
\label{sec:low_degree_hardness}
  \addtocontents{toc}{\protect\setcounter{tocdepth}{1}}%
  \setcounter{tocdepth}{1}
  
  \subsection{Preliminaries on Low-Degree Polynomial Hardness}
\emph{Low-degree polynomial hardness},
emerging from high-dimensional statistics \cite{hopkins2017efficient,hopkins2017power,barak19sos,hopkinsThesis}, is an increasingly popular framework for providing evidence for hardness of statistical problems. 

The central observation is that functions that depend on  very few bits are cheap to evaluate. Hence, so are linear combinations of such functions. We formalize this notion by \emph{coordinate degree}. 
\begin{definition}[e.g.~\cite{ODonellBoolean}] We say that a function $f:\mathcal{X}^n \longrightarrow \mathbb{R}$ has coordinate degree at most $d$ if it can be represented as:
$$
f(x_1, x_2, \ldots, x_n) = 
\sum_{S\subseteq [n]\; : \; |S|\le d}
f_S((x_i)_{i\in S})
$$
\end{definition}
Assuming $\exp(\tilde{O}(d))$ evaluation time of a function in at most $d$ variables, a coordinate-degree $d$ function can be evaluated in time $\exp(\tilde{O}(d)).$

The abelian group structure of $\mathbb{Z}^n_q$ gives a nice representation of coordinate-degree-$d$ functions over $\mathbb{Z}_q^n.$

\begin{lemma}[e.g.~\cite{ODonellBoolean}]
\label{lemma:lowdegreeinzq}
Every coordinate-degree-$d$ function $f:\mathbb{Z}_q^n\longrightarrow \mathbb{R}$ has a unique representation of the form 
$$
f(x) = 
\sum_{v\in \mathbb{Z}^n_q\; : \; |\support(v)|\le d}c_v\exp\big(\frac{2\pi i \langle x,v\rangle}{q}\big)\quad\quad\quad 
\forall x\in \mathbb{Z}_q^n.
$$
\end{lemma}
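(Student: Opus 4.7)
The statement is a coordinate-support version of the standard Fourier expansion over the finite abelian group $\mathbb{Z}_q^n$, so I would prove it by first recalling the group Fourier basis and then translating the combinatorial condition ``coordinate degree $\le d$'' into a condition on the Fourier support. Concretely, the approach has three steps: (i) set up the Fourier basis; (ii) show that any function of at most $d$ variables has a Fourier expansion supported on $v$ with $|\mathrm{supp}(v)|\le d$; (iii) deduce both existence and uniqueness of the desired representation for general coordinate-degree-$d$ functions.

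\textbf{Step 1: Fourier basis on $\mathbb{Z}_q^n$.} For each $v\in\mathbb{Z}_q^n$ define $\chi_v(x)=\exp(2\pi i\langle x,v\rangle/q)$. Equip $\mathbb{C}^{\mathbb{Z}_q^n}$ with the inner product $\langle f,g\rangle=\mathbb{E}_{x\sim\unif(\mathbb{Z}_q^n)}[f(x)\overline{g(x)}]$. A direct computation, using the geometric-series identity $\sum_{a\in\mathbb{Z}_q}\exp(2\pi i a b/q)=q\cdot\indicator[b=0]$ coordinate by coordinate, yields $\langle \chi_v,\chi_w\rangle=\indicator[v=w]$. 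Hence $\{\chi_v\}_{v\in\mathbb{Z}_q^n}$ is an orthonormal basis of the $q^n$-dimensional space $\mathbb{C}^{\mathbb{Z}_q^n}$, so every $f:\mathbb{Z}_q^n\to\mathbb{R}\subseteq\mathbb{C}$ admits a unique expansion $f=\sum_{v\in\mathbb{Z}_q^n}c_v\chi_v$ with $c_v=\langle f,\chi_v\rangle$. This already gives uniqueness of any representation of the claimed form.

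\textbf{Step 2: Functions of a fixed variable set.} Suppose $g:\mathbb{Z}_q^n\to\mathbb{C}$ depends only on coordinates in $S\subseteq[n]$, meaning $g(x)$ is determined by $x_S\coloneqq(x_i)_{i\in S}$. Write $g=\sum_v c_v\chi_v$. For any $v$ with $\mathrm{supp}(v)\not\subseteq S$, pick $j\in\mathrm{supp}(v)\setminus S$ and average the coefficient formula over the $j$-th coordinate: since $g$ is independent of $x_j$ while $\chi_v(x)$ contains the factor $\exp(2\pi i x_j v_j/q)$ with $v_j\ne0$, we get $c_v=\mathbb{E}_x[g(x)\overline{\chi_v(x)}]=0$ by the geometric-series identity. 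Consequently, the Fourier support of $g$ is contained in $\{v:\mathrm{supp}(v)\subseteq S\}$, so in particular $|\mathrm{supp}(v)|\le|S|$. The converse (any $\mathbb{C}$-linear combination of $\chi_v$ with $\mathrm{supp}(v)\subseteq S$ depends only on $x_S$) is immediate from the definition of $\chi_v$.

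\textbf{Step 3: Coordinate degree $\le d$.} By definition a coordinate-degree-$\le d$ function is $f=\sum_{|S|\le d} f_S$ with $f_S$ depending only on $x_S$. Applying Step 2 to each $f_S$ shows that the Fourier expansion of $f$ is supported on $\{v:|\mathrm{supp}(v)|\le d\}$, giving the desired existence. Uniqueness of $(c_v)$ follows from Step 1. Finally, because $f$ is real-valued and $\chi_{-v}=\overline{\chi_v}$, uniqueness forces the symmetry $c_{-v}=\overline{c_v}$; this is automatic and does not change the form of the expansion.

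\textbf{Main (minor) obstacle.} There is no real obstacle here beyond bookkeeping: the only thing that requires any care is the direction in Step 2 showing that partial independence of $g$ in a coordinate $j$ kills all Fourier coefficients $c_v$ with $v_j\ne 0$. Once that averaging argument is written out cleanly, existence, uniqueness, and the support condition all follow directly from orthonormality of the character basis.
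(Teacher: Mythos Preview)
The paper does not actually prove this lemma; it is stated as a standard fact with a citation to O'Donnell's book and is used as a black box in the proof of \cref{lemma:ldphardforlienarlyindep}. Your proof is the standard and correct argument via orthonormality of the characters of $\mathbb{Z}_q^n$, so there is nothing further to compare.
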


One can use a low-coordinate-degree function to test between two distributions $\mathcal{P}_0$ and $\mathcal{P}_1$ if the means are sufficiently separated. Immediately implied by Chebyshev's inequality is the fact that if $|
\expect_{X\sim \mathcal{P}_0}f(X) - 
\expect_{X\sim \mathcal{P}_1}f(X)
| = \omega(\sqrt{\var_{X\sim \mathcal{P}_0}f(X) + 
\var_{X\sim \mathcal{P}_1}f(X)}),$ one can test between $H_0:\mathcal{P}_0$ and $H_1:\mathcal{P}_1$ with type I + type II error $o(1)$ by simply evaluating $f(X)$ on input $X$ and comparing to $\frac{1}{2}(\expect_{X\sim \mathcal{P}_0}f(X) + 
\expect_{X\sim \mathcal{P}_1}f(X)).$ When this condition does not hold for \emph{any} low-degree coordinate function, we say that the distinguish problem is low-degree hard. Formally, we also allow conditioning on a high-probability ``well-behaved'' event. 

\begin{definition}[e.g. \cite{dhawan2023detection}]
\label{def:ldphardness}
Suppose that $\mathcal{P}_0$ and $\mathcal{P}_1$ are two distributions over $\mathcal{X}^n.$ Suppose that there exists some event $\mathcal{E}$ such that $\prob_{X\sim \mathcal{P}_1}[X\in \mathcal{E}] = 1- o(1)$ holds and for any coordinate-degree-$d$ function $f:\mathcal{X}^n\longrightarrow \mathbb{R},$
$$|
\expect_{X\sim \mathcal{P}_0}[f(X)] - 
\expect_{X\sim \mathcal{P}_1}[f(X)|X\in \mathcal{E}]
| = o\big(\sqrt{\var_{X\sim \mathcal{P}_0}f(X) + 
\var_{X\sim \mathcal{P}_1}[f(X)|X\in \mathcal{E}]}\big)$$
holds. Then, we say that the hypothesis testing problem  $H_0:\mathcal{P}_0$ vs $H_1:\mathcal{P}_1$ is hard against coordinate-degree-$d$ polynomial tests. 
\end{definition}

Typically, hardness against degree $\omega(\log n)$ tests is viewed as evidence for hardness against polynomial-time algorithms and hardness against degree $n^C$ tests for constant $C$ is viewed as evidence of subexponential hardness. 
 
% In the special case of testing against the uniform distribution over $\mathbb{Z}_q,$ one can show the following convenient criterion.

% \begin{lemma}[e.g. \cite{hopkinsThesis}]
% Suppose that $\mathcal{P}_0 = \unif(\mathbb{Z}^n_q)$ and $\mathcal{E}$ is some event such that $\prob_{X\sim \mathcal{P}_1}[X\in \mathcal{E}] = 1- o(1).$ If 
% $$
% \sum_{v\in \mathbb{Z}^n_q\; : \; |\support(v)|\le d}\Big\|\expect_{X\sim\mathcal{P}_1|_{\mathcal{E}}}\Big[\exp\big(\frac{2\pi i \langle X,v\rangle}{q}\big)\Big]\Big\|_2^2 = o(1),
% $$
%  the hypothesis testing  $H_0:\mathcal{P}_0$ vs $H_1:\mathcal{P}_1$ is hard against coordinate-degree-$d$ polynomial tests.
% \end{lemma}

While low-degree polynomial tests certainly do not capture \emph{all} computationally efficient distinguising algorithms, they are an expressive and rich class. They capture certain common instantiations of spectral algorithms \cite{bandeira2019computational}, SQ algorithms  \cite{brennan2021statisticalquery}, and approximate message passing algorithms \cite{montanari2022equivalence} among others. Low-degree polynomial tests are also conjectured to capture all polynomial-time algorithms for a well-defined class of sufficiently noisy problems in high dimension \cite{hopkinsThesis}.

We do note that there are many problems for which there are better than low-degree polynomial algorithms and, in fact, some of them come exactly from problems on lattices and 
noisy linear equations such as the celebrated Blum-Kalai-Wasserman algorithm \cite{BKW} or more recent work on ``flat'' parallel pancakes 
\cite{zadik22latticebasedsurpassSoS,kane22ngca} based on the celebrated Lenstra-Lenstra-Lovasz algorithm \cite{Lenstra1982LLL}.

\begin{remark}[Distinguishing Advantage] In this overview we focused on advantage $o(1)$ for the distinguishing problem, which is the most common set-up in the literature on high-dimensional statistics. Recent work \cite{moitra2023preciseerrorratescomputationally} has also considered a more fine-grained analysis of the advantage. We do not engage in this endeavour since the low-degree hardness results mostly serve as evidence in the current work. 
\end{remark}

\subsection{Proof of \texorpdfstring{\cref{thm:genericLDP}}{LDPProposition}}
\label{appendix:proofofldp}

With this in hand, we can prove~\cref{thm:genericLDP} restated below.
\restateprop{thm:genericLDP}
 We begin with a simple lemma which we will use to choose a well-behaved high-probability event in \cref{def:ldphardness}.
\begin{lemma}
\label{lem:linearlyindependentsparse}
    Suppose that $m = o\Big(\frac{n^{k/2}}{k^{k/2}D^{k/2-1}e^{3k}}\Big)$ and 
    $a_1, a_2, \ldots, a_m\iidsim\randomsupport(n,k,\mathcal{D})$.
    Then, with high probability, every $D$ vectors among $a_1, a_2, \ldots, a_m$ are linearly independent. 
\end{lemma}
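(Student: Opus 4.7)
The plan is to translate linear dependence into a support-union shrinkage condition and then apply a union bound. Suppose $a_{i_1}, \ldots, a_{i_D}$ satisfy a nontrivial relation $\sum_{j=1}^D c_j a_{i_j} = 0$, and set $T := \{j : c_j \neq 0\}$. Since each $a_{i_j}$ is nonzero, $|T| \ge 2$. Writing $S_i := \support(a_i)$, I claim every coordinate of $\bigcup_{j \in T} S_{i_j}$ is contained in at least two of the supports $\{S_{i_j} : j \in T\}$: otherwise, reading off a uniquely-covered coordinate $\ell$ in the relation gives $c_{j^*}(a_{i_{j^*}})_\ell = 0$, which contradicts $c_{j^*} \neq 0$ because $(a_{i_{j^*}})_\ell$ is a unit in $\mathbb{Z}/q\mathbb{Z}$. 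Double-counting then forces $|\bigcup_{j \in T} S_{i_j}| \le |T| k / 2$.

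For a fixed subset $T' \subseteq [m]$ of size $D' \ge 2$, the probability that $|\bigcup_{j \in T'} S_j| \le D' k / 2$ is at most --- by first choosing a $\lceil D' k/2 \rceil$-element superset of the union and then placing each $S_j$ inside ---
\[
  \frac{\binom{n}{\lceil D'k/2\rceil}\binom{\lceil D'k/2\rceil}{k}^{D'}}{\binom{n}{k}^{D'}} \le \paren{\frac{e^3 D' k}{2 n}}^{D'k/2},
\]
after applying the standard estimates $\binom{a}{b} \le (ea/b)^b$ and $\binom{n}{k} \ge (n/k)^k$. Combining with $\binom{m}{D'} \le (em/D')^{D'}$ and union-bounding over $D' \in \{2, \ldots, D\}$ and over the choice of $T'$,
\[
  \prob\Brac{\text{some $D$ of the } a_i \text{ are linearly dependent}} \le \sum_{D'=2}^D \alpha_{D'}^{D'}, \quad \alpha_{D'} := em \cdot (e^3/2)^{k/2} (k/n)^{k/2} {D'}^{k/2-1}.
\]

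The quantity $\alpha_{D'}$ is monotone increasing in $D'$ (for $k \ge 2$), so $\alpha_{D'}^{D'} \le \alpha_D^{D'}$ and the right-hand side is dominated by a geometric series summing to $O(\alpha_D^2)$ whenever $\alpha_D \le 1/2$. Substituting the hypothesis $m = o(n^{k/2} / (k^{k/2} D^{k/2-1} e^{3k}))$ yields $\alpha_D = o(e \cdot e^{3k/2} \cdot 2^{-k/2} \cdot e^{-3k}) = o(e^{-3k/2})$, which is indeed $o(1)$. Hence the bad-event probability is $o(1)$, as desired. The main subtlety lies in the peeling step, where it is essential that the support entries are units so that nonzero coefficients cannot be absorbed; the hypothesis on $m$ is loose by roughly an $e^{3k/2} 2^{k/2}$ factor relative to what this argument strictly needs, which comfortably absorbs the geometric-series overhead.
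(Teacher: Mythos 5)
Your proof is correct and follows essentially the same route as the paper: reduce a nontrivial linear relation to the observation that every coordinate in the union of the relevant supports must be covered at least twice (hence the union has size at most $|T|k/2$), then apply the identical counting bound $\binom{n}{\lceil D'k/2\rceil}\binom{\lceil D'k/2\rceil}{k}^{D'}/\binom{n}{k}^{D'}$ and a union/geometric-series bound over subset sizes. The only cosmetic difference is that the paper extracts the double-coverage property from a \emph{minimal} dependent set, whereas you extract it from the nonzero-coefficient set of the relation using that the support entries are units in $\zqz$ — an equivalent argument that, if anything, makes the composite-$q$ case slightly more explicit.
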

\begin{proof}
    Suppose that there is a set of at most $D$ vectors 
    $a_i$ which are not linearly independent.
    Let
    $S = \{i_1, i_2, \ldots, i_s\}$ be the smallest such set.
    Then, we claim that each $j \in \bigcup_{i\in S} \mathsf{support}(a_i)$ appears at least twice in the multiset $\bigcup_{i\in S} \mathsf{support}(a_i).$
    Otherwise, if some $j$ appears only once and it appears for vector $a_i,$ clearly $a_i$ is linear independent of 
    $(a_\ell)_{\ell \in S\backslash \{i\}}.$
    Hence, 
    $ S\backslash \{i\}$ is a smaller set of vectors which are not linearly independent, contradiction.

    In particular this means that $|\bigcup_{i\in S} \mathsf{support}(a_i)|\le k\times |S|/2 = ks/2.$
We bound the probability of this event:
    \begin{equation}
    \label{eq:badeventinsupportunions}
        \begin{split}
            & \prob\Big[\exists S\subseteq [m]\; : \; |S|\le D\text{ and }\abs{\bigcup_{i\in S} \mathsf{support}(a_i)} \le k\times |S|/2\Big]\\
            & \le \sum_{S\subseteq [m]\; : \; |S|\le D}
            \prob\Big[|\bigcup_{i\in S} \mathsf{support}(a_i)|\le k\times |S|/2\Big]\\
            & \le \sum_{s = 2}^D \binom{m}{s}
            \prob\Big[|\bigcup_{i = 1}^s \mathsf{support}(a_i)|\le k\times s/2\Big].
        \end{split}
    \end{equation}
We finish by bounding the quantity in the last line. Let $Z = \bigcup_{i = 1}^s \mathsf{support}(a_i).$ If $Z$ is contained in a set $Z'$ of size $ks/2,$ there are at most $\binom{n}{ks/2}$ choices for this set. Once the set $Z'$ has been chosen, each of the vectors $a_1, a_2, \ldots, a_s$ can be chosen in $\binom{ks/2}{k}$ ways. Hence, we bound the resulting probability by
$
\frac{\binom{n}{ks/2}\binom{ks/2}{k}^s}{\binom{n}{k}^s}.
$
Going back to \cref{eq:badeventinsupportunions} and using $(\frac{a}{b})^b\le \binom{a}{b}\le (\frac{ea}{b})^b,$
we bound 

\begin{align*}
    & \sum_{s = 2}^D \binom{m}{s}\frac{\binom{n}{ks/2}\binom{ks/2}{k}^s}{\binom{n}{k}^s}\\
    & \le \sum_{s = 2}^D \left(\frac{me}{s}\right)^s\times 
    \left(\frac{ne}{ks/2}\right)^{ks/2}\times
    \left(\frac{eks/2}{k}\right)^{ks}
    \left(\frac{n}{k}\right)^{-sk}\\
    & \le \sum_{s = 2}^D
    \Bigg(
    \frac{me\times (ne)^{k/2}\times (es/2)^{k}\times k^k}{s\times (ks/2)^{k/2}\times n^k}
    \Bigg)^s\\
    & \le \sum_{s = 2}^D
    \Bigg(
    \frac{m k^{k/2} s^{k/2-1}e^{3k}}{n^{k/2}}
    \Bigg)^s\\
    &\leq \sum_{s = 2}^D
    \Bigg(
    \frac{mk^{k/2} D^{k/2-1}e^{3k}}{n^{k/2 }}
    \Bigg)^s.
\end{align*}
When $m = o\Big(\frac{n^{k/2}}{k^{k/2}D^{k/2-1}e^{3k}}\Big),$ the last expression is of order $o(1).$ \end{proof}

\begin{lemma}
\label{lemma:ldphardforlienarlyindep}
Suppose that $a_1, a_2, \ldots, a_m\in \mathbb{Z}_q^n$ are fixed known vectors such that any $D$ of these vectors are linearly independent. 
Let $\mathcal{E}$ be an arbitrary distribution over $\mathbb{Z}_q$.
Then, no function (even depending on $a_1, a_2, \ldots, a_m$) 
of coordinate degree at most $D$ can distinguish between the following hypotheses:
\begin{equation}
    \begin{split}
        & \hzero: (y_1, y_2, \ldots, y_m)\iidsim\unif(\mathbb{Z}_q^m)\\
        & \hone: (y_1, y_2, \ldots, y_m) = (\langle a_1, s\rangle + e_1,\langle a_2, s\rangle + e_2,\ldots \langle a_m, s\rangle + e_m)\,,\\
        & \quad\quad\quad \text{ where }s\sim \unif(\mathbb{Z}_q^n),e_i\iidsim\mathcal{E}.
    \end{split}
\end{equation}
\end{lemma}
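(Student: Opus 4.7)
The plan is to exploit the character/Fourier representation of coordinate-degree-$D$ functions on $(\zqz)^m$ given by \cref{lemma:lowdegreeinzq} and show something stronger than what \cref{def:ldphardness} requires: namely, that the \emph{means} of any such function under $\hzero$ and $\hone$ are exactly equal. Once this is established, the condition of \cref{def:ldphardness} holds trivially (with the event $\mathcal{E}$ taken to be the whole sample space) since the numerator on the left-hand side vanishes.

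First, by \cref{lemma:lowdegreeinzq}, any coordinate-degree-$D$ function $f:(\zqz)^m\to\mathbb{R}$ can be written uniquely as
\[
f(y) = \sum_{v \in (\zqz)^m:\ |\support(v)| \le D} c_v \cdot \chi_v(y),\qquad \chi_v(y)\coloneqq \exp\Big(\frac{2\pi i \langle y, v\rangle}{q}\Big).
\]
Since this is a linear representation, it suffices (by linearity of expectation) to compare $\expect_{\hzero}[\chi_v(y)]$ and $\expect_{\hone}[\chi_v(y)]$ for every $v$ with $|\support(v)|\le D$. The $v=0$ case is trivial since $\chi_0\equiv 1$. So the main task is to show both expectations vanish for nonzero $v$ with $|\support(v)|\le D$.

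Under $\hzero$, $y\sim \unif((\zqz)^m)$, so $\expect_{\hzero}[\chi_v(y)] = \prod_{j\in \support(v)} \expect_{y_j\sim \unif(\zqz)}[\exp(2\pi i v_j y_j / q)] = 0$, using that a nontrivial character of $\zqz$ has zero mean. Under $\hone$, I rewrite
\[
\chi_v(y) = \exp\Big(\frac{2\pi i}{q}\sum_j v_j(\langle a_j, s\rangle + e_j)\Big) = \exp\Big(\frac{2\pi i}{q}\langle s, \textstyle\sum_j v_j a_j\rangle\Big)\cdot\prod_{j}\exp\Big(\frac{2\pi i v_j e_j}{q}\Big).
\]
Taking expectation over independent $s\sim \unif((\zqz)^n)$ and $e_1,\dots,e_m\iidsim \mathcal{E}$ factorizes:
\[
\expect_{\hone}[\chi_v(y)] = \expect_s\Big[\exp\Big(\tfrac{2\pi i}{q}\big\langle s, \textstyle\sum_j v_j a_j\big\rangle\Big)\Big]\cdot \prod_{j}\expect_{e\sim\mathcal{E}}\Big[\exp\Big(\tfrac{2\pi i v_j e}{q}\Big)\Big].
\]
The first factor equals $\indicator[\sum_j v_j a_j = 0\text{ in }(\zqz)^n]$, again by the character-sum identity over $\unif((\zqz)^n)$.

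This is where the linear-independence hypothesis comes in: $v\neq 0$ and $|\support(v)|\le D$ means $\sum_j v_j a_j = \sum_{j\in\support(v)} v_j a_j$ is a nontrivial $\zqz$-linear combination of at most $D$ of the $a_j$'s with coefficients $v_j\in \zqz\setminus\{0\}$. By the assumed linear independence of any $D$ of the $a_j$'s over $\zqz$, this sum cannot vanish. Hence $\expect_{\hone}[\chi_v(y)] = 0$ as well. Combining, $\expect_{\hzero}[f(y)] = c_0 = \expect_{\hone}[f(y)]$ for every coordinate-degree-$D$ function $f$, which satisfies \cref{def:ldphardness} with $\mathcal{E}$ being the entire sample space (the numerator is exactly $0$, so in particular it is $o(\cdot)$ of anything nonzero; if both variances are zero then $f$ is a.s. constant and cannot distinguish anyway).

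The only subtle point is justifying the interpretation of ``linearly independent'' over $\zqz$ when $q$ is not prime: the hypothesis should be read as the natural one compatible with what we use above, i.e., that for any subset $S\subseteq [m]$ with $|S|\le D$ and any tuple $(c_j)_{j\in S}\in \zqz^{|S|}\setminus \{0\}$, one has $\sum_{j\in S} c_j a_j \neq 0$ in $(\zqz)^n$. This is precisely the property invoked in the final step and is also the property verified by \cref{lem:linearlyindependentsparse} for random sparse supports. There is no genuine obstacle here — the proof is a direct character computation — and I expect the write-up to be quite short, with the only care needed being consistency of the linear-independence convention in characteristic dividing $q$.
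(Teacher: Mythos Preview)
Your proposal is correct and follows essentially the same route as the paper: expand the function in the character basis via \cref{lemma:lowdegreeinzq}, note that the $v=0$ term matches trivially, and for nonzero $v$ with $|\support(v)|\le D$ show both expectations vanish by rewriting $\langle y,v\rangle=\langle s,\sum_j v_j a_j\rangle+\sum_j v_j e_j$ and invoking the linear-independence hypothesis to ensure $\sum_j v_j a_j\neq 0$. Your added remark on the reading of ``linearly independent'' over $\zqz$ for composite $q$ is a helpful clarification the paper leaves implicit.
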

\begin{proof} By \cref{lemma:lowdegreeinzq},
 every coordinate-degree $D$ function $g$ of  $(y_i)_{i\in [m]}$ can be represented as 
$$
g(y_1, y_2, \ldots, y_m) = 
\sum_{\substack{S\subseteq [m]\,, \\ |S|\le D}} \;
\sum_{\substack{v \in \mathbb{Z}_q^m\,, \\ \support(v) = S}}
c_v \exp\Big(\frac{2\pi i \langle y,v\rangle }{q}\Big) \,.
$$
To show that coordinate-degree $D$ functions cannot distinguish the two hypothesis, it is enough to show that for each such $g,$
$\expect_{\mathsf{H}_0}[g] = \expect_{\mathsf{H}_1}[g]$ by \cref{def:ldphardness}.
For $S = \emptyset$, both expectations are clearly 1.
We will shot that for any $S \neq \emptyset$ and $v \in \mathbb{Z}_q^n \; : \; \support(v) = S$ it holds that
$$
\expect_{\hzero}\Big[\exp\Big(\frac{2\pi i \langle y,v\rangle }{q}\Big)\Big] = 0\text{ and }
\expect_{\hone}\Big[\exp\Big(\frac{2\pi i \langle y,v\rangle }{q}\Big)\Big] = 0 \,.
$$
The equality follows trivially for $\mathsf{H}_0.$ For $\mathsf{H}_1,$ observe that
$$
\langle y,v\rangle = 
\sum_{i\in S}y_iv_i = 
\sum_{i\in S}(\langle a_i, s\rangle + e_i)v_i = 
\langle s, \sum_{i\in S}v_i a_i\rangle + \sum_{i\in S}v_ie_i.
$$
Since $|S|\le D$ and no $D$ of the vectors $a_1, a_2, \ldots, a_m$ are linearly dependent, $\sum_{i\in S}v_i a_i\neq 0.$ Hence, 
$$
\expect_{s\sim \unif(\mathbb{Z}_q^n)}\Big[
\exp\Big(\frac{2\pi i \big(\langle s, \sum_{i\in S}v_i a_i\rangle + \sum_{i\in S}v_ie_i\big)}{q}\Big)
\Big] = 0 \,,
$$
as desired. 
\end{proof}

\paragraph{Putting It All Together.} By \cref{lem:linearlyindependentsparse}, if $m = o\Big(\frac{n^{k/2}}{k^{k/2}D^{k/2-1}e^{3k}}\Big),$ every $D$ of the $k$-sparse samples $a_1, a_2, \ldots, a_m$ are linearly independent with high-probability.
This is the high-probability well-behaved event we use to invoke \cref{def:ldphardness}.
By \cref{lemma:ldphardforlienarlyindep}, this implies hardness against 
polynomials $g$ which are degree at most $D$ in the labels $y_1, y_2, \ldots, y_m$ but depend arbitrarily on $a_1, a_2, \ldots, a_m$ (recall that $g$ can depend arbitrarily on $a_1, a_2, \ldots, a_m$ in \cref{lemma:ldphardforlienarlyindep}). In particular, this captures all polynomials in $(A,y)$ of degree at most $D.$

% \section{Varying Support Size}
% \label{appendix:varyingsupportsize}
% \input{Hardness of Noisy Sparse Linear Equations over Finite Fields/VariableSupportSize}

\end{document}